\definecolor{red2}{RGB}{176,0,0}
\newtheorem{claim}{Claim}
\newtheorem{lemma}{Lemma}
\newtheorem{theorem}{Theorem}
\newtheorem{proposition}[claim]{Proposition}
\newtheorem{corollary}[claim]{Corollary}
\newtheorem{Definition}[claim]{Definition}
\newtheorem{Remark}[claim]{Remark}
\def\cA{{\cal A}}
\newcommand{\enp} {}
\newcommand{\p}{\partial}
\newcommand{\mc}{\mathcal}
\newcommand{\f}{\frac}
\newcommand{\alf}[2]{\alpha_{#1\backslash #2}}
\newcommand{\off}[2]{m_{#1\rightarrow #2}}
\newcommand{\baralf}{\underline{\alpha}}
\newcommand{\barbet}{\underline{\beta}}
\newcommand{\bargamma}{\underline{\gamma}}
\newcommand{\baroff}{\underline{m}}
\newcommand{\surp}{\mathcal{S}\textup{urp}}
\newcommand{\BALOPT}{\mathcal{BALOPT}}
\newcommand{\sT}{{\sf{T}}}
\def\reals{{\mathds R}}
\def\di{{\partial i}}
\def\cC{{\cal C}}
\def\eps{{\epsilon}}
\def\damp{\kappa}
\def\dj{{\partial j}}
\def\u0{\underline{0}}
\def\NB{{\textup {\tiny NB}}}
\def\rfl{{\cal R}}
\def\Wmax{{W_{\rm max}}}
\def\reb{{\fontfamily{cmr`}\selectfont \textup{\tiny reb}}}
\def\equalsfill{$\m@th\mathord=\mkern-7mu
\cleaders\hbox{$\!\mathord=\!$}\hfill
\mkern-7mu\mathord=$}
\definecolor{Red}{rgb}{1,0,0}
\definecolor{Blue}{rgb}{0,0,1}
\definecolor{Olive}{rgb}{0.41,0.55,0.13}
\definecolor{Green}{rgb}{0,1,0}
\definecolor{MGreen}{rgb}{0,0.8,0}
\definecolor{DGreen}{rgb}{0,0.55,0}
\definecolor{Yellow}{rgb}{1,1,0}
\definecolor{Cyan}{rgb}{0,1,1}
\definecolor{Magenta}{rgb}{1,0,1}
\definecolor{Orange}{rgb}{1,.5,0}
\definecolor{Violet}{rgb}{.5,0,.5}
\definecolor{Purple}{rgb}{.75,0,.25}
\definecolor{Brown}{rgb}{.75,.5,.25}
\definecolor{Grey}{rgb}{.5,.5,.5}
\definecolor{Pink}{rgb}{1,0,1}
\definecolor{DBrown}{rgb}{.5,.34,.16}
\definecolor{Black}{rgb}{0,0,0}
\newcommand{\bmax}[1]{{(b_{#1}^{\rm th}\!\operatorname{-max})}}
\begin{document}

\title{Bargaining dynamics in exchange networks}

\author{
Mohsen Bayati\thanks{Operations and Information Technology, Graduate School of Business, Stanford University}
\and
Christian Borgs\thanks{Microsoft Research New England}
\and
Jennifer Chayes${}^{2}$
\and
Yashodhan Kanoria\thanks{Department of Electrical Engineering,
Stanford University, \texttt{ykanoria@stanford.edu}, (650) 353-0476}
\and
Andrea Montanari\thanks{Department of Electrical Engineering and Department of Statistics,
Stanford University}
}

\maketitle

\begin{abstract}
We consider a one-sided assignment market or exchange network with transferable utility
and propose a  model
for the dynamics of bargaining in such a market. Our dynamical model is local, involving iterative updates of `offers' based on estimated best alternative matches, in the spirit of pairwise Nash bargaining. We establish that when a balanced outcome (a generalization of the pairwise Nash bargaining solution to networks) exists, our dynamics converges rapidly to such an outcome.
We extend our results to the cases of (i) general agent `capacity constraints', i.e., an agent may be allowed to participate in multiple matches, and (ii) `unequal bargaining powers' (where we also find a surprising change in rate of convergence).\\

\noindent{\bf Keywords:} Nash bargaining, network, dynamics, convergence, matching, assignment, exchange, balanced outcomes. \\[3pt]
\noindent{\bf JEL classification:} C78
\end{abstract}

\section{Introduction}
\label{sec:intro}

Bargaining has been heavily studied in the economics and sociology literature, e.g.,  \cite{Nash,sequential_bargaining,KS,RubinBrown}. While the case of bargaining between two agents is fairly well understood \cite{Nash,sequential_bargaining,KS}, less is known about the results of bargaining on networks. We consider exchange networks \cite{CookY,KT}, also called assignment markets \cite{ShapleyShubik72,Rochford}, where agents occupy the nodes of a network, and edges represent potential partnerships between pairs of agents, which can generate some additional value for these agents. To form a partnership,  the pair of agents must reach an agreement on how to split the value of the partnership. Agents are constrained on the number of partnerships they can participate in, for instance, under a matching constraint, each agent can participate in at most one partnership. The fundamental question of interest is: Who will partner with whom, and on what terms? Such a model is relevant to the study of the housing market, the labor market, the assignment of interns to hospitals, the marriage market and so on.
An assignment model is suitable for markets with heterogeneous indivisible goods, where trades are constrained by a network structure.

Balanced outcomes\footnote{Also called \emph{symmetrically pairwise-bargained allocations} in \cite{Rochford} or \emph{Nash bargaining solutions} \cite{KT}.} generalize the pairwise Nash bargaining solution to this setting. The key issue here is the definition of the threats or best alternatives of participants in a match --- these are defined by assuming the incomes of other potential partners to be fixed. In a balanced outcome on a network, each pair plays according to the local Nash bargaining solution thus defined. Balanced outcomes have been found to possess some favorable properties:

\emph{Predictive power.} The set of balanced outcomes refines the set of stable outcomes (the core), where players have no incentive to deviate from their current partners. For instance, in the case of a two player network, all possible deals are stable, but there is a unique balanced outcome. Balanced outcomes have been found to capture various experimentally observed effects on small networks \cite{CookY,KT}.

\emph{Computability.} Kleinberg and Tardos \cite{KT} provide an efficient centralized algorithm to compute balanced outcomes. They also show that balanced outcomes exist if and only if stable outcomes exist.

\emph{Connection to cooperative game theory.} The set of balanced outcomes is identical to the core intersection prekernel of the corresponding cooperative game \cite{Rochford,Bateni}.

However, this leaves unanswered the question of whether balanced outcomes can be predictive on large networks, since there was previously no dynamical description of how players can find such an outcome via a bargaining process.

In this work, we consider an assignment market with arbitrary network structure
and edge weights that correspond to the value of the potential partnership on that edge.
We present a model for the bargaining process in such a network
setting, under a capacity constraint on the number of exchanges
that each player can establish. We define a model that satisfies two
important requirements: locality and convergence.

\vspace{0.1cm}

\emph{Locality.} It is reasonable to assume that agents have accurate
information about their neighbors in the network (their negotiation
partners). Specifically, an agent is expected to know the weights of the edges with each of her negotiation partners. Further, for $(ij) \in E$, agent $i$ may estimate the `best alternative' of agent $j$ in the current `state' of negotiations.\footnote{For instance, $i$ may form this estimate based on her conversation with $j$. (We do not present a game theoretic treatment with fully rational/strategic agents in this work, cf. Section \ref{sec:discussion}.)}
On the other hand, it is
unlikely that the offer made to an agent during a negotiation depends
on arbitrary other agents in the network. We thus require that agent
make `myopic' choices on the basis of their neighborhood in the
exchange networks. This is consistent with the bulk of the game theory literature on
learning \cite{FLbook,KMR}.

\vspace{0.1cm}

\emph{Convergence.} The duration of a negotiation
is unlikely to depend strongly on the overall network size.
For instance, the negotiation on the price of a house, should not
depend too much on the size of the town in which it takes place, all
other things being equal. Thus a realistic model for negotiation should converge to
a fixed  point (hence to a set of exchange agreements) in a time
roughly independent of the network size.

\vspace{0.1cm}

Our dynamical model is fairly simple. Players compute the current best
alternative to each exchange, both for them, and for their partner.
On the basis of that, they make a new offer to their partner
according to the pairwise Nash bargaining solution.
This of course leads to a change in the set of best alternatives
at the next time step.
We make the assumption that `pairing' occurs at the end, or after several iterative updates, thus suppressing the effect of agents pairing up and leaving.

This dynamics is of course local. Each agent only needs to know the
offers she is receiving, as well as the offers that her potential
partner is receiving. The technical part of this paper is therefore
devoted to the study of the convergence properties of this dynamics.

Remarkably, we find that it converges rapidly, in time
roughly independent of the network size. Further its fixed points are
not arbitrary, but rather a suitable generalization of Nash bargaining
solutions  \cite{CookY,Rochford} introduced in the context of assignment markets and
exchange networks   (also referred to as `balanced
outcomes' \cite{KT} or `symmetrically pairwise-bargained allocations'
\cite{Rochford}).
Thus, our work provides a dynamical justification of Nash bargaining solutions in assignment markets.

We now present the mathematical definitions of bargaining networks and balanced outcomes.

The network consists of a graph $G=(V,E)$, with positive weights
$w_{ij}>0$ associated  to the edges $(i,j)\in E$.
A player sits at each node of this network,
and two players connected by edge $(i,j)$ can
share a profit of $w_{ij}$ dollars if they agree to
trade with each other.
Each player can trade with at most one of her neighbors
(this is called the \emph{$1$-exchange rule}), so that
a set of valid trading pairs forms a matching $M$ in the graph $G$.

We define an \textit{outcome} or \textit{trade outcome} as a pair
$(M,\bargamma)$ where $M\subseteq E$ is a matching of $G$, and
$\bargamma = \{\gamma_i\, :\, i\in V\}$ is the vector of players'
profits.
This means, $\gamma_i\ge 0$, and
$(i,j) \in M$ implies $\gamma_i + \gamma_j = w_{ij}$, whereas for every unmatched node $i \notin M$ we have $\gamma_i=0$.

A balanced outcome, or Nash bargaining (NB) solution, is a trade outcome that satisfies the additional requirements of \textit{stability} and \textit{balance}.
Denote by $\partial i$ the set of neighbors of node $i$ in $G$.

\noindent\textit{Stability.} If player $i$ is trading with $j$, then
she cannot earn more by simply changing her trading partner.
Formally $\gamma_i + \gamma_j \ge w_{ij}$ for all
$(i,j) \in E\setminus M$.

\noindent\textit{Balance.} If player $i$ is trading with $j$, then the surplus of $i$ over her best alternative must be
equal to the surplus of $j$ over his best alternative. Mathematically,
\begin{align}
\gamma_i - \max_{k \in \di \backslash j} (w_{ik}- \gamma_k)_+ =
\gamma_j - \max_{l \in \dj \backslash i} (w_{jl}- \gamma_l)_+
\label{eq:balance}
\end{align}
for all $(i,j) \in M$. Here $(x)_+$ refers to the non-negative part of $x$, i.e. $(x)_+\equiv \max(0,x)$.

It turns out that the interplay between the $1$-exchange rule
and the stability and balance conditions results in highly non-trivial
predictions regarding the influence of network structure on
individual earnings.

\begin{figure}[ht]
\centering
\includegraphics[scale=0.5]{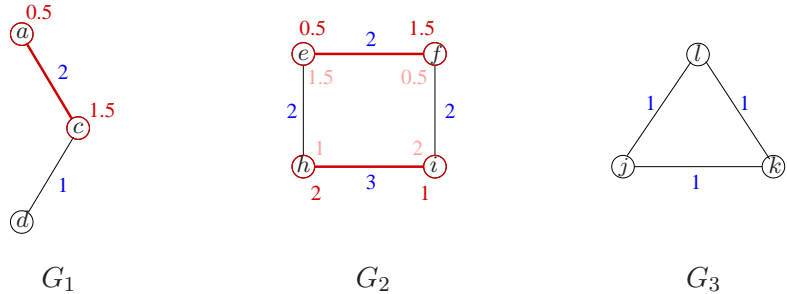}
\put(-282, -20){$G_1$}
\put(-163, -20){$G_2$}
\put(-38, -20){$G_3$}
\put(-291.5,74){{\footnotesize $a$}}
\put(-270,38){{\footnotesize $c$}}
\put(-291.5,2.5){{\footnotesize $d$}}
\put(-185,66){{\footnotesize $e$}}
\put(-135,66){{\scriptsize $f$}}
\put(-185.5,23.5){{\footnotesize $h$}}
\put(-135,23.5){{\footnotesize $i$}}
\put(-63.5,23.5){{\scriptsize $j$}}
\put(-7,23){{\footnotesize $k$}}
\put(-35,65.5){{\footnotesize $l$}}
\vskip10pt
\caption{Examples of networks and corresponding balanced outcomes. The network $G_1$ admits a unique balanced outcome, $G_2$ admits multiple balanced outcomes, and $G_3$ admits no balanced outcome. For $G_2$ one solution is shown inside the square and the other solution is outside.}
\label{fig:NB_examples}
\end{figure}

We conclude with some examples of networks and corresponding balanced outcomes (see Figure \ref{fig:NB_examples}).

 The network $G_1$ has a unique balanced outcome with the nodes $a$ and $c$ forming a partnership with a split of $\gamma_a = 0.5, \gamma_c = 1.5$. Node $d$ remains isolated with $\gamma_d = 0$. The best alternative of node $c$ is $(w_{cd} - \gamma_d)_+=1$, whereas it is $0$ for node $a$, and the excess of $2-1 = 1$ is split equally between $a$ and $c$, so that each earns a surplus of $0.5$ over their outside alternatives.

 The network $G_2$ admits multiple balanced outcomes. Each balanced outcome involves the pairing $M= \{(e,f), (h,i) \}$. The earnings $\gamma_e = 0.5, \gamma_f = 1.5, \gamma_h = 2, \gamma_i = 1$ are balanced, and so is the symmetric counterpart of this earnings vector $\gamma_e = 1.5, \gamma_f = 0.5, \gamma_h = 1, \gamma_i = 2$. In fact, every convex combination of these two earnings vectors is also balanced.

The network $G_3$ does not admit any stable outcome, and hence does not admit any balanced outcomes. To see this, observe that for any outcome, there is always a pair of agents who can benefit by deviating.

%
%
\subsection{Related work}

Recall the LP relaxation to the maximum weight matching problem
\begin{eqnarray}
\label{prob:mwm_relaxation}
\textup{maximize} && \sum_{(i,j) \in E} w_{ij} x_{ij},\\
\textup{subject to}&&
\sum_{j\in \di} x_{ij} \le 1 \;\;\; \forall i \in V,\nonumber\\
&& x_{ij}\ge 0\;\;\;\forall (i,j) \in E\,.\nonumber
\end{eqnarray}

The dual problem to (\ref{prob:mwm_relaxation}) is
\begin{eqnarray}
\label{prob:mwm_dual}
\textup{minimize} && \sum_{i \in V} y_i,\\
\textup{subject to}&&
y_i+y_j \ge w_{ij} \;\;\; \forall (i,j) \in E,\nonumber\\
&& y_i \ge 0 \;\;\;\forall i \in V. \nonumber
\end{eqnarray}
Stable outcomes were studied by Sotomayor \cite{Sotomayor}.
\begin{proposition}{\em \cite{Sotomayor}}
\label{prop:sotomayor}
Stable outcomes exist if and only if the linear programming relaxation \eqref{prob:mwm_relaxation}
of the maximum weight matching problem on $G$ admits an integral optimum.
Further, if $(M,\bargamma)$ is a stable solution then $M$ is a maximum weight matching and $\bargamma$ is an optimum solution to the dual LP \eqref{prob:mwm_dual}.
\end{proposition}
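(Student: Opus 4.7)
The plan is to use linear programming duality, with the key observation being that a stable outcome is essentially a pair of primal/dual feasible solutions satisfying complementary slackness conditions. I would establish both directions simultaneously by a single argument based on LP duality.

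For the forward direction, suppose $(M,\bargamma)$ is a stable outcome. First I would verify that $\bargamma$ is feasible for the dual LP \eqref{prob:mwm_dual}: the nonnegativity $\gamma_i\ge 0$ is immediate from the definition of an outcome, and the constraint $\gamma_i+\gamma_j\ge w_{ij}$ holds with equality on edges of $M$ (by definition of the split) and as a strict inequality or equality on edges outside $M$ by the stability condition. Next I would show that the incidence vector $x^*$ of $M$ (with $x^*_{ij}=1$ iff $(i,j)\in M$) is primal feasible with objective value
\begin{align*}
\sum_{(i,j)\in M} w_{ij} \;=\; \sum_{(i,j)\in M}(\gamma_i+\gamma_j) \;=\; \sum_{i\in V} \gamma_i \, ,
\end{align*}
where the last equality uses $\gamma_i=0$ for unmatched $i$. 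Since the primal value equals the dual value, weak duality implies that both are optimal; in particular the LP relaxation admits an integral optimum (namely $x^*$), $M$ is a maximum-weight matching, and $\bargamma$ is an optimum dual solution. This gives the ``only if'' direction as well as the second sentence of the proposition.

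For the reverse direction, assume the LP relaxation \eqref{prob:mwm_relaxation} has an integral optimum $x^*$, corresponding to a matching $M$. Let $\bargamma$ be any optimal solution to the dual \eqref{prob:mwm_dual}; I claim $(M,\bargamma)$ is a stable outcome. Dual feasibility directly yields $\gamma_i\ge 0$ and the stability inequalities $\gamma_i+\gamma_j\ge w_{ij}$ for all $(i,j)\in E$. To verify the remaining outcome conditions, I would invoke complementary slackness: for $(i,j)\in M$ we have $x^*_{ij}>0$ so the corresponding dual constraint must be tight, giving $\gamma_i+\gamma_j=w_{ij}$; and for any unmatched node $i$ the primal slack $1-\sum_{j\in\di}x^*_{ij}$ is strictly positive, forcing $\gamma_i=0$.

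The argument is essentially a clean application of LP duality, so there is no single ``hard step''; the only point requiring care is the bookkeeping at unmatched nodes in the reverse direction, where one must invoke the complementary slackness condition on the primal constraint (rather than on the dual) to conclude $\gamma_i=0$. Everything else is a direct translation between the stability/balance-free portion of the outcome definition and the primal/dual feasibility and tightness conditions of the matching LP.
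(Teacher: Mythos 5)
Your argument is correct, and it is essentially the same LP-duality/complementary-slackness argument the paper uses: Proposition \ref{prop:sotomayor} itself is cited to \cite{Sotomayor} without proof, but the paper's proof of the $\mathbf b$-matching generalization (Lemma \ref{lemma:b_stable_characterization}) proceeds exactly as you do, building a stable outcome from complementary slackness in one direction and exhibiting a dual feasible point matching the primal value in the other. Your handling of the unmatched nodes via the primal slackness condition is the right bookkeeping and matches the paper's treatment.
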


Following \cite{Rochford,CookY}, Kleinberg and Tardos
\cite{KT} first considered balanced outcomes on general exchange networks and proved that:
\emph{a network $G$ admits a balanced outcome if and only if it admits a stable outcome.}

The same paper describes a polynomial algorithm for constructing balanced outcomes. This is in turn based on the dynamic programming algorithm of Aspvall and Shiloach \cite{Aspvall}
for solving systems of linear inequalities. 
However, \cite{KT} left open the question of whether the actual bargaining process converges to balanced outcomes.

  Rochford \cite{Rochford}, and recent work by Bateni et al \cite{Bateni}, relate the assignment market problem to the extensive literature on cooperative game theory. They find that balanced outcomes correspond to the core intersect pre-kernel of the corresponding cooperative game. A consequence of the connection established is that the results of Kleinberg and Tardos \cite{KT}
  are implied by previous work in the economics literature. The existence result follows from Proposition \ref{prop:sotomayor}, and the fact that if the core of a cooperative game is non-empty then the core intersect prekernel is non-empty. Efficient computability follows from work by Faigle et al \cite{Faigle01}, who provide a polynomial time algorithm for finding balanced outcomes. \footnote{In fact, Faigle et al \cite{Faigle01} work in the more general setting of cooperative games. The algorithm involves local `transfers', alternating with a \textit{non-local} LP based step after every $O(n^2)$ transfers.}

  However, \cite{Rochford,Bateni} also leave open the twin questions of finding (i) a natural model for bargaining, and (ii) convergence (or not) to NB solutions.

Azar and co-authors \cite{Azar} studied the question as to whether a balanced outcome can be produced by a local dynamics,
and were able to answer it positively\footnote{Stearns \cite{Stearns} defined a very similar dynamics and proved convergence for general cooperative games. The dynamics can be interpreted in terms of the present model using the correspondence with cooperative games discussed in \cite{Bateni}.}. Their results left,
however, two outstanding challenges:
$(I)$ The algorithm analyzed by these authors
first selects a matching $M$ in $G$ using the message passing algorithm studied in \cite{Bayati,JHu07,BayatiB,SMW07}, corresponding to the
pairing of players that trade.
In a second phase the algorithm determines the profit of each player.
While such an algorithm
can be implemented in a distributed way, Azar et al. point out
that it is not entirely realistic. Indeed the rules of the dynamics change
abruptly after the matching is found. Further, if the pairing
is established at the outset, the players lose their
bargaining power;
$(II)$ The bound on the convergence time proved in \cite{Azar}
is exponential in the network size, and therefore does not
provide a solid justification for convergence to NB solutions in large
networks.

\subsection{Our contribution}
\label{subsec:contribution}

The present paper (based partly on our recent conference papers \cite{OurSODA,myWINE}) aims at tackling these challenges. First we
introduce a natural dynamics that is interpretable as a
realistic negotiation process. We show that the fixed points of
the dynamics are in one to one correspondence with NB
solutions, and prove that it converges to such solutions.
Moreover, we show that the convergence to approximate NB
solutions is fast. Furthermore we are able to treat the more
general case of nodes with unsymmetrical bargaining powers and
generalize the result of \cite{KT} on existence of NB solutions to this
context. These results are obtained through a new and
seemingly general analysis method, that builds on powerful
quantitative estimates on mappings in the Banach spaces
\cite{rate}. For instance, our approach allows us to prove that a
simple variant of the edge balancing dynamics of \cite{Azar}
converges in polynomial time (see Section \ref{sec:convergence}).

We consider various modifications to the model and analyze the results. One direction is to allow arbitrary integer `capacity constraints' that capture the maximum number of deals that a particular node is able to simultaneously participate in (the model defined above corresponds to a capacity of one for each node). Such a model would be relevant, for example, in the context of a job market, where a single employer may have more than one opening available.  We show that many of our results generalize to this model in Section \ref{sec:bmatching}.

Another well motivated modification is to depart from the assumption of symmetry/balance and allow nodes to have different `bargaining powers'. Rochford and Crawford \cite{RochfordCrawford86} mention this modification in passing, with the remark that  it ``\dots seems to yield no new
insights". Indeed, we show here (see Section \ref{sec:UD}) that our asymptotic convergence results generalize to the unsymmetrical case. However, surprisingly, we find that the natural dynamics may now take exponentially
long to converge. We find that this can occur even in a two-sided network with the `sellers' having slightly more bargaining power than the `buyers'. Thus, a seemingly minor change in the model appears to drastically change the convergence properties of our dynamics. Other algorithms like that of Kleinberg and Tardos \cite{KT} and Faigle et al \cite{Faigle01} also fail to generalize, suggesting that, in fact, we may lose computability of solutions in allowing asymmetry.
However, we show that a suitable modification to the bargaining process yields a fully polynomial time approximation scheme (FPTAS) for the unequal bargaining powers. The caveat is that this algorithm, though local, is not a good model for bargaining because it fixes the matching at the outset (cf. comment $(I)$ above).


Our dynamics and its analysis have similarities with a
series of papers on using max-product belief propagation for
the weighted matching problems \cite{Bayati,JHu07,BayatiB,SMW07}. We discuss that connection
and extensions of our results to those settings in one of our conference papers \cite[Appendix F]{OurSODA}. We obtain a class of new message passing algorithms to compute the maximum weight matching, with belief propagation and our dynamics being special cases.

\subsection*{Related work in sociology}

Besides economists, sociologists have been interested in such markets, called \emph{exchange networks} in that literature. The key question addressed by \emph{network exchange theory} is that of
how network structure influences the power balance between
agents. Numerous predictive frameworks have been suggested in this context including generalized Nash bargaining solutions \cite{CookY}.
Moreover, controlled experiments \cite{NET,Lucas,Skvoretz} have been carried out by
sociologists. The typical experimental set-up studies exactly the model of assignment markets proposed by economists \cite{ShapleyShubik72,Rochford}.
It is often the case that players are provided information
only about their immediate neighbors. Typically, a number of `rounds'
of negotiations are run, with no change in the network, so as to allow
the system to reach an `equilibrium'. Further, players are usually not provided much information beyond who their immediate neighbors are, and the value of the corresponding possible deals.

In addition to balanced outcomes \cite{CookY}, other frameworks have been suggested to predict/explain the outcomes of these experiments \cite{Kearns1,BraunGautschi06,Skvoretz}.

\subsection{Outline of the paper}

Our dynamical model of bargaining in a network is described in Section \ref{sec:natural_dynamics}. We state our main results characterizing fixed points and convergence of the dynamics in Section \ref{sec:main_results}.

We present two extensions of our model. In Section
\ref{sec:UD} we investigate the
unsymmetrical case with nodes having different bargaining
powers. We find that the (generalized) dynamics may take exponentially long to converge in this case, but we provide a modified local algorithm that provides an FPTAS.
We consider the case of general capacity constraints in Section \ref{sec:bmatching}, and show that our main results generalize to this case.


In Section \ref{sec:convergence}, we prove Theorems
\ref{thm:convergence_to_FP} and \ref{thm:rate_of_convergence}
on convergence of our dynamics. We characterize
fixed points in Section \ref{sec:FixedPoint} with a proof of
Theorem \ref{thm:fp_dualopt} (the proof of Theorem \ref{thm:approx_fp} is deferred
to Appendix \ref{sec:eps_fixed_point_prop_proofs}).  Section
\ref{sec:bipartite_pairing} shows polynomial time convergence
on bipartite graphs (proof of Theorem \ref{thm:bipartite_pairing}).

We present a discussion of our results in Section \ref{sec:discussion}.

Appendix
\ref{app:variations_of_dynamics} contains a discussion on
variations of the natural dynamics including time and node
varying damping factors and asynchronous updates.

%
%
\section{Dynamical model}
\label{sec:natural_dynamics}

%

Consider a bargaining network $G=(V,E)$, where the vertices represent agents, and the edges represent potential partnerships between them.  There is a positive weight
$w_{ij}>0$ on each edge $(i,j)\in E$, representing the fact that players connected by edge $(i,j)$ can
share a profit of $w_{ij}$ dollars if they agree to
trade with each other.
Each player can trade with at most one of her neighbors
(this is called the \emph{$1$-exchange rule}), so that
a set of valid trading pairs forms a matching $M$ in the graph $G$. We define a \emph{trade
outcome} as in Section \ref{sec:intro}, in accordance with the above constraints.

We expect natural dynamical description of a bargaining network to have the following properties: It should be
\emph{local},
i.e. involve limited information exchange along edges and processing
at nodes; It should be \emph{time invariant}, i.e. the players' behavior
should be the same/similar on identical local information at different times;
It should be \emph{interpretable}, i.e. the information  exchanged
along the edges should have a meaning for the players involved, and
should be consistent with reasonable behavior for players.

In the model we propose, at each time $t$, each player sends a message to
each of her neighbors. The message has the meaning of `best current
alternative'. We denote the message from player $i$ to player $j$
by $\alf{i}{j}^t$. Player $i$ is telling player $j$ that she (player $i$)
currently estimates earnings of $\alf{i}{j}^t$ elsewhere,
if she chooses not to trade with $j$.

The vector of all such messages is denoted by $\baralf^t \in \reals_+^{2|E|}$.
Each agent $i$ makes an `offer' to each of
her neighbors, based on her own `best alternative' and that of her neighbor. The offer from node $i$ to $j$ is denoted by $\off{i}{j}^t$
and is computed according to
\vspace{-0.5cm}

\begin{align}
    \off{i}{j}^t =
 (w_{ij}-\alf{i}{j}^t)_+ - \frac{1}{2}(w_{ij}-\alf{i}{j}^t-\alf{j}{i}^t)_+\, .
    \label{eq:off_def}
\end{align}
\vspace{-0.5cm}

It is easy to deduce that this definition corresponds to the following
policy: $(i)$ An offer is always non-negative, and a positive offer is never
larger than $w_{ij}-\alf{i}{j}^t$ (no player is interested in earning less
than her current best alternative);
$(ii)$ Subject to the above constraints, the surplus
$(w_{ij}-\alf{i}{j}^t-\alf{j}{i}^t)$ (if non-negative) is shared equally.
We denote by $\baroff^t\in\reals_+^{2|E|}$ the vector of offers.

Notice that $\baroff^t$ is just a deterministic
function of $\baralf^t$. In the rest
of the paper we shall describe the network status uniquely
through the latter vector,  and use $\baroff|_{\baralf^t}$ to
denote $\baroff^t$ defined by (\ref{eq:off_def}) when required
so as to avoid ambiguity.

Each node can estimate its potential earning based on the
network status, using
\begin{align}
\gamma_i^t \equiv \max_{k \in \di} \, \off{k}{i}^t,\label{eq:bargamma}
\end{align}
the corresponding vector being denoted by $\bargamma^t\in\reals_+^{|V|}$. Notice that $\bargamma^t$ is also a function of $\baralf^t$.

Messages are updated synchronously through the network,
according to the rule
\begin{align}
\alf{i}{j}^{t+1}= (1-\damp)\,\alf{i}{j}^{t} +
\damp \max_{k \in \di \backslash j}
\off{k}{i}^{t}\, .
\label{eq:update}
\end{align}
Here $\damp \in (0,1]$ is a `damping' factor: $(1-\damp)$ can be thought of as
the inertia on the part of the nodes to update their current estimates (represented by outgoing messages).
The use of  $\damp<1$ eliminates pathological behaviors
related to synchronous updates. In particular, we observe oscillations on even-length
cycles in the undamped synchronous version.
We mention here that in Appendix \ref{app:variations_of_dynamics} we present extensions of our results to various update schemes (e.g., asynchronous updates, time-varying damping factor).

\begin{Remark}\label{rem:cost_per_iteration}
An update under the natural dynamics requires agent $i$ to perform $O(|\partial i|)$ arithmetic operations, and $O(|E|)$ operations in total.
\end{Remark}

Let $\Wmax \equiv \max_{(ij)\in E}w_{ij}$. Often in the paper we
take $\Wmax=1$, since this can always be achieved by rescaling the
problem, which is the same as changing units.  It is easy to
see that $\baralf^{t} \in [0,\Wmax]^{2|E|}$,
$\baroff^t \in [0,\Wmax]^{2|E|}$ and $\bargamma^t \in [0,\Wmax]^{|V|}$
at all times (unless the initial condition violates this
bounds). Thus we call $\baralf$ a `valid' message vector if
$\baralf \in [0,\Wmax]^{2|E|}$.

\subsection{An Example}

We consider a simple graph $G$ with $V=\{A, B, C, D\}$, $E = \{(A,B), (B,C), (C,D)\}$, $w_{AB} = 8$, $w_{BC} = 6$
and $w_{CD} = 2$.
\begin{figure}
\label{fig:dynamics_eg}
\centering
\includegraphics[scale = 0.5]{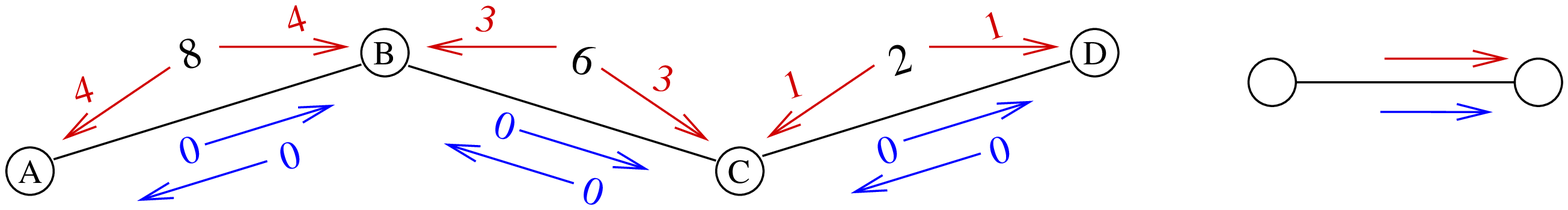}
\put(-72,28){{\footnotesize $i$}}
\put(-8.5,28){{\footnotesize $j$}}
\put(-44,14){\textcolor{blue}{{\small $\alf{i}{j}$}}}
\put(-43,42){\textcolor{red2}{{\small $\off{i}{j}$}}}
\put(-240,-30){\Large $\downarrow$}
\put(-430, 20){$t=0$}
\vskip15pt
\includegraphics[scale = 0.5]{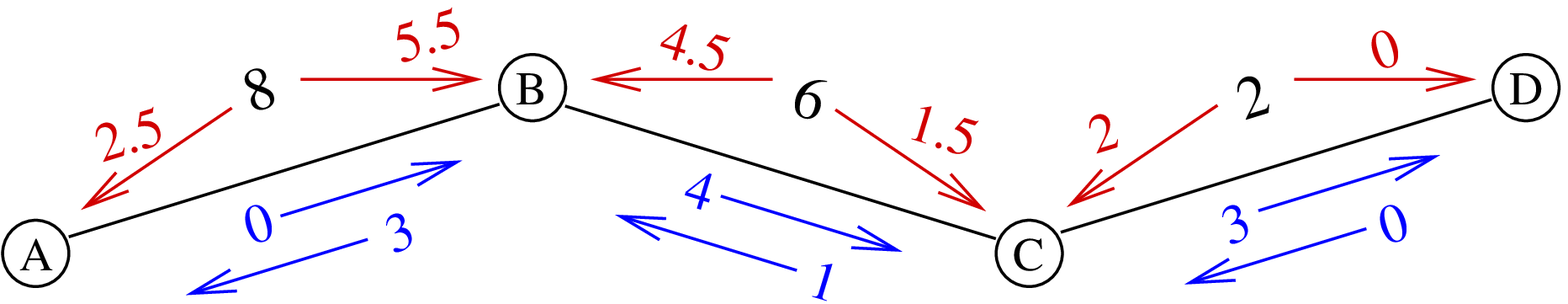}
\put(-240,-30){\Large $\downarrow$}
\put(-430, 20){$t=1$}
\vskip10pt
\includegraphics[scale = 0.5]{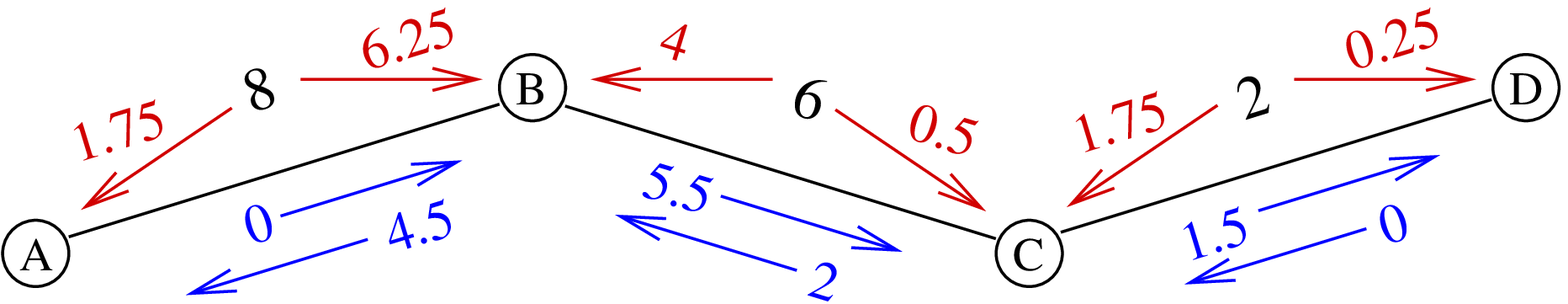}
\put(-430, 20){$t=2$}
\put(-240,-30){\Large $\downarrow$}
\put(-238.5,-55){\Large $\vdots$}
\vskip10pt
\includegraphics[scale = 0.5]{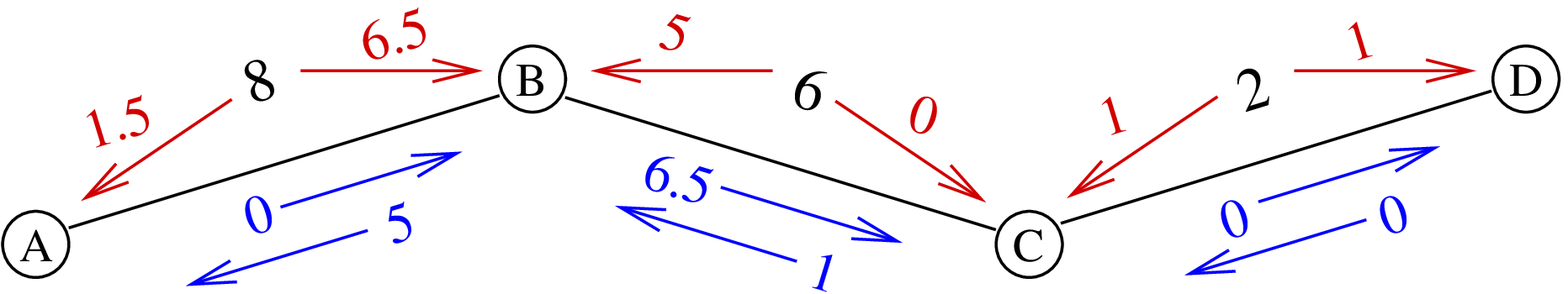}
\put(-430, 20){$t=6$}
\caption{Progress of the natural dynamics on a graph with four nodes and three edges. We (arbitrarily) choose the initialization $\baralf^0 = \underline{0}$. A fixed point is reached at $t=6$.}
\end{figure}
The unique maximum weight matching on this graph is $M= \{(A, B), (C, D) \}$. By Proposition \ref{prop:sotomayor}, stable outcomes correspond to matching $M$ and can be parameterized as
\begin{align*}
\bargamma &= (8-\gamma_B, \gamma_B, \gamma_C, 2-\gamma_C)
\end{align*}
where $(\gamma_B,\gamma_C)$ are constrained as
\begin{align*}
\gamma_B &\in [0,8]\\
\gamma_C &\in [0,2]\\
\gamma_B+\gamma_C &\geq 6
\end{align*}
For instance, the set of stable outcomes (all on matching $M$) includes $(0,8,2,0)$, $(4,4,2, 0)$, $(3,5,1,1)$ and so on. Now suppose we impose the balance condition Eq.~\eqref{eq:balance} in addition, i.e., we look for balanced outcomes. Using the algorithm of Kleinberg and Tardos \cite{KT}, we find that the network admits a unique balanced outcome $\bargamma = (1.5, 6.5, 1, 1)$.


Now we consider the evolution of the natural dynamics proposed above on the graph $G$. We arbitrarily choose to study the initialization $\baralf^0 = \underline{0}$, i.e., each node initially estimates its best alternatives to be $0$ with respect to each neighbor. We set $\damp = 1$ for simplicity\footnote{Our results assume $\damp<1$ to avoid oscillatory behavior. However, it turns out that  on graphs with no even cycles, for instance the graph $G$ under consideration, oscillations do not occur. We choose to consider $\damp = 1$ for simplicity of presentation.}. The evolution of the estimates and offers under the dynamics is shown  in Figure \ref{fig:dynamics_eg}. We now comment on a few noteworthy features demonstrated by this example. In the first step, nodes $A$ and $B$ receive their best offers from each other, node $C$ receives its best offer from $B$ and node $D$ receives its best offer from $C$. Thus, we might expect nodes $A$ and $B$ to be considering the formation of a partnership already (though the terms are not yet clear), but this is not the case for $C$ and $D$. After one iteration, at $t=1$, both pairs $(A, B)$ and $(C, D)$ receive their best offers from each other. In fact, this property remains true at all future times (the case $t=2$ is shown). However, the vectors $\baralf$ and $\baroff$ continue to evolve from one iteration to the next. At iteration $t=6$, a fixed point is reached, i.e., $\baralf$ and $\baroff$ remain unchanged for $t\geq 6$. Moreover, we notice that the fixed point captures the unique balanced outcome on this graph, with the matching $M$ and the splits $(\gamma_A = 1.5, \gamma_B=6.5)$ and $(\gamma_C = 1, \gamma_D = 1)$ emerging from the fixed point $\baroff^*$.

We remark here that convergence to a fixed point in finite number of iterations is not a general phenomenon. This occurs as a consequence of the simple example considered and the choice $\damp = 1$. However, as we prove below, we always obtain rapid convergence of the dynamics, and fixed points always correspond to balanced outcomes, on \emph{any graph} possessing balanced outcomes, and for \emph{any initialization}.

%
%
\section{Main results: Fixed point properties and convergence}
\label{sec:main_results}

Our first result is that fixed points of the update equations
(\ref{eq:off_def}), (\ref{eq:update}) (hereafter referred to as
`natural dynamics') are indeed in correspondence with Nash
bargaining solutions when such solutions exist. \textit{Note
that the fixed points are independent of the damping factor
$\damp$.} The correspondence with NB solutions includes
pairing between nodes, according to the following notion of
induced matching.
%

%
\begin{Definition}
\label{def:induces_matching}
We say that a state $(\baralf,
\baroff, \bargamma)$ (or just $\baralf$) \emph{induces a matching} $M$
if the following happens.  For each node $i\in V$ receiving
non-zero offers ($\off{\cdot}{i}>0$), $i$ is matched under $M$ and gets its unique
best offer from node $j$ such that $(i,j) \in M$. Further, if
$\gamma_i = 0$ then $i$ is not matched in $M$. In other words,
pairs in $M$ receive unique best offers that are positive from
their respective matched neighbors whereas unmatched nodes
receive no non-zero offers.
\end{Definition}

Consider the LP relaxation to the maximum weight matching problem
\eqref{prob:mwm_relaxation}.
A feasible point $\underline{x}$ for LP \eqref{prob:mwm_relaxation} is called \emph{half-integral} if for all $e\in E$, $x_e\in\{0,1,\frac{1}{2}\}$. It is well known that problem (\ref{prob:mwm_relaxation}) always has an optimum $\underline{x}^*$ that is half-integral \cite{Schrijver}.
An LP with a fully integer $\underline{x}^*$ ($x_e^*\in\{0,1\}$) is called \emph{tight}.

\begin{theorem}\label{thm:fp_dualopt}
Let $G$ be an instance admitting one or more Nash bargaining solutions,  i.e. the LP (\ref{prob:mwm_relaxation})
admits an integral optimum.\\
(a) \emph{Unique LP optimum (generic case):} Suppose the optimum is unique corresponding to matching $M^*$.
Let $(\baralf,\baroff,\bargamma)$ be a fixed point of
the natural dynamics. Then $\baralf$
induces matching $M^*$ and $(M^*, \bargamma)$ is a Nash bargaining solution.
Conversely, every Nash bargaining solution $(M,\bargamma_{\NB})$ has $M=M^*$ and
corresponds to a unique
fixed point of the natural dynamics with $\bargamma=\bargamma_{\NB}$.\\
\\
(b) Let $(\baralf,\baroff,\bargamma)$ be a fixed point of
the natural dynamics. Then $(M^*, \bargamma)$ is a Nash bargaining solution for any integral maximum weight matching $M^*$. Conversely, if $(M,\bargamma_{\NB})$ is a  Nash bargaining solution, $M$ is a maximum weight matching and there is a unique fixed point of the natural dynamics  with $\bargamma=\bargamma_{\NB}$.
\end{theorem}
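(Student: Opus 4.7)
The plan is to read the fixed-point equations as LP optimality conditions, with $\bargamma$ playing the role of a dual-feasible solution for \eqref{prob:mwm_dual} and the ``mutual best offer'' structure of the fixed point furnishing a primal-optimal matching. The first step is dual feasibility, i.e.\ $\gamma_i+\gamma_j\geq w_{ij}$ for every edge. A direct case analysis of \eqref{eq:off_def} shows that when the surplus $w_{ij}-\alf{i}{j}-\alf{j}{i}\geq 0$ the offers split the edge weight exactly, $\off{i}{j}+\off{j}{i}=w_{ij}$, so $\gamma_i+\gamma_j\geq \off{j}{i}+\off{i}{j}=w_{ij}$; in the opposite regime the inequalities $\gamma_i\geq\alf{i}{j}$ and $\gamma_j\geq\alf{j}{i}$ (immediate from the fixed-point identity $\alf{i}{j}=\max_{k\in\di\setminus j}\off{k}{i}$) yield the same conclusion. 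Weak duality then gives $\sum_i\gamma_i\geq$ weight of any maximum weight matching.

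The second step extracts the matching. Algebra on \eqref{eq:off_def} yields the symmetric equivalence $\off{j}{i}>\alf{i}{j}\Leftrightarrow w_{ij}>\alf{i}{j}+\alf{j}{i}\Leftrightarrow \off{i}{j}>\alf{j}{i}$. Hence if $i$ has a \emph{unique} best offer coming from $j$, then $j$'s unique best offer symmetrically comes from $i$; moreover such edges lie in the positive-surplus regime, so $\gamma_i+\gamma_j=\off{j}{i}+\off{i}{j}=w_{ij}$ on them. Call the collection of such mutual-best pairs $M^+$. Under the hypothesis of part (a) (unique integral LP optimum $M^*$), one rules out tied best offers at any node with $\gamma_i>0$, so every such node is matched in $M^+$; summing the tight identity over $M^+$ gives $\sum_i\gamma_i=\sum_{(i,j)\in M^+}w_{ij}\leq$ MWM weight, which combined with weak duality forces $M^+=M^*$ and dual optimality of $\bargamma$. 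Part (b) is handled by the analogous accounting, breaking ties so as to realize any given integral MWM $M^*$; complementary slackness then gives $\gamma_i+\gamma_j=w_{ij}$ on $M^*$ and $\gamma_i=0$ for $i$ unmatched.

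Third, I would verify balance and the converse. For $(i,j)\in M^*$, the positive-surplus form of \eqref{eq:off_def} directly gives the identity $\gamma_i-\alf{i}{j}=\tfrac{1}{2}(w_{ij}-\alf{i}{j}-\alf{j}{i})=\gamma_j-\alf{j}{i}$, so the two sides of \eqref{eq:balance} agree once $\alf{i}{j}$ is identified with $\max_{k\in\di\setminus j}(w_{ik}-\gamma_k)_+$. This identification follows by noting that for any $k\in\di\setminus j$ the node $k$ is not matched to $i$ in $M^*$, so $\alf{k}{i}=\gamma_k$; substitution into \eqref{eq:off_def} together with stability $\gamma_i\geq w_{ik}-\gamma_k$ then gives $\off{k}{i}=(w_{ik}-\gamma_k)_+$, and taking the max over $k\neq j$ closes the gap. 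The converse direction is constructive: given an NB solution $(M^*,\bargamma_{\NB})$, define $\alf{i}{j}\equiv\max_{k\in\di\setminus j}(w_{ik}-\gamma_k)_+$ and verify the fixed-point equations directly using stability and balance; uniqueness of the corresponding fixed point follows because $\bargamma$ determines $\baralf$ through this formula and then $\baroff$ through \eqref{eq:off_def}.

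The main obstacle I foresee is the tie-breaking step hidden in the second paragraph: arguing that under part (a) no node can have two simultaneous best offers at a fixed point, and under part (b) that ties can always be resolved so as to realize any prescribed integral MWM $M^*$. Both amount to showing that the fixed-point conditions force the mutual best-offer structure to track an optimal half-integral matching of \eqref{prob:mwm_relaxation}, and that in the presence of an integral LP optimum this half-integral structure collapses to an integer matching. The technical content likely requires distinguishing the strictly positive surplus case (where the above symmetry forces mutual unique argmaxes) from the zero-surplus boundary case (where ties can occur but force $\alf{i}{j}+\alf{j}{i}=w_{ij}$, giving one enough slack to re-assign ties), and invoking the hypothesized existence/uniqueness of an integral primal optimum to rule out any residual half-integral cycles.
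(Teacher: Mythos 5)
Your skeleton matches the paper's in most respects: dual feasibility of $\bargamma$ (the paper's Lemma \ref{lemma:gamma_sat_dual_const}), the symmetric equivalence $\off{j}{i}>\alf{i}{j}\Leftrightarrow w_{ij}>\alf{i}{j}+\alf{j}{i}\Leftrightarrow\off{i}{j}>\alf{j}{i}$ (Lemma \ref{lemma:strong_dotted}), the identification $\off{k}{i}=(w_{ik}-\gamma_k)_+$ used to verify balance (Lemmas \ref{lemma:when_alf_equals_gamma} and \ref{lemma:off_from_gamma}), and the converse construction from an NB solution, which is essentially verbatim the paper's.

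The genuine gap is the step you yourself flag as ``the main obstacle,'' and it cannot be closed by the weak-duality accounting you propose. Your accounting $\sum_i\gamma_i=\sum_{(i,j)\in M^+}w_{ij}$ requires that every node with $\gamma_i>0$ lie in a \emph{mutual unique best offer} pair $M^+$, i.e.\ that ties are impossible at such nodes. But at a fixed point a node with $\gamma_i>0$ may instead be incident only to zero-surplus (``weak-dotted'') edges, on which $\gamma_i+\gamma_j=w_{ij}$ still holds but the best offer is attained with a tie and no edge enters $M^+$; the tight-edge set can then form odd cycles, blossoms, or bicycles corresponding to half-integral LP solutions, and your sum over $M^+$ undercounts $\sum_i\gamma_i$, so the sandwich argument never starts. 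Ruling this configuration out under the unique-integral-optimum hypothesis is exactly the content of the paper's Lemmas \ref{lemma:no_extra_dotted} and \ref{lemma:no_extra_solid}: one builds an alternating walk of dotted and $\underline{x}^*$-solid edges and performs a path/cycle/blossom/bicycle case analysis to produce an improving (or equally good, hence contradictorily non-unique) half-integral primal solution whenever a non-solid edge is tight or a solid edge is not. This augmentation argument is the technical core of the theorem and is absent from your proposal; ``invoking uniqueness of the integral optimum to rule out residual half-integral cycles'' is a correct description of what must be done, but the doing is several pages of combinatorics, not a one-line consequence of LP duality. Secondarily, your balance verification assumes $\alf{k}{i}=\gamma_k$ on non-matching edges, which itself is a consequence of the same structural fact (non-matching edges have non-positive surplus), so it too sits downstream of the missing lemma.
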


We prove Theorem \ref{thm:fp_dualopt} in Section
\ref{sec:FixedPoint}. Theorem \ref{thm:fp_dualopt_unmatched} in
Appendix \ref{sec:fixed_point_prop_proofs} extends this
characterization of fixed points of the natural dynamics to
cases where Nash bargaining solutions do not exist.

\begin{Remark}
The condition that a tight LP \eqref{prob:mwm_relaxation} has
a unique optimum is generic (see Appendix \ref{sec:fixed_point_prop_proofs}, Remark \ref{rem:unique_opt_generic}).
Hence, fixed points induce a matching for almost all instances (cf. Theorem \ref{thm:fp_dualopt}(a)).
Further, in the non-unique optimum case, we cannot expect an induced matching,
since there is always some node with two equally good alternatives.
\end{Remark}

The existence of a fixed point of the natural dynamics is
immediate from Brouwer's fixed point theorem. Our next result
says that the natural dynamics always converges to a fixed
point. The proof is in Section \ref{sec:convergence}.

\begin{theorem}\label{thm:convergence_to_FP}
The natural dynamics has at least one fixed point. Moreover, for any initial condition with $\baralf^0\in[0,W]^{2|E|}$, $\baralf^t$ converges to a fixed point.
\end{theorem}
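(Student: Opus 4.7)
I would split the theorem into (i) existence of a fixed point and (ii) convergence of every trajectory. Existence is immediate from Brouwer's theorem: the update map $T_\damp : \baralf \mapsto \baralf'$ defined by \eqref{eq:off_def}--\eqref{eq:update} is continuous (a composition of max, truncated-linear, and convex-combination operations) and, as noted in the text, maps the compact convex cube $C := [0,\Wmax]^{2|E|}$ into itself, hence has a fixed point in $C$.

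The heart of the proof of convergence is to show that $T_\damp$ is non-expansive with respect to the $\ell_\infty$ norm on $\mathbb{R}^{2|E|}$. For each oriented edge $(k,i)$, the offer $\off{k}{i}$ depends only on $\alf{k}{i}$ and $\alf{i}{k}$ and is piecewise linear on three regions: on $\{\alf{k}{i}+\alf{i}{k}\le w_{ki}\}$ it equals $\tfrac{1}{2}(w_{ki}-\alf{k}{i}+\alf{i}{k})$ with gradient $(-\tfrac{1}{2},+\tfrac{1}{2})$; on $\{\alf{k}{i}\le w_{ki}<\alf{k}{i}+\alf{i}{k}\}$ it equals $w_{ki}-\alf{k}{i}$ with gradient $(-1,0)$; on $\{\alf{k}{i}>w_{ki}\}$ it vanishes. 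A direct check verifies continuity across the boundaries and shows the gradient has $\ell_1$-norm at most $1$ in each region, so $\off{k}{i}$ is $1$-Lipschitz in $(\alf{k}{i},\alf{i}{k})$ under $\|\cdot\|_\infty$, and therefore also $1$-Lipschitz as a function of the whole vector $\baralf$. Pointwise maxima preserve $1$-Lipschitz continuity, hence $F_{ij}(\baralf):=\max_{k\in\partial i\setminus j}\off{k}{i}(\baralf)$ is $1$-Lipschitz, the undamped map $F$ is non-expansive in $\ell_\infty$, and the averaged map $T_\damp=(1-\damp)I+\damp F$ is non-expansive with the same fixed-point set as $F$.

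Given non-expansiveness of the averaged map $T_\damp$, I would invoke quantitative estimates for Krasnoselskii--Mann iterates in the Baillon--Bruck tradition (as provided by \cite{rate}) to obtain the asymptotic regularity property $\|\baralf^{t+1}-\baralf^t\|_\infty\to 0$. By compactness of $C$, some subsequence $\baralf^{t_k}$ converges to a limit $\baralf^{\star}\in C$; using asymptotic regularity together with continuity of $T_\damp$ then gives $T_\damp(\baralf^{\star})=\baralf^{\star}$, so $\baralf^{\star}$ is a fixed point. To upgrade from subsequential to full convergence I would use that, by non-expansiveness of $T_\damp$ and since $\baralf^{\star}$ is itself a fixed point, the sequence $\|\baralf^t-\baralf^{\star}\|_\infty$ is non-increasing in $t$; a non-increasing nonnegative sequence possessing a subsequence tending to $0$ must tend to $0$, which yields $\baralf^t\to\baralf^{\star}$.

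The main obstacle lies in the non-expansiveness step: a naive bound obtained by summing the Lipschitz constants of the two $(\cdot)_+$ terms in $\off{k}{i}$ would yield $3/2$ rather than $1$, so the cancellation between the two positive-part terms in each of the three linear pieces must be exploited by an explicit case analysis. A secondary subtlety is that $\ell_\infty$ is not strictly convex, so Hilbert-space arguments based on the parallelogram identity do not directly furnish asymptotic regularity; this is precisely why the Banach-space estimates of \cite{rate} are invoked rather than classical Krasnoselskii--Mann theory.
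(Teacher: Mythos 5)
Your proposal is correct and follows essentially the same route as the paper: cast the dynamics as the Krasnoselskii--Mann iteration of the map $\sT\baralf = \max_{k\in\partial i\setminus j}\off{k}{i}|_{\baralf}$, verify nonexpansiveness in the sup norm via exactly the same three-region piecewise-linear analysis of the offer function (with the crucial cancellation giving $\|\nabla f\|_1\le 1$), and then appeal to the Ishikawa/Baillon--Bruck theory. The only difference is cosmetic: where the paper cites \cite[Corollary 1]{Ishikawa} for convergence, you reconstruct that corollary explicitly (asymptotic regularity from \cite{rate}, a convergent subsequence by compactness, and Fej\'er monotonicity to upgrade to full convergence), which is a standard and correct unpacking of the same argument.
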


Note that
Theorem \ref{thm:convergence_to_FP} does not require any condition on LP (\ref{prob:mwm_relaxation}). It also does not require uniqueness of the fixed point.

With Theorems \ref{thm:fp_dualopt} and \ref{thm:convergence_to_FP}, we know that in the limit of a large number of iterations, the natural dynamics yields a Nash bargaining solution. However, this still leaves unanswered the question of the rate of convergence of the natural dynamics. Our next theorem
addresses this question, establishing fast convergence to an approximate fixed point.

However, before stating the theorem we define the notion of approximate fixed point.

\begin{Definition}
\label{def:approx_FP}
We say that $\baralf$ is an \emph{$\eps$-fixed point}, or \emph{$\eps$-FP} in short, if, for all $(i,j) \in E$ we have
\begin{align}\label{eq:eps-fp}
\big |\alf{i}{j} - \max_{k \in \partial i \backslash j} \off{k}{i} \big| \; \leq \; \eps \, ,
\end{align}
and similarly for $\alf{j}{i}$. Here, $\baroff$ is obtained from $\baralf$ through Eq.~\eqref{eq:off_def} (i.e., $\baroff = \baroff |_{\baralf}$).
\end{Definition}
Note that $\eps$-fixed points are also defined independently of the
damping $\damp$.

\begin{theorem}
\label{thm:rate_of_convergence}
Let $G=(V, E)$ be an instance with weights  $(w_e, e \in E) \in [0, 1]^{|E|}$. Take any initial condition
$\baralf^0\in[0,1]^{2|E|}$.  Take any $\eps > 0$. Define
\begin{align}
T^*(\eps) = \frac{1}{\pi \damp (1- \damp) \eps^2}\, .
\label{eq:convergence_time_Teps}
\end{align}
Then for all $t \geq T^*(\eps)$,  $\baralf^t$ is an $\eps$-fixed point. (Here $\pi = 3.14159\ldots$)
\end{theorem}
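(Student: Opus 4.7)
\textbf{Proof plan for Theorem \ref{thm:rate_of_convergence}.}
The plan is to view the natural dynamics as an averaged (Krasnoselskii--Mann) iteration of a non-expansive map on $(\reals^{2|E|},\|\cdot\|_\infty)$, and to invoke the sharp Banach-space rate of asymptotic regularity for such iterations \cite{rate}. Define $T:[0,1]^{2|E|}\to[0,1]^{2|E|}$ by
\begin{equation*}
T(\baralf)_{i\to j} \;=\; \max_{k\in\partial i\setminus j}\, \off{k}{i}\big|_{\baralf}\, ,
\end{equation*}
with $\off{k}{i}$ given by \eqref{eq:off_def}. Then the update \eqref{eq:update} reads $\baralf^{t+1}=(1-\damp)\baralf^t+\damp\, T(\baralf^t)$, and by Definition \ref{def:approx_FP} the vector $\baralf$ is an $\eps$-fixed point precisely when $\|\baralf-T(\baralf)\|_\infty\le\eps$. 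Thus it suffices to control $\|\baralf^t-T(\baralf^t)\|_\infty$ as a function of $t$.

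The technical heart of the argument is to show that $T$ is non-expansive in the $\ell^\infty$ norm. Since the coordinatewise maximum is $1$-Lipschitz in $\ell^\infty$, and since $T(\baralf)_{i\to j}$ depends on $\baralf$ only through the pairs $(\alf{k}{i},\alf{i}{k})$ for $k\in\partial i\setminus j$, it is enough to show that each scalar function $\off{k}{i}(\alf{k}{i},\alf{i}{k})$ from \eqref{eq:off_def} is $1$-Lipschitz in $\ell^\infty$ in its two arguments. A case analysis on the three piecewise-linear regions (both positive parts inactive; only the first active; both active) produces gradients $(0,0)$, $(-1,0)$, and $(-\tfrac12,\tfrac12)$ respectively, each of $\ell^1$-norm at most $1$. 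The standard $\|\nabla f\|_1$ versus $\ell^\infty$-Lipschitz duality then gives $|\off{k}{i}(\baralf)-\off{k}{i}(\baralf')|\le\|\baralf-\baralf'\|_\infty$, and combining with the $\ell^\infty$-Lipschitz property of $\max$ yields $\|T(\baralf)-T(\baralf')\|_\infty\le\|\baralf-\baralf'\|_\infty$.

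With non-expansiveness in hand, I would apply the Baillon--Bruck-type bound from \cite{rate} for constant-damping averaged iterations of a non-expansive map in a Banach space: for every $t\ge 1$,
\begin{equation*}
\|\baralf^t-T(\baralf^t)\|_\infty \;\le\; \frac{\mathrm{diam}_{\ell^\infty}(\{\baralf^s\}_{s\ge 0})}{\sqrt{\pi\,\damp\,(1-\damp)\,t}}\, .
\end{equation*}
A quick check using the same case analysis as above shows $\off{k}{i}\in[0,w_{ki}]\subseteq[0,1]$, whence $T([0,1]^{2|E|})\subseteq[0,1]^{2|E|}$, and since averaging preserves the box, the whole orbit lies in $[0,1]^{2|E|}$, which has $\ell^\infty$-diameter $1$. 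Substituting gives $\|\baralf^t-T(\baralf^t)\|_\infty\le 1/\sqrt{\pi\damp(1-\damp)\,t}$; this is at most $\eps$ as soon as $t\ge 1/(\pi\damp(1-\damp)\eps^2)=T^*(\eps)$. Monotonicity of the right-hand side in $t$ delivers the $\eps$-fixed point property for all $t\ge T^*(\eps)$, as claimed.

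The main obstacle is the non-expansiveness step: the natural Lipschitz setting for the updates is the $\ell^\infty$ (Banach) norm, not the $\ell^2$ (Hilbert) norm used in the classical Krasnoselskii--Mann theory, so one cannot appeal to the usual Hilbert-space proofs. Two ingredients together overcome this: the piecewise-linear structure of \eqref{eq:off_def} with $\ell^1$-bounded gradients, and the sharp Banach-space rate of \cite{rate} with the explicit constant $1/\sqrt{\pi\damp(1-\damp)\,t}$. The latter is what yields a polynomial rate $T^*(\eps)=O(1/\eps^2)$ that is \emph{independent} of the network size $|V|,|E|$, which is precisely the content of the theorem and the key point for the ``locality/convergence'' discussion in Section \ref{sec:intro}.
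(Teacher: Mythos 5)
Your proposal is correct and follows essentially the same route as the paper: casting the update as a Mann iteration of the operator $\sT$ from Eq.~\eqref{eq:Tdef}, verifying nonexpansiveness in $\ell_\infty$ via the $\|\nabla f\|_1\le 1$ property of the piecewise-linear offer map together with the nonexpansiveness of the max, and invoking the Baillon--Bruck bound with the orbit confined to $[0,1]^{2|E|}$. No gaps.
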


Thus, if we wait until time $t$, we are guaranteed to obtain an $\left(1/\sqrt{\pi \damp (1- \damp) t}\right)$-FP.
Theorem \ref{thm:rate_of_convergence} is proved in Section \ref{sec:convergence}.

\begin{Remark}
For any $\eps > 0$, it is possible to construct an example such that it takes $\Omega(1/\eps)$
iterations to reach an $\eps$-fixed point. This lower bound can be improved to $\Omega(1/\eps^2)$ in the unequal
bargaining powers case (cf. Section \ref{sec:UD}). However, in our constructions, the size of the example graph grows with decreasing
$\eps$ in each case.
\end{Remark}

We are left with the problem of relating approximate fixed points to approximate Nash bargaining solutions. We use the following definition of $\eps$-Nash bargaining solution, that is analogous to the standard definition of $\eps$-Nash equilibrium (e.g., see \cite {approx_NE}).
\begin{Definition}
We say that $(M,\bargamma)$ is an \emph{$\eps$-Nash bargaining solution} if it is a valid trade outcome that is stable and satisfies $\eps$-balance. $\eps$-balance means that for every $(i,j) \in M$ we have
\begin{align}
\big | [\gamma_i - \max_{k \in \di \backslash j} (w_{ik}\!-\! \gamma_k)_+ ] -[\gamma_j - \max_{l \in \dj \backslash i} (w_{jl}\!- \!\gamma_l)_+] \big| \leq \eps \, .
\label{eq:eps_balance}
\end{align}
\end{Definition}


A subtle issue needs to be addressed. For an approximate fixed point to
yield an approximate Nash bargaining solution, a suitable
pairing between nodes is needed. Note that our dynamics does
not force a pairing between the nodes. Instead, a pairing
should emerge quickly from the dynamics. In other words, nodes
on the graph should be able to identify their trading partners
from the messages being exchanged. As before, we use the notion of an
induced matching (see Definition \ref{def:induces_matching}).

\begin{Definition}
\label{def:LP_gap}
Consider LP \eqref{prob:mwm_relaxation}. Let $\mathcal{H}$ be the set of half integral points in the primal polytope. Let $\underline{x}^* \in \mathcal{H}$ be an optimum. Then the LP gap $g$ is defined as $g= \min_{\underline{x} \in \mathcal{H}\backslash \{\underline{x}^*\}} \sum_{e \in E} w_e x_e^*-\sum_{e \in E}w_e x_e$.
\end{Definition}


\begin{theorem}\label{thm:approx_fp}
Let $G$ be an instance for which the LP
(\ref{prob:mwm_relaxation}) admits a unique optimum, and this
is integral, corresponding to matching $M^*$. Let the gap be
$g>0$. Let $\baralf$ be an $\eps$-fixed point of the natural
dynamics, for some $\eps < g/(6n^2)$. Let $\bargamma$ be the
corresponding earnings estimates. Then $\baralf$ induces the
matching $M^*$ and $(\bargamma, M^*)$ is an $(6\eps)$-Nash bargaining
solution. Conversely, every $\eps$-Nash bargaining solution
$(M,\bargamma_{\NB})$ has $M=M^*$ for any $\eps>0$.
\end{theorem}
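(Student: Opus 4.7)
The plan is to prove the forward direction in three steps and deduce the converse as a corollary.

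First I would establish \emph{approximate stability}: $\gamma_i + \gamma_j \geq w_{ij} - 2\eps$ for every edge $(i,j)$. The $\eps$-FP condition \eqref{eq:eps-fp} gives $\gamma_i \geq \max_{k\in\partial i\setminus j}\off{k}{i}\geq \alf{i}{j} - \eps$ and symmetrically $\gamma_j \geq \alf{j}{i} - \eps$. When $\alf{i}{j} + \alf{j}{i} \geq w_{ij}$ adding these yields the bound; when $\alf{i}{j} + \alf{j}{i} < w_{ij}$ the formula \eqref{eq:off_def} gives $\off{i}{j} + \off{j}{i} = w_{ij}$ exactly, so $\gamma_i + \gamma_j \geq \off{j}{i} + \off{i}{j} = w_{ij}$. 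Summing this edgewise inequality over $(i,j) \in M^*$ produces the approximate dual bound $\sum_i \gamma_i \geq w(M^*) - n\eps$.

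Second, and this is the core of the proof, I would identify the induced matching with $M^*$ using the LP gap. The idea is to extract a half-integral primal feasible $\underline{x}$ from the offer structure: $x_{ij}=1$ on edges that are essentially the unique best offer from both endpoints, $x_{ij}=1/2$ on edges supporting chains or near-ties in the best-offer digraph, and $x_{ij}=0$ otherwise. Using $\off{i}{j} + \off{j}{i} \leq w_{ij}$ on every edge (with equality whenever both sides have positive surplus), together with $\gamma_i = \max_{k\in\partial i}\off{k}{i}$ and the $\eps$-FP relations to bound the slack introduced by chains or ties, I would show $\sum_e w_e x_e \geq \sum_i \gamma_i - O(n^2 \eps) \geq w(M^*) - O(n^2 \eps)$. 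Under the hypothesis $\eps < g/(6n^2)$, this strictly exceeds $w(M^*) - g$, so by Definition~\ref{def:LP_gap} $\underline{x}$ must be the indicator of $M^*$. In particular, $\baralf$ induces $M^*$ in the sense of Definition~\ref{def:induces_matching} with unique best offers.

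Third, once the induced matching is $M^*$, $\eps$-balance follows by translating the $\eps$-FP equations. For a non-partner $k \in \partial i \setminus \{j\}$, approximate stability at the edge $(k,i)$ combined with the fact that $k$'s best offer is from its $M^*$-partner (not $i$) implies $\alf{k}{i} = \gamma_k + O(\eps)$, and then \eqref{eq:off_def} gives $\off{k}{i} = (w_{ki} - \gamma_k)_+ + O(\eps)$. Maximizing over $k$ and using the $\eps$-FP condition yields $\alf{i}{j} = \max_{k\in\partial i\setminus j}(w_{ik}-\gamma_k)_+ + O(\eps)$, and symmetrically for $\alf{j}{i}$. Combined with $\gamma_i = \off{j}{i}$, $\gamma_j = \off{i}{j}$, and the algebraic identity $\off{j}{i} - \off{i}{j} = \alf{i}{j} - \alf{j}{i}$ (valid when both offers on $(i,j)$ are positive), careful bookkeeping produces the $\eps$-balance inequality \eqref{eq:eps_balance} with constant $6$. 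The converse is immediate: an $\eps$-NB solution is by definition stable, so Proposition~\ref{prop:sotomayor} implies that $M$ is a maximum-weight matching, and uniqueness of the LP optimum forces $M = M^*$.

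The main obstacle is step two: constructing the half-integral witness $\underline{x}$ from a potentially ill-behaved best-offer structure and tracking error terms so that the cumulative slack is at most $O(n^2 \eps)$, matching the hypothesis. Handling the possibility that the offer structure contains near-ties or chains --- which would prevent it from directly inducing a matching in the sense of Definition~\ref{def:induces_matching} --- is the delicate combinatorial point where the LP gap must really be leveraged to rule out alternative half-integral supports before the matching structure can be read off.
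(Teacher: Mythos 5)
Your overall architecture (approximate stability from the $\eps$-FP relations, identification of the induced matching via the LP gap, then approximate balance, with the converse falling out of Proposition~\ref{prop:sotomayor}) matches the paper's plan, and your step 1 and the converse are fine. But the heart of the argument --- your step 2 --- is left as an acknowledged ``obstacle,'' and the sketch you give for it does not work as stated. A single half-integral witness $\underline{x}$ supported on all near-tight edges need not exist: if a node has three or more incident near-tight edges (which is exactly the situation you must rule out), assigning each of them value $\tfrac12$ already violates the degree constraint, so you cannot build a feasible $\underline{x}$ whose support contains all of them and then appeal to the gap. The paper instead proves the stronger \emph{local} statement (Lemma~\ref{lemma:no_extra_delta_dotted}) that no non-solid edge can be $\delta$-dotted for $\delta\le 4\eps$, via an explicit alternating-path augmentation of $\underline{x}^*$ with four topology cases (path, cycle, blossom, bicycle). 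The machinery you are missing is the order of operations: the paper first proves $6\eps$-balance at \emph{every} edge (Lemma~\ref{lemma:eps-fix-gives-eps-balance}, independent of any induced matching), then uses it in a propagation lemma (Lemma~\ref{lemma:delta-dotted-give-another-dotted}) showing that a $\delta$-dotted edge at $i$ forces a $(\max(\delta,\eps)+6\eps)$-dotted edge at the next node of the alternating path. The additive $6\eps$ degradation per step is precisely what produces the $O(n^2\eps)$ total slack and hence the threshold $\eps< g/(6n^2)$; your proposal defers balance to step 3 and so has no mechanism to control the error along chains.

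A second, unacknowledged gap: a $(6\eps)$-NB solution must be \emph{exactly} stable ($\gamma_i+\gamma_j\ge w_{ij}$ on all unmatched edges), but your step 1 only yields $\gamma_i+\gamma_j\ge w_{ij}-2\eps$ and you never upgrade this. In the paper exact stability is Lemma~\ref{lemma:gamma_is_dual_opt_delta}: for non-solid edges it follows from the fact that they are not even $0$-dotted (a consequence of Lemma~\ref{lemma:no_extra_delta_dotted}), and for matched edges it is exact by the surplus computation. So even granting your step 2, the conclusion ``$(\bargamma,M^*)$ is a $(6\eps)$-NB solution'' would not yet be established.
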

Note that $g>0$ is equivalent to the unique optimum condition (cf. Remarks 2, 5). The proof of this theorem requires generalization of the
analysis used to prove Theorem \ref{thm:fp_dualopt} to the case
of approximate fixed points. Since its proof is similar to the proof of Theorem \ref{thm:fp_dualopt}, we defer it to Appendix \ref{sec:eps_fixed_point_prop_proofs}. We stress, however, that Theorem \ref{thm:approx_fp} is not, in any sense, an obvious
strengthening of Theorem \ref{thm:fp_dualopt}. In fact, this is a delicate property of approximate fixed points
that holds {\em only in the case of balanced outcomes}. This characterization breaks down in the face
of a seemingly benign generalization to unequal bargaining powers
 (cf. Section \ref{sec:UD} and \cite[Section 4]{myWINE}).

Theorem \ref{thm:approx_fp} holds for all graphs, and is, in a
sense, the best result we can hope for. To see this, consider
the following immediate corollary of Theorems
\ref{thm:rate_of_convergence} and \ref{thm:approx_fp}.
\begin{corollary}
 \label{coro:fast_matching}
Let $G=(V, E)$ be an instance with weights  $(w_e, e \in E) \in [0, 1]^{|E|}$. Suppose LP
(\ref{prob:mwm_relaxation}) admits a unique optimum, and this is integral,
corresponding to matching $M^*$. Let the gap be $g>0$. Then for
any $\baralf^0 \in [0,1]^{2|E|}$, there exists $T^*= O(n^4/g^2)$ such
that for any $t \geq T^*$, $\baralf^t$ induces the matching $M^*$
and $(\bargamma^t, M^*)$ is an $(6/\sqrt{\pi \damp(1-\damp)t})$-NB solution.
\end{corollary}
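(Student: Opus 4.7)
The plan is to directly compose Theorem \ref{thm:rate_of_convergence} with Theorem \ref{thm:approx_fp}. The only small subtlety is that the corollary asks for an approximation factor that decays with $t$, so we must re-apply the convergence bound freshly at each time $t\ge T^*$, rather than only at the threshold $T^*$ itself; otherwise we would obtain a bound expressed in terms of $T^*$ and $g$, not in terms of $t$.

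Concretely, I would set $\eps(t):= 1/\sqrt{\pi\damp(1-\damp)t}$. Reading Theorem \ref{thm:rate_of_convergence} with tolerance $\eps(t)$ gives $T^*(\eps(t))=t$, so at every time $t$ the message vector $\baralf^t$ is an $\eps(t)$-fixed point. Next I would define the threshold $T^* := \lceil 36\,n^4/(\pi\damp(1-\damp)g^2)\rceil+1 = O(n^4/g^2)$; this is chosen as the smallest integer guaranteeing $\eps(T^*) < g/(6n^2)$. For any $t\ge T^*$ we then have $\eps(t)\le\eps(T^*)< g/(6n^2)$, so Theorem \ref{thm:approx_fp} applies at time $t$: $\baralf^t$ induces the matching $M^*$, and $(\bargamma^t,M^*)$ is a $(6\eps(t))$-Nash bargaining solution, i.e.\ a $(6/\sqrt{\pi\damp(1-\damp)t})$-NB solution, exactly as claimed.

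There is no real obstacle here; the corollary is essentially a packaging of two deeper results that have already been stated. The one thing worth a one-line sanity check (and it is immediate from Definition \ref{def:approx_FP}, since the defining inequality \eqref{eq:eps-fp} is monotone in $\eps$) is that an $\eps(t)$-FP is also an $\eps(T^*)$-FP whenever $t\ge T^*$, so that the smallness hypothesis of Theorem \ref{thm:approx_fp} continues to hold along the whole tail of the trajectory. With that observation in hand, the two chained invocations described above yield the corollary with no further work.
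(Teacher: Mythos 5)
Your proposal is correct and follows essentially the same route as the paper's own proof: invoke Theorem \ref{thm:rate_of_convergence} to conclude that $\baralf^t$ is an $\eps(t)$-FP with $\eps(t)=1/\sqrt{\pi\damp(1-\damp)t}$, pick $T^*=O(n^4/g^2)$ so that $\eps(t)<g/(6n^2)$ for all $t\ge T^*$ (the paper uses $T^*(g/(10n^2))$ for this), and then apply Theorem \ref{thm:approx_fp} at each such $t$. The only difference is your explicit choice of the integer threshold, which is immaterial.
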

\begin{proof}
Choose $T^*$ as $T^*(g/(10n^2))$ as defined in \eqref{eq:convergence_time_Teps}. Clearly, $T^*=O(n^4/g^2)$. From Theorem \ref{thm:rate_of_convergence}, $\baralf^t$ is an $\eps(t)$-FP for $\eps(t) = 1/\sqrt{\pi \damp(1-\damp)t}$. Moreover, for all $t\geq T^*$, $\eps(t) \leq g/(10n^2)$. Hence, by Theorem \ref{thm:approx_fp}, $\baralf^t$ induces the matching $M^*$ and  $(\bargamma^t, M^*)$ is a $(6\eps(t))$-NB solution for all $t\geq T^*$.
\end{proof}

Corollary \ref{coro:fast_matching} implies that for any $\eps > 0 $, the natural dynamics finds an $\eps$-NB solution in time $O\left( \max\left(n^4/g^2 , 1/\eps^2\right)\right)$.

This result is the essentially the strongest bound we can hope
for in the following sense. First, note that we need to find
$M^*$ (see converse in Theorem \ref{thm:approx_fp})
\textit{and} balance the allocations. Max product belief
propagation, a standard local algorithm for computing the
maximum weight matching, requires $O(n/g)$ iterations to
converge, and \textit{this bound is tight} \cite{BayatiB}.
Similar results hold for the Auction algorithm \cite{auction}
which also locally computes $M^*$. Moreover, max product BP and
the natural dynamics are intimately related (see \cite{OurSODA}), with the exception that \textit{max
product is designed to find $M^*$, but this is not true for the
natural dynamics}. Corollary \ref{coro:fast_matching} shows
that natural dynamics only requires a time that is polynomial
in the same parameters $n$ and $1/g$ to find $M^*$, while it
simultaneously takes rapid care of balancing the outcome.

\subsection{Example: Polynomial convergence to $\eps$-NB solution on bipartite graphs.}
The next result further shows a concrete setting in which Corollary \ref{coro:fast_matching} leads to a strong guarantee on quickly reaching an approximate NB solution.

%
\begin{theorem}
\label{thm:bipartite_pairing} Let $G=(V, E)$ be a bipartite
graph with weights  $(w_e, e \in E) \in [0, 1]^{|E|}$. Take any
$\xi \in (0,1), \eta \in (0, 1)$.
Construct a perturbed problem instance with weights $\bar{w}_e = w_e + \eta U_e$, where $U_e$ are
independent identically distributed
random variables uniform in $[0,1]$. Then there exists $C=C(\damp)<\infty$, such that for
\begin{align}
T^* = C \left( \frac{n^2 |E|}{\eta \xi} \right)^2 ,
\end{align}
the following
happens for all
$t \geq T^*$ with probability at least $1-\xi$.
State $\baralf^t$ induces a matching $M$ that is independent of $t$. Further, $(\bargamma^t, M)$ is a $\eps(t)$-NB solution for the perturbed problem, with $\eps(t)= 12/\sqrt{\pi \damp(1-\damp)t}$.
\end{theorem}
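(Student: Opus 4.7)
The plan is to combine Corollary \ref{coro:fast_matching} with a smoothed-analysis estimate on the LP gap. Because $G$ is bipartite, the LP \eqref{prob:mwm_relaxation} automatically has an integral optimum by total unimodularity. It therefore suffices to show that with probability at least $1-\xi$, the perturbed LP has a unique optimum $\underline x^* = \underline x^{M^*}$ (where $\underline x^{M}$ denotes the characteristic vector of a matching $M$) and gap $g \geq g_0 := \eta\xi/(4|E|)$. Granted this, I would apply Corollary \ref{coro:fast_matching} to the rescaled weights $\bar w_e/(1+\eta) \in [0,1]$, whose LP gap is $\geq g_0/2$. The corollary then yields $T^* = O(n^4/g_0^2) = O\bigl((n^2|E|/(\eta\xi))^2\bigr)$, produces the induced matching $M = M^*$ for all $t \geq T^*$, and gives a $(6/\sqrt{\pi\damp(1-\damp)t})$-NB solution in the rescaled problem. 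Unscaling the earnings by $(1+\eta) \leq 2$ delivers the announced $(12/\sqrt{\pi\damp(1-\damp)t})$-NB bound for $\bar w$.

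For the gap estimate, define for each edge $e$
\[
R_e \;\equiv\; \max_{M \ni e}\, W(M) \;-\; \max_{M \not\ni e}\, W(M), \qquad W(M) \equiv \sum_{e'\in M}\bar w_{e'}.
\]
Conditioned on $\{\bar w_{e'}\}_{e'\neq e}$, the first maximum equals $\bar w_e + K_+(e)$ and the second equals $K_-(e)$ for quantities $K_\pm(e)$ that depend only on the other edges' weights, so $R_e$ is affine in $\bar w_e$ with unit coefficient. Since $\bar w_e = w_e + \eta U_e$ has density $1/\eta$, we get $P(|R_e| \leq \delta \mid \{\bar w_{e'}\}_{e'\neq e}) \leq 2\delta/\eta$, and a union bound gives $P(\min_e |R_e| \leq \delta) \leq 2|E|\delta/\eta$.

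Next I would prove that the matching gap $\Delta := W(M^*) - W(M^{**})$ (where $M^{**}$ is a second-best matching) satisfies $\Delta \geq \min_e |R_e|$. For any $e \in M^* \triangle M^{**}$, a two-case analysis (depending on which side of the symmetric difference $e$ lies) gives $|R_e| \leq \Delta$ by plugging $M^*$ and $M^{**}$ into the two maxima defining $R_e$. A second, bipartite-specific structural observation is that every half-integral LP-feasible point equals $\tfrac12(\underline x^{M^a} + \underline x^{M^b})$ for some matchings $M^a, M^b$: after peeling off the integer support (itself a matching), the half-weight support is a disjoint union of paths and \emph{even} cycles, each splitting into two alternating matchings. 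This forces $g = \Delta/2$, the bound being attained at $\tfrac12(\underline x^{M^*} + \underline x^{M^{**}})$. Combining, $P(g \leq g_0) \leq 4|E|g_0/\eta = \xi$ for the chosen $g_0$. The main obstacle is the isolation-lemma-type inequality $\Delta \geq \min_e |R_e|$, which replaces an otherwise exponential union bound over pairs of matchings by a linear one over edges and is precisely what makes $T^*$ polynomial in $n$, $|E|$, $1/\eta$ and $1/\xi$.
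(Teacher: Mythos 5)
Your proposal is correct and follows essentially the same route as the paper: a per-edge union bound giving the isolation-lemma estimate $\Pr[\Delta < \delta] \leq 2|E|\delta/\eta$, followed by an application of Corollary \ref{coro:fast_matching} to the rescaled instance (weights divided by a constant at most $2$), with the factor $12 = 2\times 6$ coming from unscaling. The one place you go beyond the paper's write-up is in explicitly bridging the matching gap $\Delta$ to the LP gap $g$ of Definition \ref{def:LP_gap} (which ranges over half-integral points) via the decomposition of bipartite half-integral feasible points into averages of two matchings — a detail the paper's proof elides but which is needed to invoke the corollary.
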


$\xi$ represents our target in the probability that a pairing does not emerge, while
$\eta$ represents the size of perturbation of the problem instance.

Theorem \ref{thm:bipartite_pairing} implies that for any fixed $\eta$ and $\xi$, and any $\eps>0$,  we find an $\eps$-NB solution in time
$\tau(\eps)=K\max(n^4 |E|^2, 1/\eps^2)$  with probability at least $1-\xi$, where $K = K(\eta, \xi, \damp) < \infty$. Theorem \ref{thm:bipartite_pairing} is proved in Section \ref{sec:bipartite_pairing}.
%

\subsection{Other results}
A different analysis allows us to prove {\em exponentially fast convergence}
to a unique Nash bargaining solution. We describe this briefly
in Section \ref{sec:exp_fast_convergence}, referring to an unpublished manuscript \cite{OurPrevious} for the proof, in the interest of space.

We investigate the case of \textit{nodes with unsymmetrical bargaining powers} in Section \ref{sec:UD}. We show that generalizations of the Theorems
\ref{thm:fp_dualopt}, \ref{thm:convergence_to_FP} and \ref{thm:rate_of_convergence} hold for a suitably modified dynamics. Somewhat surprisingly, we find that the modified dynamics may take exponentially long to converge (so Theorem \ref{thm:rate_of_convergence} does \emph{not} generalize). However, we find a different local procedure that efficiently finds approximate solutions.

In Section \ref{sec:bmatching} we consider the case where agents have arbitrary integer capacity constraints on the number of partnerships they can participate in, instead of the one-matching constraint. We generalize our dynamics and the notion of balanced outcomes to this case. We show that Theorems \ref{thm:fp_dualopt}, \ref{thm:convergence_to_FP} and \ref{thm:rate_of_convergence} generalize. As a corollary, we establish the existence of balanced outcomes whenever stable outcomes exist (Corollary \ref{coro:b_balance_existence}) in this general setting\footnote{The caveat here is that Corollary \ref{coro:b_balance_existence} does not say anything about the corner case of non-unique maximum weight $\mathbf b$-matching.}.

Appendix \ref{app:variations_of_dynamics} presents extensions of our convergence results to cases where the damping factor varies in time or from node to node, and when updates are asynchronous. This shows that our insights are robust to variations in the natural of damping and the timing of iterative updates.

\section{Unequal bargaining powers}
\label{sec:UD}


It is reasonable to expect that not all edge surpluses on
matching edges are divided equally in an exchange network
setting. Some nodes are likely to have more `bargaining power'
than others. This bargaining power can arise, for example, from
`patience'; a patient agent is expected to get more than half
the surplus when trading with an impatient partner. This
phenomenon is well known in the Rubinstein game
\cite{sequential_bargaining} where nodes alternately make
offers to each other until an offer is accepted -- the node
with a smaller discount factor earns more in the subgame perfect Nash equilibrium.
Moreover, a recent experimental study of bargaining in exchange
networks \cite{Kearns_exp} found that patience correlated
positively with earnings.

A reasonable approach to model this effect would be to assign a positive
`bargaining power' to each node, and
postulate that if a pair of nodes trade with each other, then the
edge surplus is divided in the ratio of their bargaining powers. We choose
instead, a more general setting where on each edge $(ij)$ there
is an expected surplus split fraction quantified by
$r_{ij}\in(0,1)$. Namely, $r_{ij}$ is the fraction of surplus that goes to $i$
if $i$ and $j$ trade with each other, and similarly for $r_{ji}$.
Note that we have $r_{ij} + r_{ji}=1$. We
call a weighted graph $G$ along with the postulated split
fraction vector $\underline{r}$ an \textit{unequal division
(UD)} instance.

The balance condition is replaced by \textit{correct division}
condition
\begin{align}
\label{eq:correct_division}
[r_{ij}]^{-1}&\big[\gamma_i - \max_{k \in \di \backslash j}
(w_{ik}- \gamma_k)_+\big ]\\
&\stackrel{\text{\tiny UD}}{\hbox{\equalsfill}}\,
[r_{ji}]^{-1}\big[\gamma_j - \max_{l \in \dj \backslash i} (w_{jl}- \gamma_l)_+\big]\, ,\nonumber
\end{align}
on matched edges $(ij)$. We retain the stability condition. We
call  trade outcomes satisfying (\ref{eq:correct_division}) and
stability \textit{unequal division (UD) solutions}. A natural modification to our dynamics in this situation
consists of the following redefinition of offers.
\begin{align}
    \off{i}{j}^t \stackrel{\text{\tiny UD}}{\hbox{\equalsfill}}
 (w_{ij}-\alf{i}{j}^t)_+ - r_{ij}(w_{ij}-\alf{i}{j}^t-\alf{j}{i}^t)_+\, .
    \label{eq:unequal_off_def}
\end{align}
We call the dynamics resulting from \eqref{eq:unequal_off_def} and the update rule \eqref{eq:update} the \emph{unsymmetrical natural dynamics}.
One can check that $\sT$ defined in (\ref{eq:Tdef}) is
nonexpansive for offers defined as in
(\ref{eq:unequal_off_def}). {\em It follows that Theorems
\ref{thm:convergence_to_FP} and \ref{thm:rate_of_convergence}
hold for the UD-natural dynamics with damping.} (We retain Definition \ref{def:approx_FP} of an $\eps$-FP).
Further, Theorem \ref{thm:fp_dualopt}
can also be extended to this case.
The proof involves exactly the same
steps as for the natural dynamics (cf. Section \ref{sec:FixedPoint}). Properties 1-6 in the direct part all hold (proofs nearly verbatim) and an identical construction works for the converse.

\begin{theorem}\label{thm:unequal_fp_UDsoln}
Let $G$ be an instance for which the LP (\ref{prob:mwm_relaxation})
admits an integral optimum.
Let $(\baralf,\baroff,\bargamma)$ be a fixed point of
the UD-natural dynamics. Then  $(M^*, \bargamma)$ is a UD solution for any maximum weight matching $M^*$.
Conversely, for any UD solution $(M,\bargamma_{\textup{\tiny UD}})$, matching $M$ is a maximum weight matching and
there is a unique
fixed point of the UD-natural dynamics with $\bargamma=\bargamma_{\textup{\tiny UD}}$.

Further, if the LP \eqref{prob:mwm_relaxation} has a \emph{unique} integral optimum, corresponding to matching $M^*$, then any fixed point $\baralf$
induces matching $M^*$.
\end{theorem}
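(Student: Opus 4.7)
My plan is to mirror the proof of Theorem \ref{thm:fp_dualopt} in Section \ref{sec:FixedPoint}, exploiting the fact that the UD offer formula \eqref{eq:unequal_off_def} differs from the symmetric one only through the weights $r_{ij}, r_{ji}$, and that the single identity $r_{ij}+r_{ji}=1$ preserves the crucial structural algebra. The entire argument therefore splits into three pieces: (i) every fixed point yields a UD solution on any MWM $M^*$; (ii) every UD solution extends to a unique fixed point (so in particular $M$ is a MWM); (iii) under a unique integral LP optimum, every fixed point induces that matching.

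For the direct part, I would verify the analogues of Properties $1$--$6$ of Theorem \ref{thm:fp_dualopt}. The key algebraic step is that, on any edge with $w_{ij}-\alf{i}{j}-\alf{j}{i}\ge 0$,
\begin{align*}
\off{i}{j}+\off{j}{i}=(w_{ij}-\alf{i}{j})+(w_{ij}-\alf{j}{i})-(r_{ij}+r_{ji})(w_{ij}-\alf{i}{j}-\alf{j}{i})=w_{ij},
\end{align*}
exactly as in the symmetric case. Together with the fixed-point identity $\alf{i}{j}=\max_{k\in\di\setminus j}\off{k}{i}$ this gives $\off{k}{i}\le\gamma_i$ for every $k\in\di$ and hence the dual-feasibility $\gamma_i+\gamma_j\ge w_{ij}$ on every edge. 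A standard weak-duality/counting argument (identical to the symmetric proof, using the assumed integral LP optimum) then forces $\sum_i\gamma_i$ to equal the MWM value, so by complementary slackness $(i,j)\in M^*$ implies $\gamma_i+\gamma_j=w_{ij}$, i.e.\ $\off{j}{i}=\gamma_i$ and $\off{i}{j}=\gamma_j$. Expanding $\off{j}{i}=\gamma_i$ gives $\gamma_i-\alf{i}{j}=r_{ij}(w_{ij}-\alf{i}{j}-\alf{j}{i})$ and symmetrically $\gamma_j-\alf{j}{i}=r_{ji}(w_{ij}-\alf{i}{j}-\alf{j}{i})$; dividing by $r_{ij}$ and $r_{ji}$ and using $\alf{i}{j}=\max_{k\in\di\setminus j}(w_{ik}-\gamma_k)_+$ (obtained from $\off{k}{i}\le(w_{ik}-\gamma_k)_+\le\off{k}{i}$ via the stability inequalities just established) yields the correct-division condition \eqref{eq:correct_division}.

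For the converse, given a UD solution $(M,\bargamma_{\textup{\tiny UD}})$, stability plus tightness on $M$ shows that $\bargamma_{\textup{\tiny UD}}$ is dual-optimal with $\sum_i\gamma_i=w(M)$, so by Proposition \ref{prop:sotomayor}, $M$ is a MWM. I would then construct $\baralf$ directly: set $\alf{i}{j}=\max_{k\in\di\setminus j}(w_{ik}-\gamma_k)_+$ for every directed edge. On a matched pair $(i,j)\in M$ the correct-division condition fixes $\alf{i}{j}$ and $\alf{j}{i}$ uniquely in terms of $\bargamma_{\textup{\tiny UD}}$, and a short check using \eqref{eq:unequal_off_def} confirms that the induced offers reproduce $\gamma_i,\gamma_j$ and that $\baralf$ satisfies the fixed-point equations. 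Uniqueness of $\baralf$ given $\bargamma_{\textup{\tiny UD}}$ follows from the fact that these definitions leave no free parameter. For the induced-matching claim, if the LP optimum is unique then for every $i$ and every non-partner $k\in\di$, the inequality $\gamma_i+\gamma_k>w_{ik}$ is strict (else one could construct a distinct half-integral optimum), so $\off{k}{i}<\gamma_i$ and the unique strict best offer to $i$ comes from its matched partner, as required by Definition \ref{def:induces_matching}.

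The only nontrivial obstacle is verifying that each step of the symmetric proof really uses the weights only through $r_{ij}+r_{ji}=1$; this is straightforward in the edge-sum identity above and in the correct-division derivation, but requires some care where the symmetric proof implicitly used the equal split to identify $\alf{i}{j}$ with the neighbor's gap $w_{ij}-\gamma_j$ when the surplus is zero. In that boundary case one simply notes that $\off{i}{j}=(w_{ij}-\alf{i}{j})_+$ regardless of $r_{ij}$, so the same conclusion $\alf{i}{j}\le\gamma_j$ goes through. Beyond this bookkeeping the proof is essentially verbatim from Section \ref{sec:FixedPoint}.
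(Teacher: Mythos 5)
Your proposal is correct and follows essentially the same route as the paper, which itself proves Theorem \ref{thm:unequal_fp_UDsoln} by observing that Properties 1--6 of Section \ref{sec:FixedPoint} hold nearly verbatim for the UD offers (the key algebra using only $r_{ij}+r_{ji}=1$, exactly as you isolate in the identity $\off{i}{j}+\off{j}{i}=w_{ij}$ on nonnegative-surplus edges) and that the converse uses the identical explicit construction of $\baralf$ from $\bargamma_{\textup{\tiny UD}}$. Your extra care about the zero-surplus boundary case and the derivation of \eqref{eq:correct_division} from $\gamma_i-\alf{i}{j}=r_{ij}\surp_{ij}$ matches the intended argument.
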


We note that the following generalization of the result on existence of Nash bargaining solutions \cite{KT} follows from Theorem \ref{thm:unequal_fp_UDsoln} and the existence of fixed points.
\begin{lemma}
\label{lemma:UD_existence}
UD solutions exist if and only if a stable outcome exists (i.e. LP \eqref{prob:mwm_relaxation} has an integral optimum.)
\end{lemma}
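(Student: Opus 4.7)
The plan is to derive Lemma \ref{lemma:UD_existence} as a straightforward consequence of the fixed-point characterization in Theorem \ref{thm:unequal_fp_UDsoln} together with the existence of at least one fixed point of the UD-natural dynamics. Both implications are short; the bulk of the conceptual content has already been absorbed into Theorem \ref{thm:unequal_fp_UDsoln} and into the (stated) generalization of Theorem \ref{thm:convergence_to_FP} to the UD setting.

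For the forward direction (UD solutions exist $\Rightarrow$ stable outcomes exist $\Rightarrow$ LP \eqref{prob:mwm_relaxation} has an integral optimum), I would simply observe that a UD solution $(M, \bargamma_{\textup{\tiny UD}})$ is by definition a valid trade outcome satisfying stability; hence stable outcomes exist and Proposition \ref{prop:sotomayor} immediately gives integrality of the LP optimum.

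For the reverse direction, assume LP \eqref{prob:mwm_relaxation} admits an integral optimum, and let $M^*$ be a corresponding maximum weight matching. The goal is to exhibit a UD solution supported on $M^*$. The key step is to produce a fixed point of the UD-natural dynamics. This can be done in two equivalent ways: first, the paper states immediately after \eqref{eq:unequal_off_def} that Theorem \ref{thm:convergence_to_FP} carries over to the UD dynamics, so existence of a fixed point is already in hand; alternatively, one can observe that the map $\baralf \mapsto \baralf^{+}$ defined by \eqref{eq:off_def}--style formula \eqref{eq:unequal_off_def} combined with the damped update \eqref{eq:update} is a continuous self-map of the compact convex set $[0,\Wmax]^{2|E|}$, and then apply Brouwer's fixed point theorem, exactly as done for the symmetric case. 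Given such a fixed point $(\baralf,\baroff,\bargamma)$, Theorem \ref{thm:unequal_fp_UDsoln} directly yields that $(M^*,\bargamma)$ is a UD solution, completing the proof.

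The main potential obstacle is purely verification-level rather than substantive: one must check that the UD update rule is genuinely continuous (so that Brouwer applies) and that $[0,\Wmax]^{2|E|}$ is invariant under this update. Both are immediate from the piecewise-linear form of \eqref{eq:unequal_off_def} and the fact that all quantities appearing there are bounded by $\Wmax$ whenever inputs are. Beyond this routine check, there is no real difficulty: the lemma is essentially packaging Theorem \ref{thm:unequal_fp_UDsoln} together with the abstract existence of a fixed point, and the forward direction is definitional modulo Proposition \ref{prop:sotomayor}.
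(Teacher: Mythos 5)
Your proposal is correct and follows essentially the same route as the paper: the converse is handled via Proposition \ref{prop:sotomayor}, and the direct part combines the existence of a fixed point of the UD-natural dynamics (via Brouwer / the UD version of Theorem \ref{thm:convergence_to_FP}) with the direct implication of Theorem \ref{thm:unequal_fp_UDsoln}. No gaps.
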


\begin{proof}
The direct part of Theorem \ref{thm:unequal_fp_UDsoln}, along with the existence of fixed points of the UD natural dynamics (from Brouwer's fixed point theorem, also first part of Theorem \ref{thm:convergence_to_FP} for UD) shows that UD solutions exist if LP \eqref{prob:mwm_relaxation} has an integral optimum. The converse is trivial since if LP \eqref{prob:mwm_relaxation} has no integral optimum, then there are no stable solutions (see Proposition \ref{prop:sotomayor}) and hence no UD solutions.
\end{proof}

\subsection{Exponential convergence time in the UD case}
\label{subsec:UD_slow_convergence}

It is possible to derive
a characterization similar to Theorem \ref{thm:approx_fp} also for the UD case. However, the bound
on $\eps$ needed to ensure that  the right pairing emerges in an $\eps$-FP turns out to be
exponentially small in $n$. As such, we are only able to show that a pairing emerges in time
$2^{O(n)}/g^2$. In fact, as the example below shows, it {\em does} take exponentially long
for a pairing to emerge in worst case.

Let $n=|V|$. In Appendix \ref{subapp:UD_construction}, we construct a sequence of instances $(I_n, n \geq 16)$, such that for
each instance in the sequence the following holds.
\begin{enumerate}[(a)]
\item The instance admits a UD solution, and the gap $g \geq 1$.
\item There is a message vector $\baralf$ that is an $\eps$-FP for $\eps =  2^{-cn}$ such that $\baralf$ does \emph{not} induce any matching. In fact, for any matching in our example, there exists $j$ such that $j$ gets an offer from outside the matching that exceeds the offer of its partner under the matching (if any) by at least 1. Thus, $\baralf$ is `far' from inducing a matching.
\end{enumerate}
Further, split fractions are bounded within $[r, 1-r]$ for arbitrary desired
$r \in (0, 1/2)$ ($c$ depends on $r$). Also, the weights are uniformly bounded by a constant $W(r)$.

Now, if we start at an $\eps$-FP, the successive iterates of our dynamics are all $\eps$-FPs (this follows from the nonexpansivity of the update operator, cf. Section \ref{sec:FixedPoint}). Hence, no offer can change by more than $\eps$ in each iteration. Thus, in our constructed instances $I_n$, it requires at least $2^{\Omega(n)}$ iterations starting with $\baralf^0 = \baralf$, before $\baralf^t$ induced a matching. Thus, our construction with the above properties implies that the unsymmetrical natural dynamics can take exponentially long to induce a matching, even on well behaved instances ($g = \Omega(1)$). Moreover, as discussed in Appendix \ref{subapp:UD_construction}, our construction corresponds to a plausible two-sided market, so this is not to be dismissed as a unrealistic special case that can be ignored.

\subsection{A fully polynomial time approximation scheme}
\label{subsec:UD_FPTAS}

We show that though the natural dynamics may take exponentially long to converge, there is a polynomial time iterative `re-balancing' algorithm that enables us to compute an approximate UD solution.

First we define an approximate version of correct division, asking that
Eq.~\eqref{eq:correct_division} be satisfied to within an additive $\eps$, for all matched edges.
For each edge $(ij) \in M$,  we define the
`edge surplus' as the excess of $w_{ij}$ over the sum of best alternatives, i.e.,
\begin{align}
\surp_{ij}(\bargamma) =  w_{ij} - \max_{k \in \partial i \backslash j} (w_{ik} - \gamma_k)_+
- \max_{l \in \partial j \backslash i} (w_{jl} - \gamma_l)_+ \, .
\label{eq:surplus_defn}
\end{align}

\begin{Definition}[$\eps$-Correct division]
\label{def:eps_correct_div}
An outcome $(\bargamma, M)$ is said to satisfy \emph{$\eps$-correct division} if, for all $(ij) \in M$,
\begin{align}
|\gamma_i - \max_{k \in \partial i \backslash j} (w_{ik} - \gamma_k)_+ - r_{ij} \surp_{ij}(\bargamma)| \leq \eps \, ,
\label{eq:eps_correct_div}
\end{align}
where $\surp_{ij}(\cdot)$ is defined by Eq.~\eqref{eq:surplus_defn}.
\end{Definition}

We define approximate UD solutions as follows:
\begin{Definition}[$\eps$-UD solution]
An outcome $(\bargamma,M)$ is an \emph{$\eps$-UD solution} for $\eps \geq 0$
if it is \emph{stable} and it satisfies \emph{$\eps$-correct division} (cf. Definition \ref{def:eps_correct_div}).
\end{Definition}

It follows from Lemma \ref{lemma:UD_existence} that $\eps$-UD solutions exist if and only if the LP
\eqref{prob:mwm_relaxation} admits an integral optimum, which is the same as the requirement for
existence of UD solutions. We prove the following:
\begin{theorem}
\label{thm:FPTAS}
There is an  algorithm that is polynomial in the input and $1/\eps$, such that for any problem instance with weights uniformly bounded by $1$,
i.e., $(w_e, e \in E) \in (0,1]^{|E|}$:
\begin{itemize}
\item If the instance admits a UD solution, the algorithm finds an $\eps$-UD solution.
\item If the instance does not admit a UD solution, the algorithm
returns the message {\sc unstable}.
\end{itemize}
\end{theorem}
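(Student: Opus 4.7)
The high-level plan is to decouple the two subtasks that the unsymmetrical natural dynamics must accomplish simultaneously, namely (i) discovering the correct maximum weight matching and (ii) balancing the earnings consistently with the split ratios. As shown in Section \ref{subsec:UD_slow_convergence}, discovering the matching by a purely local dynamics can take exponentially long in the UD case, but once the matching is fixed globally, balancing the earnings can be carried out by a local iteration in polynomial time. The algorithm I propose therefore first computes a maximum weight matching $M^*$ \emph{centrally} via a standard polynomial algorithm, verifies existence of an integral LP optimum (which, by Proposition \ref{prop:sotomayor} and Lemma \ref{lemma:UD_existence}, is equivalent to existence of a UD solution), and only then runs a local rebalancing iteration on the earnings.

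Concretely, first compute a maximum weight matching $M^*$ via Edmonds' algorithm in polynomial time, and simultaneously the optimal value of the LP \eqref{prob:mwm_relaxation}. If $w(M^*)$ is strictly less than the LP optimum, the LP has no integral optimum, so no UD solution exists and we return \textsc{unstable}. Otherwise, initialize $\bargamma^0$ from an optimal dual $\bar y$ of \eqref{prob:mwm_dual} (which yields a stable outcome supported on matching $M^*$) and iterate: for each $(i,j)\in M^*$ compute $a_i=\max_{k\in\di\setminus j}(w_{ik}-\gamma^t_k)_+$, $a_j=\max_{l\in\dj\setminus i}(w_{jl}-\gamma^t_l)_+$ and $s=(w_{ij}-a_i-a_j)_+$, then update $\gamma^{t+1}_i=(1-\damp)\gamma^t_i+\damp(a_i+r_{ij}s)$ and $\gamma^{t+1}_j=(1-\damp)\gamma^t_j+\damp(a_j+r_{ji}s)$, with $\gamma^t_i\equiv 0$ for $i$ unmatched. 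Run for $T=\lceil C/\eps^2\rceil$ iterations with an appropriate $C=C(\damp)$ and output $(\bargamma^T,M^*)$.

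To analyze convergence I would adapt the nonexpansivity argument of Section \ref{sec:convergence}. The rebalancing operator above is a damped combination of $\bargamma^t$ with a function one verifies to be nonexpansive in $\ell_\infty$, so the Banach-space estimates from \cite{rate} underlying Theorem \ref{thm:rate_of_convergence} apply almost verbatim, yielding convergence to an $\eps$-fixed point of the restricted dynamics in $O(1/(\damp(1-\damp)\eps^2))$ iterations; the overall runtime is polynomial in $n$, $|E|$, and $1/\eps$. At a true fixed point of this restricted operator, the outcome $(\bargamma,M^*)$ is a genuine UD solution, by the same kind of argument used for Theorem \ref{thm:unequal_fp_UDsoln} (but simplified because $M^*$ is already fixed, so one only needs to verify correct division and stability). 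At an $\eps$-fixed point, $\eps$-correct division then follows directly from the form of the update rule.

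The principal obstacle is to argue \emph{stability} of the output against non-matched edges, since the update rule does not explicitly enforce $\gamma_i+\gamma_j\geq w_{ij}$ for $(i,j)\notin M^*$. The plan is to exploit the stable LP-dual initialization $\bar y$ together with a potential-function argument showing that the iterates remain within an $O(\eps)$ neighborhood of a true UD solution---whose existence is guaranteed by Lemma \ref{lemma:UD_existence}---and that the damped updates move $\bargamma^t$ by at most $O(\eps)$ per step near the fixed point. If a cleaner handle is needed, one can append a final local projection onto the stability polytope, perturbing $\bargamma^T$ by at most $O(\eps)$ and hence the approximate correct division condition by at most $O(\eps)$. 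Translating the $\eps$-fixed point guarantee of the earnings iteration into the two conjunctive $\eps$-UD requirements (stability and $\eps$-correct division) is where the bulk of the careful technical work lies.
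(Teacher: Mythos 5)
Your algorithm is essentially the paper's: a centralized first phase that computes $M^*$ and a dual optimum (returning {\sc unstable} exactly when the LP has no integral optimum, via Proposition \ref{prop:sotomayor} and Lemma \ref{lemma:UD_existence}), followed by a damped synchronous edge-rebalancing iteration on the earnings with the matching frozen, analyzed through nonexpansivity in $\ell_\infty$ and the Baillon--Bruck bound from \cite{rate}. Your identification of the $\eps$-fixed-point condition of the rebalancing operator with $\eps$-correct division also matches the paper.

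The genuine gap is in the stability step, which you rightly single out as the principal obstacle but then propose to resolve by arguments that do not work. Nonexpansive maps do not contract towards their fixed-point set, and that set may be a continuum, so there is no a priori reason the trajectory stays within $O(\eps)$ of any particular UD solution; and a terminal projection onto the stability polytope is neither local nor obviously an $O(\eps)$ perturbation unless you already know the iterate is $O(\eps)$-close to stable, which is precisely what needs proving. The paper's resolution (Lemma \ref{lemma:ER_preserves_stability}) is an exact invariant: stability is preserved at \emph{every} iteration. The key observation is that the rebalanced value $\gamma_i^{\reb}=a_i+r_{ij}\,\surp_{ij}(\bargamma^t)$ satisfies $\gamma_i^{\reb}\geq a_i\geq (w_{ik}-\gamma_k^t)_+\geq \gamma_i^t-\sigma_{ik}^t$ for every non-matching edge $(i,k)$ with slack $\sigma_{ik}^t=\gamma_i^t+\gamma_k^t-w_{ik}\ge 0$ (the surplus share is nonnegative because stability at time $t$ forces $\surp_{ij}(\bargamma^t)\geq 0$, so your positive part on $s$ is redundant). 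Adding the symmetric bound at $k$ gives $\gamma_i^{\reb}+\gamma_k^{\reb}\geq w_{ik}-\sigma_{ik}^t$, and the damped combination then yields $\gamma_i^{t+1}+\gamma_k^{t+1}\geq w_{ik}+(1-2\damp)\sigma_{ik}^t\geq w_{ik}$, \emph{provided} $\damp\in(0,1/2]$ --- a restriction on the damping that is essential to the argument and absent from your proposal. With this invariant, the output is stable exactly rather than approximately, and combined with the $\eps$-correct-division guarantee at termination it is an $\eps$-UD solution with no correction step needed.
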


Our approach to finding an $\eps$-UD solution consists of two main steps:
\begin{enumerate}
\item Find a maximum weight matching $M^*$ and a dual optimum $\bargamma$ (solution to the dual LP \eqref{prob:mwm_dual}) .
Thus, form a stable outcome $(\bargamma,M^*)$. Else certify that the instance has no UD solution.
\item Iteratively update the allocation $\bargamma$ without changing the matching.
Updates are local, and are designed to converge fast to an allocation satisfying the $\eps$-correct division solution
\emph{while maintaining stability}. Thus, we arrive at an $\eps$-UD solution.
\end{enumerate}
As mentioned earlier, this is similar to the approach of \cite{Azar}. The crucial differences (enabling our results)
are:  (i) we stay within the space of stable outcomes, (ii) our analysis of convergence.

The algorithm leading to Theorem \ref{thm:FPTAS} and the corresponding analysis are presented in Appendix \ref{subsec:UD_FPTAS}. We remark that our algorithm can easily be made \emph{local}, without sacrificing efficiency (see \cite{myWINE} for details).

\section{General capacity constraints}
\label{sec:bmatching}

In several situations, agents may be less restricted: Instead of an agent being allowed to enter at most one agreement, for each agent $i$, there may be an integer \emph{capacity} constraint $b_i$ specifying the maximum number of partnerships that $i$ can enter into. For instance, in a labor market for full time jobs, an employer $j$ may have 4 openings for a particular role ($b_j=4$), another employer may have 6 openings for a different role, and so on, but the job seekers can each accept at most one job.
In this section, we describe a generalization of our dynamical model to the case of general capacity constraints, in an attempt to model behavior in such settings. We find that most of our results from the one-matching case, cf. Section \ref{sec:main_results}, generalize.

\subsection{Preliminaries}

Now a bargaining network is specified by an undirected graph $G=(V,E)$ with positive weights on the edges $(w_{ij})_{(ij) \in E}$, and integer capacity constraints associated to the nodes $(b_i)_{i \in V}$. We generalize the notion of `matching' to sets of edges that satisfy the given capacity constraints:
Given capacity constraints $\mathbf b=(b_i)$,  we call a set of edges $M \subseteq E$ a \emph{$\mathbf b$-matching} if the degree $d_i(M)$ of $i$ in the graph $(V,M)$ is at most $b_i$, for every $i\in V$. We say that $i$ is \emph{saturated} under $M$ if $d_i(M)=b_i$.

We assume that there are no double edges between nodes\footnote{This assumption was not needed in the one-exchange case since, in that case,  utility maximizing agents $i$ and $j$ will automatically discard all but the heaviest edge between them. This is no longer true in the case of general capacity constraints.}. Thus, an agent can use at most one unit of capacity with any one of her neighbors in the model we consider.

A \emph{trade outcome} is now a pair $(M,\Gamma)$, where $M$ is a $\mathbf b$-matching
and $\Gamma\in [0,1]^{2|E|}$ is a splitting of profits $\Gamma = (\gamma_{i \to j}, \gamma_{j \to i})_{(ij) \in E}$, with
$\gamma_{i\to j}=0$ if $(ij)\notin M$, and $\gamma_{i\to j}+\gamma_{j\to i}=w_{ij}$ if $(ij)\in M$.

Define $\gamma_i=\min_{j:(ij) \in M} \gamma_{j\to i}$ if $i$ is saturated (i.e. $d_i(M)=b_i$)
and $\gamma_i=0$ if $i$ is not saturated. Note that this definition is equivalent to $\gamma_i=\bmax{i}_{j \in \partial i} \gamma_{j\to i}$. Here $\bmax{}: \reals_+^* \rightarrow \reals_+$ denotes the $b$-th largest of a set of non-negative reals, being defined as $0$ if there are less than $b$ numbers in the set. It is easy to see that our definition of $\gamma_i$ here is consistent with the definition for the one-exchange case. (But $\Gamma$ is not consistent with $\bargamma$, which is why we use different notation.)


We say that a trading outcome is \emph{stable} if $\gamma_i+\gamma_j\geq w_{ij}$ for all $ij\notin M$. This definition is natural; a selfish agent would want to switch partners if and only if he can gain more utility elsewhere.

An outcome $(M, \gamma)$ is said to be \emph{balanced} if
\begin{align}
\gamma_{j \to i} - \bmax{i}_{k \in \partial i \backslash j} (w_{ik} -\gamma_k)_+
=
\gamma_{i \to j} - \bmax{j}_{l \in \partial j \backslash i} (w_{jl} -\gamma_l)_+
\label{eq:b_balance}
\end{align}
for all $(ij)\in M$.

Note that the definitions of stability and balance generalize those for the one-exchange case.

An outcome $(M, \Gamma)$ is a Nash bargaining solution if it is both stable and balanced.

Consider the problem of finding the maximum weight (not necessarily perfect) $\mathbf b$-matching on a weighted graph $G=(V,E)$.  The LP-relaxation of this problem and it's dual
are given by
\begin{equation}
\begin{array}{rcclcrccl}
&&&&&&&&\\
\textrm{max }&&\sum_{(ij)\in E}x_{ij}w_{ij}&&|&\textrm{min }&&\sum_{i\in V} b_iy_i+\sum_{{(ij)\in E}} y_{ij}&\\
\textrm{subject to}&&&&|&\textrm{subject to}&&&\\
&&\sum_{j\in N(i)} x_{ij} \leq b_i&\forall~~ i&|&&&y_{ij}+y_i + y_j -w_{ij}\geq 0&\forall~~ (ij)\in E\\
&&0\leq x_{ij} \leq 1&\forall~(ij)\in E&|&&&y_{ij}\geq0&\forall~(ij)\in E\\
&&&&|&&&y_{i}\geq0&\forall~i\in V\\
&&&&|&&&&\\
&&\textrm{Primal LP}&&|&&&\textrm{Dual LP}.&\\
\end{array}
\label{prob:b_LP_primal_dual}
\end{equation}
\vspace{2mm}
Complementary slackness says that a pair of feasible solutions is optimal if and only if
\begin{itemize}
\item For all $ij\in E;~~~x_{ij}^*(-w_{ij}+y_{ij}^*+y_i^*+y_j^*)=0$.
\item For all $ij\in E;~~~(x_{ij}^*-1)y_{ij}^*=0$.
\item For all $i\in V;~~~(\sum_{j\in N(i)}x_{ij}^*-b_i)y_i^*=0$.
\end{itemize}

\begin{lemma}
Consider a network $G=(V,E)$ with edge weights $(w_{ij})_{(ij)\in E}$ and capacity constraints $\mathbf b=(b_i)$. There exists a stable solution if and only if the primal LP \eqref{prob:b_LP_primal_dual} admits an integer optimum. Further, if $(M, \Gamma)$ is a stable outcome, then $M$ is a maximum weight $\mathbf b$-matching, and $y_i = \gamma_i$ for all $i \in V$ and $y_{ij} = (w_{ij} - y_i -y_j)_+$ for all $(ij) \in E$ is an optimum solution to the dual LP.
\label{lemma:b_stable_characterization}
\end{lemma}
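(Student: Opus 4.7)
The plan is to mirror the proof of Proposition~\ref{prop:sotomayor} (Sotomayor), adapted to the $\mathbf b$-matching primal--dual LP pair~\eqref{prob:b_LP_primal_dual}. The key structural novelty over the one-exchange case is the extra slack variable $y_{ij}$ on each edge, which, unlike in the matching case, may be strictly positive on matched edges; the argument has to accommodate this.

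For the direct direction (stability $\Rightarrow$ $M$ is a max-weight $\mathbf b$-matching and the stated dual variables are optimal), I would set $x^*_{ij}=\mathbf{1}\{(ij)\in M\}$, $y^*_i=\gamma_i$, $y^*_{ij}=(w_{ij}-\gamma_i-\gamma_j)_+$, and check feasibility together with the three complementary-slackness conditions in~\eqref{prob:b_LP_primal_dual}. Primal feasibility is immediate since $M$ is a $\mathbf b$-matching; dual feasibility holds because $y^*_i,y^*_{ij}\ge 0$ by construction and $y^*_i+y^*_j+y^*_{ij}=\max(y^*_i+y^*_j,w_{ij})\ge w_{ij}$. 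For the slackness checks: (a)~for $(ij)\in M$, the definition $\gamma_i=\min_{k:(ik)\in M}\gamma_{k\to i}$ forces $\gamma_{j\to i}\ge\gamma_i$ and $\gamma_{i\to j}\ge\gamma_j$, so $w_{ij}=\gamma_{i\to j}+\gamma_{j\to i}\ge\gamma_i+\gamma_j$ and hence $y^*_{ij}=w_{ij}-y^*_i-y^*_j$ makes the edge constraint tight; (b)~for $(ij)\notin M$, stability gives $\gamma_i+\gamma_j\ge w_{ij}$, so $y^*_{ij}=0$; (c)~for $i$ non-saturated, $\gamma_i=0$ by definition, so $y^*_i=0$. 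LP duality then yields both optimality conclusions simultaneously.

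For the converse, given an integer primal optimum $x^*$ (the incidence vector of some $\mathbf b$-matching $M^*$) and any dual optimum $(y^*_i,y^*_{ij})$, I would construct a stable outcome by setting
\[
\gamma_{j\to i}=y_i^*+\tfrac{1}{2}y_{ij}^*,\qquad \gamma_{i\to j}=y_j^*+\tfrac{1}{2}y_{ij}^*
\]
on every $(ij)\in M^*$, and $\gamma_{i\to j}=\gamma_{j\to i}=0$ otherwise. By the CS identity $y^*_i+y^*_j+y^*_{ij}=w_{ij}$ on $M^*$ these sum to $w_{ij}$, so $\Gamma$ is a valid splitting. The induced earnings then satisfy $\gamma_i\ge y^*_i$ on saturated vertices and $\gamma_i=0$ on non-saturated vertices (where $y^*_i=0$ by CS as well). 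Stability on $(ij)\notin M^*$: by CS $y^*_{ij}=0$, so dual feasibility gives $y^*_i+y^*_j\ge w_{ij}$; if both endpoints are saturated, $\gamma_i+\gamma_j\ge y^*_i+y^*_j\ge w_{ij}$, and if one endpoint (say $i$) is non-saturated, then the other endpoint $j$ must be saturated (otherwise $M^*\cup\{(ij)\}$ would be a $\mathbf b$-matching of strictly larger weight, since $w_{ij}>0$), whence $\gamma_j\ge y_j^*\ge w_{ij}$ already suffices.

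The main obstacle I expect is the converse. In the plain matching case one can choose a dual optimum with $y^*_{ij}=0$ on matched edges, but for general capacities the positive edge excess $y^*_{ij}$ cannot be shifted entirely into the node dual variables without sacrificing optimality. Splitting the excess symmetrically between the two endpoints of each matched edge is what lets the resulting $\gamma_i$ legitimately exceed $y^*_i$ while keeping $\Gamma$ a bona fide profit split and preserving stability off $M^*$; the delicate point is then ruling out a non-matching edge between two unsaturated vertices, which uses $w_{ij}>0$ and optimality of $M^*$.
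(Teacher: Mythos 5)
Your proposal is correct and follows essentially the same route as the paper: in one direction you split the edge dual excess $y_{ij}^*/2$ between the two endpoints of each matched edge and use complementary slackness to verify stability, and in the other you exhibit the dual point $y_i=\gamma_i$, $y_{ij}=(w_{ij}-\gamma_i-\gamma_j)_+$ and conclude optimality of both sides by LP duality. The only cosmetic differences are that you verify the complementary slackness conditions explicitly where the paper simply matches objective values, and your detour through augmenting $M^*$ when an endpoint is unsaturated is unnecessary (the bound $\gamma_i\ge y_i^*$ holds for all vertices, saturated or not), but neither affects correctness.
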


\begin{proof}
If $x^*$ is an integer optimum for the primal LP, and
$H^*\subset G$ is the corresponding $\mathbf b$-matching), the complementary slackness conditions  read
\begin{itemize}
\item[(i)] For all $ij\in E(H^*);~~~w_{ij}=y_{ij}^*+y_i^*+y_j^*$.
\item[(ii)] For all $ij\notin E(H^*);~~~y_{ij}^*=0$.
\item[(iii)] For all $i$ with $d_i(H^*)<b_i;~~~y_i^*=0$.
\end{itemize}
We can construct a stable outcome $(H^*, \gamma)$ by setting $\gamma_{i \to j}= y_j^* + y_{ij}^*/2$ for $(ij) \in H^*$, and $\gamma_{i \to j} =0$ otherwise: Using (iii) above, $\gamma_i \geq y_i^*$ (cf. definition of $\gamma_i$ above), so for any $(ij) \notin H^*$, we have $\gamma_i+\gamma_j \geq y_i^*+y_j^* \geq w_{ij}$, using (ii) above. It is easy to check
that $\gamma_{i \to j}+\gamma_{j \to i}=w_{ij}$ for any $(ij)\in H^*$ using (i) above. Thus,
$(H^*, \gamma)$ is a stable outcome.

For the converse, consider a stable allocation $(M, \gamma)$. We claim that $M$ forms a (integer) primal optimum. For this we simply demonstrate that there is a feasible point in the dual with the same value as the primal value at $M$: Take $y_i = \gamma_i$, and $y_{ij} = w_{ij} - y_i - y_j$ for edges in $M$, and $0$ otherwise. The dual objective is then exactly equal to the weight of $M$. This also proves the second part of the lemma.
\end{proof}

\subsection{Dynamical model}

We retain the notation $\alf{i}{j}^t$ for the `best alternative' estimated in iteration $t$. As before, `offers' are determined as
\begin{align*}
    \off{i}{j}^t =
 (w_{ij}-\alf{i}{j}^t)_+ - \frac{1}{2}(w_{ij}-\alf{i}{j}^t-\alf{j}{i}^t)_+\, ,
\end{align*}
in the spirit of the pairwise Nash bargaining solution.

Now the best alternative $\alf{i}{j}^t$ should be the estimated income from the `replacement' partnership, if $i$ and $j$ do not reach an agreement with each other.  This `replacement' should be the one corresponding to the $b_i^{\rm th}$ largest offer received by $i$ from neighbors other than $j$. Hence, the update rule is modified to
\begin{align}
\alf{i}{j}^{t+1}= (1-\damp)\,\alf{i}{j}^{t} \,+
\,\damp \,\bmax{i}_{k \in \di \backslash j}\;
\off{k}{i}^{t}\, ,
\label{eq:b_update}
\end{align}
where $\damp \in (0,1)$ is the damping factor.

Further, we define $\Gamma = (\gamma_{i \to j}, \gamma_{j \to i})_{(ij) \in E}$ by
\begin{align}
\gamma_{j \to i}^{t} \equiv \left \{
\begin{array}{ll}
 \off{j}{i}^{t} & \mbox{if } \off{j}{i}^t \mbox{ is among top $b_i$ incoming offers to $i$}\\
 0 & \mbox{otherwise.}
\end{array} \right .
\label{eq:b_Gamma}
\end{align}
Here ties are broken arbitrarily in ordering incoming offers. Finally, we define
\begin{align}
\gamma_i^t \equiv \bmax{i}_{k \in \di} \;\gamma_{k \to i}^t = \bmax{i}_{k \in \di}\;\off{k}{i}^t
\label{eq:b_gamma}
\end{align}

\subsection{Results}

Our first result is that fixed points of the new update equations
(\ref{eq:off_def}), (\ref{eq:b_update})
are again in correspondence with Nash
bargaining solutions when such solutions exist (analogous to Theorem \ref{thm:fp_dualopt}). \textit{Note
that the fixed points are independent of the damping factor
$\damp$.} First, we generalize the notion of an
induced matching.
%

%
\begin{Definition}
\label{def:b_induces_matching}
We say that a state $(\baralf,
\baroff, \Gamma)$ (or just $\baralf$) \emph{induces a $\mathbf{b}$-matching} $M$
if the following happens.  For each node $i\in V$ receiving at least
$b_i$ non-zero offers ($\off{\cdot}{i}>0$): there is no tie for the $\bmax{i}$ incoming offer to $i$, and node $i$ is matched under $M$ to the $b_i$ neighbors from whom it is receiving its $b_i$ highest offers. For each node $i\in V$ receiving less than
$b_i$ non-zero offers: node $i$ is matched under $M$ to all its neighbors from whom it is receiving positive offers.
\end{Definition}

Consider the LP relaxation to the maximum weight matching problem
\eqref{prob:mwm_relaxation}.
A feasible point $\underline{x}$ for LP \eqref{prob:mwm_relaxation} is called \emph{half-integral} if for all $e\in E$, $x_e\in\{0,1,\frac{1}{2}\}$. Again, it can be easily shown that the primal LP (\ref{prob:b_LP_primal_dual}) always has an optimum $\underline{x}^*$ that is half-integral \cite[Chapter 31]{Schrijver}.
As before, an LP with a fully integer $\underline{x}^*$ (i.e., $x_e^*\in\{0,1\}$ for all $e \in E$) is called \emph{tight}.

\begin{theorem}\label{thm:b_FP_NB}
Let $G=(V,E)$ with edge weights $(w_{ij})_{(ij)\in E}$ and capacity constraints $\mathbf b=(b_i)$ be an instance such that the primal LP (\ref{prob:b_LP_primal_dual}) has a unique optimum that is integral, corresponding to matching $M^*$.
Let $(\baralf,\baroff,\Gamma)$ be a fixed point of
the natural dynamics. Then $\baralf$
induces matching $M^*$ and $(M^*, \Gamma)$ is a Nash bargaining solution.
Conversely, every Nash bargaining solution $(M,\Gamma_{\NB})$ has $M=M^*$ and
corresponds to a unique
fixed point of the natural dynamics with $\Gamma=\Gamma_{\NB}$.\\
\end{theorem}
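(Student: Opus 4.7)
My plan is to lift the proof of Theorem \ref{thm:fp_dualopt}(a) from the one-exchange case, replacing $\max$ by the $b_i^{\rm th}$-max operator $\bmax{i}$ and Proposition \ref{prop:sotomayor} by Lemma \ref{lemma:b_stable_characterization}. Both directions rest on LP duality for \eqref{prob:b_LP_primal_dual} and complementary slackness, leveraging the assumed uniqueness of the primal integer optimum $\underline{x}^*$.

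For the direct part, let $(\baralf,\baroff,\Gamma)$ be a fixed point. First, I would note the monotonicity property of $\bmax{i}$: adding an element to a multiset cannot decrease the $b_i^{\rm th}$-largest entry, which yields $\alf{i}{j}\le \gamma_i$ on every edge. Next I would define dual variables $y_i:=\gamma_i$ and $y_{ij}:=(w_{ij}-\gamma_i-\gamma_j)_+$. Non-negativity is immediate, and the dual constraint $y_i+y_j+y_{ij}\ge w_{ij}$ holds by construction, so this is a dual-feasible point for \eqref{prob:b_LP_primal_dual}. The crucial computation is that the dual objective here equals the primal optimum $\sum_{(ij)\in M^*}w_{ij}$. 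I would argue the upper bound using the identity $b_i\gamma_i\le \sum_{k\in S_i}\off{k}{i}$ for $S_i$ the top-$b_i$ set at $i$, combined with the edge inequality $\off{i}{j}+\off{j}{i}\le w_{ij}$ that follows directly from \eqref{eq:off_def}; weak duality supplies the matching lower bound. Complementary slackness, together with uniqueness of $\underline{x}^*$, then forces: (i) on every $(ij)\in M^*$, $\gamma_i+\gamma_j=w_{ij}$, both $\off{i}{j}$ and $\off{j}{i}$ are positive, and they appear among the top $b_i$ and top $b_j$ incoming offers at $i$ and $j$ respectively; (ii) on $(ij)\notin M^*$, $\gamma_i+\gamma_j\ge w_{ij}$ and the corresponding offers $\off{j}{i},\off{i}{j}$ are strictly dominated by the top $b_i$, top $b_j$ offers at their destinations. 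This is exactly Definition \ref{def:b_induces_matching}, so $\baralf$ induces $M^*$ and stability holds on non-matching edges.

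The balance condition \eqref{eq:b_balance} then follows from the fixed-point equation. Using (i)--(ii), I can rewrite $\alf{i}{j}=\bmax{i}_{k\in\di\setminus j}\off{k}{i}=\bmax{i}_{k\in\di\setminus j}(w_{ik}-\gamma_k)_+$; the symmetric form of \eqref{eq:off_def} then gives $\off{j}{i}-\alf{i}{j}=\off{i}{j}-\alf{j}{i}$, which, together with $\gamma_{j\to i}=\off{j}{i}$ and $\gamma_{i\to j}=\off{i}{j}$ on matched edges, is exactly \eqref{eq:b_balance}. For the converse, given an NB solution $(M,\Gamma_{\NB})$, stability and Lemma \ref{lemma:b_stable_characterization} force $M$ to be a maximum weight $\mathbf{b}$-matching, hence $M=M^*$ by uniqueness. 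I would then construct the candidate fixed point by $\alf{i}{j}:=\bmax{i}_{k\in\di\setminus j}(w_{ik}-\gamma_k)_+$, verify via \eqref{eq:off_def} that the induced offers reproduce $\gamma_{i\to j}^{\NB}$ and hence $\gamma_i^{\NB}$, and check that the update rule \eqref{eq:b_update} holds for arbitrary $\damp$. Uniqueness of the $\baralf$ associated with $\Gamma_{\NB}$ is a consequence of the fixed-point equation, which pins $\alf{i}{j}$ down in terms of the offers.

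The principal obstacle is the non-smooth $b_i^{\rm th}$-max operator: unlike ordinary max, small reorderings among the top $b_i+1$ incoming offers at a node can change the identity of the induced matching. I must therefore rule out ties among these top offers, which is where uniqueness of the primal optimum $\underline{x}^*$ is essential -- in direct analogy with the genericity observation made for the one-exchange case. Once ties are excluded, the arguments from the proof of Theorem \ref{thm:fp_dualopt}(a) (Section \ref{sec:FixedPoint}) generalize essentially verbatim, as already observed in the proof sketch accompanying Theorem \ref{thm:unequal_fp_UDsoln}.
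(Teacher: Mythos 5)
Your converse direction and your derivation of the balance condition follow the paper's route and are fine in outline; the gap is in the direct part, specifically in the ``crucial computation'' that the dual point $(y_i,y_{ij})=(\gamma_i,(w_{ij}-\gamma_i-\gamma_j)_+)$ has objective value $w(M^*)$. Unwinding your counting at a fixed point: the edges with $y_{ij}>0$ are contained in the strong-dotted set $D=\{(ij):w_{ij}-\alf{i}{j}-\alf{j}{i}>0\}$, on which $\off{i}{j}+\off{j}{i}=w_{ij}$ and $\gamma_i\le\off{j}{i}$, $\gamma_j\le\off{i}{j}$, so the dual objective collapses to $\sum_i\gamma_i\bigl(b_i-d_i(D)\bigr)+w(D)$. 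Since weak duality forces this quantity to be at least $w(M^*)\ge w(D)$, dual optimality is \emph{equivalent} to the two structural facts (i) $\gamma_i=0$ whenever $d_i(D)<b_i$, and (ii) $w(D)=w(M^*)$. Neither follows from the local identities you cite ($b_i\gamma_i\le\sum_{k\in S_i}\off{k}{i}$ and $\off{i}{j}+\off{j}{i}\le w_{ij}$): a node may receive $b_i$ or more positive offers along zero-surplus (weak-dotted) edges, giving $\gamma_i>0$ while $d_i(D)<b_i$, and the uncontrolled single-direction offers in your pairing do not cancel. Ruling this out, and showing $D=M^*$, is precisely the content of the augmenting-path Lemmas \ref{lemma:b_no_extra_dotted} and \ref{lemma:b_no_extra_solid} (the $\mathbf b$-analogues of Lemmas \ref{lemma:no_extra_dotted} and \ref{lemma:no_extra_solid}), which compare the fixed point against an LP optimum globally along alternating paths, cycles and blossoms. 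Your plan is circular at this point: you invoke complementary slackness ``together with uniqueness of $\underline{x}^*$'' to force the structure, but complementary slackness only becomes available once dual optimality is established, and dual optimality is exactly what needs the structure.

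The paper's proof therefore runs in the opposite order from yours: it first establishes the combinatorial fixed-point properties --- in particular that non-solid edges are non-dotted, that strong-solid edges are strong-dotted, and that a node with fewer than $b_i$ adjacent dotted edges has $\gamma_i=0$ --- and only then reads off dual optimality as the last property. If you intend to ``lift the proof of Theorem \ref{thm:fp_dualopt}(a)'', as your closing paragraph suggests, you must also lift those alternating-path arguments (checking that the path construction respects the capacities $b_i$ and the $\bmax{i}$ operator), rather than replace them with the counting shortcut. Granting that, the rest of your outline --- exclusion of ties, the induced $\mathbf b$-matching, stability, balance, and the converse construction with its uniqueness claim --- matches the paper and goes through.
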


We prove Theorem \ref{thm:b_FP_NB} in Appendix \ref{app:b_FP_NB_proof}.

\begin{Remark}
The condition that a tight primal LP \eqref{prob:b_LP_primal_dual} has
a unique optimum is generic (analogous to Appendix \ref{sec:fixed_point_prop_proofs}, Remark \ref{rem:unique_opt_generic}).
Hence, Theorem \ref{thm:b_FP_NB} applies to `almost all' problems with for which there exists a stable solution (cf. Lemma \ref{lemma:b_stable_characterization}).
\end{Remark}

\begin{corollary}
Let $G=(V,E)$ with edge weights $(w_{ij})_{(ij)\in E}$ and capacity constraints $\mathbf b=(b_i)$ be an instance such that the primal LP (\ref{prob:b_LP_primal_dual}) has a unique optimum that is integral. Then the instance possesses a Nash bargaining solution.
\label{coro:b_balance_existence}
\end{corollary}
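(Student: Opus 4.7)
The plan is to combine Theorem \ref{thm:b_FP_NB} with a standard Brouwer fixed point argument applied to the update map of the natural dynamics. Since Theorem \ref{thm:b_FP_NB} tells us that under the unique integral optimum hypothesis every fixed point of the dynamics yields a Nash bargaining solution, it suffices to exhibit at least one such fixed point.

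First, I would define the one-step update operator $\sT: [0,\Wmax]^{2|E|}\to \reals^{2|E|}$ by
\begin{align*}
(\sT \baralf)_{i\to j} \;=\; (1-\damp)\,\alf{i}{j} \;+\; \damp \,\bmax{i}_{k \in \di \backslash j}\; \off{k}{i}\big|_{\baralf}\, ,
\end{align*}
where $\off{k}{i}|_{\baralf}$ is defined by \eqref{eq:off_def}. The next step is to check that $\sT$ is a continuous self-map of the compact convex set $K=[0,\Wmax]^{2|E|}$. Continuity follows because the offer map \eqref{eq:off_def} is continuous in $\baralf$, and the $\bmax{i}$ operator (the $b_i$-th largest coordinate of a finite vector) is continuous as a composition of $\min$/$\max$ operations. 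To see that $\sT$ maps $K$ into $K$, note that for $\baralf \in K$ each offer lies in $[0,\Wmax]$ (as observed in Section \ref{sec:natural_dynamics}), so each $\bmax{i}$ quantity lies in $[0,\Wmax]$, and hence $(\sT\baralf)_{i\to j}$ is a convex combination of quantities in $[0,\Wmax]$.

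Having established this, Brouwer's fixed point theorem applied to $\sT$ on $K$ produces some $\baralf^* \in K$ with $\sT \baralf^* = \baralf^*$. Let $\baroff^*$ and $\Gamma^*$ be computed from $\baralf^*$ via \eqref{eq:off_def} and \eqref{eq:b_Gamma}. By Theorem \ref{thm:b_FP_NB}, applied under the hypothesis that the primal LP \eqref{prob:b_LP_primal_dual} has a unique integral optimum $M^*$, the triple $(\baralf^*, \baroff^*, \Gamma^*)$ induces the $\mathbf b$-matching $M^*$, and $(M^*, \Gamma^*)$ is a Nash bargaining solution. This proves existence.

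I do not expect any substantial obstacle: the main subtlety is just verifying that the generalized update map with the $\bmax{i}$ operator retains the continuity and invariance properties needed for Brouwer, and these go through exactly as in the one-exchange case (Theorem \ref{thm:convergence_to_FP}), since $\bmax{i}$ is continuous and monotone in a way that preserves the bound $[0,\Wmax]$. All of the heavy lifting has already been carried out inside Theorem \ref{thm:b_FP_NB}; the corollary merely reads off the existence statement from the existence of a fixed point of a continuous self-map of a compact convex set.
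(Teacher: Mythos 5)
Your proof is correct and follows essentially the same route as the paper: the corollary is obtained by combining the guaranteed existence of a fixed point of the (capacitated) natural dynamics --- via Brouwer's theorem applied to the continuous self-map $\sT_\damp$ of the compact convex set $[0,\Wmax]^{2|E|}$, as in Theorem \ref{thm:b_convergence_to_FP} --- with the direct part of Theorem \ref{thm:b_FP_NB}. Your verification that the $\bmax{i}$ operator preserves continuity and the range $[0,\Wmax]$ is exactly the point that needs checking, and it goes through as you say.
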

Thus, we obtain a (almost) tight characterization of the when NB solutions exist in the case of general capacity constraints\footnote{For simplicity, we have stated and proved, in Theorem \ref{thm:b_FP_NB}, a generalization of only part (a) of Theorem \ref{thm:fp_dualopt}. However, we expect that part (b) also generalizes, which would then lead to an exact characterization of when NB solutions exist in this case.}.

Our convergence results, Theorems \ref{thm:convergence_to_FP} and \ref{thm:rate_of_convergence}, generalize immediately, with the proofs (cf. Section \ref{sec:convergence}) going through nearly  verbatim:

\begin{theorem}\label{thm:b_convergence_to_FP}
Let $G=(V,E)$ with edge weights $(w_{ij})_{(ij)\in E}$ and capacity constraints $\mathbf b=(b_i)$ be any instance. The natural dynamics has at least one fixed point. Moreover, for any initial condition with $\baralf^0\in[0,W]^{2|E|}$, $\baralf^t$ converges to a fixed point.
\end{theorem}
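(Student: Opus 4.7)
The plan is to follow the structure of the proof of Theorem \ref{thm:convergence_to_FP} from Section \ref{sec:convergence} essentially verbatim, making only the substitution of the operator $\bmax{i}$ in place of $\max$. The proof splits into three steps: (1) set up the update as a self-map of a compact convex set to invoke Brouwer; (2) establish non-expansiveness in $\ell_\infty$ of the underlying undamped map; (3) apply a standard averaging/convergence result for non-expansive operators in Banach spaces to conclude convergence of the damped iteration.

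For existence of a fixed point, I would define the update operator $\sT:[0,W]^{2|E|}\to [0,W]^{2|E|}$ as the composition of the offer map (\ref{eq:off_def}) followed by the $b_i$-th-max update (\ref{eq:b_update}). A direct algebraic check of (\ref{eq:off_def}) shows that $\off{i}{j}\in[0,w_{ij}]\subseteq[0,W]$ whenever $\alf{i}{j},\alf{j}{i}\in[0,W]$, so offers stay in $[0,W]^{2|E|}$; the $b_i$-th largest of non-negative reals bounded by $W$ remains in $[0,W]$, so the $\damp=1$ update takes $[0,W]^{2|E|}$ into itself, as does any convex combination with the identity. The offer map is continuous in $\baralf$, and the order-statistic functional is continuous (indeed $1$-Lipschitz in $\ell_\infty$), so $\sT$ is a continuous self-map of a compact convex set, and Brouwer's fixed point theorem gives a fixed point.

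For convergence, the key fact is $\ell_\infty$-non-expansiveness of the undamped operator $\sT_0(\baralf)_{i\to j}:=\bmax{i}_{k\in\partial i\setminus j}\off{k}{i}|_{\baralf}$. The non-expansiveness of the offer map $\baralf\mapsto\baroff|_{\baralf}$ established in Section \ref{sec:convergence} is purely algebraic in (\ref{eq:off_def}) and carries over unchanged. It thus suffices to verify that the $b$-th largest order statistic is $1$-Lipschitz with respect to $\|\cdot\|_\infty$; this is standard, since for $\|x-y\|_\infty\le\eps$ one has $x_k\le y_k+\eps$ for every coordinate, so any threshold exceeded by at least $b$ of the $x_k$'s is exceeded by at least $b$ of the $y_k+\eps$'s, implying the corresponding order statistics differ by at most $\eps$. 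Composing two $1$-Lipschitz maps yields non-expansiveness of $\sT_0$.

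With $\sT_0$ non-expansive and mapping the closed convex set $[0,W]^{2|E|}$ into itself, the damped iteration $\baralf^{t+1}=(1-\damp)\baralf^t+\damp\,\sT_0(\baralf^t)$ with $\damp\in(0,1)$ is a Krasnosel'skii--Mann averaged iteration, and the same convergence theorem for non-expansive maps in Banach spaces invoked for Theorem \ref{thm:convergence_to_FP} (cf.\ \cite{rate}) gives convergence of $\baralf^t$ to a fixed point of $\sT$. The main thing to check is that no step of the Section \ref{sec:convergence} argument implicitly relied on $b_i\equiv 1$ beyond the Lipschitz bound on $\max$; since the only property used there is $1$-Lipschitzness of the outer aggregation, and $\bmax{b}$ satisfies it, the generalization is immediate. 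There is no substantive obstacle; the proof is a genuine transcription.
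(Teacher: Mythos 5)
Your proposal is correct and follows exactly the route the paper takes: the paper itself simply asserts that the proof of Theorem \ref{thm:convergence_to_FP} goes through "nearly verbatim" once $\max$ is replaced by $\bmax{i}$, i.e., cast the damped update as a Mann iteration of an $\ell_\infty$-nonexpansive self-map of $[0,W]^{2|E|}$, get existence from Brouwer, and convergence from the Ishikawa/Edelstein--O'Brien theory. Your explicit verification that the $b$-th order statistic is $1$-Lipschitz in sup norm is precisely the one new ingredient the paper leaves implicit, so there is nothing further to add.
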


We retain the Definition \ref{def:approx_FP}
for an $\eps$-fixed point.

\begin{theorem}
\label{thm:b_rate_of_convergence}
Let $G=(V, E)$ with weights  $(w_e, e \in E) \in [0, 1]^{|E|}$ and capacity constraints $\mathbf b=(b_i)$ be any instance. Take any initial condition
$\baralf^0\in[0,1]^{2|E|}$.  Take any $\eps > 0$. Define
\begin{align}
T^*(\eps) = \frac{1}{\pi \damp (1- \damp) \eps^2}\, .
\label{eq:b_convergence_time_Teps}
\end{align}
Then for all $t \geq T^*(\eps)$,  $\baralf^t$ is an $\eps$-fixed point. (Again $\pi = 3.14159\ldots$)
\end{theorem}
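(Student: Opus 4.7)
The plan is to mimic the proof of Theorem \ref{thm:rate_of_convergence} in Section \ref{sec:convergence}, the only substantive change being the verification that replacing the ordinary maximum with the $b$-th maximum preserves the nonexpansivity on which the argument rests. The undamped update operator for the $\mathbf b$-matching dynamics is the map $\sT:[0,1]^{2|E|}\to[0,1]^{2|E|}$ defined coordinatewise by
\begin{align*}
[\sT(\baralf)]_{i\to j} \;=\; \bmax{i}_{k\in\di\setminus j}\, \off{k}{i}\big|_{\baralf}\, ,
\end{align*}
where $\off{k}{i}|_{\baralf}$ is computed from $\baralf$ via Eq.~\eqref{eq:off_def}. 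In this notation, the damped update \eqref{eq:b_update} is exactly the Krasnoselskii--Mann iterate $\baralf^{t+1}=(1-\damp)\baralf^{t}+\damp\,\sT(\baralf^{t})$, and an $\eps$-fixed point is characterized by $\|\sT(\baralf^t)-\baralf^t\|_\infty\le\eps$. It is also immediate from the definitions that $\sT$ maps the cube $[0,1]^{2|E|}$ into itself, so the iteration remains confined to a convex set of $\ell_\infty$-diameter at most $1$.

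First I would establish that $\sT$ is nonexpansive with respect to $\|\cdot\|_\infty$. This decomposes into two steps. Step (i): the map $\baralf\mapsto\baroff|_{\baralf}$ is nonexpansive in $\ell_\infty$; this is precisely the content used in the one-exchange analysis of Section \ref{sec:convergence}, and the offer formula \eqref{eq:off_def} is unchanged, so the same verification applies verbatim. Step (ii): for any fixed $b\in\naturals$, the ``$b$-th largest'' functional $\bmax{b}:\reals^{d}\to\reals$ is nonexpansive in $\ell_\infty$; indeed, if two vectors $x,y\in\reals^d$ satisfy $\|x-y\|_\infty\le\delta$ then shifting each coordinate of $y$ by at most $\delta$ shifts the $b$-th order statistic by at most $\delta$, so $|\bmax{b}(x)-\bmax{b}(y)|\le\delta$ (with the convention that missing entries are treated as $0$ both times). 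Composing the two nonexpansive maps coordinatewise yields nonexpansivity of $\sT$ in $\|\cdot\|_\infty$.

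Second, I would invoke the quantitative result on Krasnoselskii--Mann iterations of nonexpansive self-maps on a bounded convex subset of a Banach space from \cite{rate}. Applied in $(\reals^{2|E|},\|\cdot\|_\infty)$ with domain of diameter $W_{\max}\le 1$, it gives
\begin{align*}
\|\sT(\baralf^t)-\baralf^t\|_\infty \;\le\; \frac{1}{\sqrt{\pi\damp(1-\damp)\,t}}
\end{align*}
for every $t\ge 1$ and every initial condition in $[0,1]^{2|E|}$. Setting the right-hand side to $\eps$ and solving for $t$ gives exactly the threshold $T^*(\eps)=1/(\pi\damp(1-\damp)\eps^2)$ of \eqref{eq:b_convergence_time_Teps}, and since $\|\sT(\baralf^t)-\baralf^t\|_\infty\le\eps$ is, by the definition rewritten above, the statement that $\baralf^t$ is an $\eps$-fixed point, the theorem follows.

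The only place where the $\mathbf b$-matching case is not automatic from the one-exchange proof is the nonexpansivity of the $b$-th maximum in Step (ii) above, which I expect to be the main (though very mild) obstacle; once this is in place, the Banach-space convergence-rate machinery from \cite{rate} used in Section \ref{sec:convergence} carries over without further change, because the domain, the diameter, the update format, and the offer formula are all identical to the one-exchange setting.
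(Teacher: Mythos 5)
Your proposal is correct and follows exactly the route the paper intends: the paper itself proves Theorem \ref{thm:b_rate_of_convergence} only by asserting that the argument of Section \ref{sec:convergence} "goes through nearly verbatim," and your write-up supplies precisely the one missing verification — that the $b$-th order statistic is nonexpansive in $\ell_\infty$ — before invoking \cite{rate} on the Mann iteration as in the one-exchange case.
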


\section{Discussion}
\label{sec:discussion}

Our results provide a dynamical justification for balanced outcomes, showing that agents bargaining with each other in a realistic, local manner can find such outcomes quickly. Refer to Section \ref{subsec:contribution} for a summary of our contributions.

Some caution is needed in the interpretation of our results. Our dynamics avoids the question of how and when a pair of agents will cease to make iterative updates, and commit to each other. We showed that the right pairing will be found in time polynomial in the network size $n$ and the parameter LP parameter $g$. But how will agents find out when this convergence has occurred? After all, agents are not likely to know $n$, and even less likely to know $g$. Further, why should agents wait for the right pairing to be found? It may be better for them to strike a deal after a few iterative updates because (i) they may estimate that they are unlikely to get a better deal later, (ii) they may be impatient, (iii) the convergence time may be very large on large networks. If a pair of agents \emph{do} pair up and leave, then this changes the situation for the remaining agents, some of whom may have lost possible partners (\cite{ManeaA_markovEq} studies a model with this flavor). Our dynamics does not deal with this. A possible approach to circumventing some of these problems is to interpret our model in the context of a repeated game, where agents can pair up, but still continue to renegotiate their partnerships. Formalizing this is an open problem.

Related to the above discussion is the fact that our agents are not strategic. Though our dynamics admits interpretation as a bargaining process, it is unclear how, for instance, agent $j$ becomes aware of the best alternative $\alf{i}{j}$ of a neighbor $i$. In the case of a fixed best alternative, the work of Rubinstein \cite{sequential_bargaining} justifies the pairwise Nash bargaining solution, but in our case the best alternative estimates evolve in time. Thus, it is unclear how to explain our dynamics game theoretically. However, we do not consider this to be a major drawback of our approach. Non-strategic agent behavior is commonly assumed in the literature on learning in games \cite{FLbook}, even in games of only two players. Alternative recent approaches to bargaining in networks assume strategic agents, but struggle to incorporate reasonable informational assumptions (e.g. \cite{ManeaA_markovEq} assumes common knowledge of the network and perfect information of all prior events). Prima facie, it appears that bounded rationality models like ours may be more realistic.

Several examples admit multiple balanced outcomes (see Example 3 in Section \ref{sec:intro}). In fact, this is a common feature of two-sided assignment markets, which typically contain multiple even cycles. It would be very interesting to investigate whether our dynamics favors some balanced outcomes over others. If this is the case, it may improve our ability to predict outcomes in such markets.

Our model assumes the network to be exogenous, which does not capture the fact that agents may strategically form links. It would be interesting (and very challenging) to endogenize the network. A perhaps less daunting proposition is to characterize bargaining on networks that experience shocks, like the arrival of new agents, the departure of agents or the addition/deletion of links. Our result showing convergence to an approximate fixed point in time independent of the network size provides hope of progress on this front.

Finally, we remark on the computational problem of computing exact UD solutions in the unsymmetric case (recall that we give an FPTAS). We conjecture that the problem is computationally hard (cf. Section \ref{subsec:contribution}), with the recently introduced complexity class \emph{continuous local search} \cite{CLS} providing a possible way forward. We leave it as a challenging open problem to prove or refute this conjecture.

\section{Convergence to fixed points: Proofs of Theorems
\ref{thm:convergence_to_FP} and \ref{thm:rate_of_convergence}}
\label{sec:convergence}

Theorems \ref{thm:convergence_to_FP} and
\ref{thm:rate_of_convergence} admit a surprisingly simple
proofs, that build on powerful results in the theory of
nonexpansive mappings in Banach spaces.

\begin{Definition} Given a normed linear space $L$, and a bounded domain
$D\subseteq L$, a \emph{nonexpansive mapping} $\sT:D\to L$ is a
mapping satisfying $\|\sT x-\sT y\|\le \|x-y\|\,$ for all $x,y\in D$.
\end{Definition}
Mann \cite{Mann} first considered the
iteration $x^{t+1} = (1-\kappa)\, x^t+\kappa\, \sT x^t$ for
$\kappa\in (0,1)$, which is equivalent to iterating  $\sT_{\kappa} = (1-\kappa)\,I+\kappa\, \sT$.
Ishikawa \cite{Ishikawa} and Edelstein-O'Brien \cite{Edelstein}
proved the surprising result that, if the sequence
$\{x^t\}_{t\ge 0}$ is bounded, then $\|\sT x^t-x^t\|\to 0$ (the
sequence is \emph{asymptotically regular}) and indeed $x^t\to
x^*$ with $x^*$ a fixed point of $\sT$.

Baillon and Bruck \cite{rate} recently proved a powerful
quantitative version of Ishikawa's theorem: If $\|x^0-x^t\|\le
1$ for all $t$, then
\begin{eqnarray}
\|\sT x^t-x^t\| < \frac{1}{\sqrt{\pi \kappa(1-\kappa)t}}\, .
\end{eqnarray}
The surprise is that such a result holds irrespective of the
mapping $\sT$ and of the normed space (in particular, of its
dimensions). Theorems \ref{thm:convergence_to_FP} and
\ref{thm:rate_of_convergence} immediately follow from this
theory once we recognize that the natural dynamics can be cast
into the form of a Mann iteration for a  mapping which is
nonexpansive with respect to a suitably defined norm.

Let us stress that the nonexpansivity property does not appear
to be a lucky mathematical accident, but rather an intrinsic
property of bargaining models under the one-exchange
constraint. It loosely corresponds to the basic observation
that if earnings in the neighborhood of a pair of trade
partners change by amounts $N_1, N_2, ... , N_k$, then the
balanced split for the partners changes at most by $\max(N_1,
N_2, \ldots, N_k)$, i.e., the largest of the neighboring
changes.

Our technique seems therefore applicable in a broader context.
(For instance, it can be applied successfully to prove fast
convergence of a synchronous and damped version of the
edge-balancing dynamics of \cite{Azar}.)
\begin{proof}[Proof (Theorem \ref{thm:convergence_to_FP})]
We consider the linear space $L=\reals^{2|E|}$ indexed by
directed edges in $G$. On the bounded domain $D = [0,W]^{2|E|}$
we define the mapping $\sT:\baralf\mapsto \sT\baralf$ by letting,
for $(i,j)\in E$,
\begin{equation}
\label{eq:Tdef}
(\sT\baralf)_{i\backslash j} \equiv \max_{k \in \di \backslash j} \off{k}{i}|_{\baralf}\, ,
\end{equation}
where $\off{k}{i}|_{\baralf}$ is defined by Eq.~\eqref{eq:off_def}.
It is easy to check that the sequence of best alternatives
produced by the natural dynamics corresponds to the Mann
iteration $\baralf^t = \sT_{\kappa}^t\baralf^0$. Also, $\sT$ is
nonexpansive for the $\ell_{\infty}$ norm
\begin{eqnarray}
\|\baralf-\barbet\|_{\infty}=
\max_{(i,j)\in E}|\alf{i}{j}-\beta_{i\setminus j}|\, .
\end{eqnarray}
Non-expansivity follows from:\\
(i) The `$\max$' in
Eq.~(\ref{eq:Tdef}) is non expansive.\\
(ii) \hskip-4pt An offer $\off{i}{j}$ as defined
by Eq.~\eqref{eq:off_def} is nonexpansive. To see this, note that
$\off{i}{j}=f(\alf{i}{j}, \alf{j}{i})$,
where $f(x,y):\reals_+^2 \rightarrow \reals_+$ is given by
\begin{eqnarray}
\ f(x,y) = \left\{
    \begin{array}{ll} \frac{w_{ij}-x+y}{2} & x+y \leq w_{ij}\\
    (w_{ij}-x)_+ &\mbox{ otherwise.}\\
\end{array}\right.
\label{eq:regionwise_offers}
\end{eqnarray}
It is easy to check that $f$ is continuous everywhere in $\reals_+^2$.
Also, it is differentiable except in $\{(x,y) \in \reals_+^2:x+y=w_{ij} \mbox{ or } x=w_{ij}\}$, and
satisfies $||\nabla f ||_1=|\frac{\partial f}{\partial x}|+|\frac{\partial f}{\partial y}| \leq 1$.
Hence, $f$ is Lipschitz continuous in the $L^\infty$ norm, with Lipschitz constant 1,
i.e., it is nonexpansive in sup norm.

Notice that $\sT_{\kappa}$ maps $D\equiv [0,W]^{2|E|}$ into
itself. The thesis follows from \cite[Corollary
1]{Ishikawa}.
\end{proof}

\begin{proof}[Proof (Theorem \ref{thm:rate_of_convergence})]
With the definitions given above, consider $W=1$ (whence
$\|\sT\baralf^t-\baralf^0\|_{\infty}\le 1$ for all $t$) and apply
\cite[Theorem 1]{rate}.
\end{proof}
%
%
\subsection{Exponentially fast convergence to unique Nash
bargaining solution} \label{sec:exp_fast_convergence}

Convergence of the natural dynamics was studied in an earlier
version of this paper using a different (and much more
laborious) technique \cite{OurPrevious}. While the results in
Section \ref{sec:main_results} constitute a large improvement
in elegance and generality over those of  \cite{OurPrevious},
the latter retain an independent interest.  Indeed the analysis
of \cite{OurPrevious} shows that convergence is exponentially
fast in a well defined class of instances. We decided therefore
to retain the main result of that analysis (recast from \cite{OurPrevious}).
\begin{theorem}\label{thm:ConvergenceOld}
Assume $W=1$. Let $G$ be an instance having unique Nash bargaining solution $(M, \bargamma)$
with \emph{KT gap} $\sigma>0$, and let $\bargamma$ denote the
corresponding allocation. Then, for any $\eps \in (0, \sigma/4)$, there exists
$T_*(n,\sigma,\eps)= C\, n^7
\big[1/\sigma\, +\log (\sigma/\eps)\big]\, ,$%
such that, for any initial condition with
$\baralf^0\in[0,1]^{2|E|}$, and any $t\ge T_*$ the natural
dynamics yields earning estimates $\bargamma^t$, with
$|\gamma^t_i-\gamma_i|\le \eps$ for all $i\in V$. Moreover,
$\baralf^t$ induces the matching $M$ and $(M, \bargamma^t)$ is a
$(4 \eps)$-NB solution for any $t\ge T_*$.
\end{theorem}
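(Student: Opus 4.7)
The plan is to split the analysis into two phases: a polynomial-time \emph{identification phase} in which the dynamics locks onto the correct matching $M$, and an \emph{exponential-convergence phase} that refines the allocation to within $\eps$ of the Nash bargaining outcome. The decomposition $T_*=Cn^7[1/\sigma+\log(\sigma/\eps)]$ reflects this: the $n^7/\sigma$ term bounds the duration of the first phase, while the $n^7\log(\sigma/\eps)$ term corresponds to the second.

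For the first phase, I would apply Theorem~\ref{thm:rate_of_convergence} with the target accuracy chosen proportional to $\sigma/n^2$, so that once $\baralf^t$ is an $\eps_0$-fixed point with $\eps_0=\Theta(\sigma/n^2)$, Theorem~\ref{thm:approx_fp} guarantees that $\baralf^t$ already induces the maximum weight matching $M$ and yields a $\Theta(\sigma/n^2)$-NB solution. This requires at most $O(n^4/\sigma^2)$ iterations by the rate-of-convergence bound. The additional powers of $n$, bringing the bound up to $O(n^7/\sigma)$, come from the fact that the KT gap $\sigma$ is a combinatorial notion of robustness specific to the Kleinberg--Tardos construction, and it can be related to the LP gap $g$ appearing in Theorem~\ref{thm:approx_fp} only with polynomial losses in $n$.

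For the second phase, once the correct matching is identified I would exploit the pairwise structure of balanced outcomes. For each matched pair $(i,j)\in M$, the allocation $\gamma_i$ is determined by $\alf{i}{j}$ and $\alf{j}{i}$, each of which is in turn determined by the offer received from a ``second best'' partner, whose own allocation depends on its own second best partner, and so on. This induces a chain of dependencies, a \emph{critical path} emanating from each edge. A small perturbation $\delta$ at the tail of this chain propagates back to the starting edge while being contracted by a factor of $1/2$ at each balanced pair (since the equal-split rule averages the two incoming alternatives). Because critical paths have length $O(n)$, one complete pass of $O(n)$ iterations contracts the error by a constant factor; after $O(n\log(\sigma/\eps))$ such passes the error is reduced from $O(\sigma)$ to $\eps$. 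Accounting for the damping factor $\damp$ and for the $O(n^6)$ interacting critical paths produces the announced $n^7\log(\sigma/\eps)$ term.

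The main obstacle lies in the second phase: making the heuristic ``contraction by $1/2$ along the critical path'' rigorous requires showing that the combinatorial structure of best alternatives remains \emph{frozen} along the trajectory, not just at the fixed point. Along the dynamics, the identity of the argmax in $\max_{k\in\partial i\setminus j}\off{k}{i}^t$ might a priori flip back and forth, which would prevent the contraction from accumulating over successive rounds. Establishing a suitable invariance (for instance, that once $\baralf^t$ lies in a small enough neighborhood of the fixed point the ordering of incoming offers becomes stable and never reshuffles) is what I expect to be the crux of the proof, and it is precisely the step that the general-purpose quantitative estimates of \cite{rate} used in Section~\ref{sec:convergence} do \emph{not} provide; this is why a separate, more laborious analysis is needed to obtain the exponential rate.
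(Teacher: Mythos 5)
You should first note that the paper does not actually prove Theorem \ref{thm:ConvergenceOld}: it explicitly defers the complete proof to the preprint \cite{OurPrevious}, and only describes the technique as an ``approximate decoupling'' of the dynamics across the successive Kleinberg--Tardos structures $\cC_1,\dots,\cC_k$ (ordered by increasing slack $\sigma_1\le\cdots\le\sigma_k$), carried out under the assumptions $\sigma>0$ and uniqueness of the NB solution. Your critical-path picture is in the same spirit as that decoupling, and you correctly identify the crux (showing that the identity of the argmax in $\max_{k\in\di\setminus j}\off{k}{i}^t$ freezes along the trajectory, which the nonexpansivity machinery of \cite{rate} cannot deliver). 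But identifying the crux is not the same as resolving it, and your proposal leaves both phases essentially unproved.

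More concretely, your first phase cannot produce the stated bound. The only quantitative rate available from Theorems \ref{thm:rate_of_convergence} and \ref{thm:approx_fp} is: reach an $\eps_0$-FP with $\eps_0 = \Theta(g/n^2)$ in $O(n^4/g^2)$ iterations. This scales as $1/g^2$, whereas the theorem claims $T_* = C\,n^7[1/\sigma + \log(\sigma/\eps)]$, i.e.\ a \emph{first-power} dependence on $1/\sigma$. Even granting your unproven assertion that $g \ge \sigma/\poly(n)$ (the KT gap is defined via slacks of the KT structures and extra-edge slacks, the LP gap via weight differences of half-integral points; no polynomial relation between them is established anywhere in the paper, and none is obvious), you would get $T_1 = \poly(n)/\sigma^2$, which exceeds $C n^7/\sigma$ whenever $\sigma$ is polynomially small. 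In other words, bootstrapping from the general $O(1/\eps^2)$ Baillon--Bruck rate is structurally incapable of yielding the $1/\sigma$ dependence that is the whole point of this theorem; the matching must be shown to emerge by a direct argument tied to the KT decomposition, not by waiting for an $\eps_0$-FP with $\eps_0 = \Theta(g/n^2)$. The second phase is likewise only a heuristic: the claimed factor-$1/2$ contraction per balanced pair, the $O(n)$ length and $O(n^6)$ count of critical paths, and above all the freezing of the offer ordering are all asserted rather than proved, so the exponential rate is not established.
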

We refer to Appendix \ref{sec:KT+gap} for a definition
of the KT gap $\sigma$ (here KT stands for Kleinberg-Tardos).
Suffice it to say that it is related to the Kleinberg-Tardos
decomposition of $G$ and that it is polynomially computable
\cite{KT}.

As mentioned above, the proof is based on a very different
technique, namely on `approximate decoupling' of the natural
dynamics on different KT structures under the assumptions
$\sigma>0$ (which is generic) and that there is a unique NB
solution. See preprint \cite{OurPrevious} for a
complete proof.

Let us stress here that, for fixed $\sigma$,
 $T_*(n,\sigma,\eps)$ is only logarithmic in
$(1/\eps)$ while it is proportional to $1/\eps^2$ in Theorem
\ref{thm:rate_of_convergence}. In other words, for instances
with KT gap bounded away from $0$, the natural dynamics
converges exponentially fast, while Theorem
\ref{thm:rate_of_convergence} guarantees inverse polynomial
convergence in the general case.

\section{Fixed point properties: Proof of Theorem
\ref{thm:fp_dualopt}} \label{sec:FixedPoint}

%

 Let $\mc{S}$ be the set of optimum solutions of LP (\ref{prob:mwm_relaxation}).
We call  $e\in E$ a \emph{strong-solid edge} if $x_e^*=1$ for all $x^* \in \mc{S}$ and a
\emph{non-solid edge} if $x_e^*=0$ for all $x^* \in \mc{S}$.  We call $e\in E$ a \emph{weak-solid edge} if it is neither strong-solid nor non-solid.
%
\vspace{2mm}

\noindent\textbf{Proof of Theorem \ref{thm:fp_dualopt}: From fixed points to NB solutions.}
The direct part follows from the following set of fixed point
properties.
The proofs of these properties are given in Appendix
\ref{sec:fixed_point_prop_proofs}. Throughout
$(\baralf,\baroff,\bargamma)$ is a fixed point of the dynamics
(\ref{eq:off_def}), (\ref{eq:update}) (with $\bargamma$ given
by (\ref{eq:bargamma})).

(1) Two players $(i,j) \in E$ are called \emph{partners}
    if $\gamma_i + \gamma_j = w_{ij}$. Then the following
    are equivalent: (a) $i$ and $j$ are partners, (b)
    $w_{ij}-\alf{i}{j} - \alf{j}{i}\geq 0$, (c)
    $\gamma_i=\off{j}{i}$ and $\gamma_j=\off{i}{j}$.

(2) Let $P(i)$ be the set of all partners of $i$. Then
    the following are equivalent: (a) $P(i) = \{ j \}$ and $\gamma_i>0$, (b)
    $P(j) = \{ i \}$ and $\gamma_j>0$, (c) $w_{ij}-\alf{i}{j} - \alf{j}{i}
    >0$, (d) $i$ and $j$ receive unique best positive offers from each other.

(3) We say that $(i,j)$ is a \emph{weak-dotted edge} if
    $w_{ij}-\alf{i}{j} - \alf{j}{i}=0$, \emph{a
    strong-dotted edge} if $w_{ij}-\alf{i}{j} - \alf{j}{i}
    > 0$, and a \emph{non-dotted edge} otherwise. If $i$
    has no adjacent dotted edges, then $\gamma_i=0$.

(4) An edge is strong-solid (weak-solid) if and only if it is
    strongly (weakly) dotted.

(5) The balance property \eqref{eq:balance}, holds at
    every edge $(i,j) \in E$ (with both sides being
    non-negative).

(6) $\bargamma$ is an optimum solution for the dual LP
    (\ref{prob:mwm_dual}) to LP \eqref{prob:mwm_relaxation} and
    $\off{i}{j}=(w_{ij}-\gamma_i)_+$ holds for all $(i,j)
    \in E$.

\begin{proof}[Proof of Theorem \ref{thm:fp_dualopt} (a), direct implication]
Assume that the LP (\ref{prob:mwm_relaxation}) has a unique optimum that is integral. Then,
by property 4, the set of strong-dotted edges form the unique
maximum weight matching $M^*$ and all other edges are non-dotted.
By property $3$ for $i$ that is unmatched under $M^*$, $\gamma_i=0$.
Hence by property $2$, $\baralf$ induces the matching $M^*$.
Finally, by properties 6 and 5, the pair
$(M^*,\bargamma)$ is stable and balanced respectively, and thus forms a NB
solution.
\end{proof}
The corresponding result for the non-unique optimum case (part (b)) can be proved similarly: it follows immediately Theorem \ref{thm:fp_dualopt_unmatched}, Appendix \ref{sec:fixed_point_prop_proofs}.

\begin{Remark}
\label{rem:FP_properties_hold_always}
Properties 1-6 hold for any instance. This leads to the general result Theorem \ref{thm:fp_dualopt_unmatched}
in Appendix \ref{sec:fixed_point_prop_proofs} shows that in general, fixed points correspond to dual optima satisfying the
unmatched balance property (\ref{eq:balance}).
\end{Remark}
%
%
\vspace{2mm}

\noindent\textbf{Proof of Theorem \ref{thm:fp_dualopt}: From NB solutions to fixed points.}
\begin{proof}
Consider any NB solution $(M,\bargamma_{\NB} )$. Using Proposition \ref{prop:sotomayor},
 $M$ is a maximum weight matching.
Construct a corresponding FP as follows. Set
$\off{i}{j}=(w_{ij}-\gamma_{\NB,i})_+$ for all $(i,j) \in E$.
Compute $\baralf$ using $\alf{i}{j}=\max_{k \in \di \backslash
j} \off{k}{i}$. We claim that this is a FP and that the
corresponding $\bargamma$ is $\bargamma_{\NB}$. To prove that
we are at a fixed point, we imagine updated offers
$\baroff^{\textup{upd}}$ based on $\baralf$, and show
$\baroff^{\textup{upd}}=\baroff$.

Consider a matching edge $(i,j) \in M$. We know that
$\gamma_{\NB,i} + \gamma_{\NB,j}=w_{ij}$. Also stability and
balance tell us $\gamma_{\NB,i} - \max_{k \in \di \backslash j}
(w_{ik}- \gamma_{\NB,k})_+ = \gamma_{\NB,j} - \max_{l \in \dj
\backslash i} (w_{jl}- \gamma_{\NB,l})_+$ and both sides are
non-negative. Hence, $\gamma_{\NB,i} - \alf{i}{j} =
\gamma_{\NB,j} - \alf{j}{i} \geq 0$. Therefore
$\alf{i}{j}+\alf{j}{i} \leq w_{ij}$,
\begin{align*}
\off{i}{j}^\textup{upd}&= \frac{w_{ij}-\alf{i}{j}+\alf{j}{i}}{2} = \frac{w_{ij}-\gamma_{\NB,i}+\gamma_{\NB,j}}{2} \\
&= \gamma_{\NB,j} = w_{ij}-\gamma_{\NB,i} = \off{i}{j}\, .
\end{align*}
By symmetry, we also have $\off{j}{i}^\textup{upd}=
\gamma_{\NB,i} = \off{j}{i}$. Hence, the offers remain
unchanged.
Now
consider $(i,j) \notin M$. We have
$\gamma_{\NB,i}+\gamma_{\NB,j} \geq w_{ij}$ and,
$\gamma_{\NB,i} = \max_{k \in \di \backslash j} (w_{ik}-
\gamma_{\NB,k})_+ = \alf{i}{j}$. Similar equation holds for
$\gamma_{\NB,j}$. The validity of this identity can be checked
individually in the cases when $i \in M$ and $i \notin M$.
Hence, $\alf{i}{j}+\alf{j}{i} \geq w_{ij}$. This leads to
$\off{i}{j}^{\textup{upd}}=(w_{ij}-\alf{i}{j})_+=(w_{ij}-\gamma_{\NB,i})_+=\off{i}{j}$.
By symmetry, we know also that $\off{j}{i}^{\textup{upd}}=\off{j}{i}$.

Finally, we show $\bargamma = \bargamma_{\NB}$. For all $(i,j)
\in M$, we already found that $\off{i}{j}=\gamma_j$ and vice
versa. For any edge $(ij) \notin M$, we know $\off{i}{j} =
(w_{ij}-\gamma_{\NB,i})_+ \leq \gamma_{\NB,j}$. This
immediately leads to $\bargamma = \bargamma_{\NB}$. It is worth
noting that making use of the uniqueness of LP optimum we know that $M=M^*$, and  we can
further show that
 $\gamma_i = \off{j}{i} > \alf{i}{j}$ if and only if $(ij) \in M$, i.e.,
 the fixed point reconstructs the pairing $M=M^*$.
\end{proof}

\section{Polynomial convergence on bipartite graphs: Proof of Theorem \ref{thm:bipartite_pairing}}
\label{sec:bipartite_pairing}
Theorem
\ref{thm:bipartite_pairing} says that on a bipartite graph,
under a small random perturbation on any problem instance, the
natural dynamics is likely to quickly find the maximum weight
matching. Now, in light of Corollary \ref{coro:fast_matching},
this simply involves showing that the gap $g$ of the perturbed
problem instance is likely to be sufficiently large. We use a
version of the well known Isolation Lemma to for this. Note
that on bipartite graphs, there is always an integral optimum
to the LP (\ref{prob:mwm_relaxation}).

Next, is our Isolation lemma (recast from \cite{Isolation}). For the proof, see Appendix \ref{app:isolation}.
\begin{lemma}[Isolation Lemma]
\label{lemma:isolation}
Consider a 
bipartite graph $G=(V, E)$. Choose $\eta > 0,\; \xi >0$. Edge
weights are generated as follows: for each $e \in E$,
$\bar{w}_e$ is chosen uniformly in $[w_e, w_e + \eta]$. Denote
by $\mathcal{M}$ the set of matchings in $G$. Let $M^*$ be a
maximum weight matching. Let $M^{**}$ be a matching having the
maximum weight in $\mathcal{M} \backslash M^*$. Denote by
$\bar{w}(M)$ the weight of a matching $M$. Then
\begin{align}
\Pr  [\,\,\,\bar{w}(M^{*})-\bar{w}(M^{**}) \geq\, \eta \xi/(2|E|)\,\,\, ] \; \geq \; 1- \xi
\end{align}
\end{lemma}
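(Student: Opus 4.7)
The plan is to implement the classical Mulmuley--Vazirani--Vazirani isolation argument in its continuous (rather than discrete) form, combining a combinatorial ``single-swap'' reduction with a one-dimensional smoothness estimate. For each edge $e\in E$, define
\begin{align*}
A_e &:= \max\{\bar w(M) : M\in\mathcal M,\; e\in M\},\\
B_e &:= \max\{\bar w(M) : M\in\mathcal M,\; e\notin M\},
\end{align*}
and observe that $A_e = \bar w_e + \widetilde A_e$ and $B_e = \widetilde B_e$, where $\widetilde A_e$ and $\widetilde B_e$ are deterministic functions of $\{\bar w_{e'}\}_{e'\neq e}$ alone. The crux of the proof is the implication: \emph{if $\bar w(M^*) - \bar w(M^{**}) < \delta$, then some edge $e\in E$ satisfies $|A_e - B_e|<\delta$}.

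To prove that implication, I would decompose the symmetric difference $M^* \triangle M^{**}$ into its alternating-path and even-cycle components $C_1, \dots, C_k$. For each $C_j$, the hybrid $M'_j := (M^*\setminus (M^* \cap C_j)) \cup (M^{**}\cap C_j)$ is itself a matching, because the $C_j$ are vertex-disjoint. The quantity $d_j := \bar w(M^*) - \bar w(M'_j)$ is nonnegative by optimality of $M^*$, and $\sum_j d_j = \bar w(M^*) - \bar w(M^{**}) < \delta$, so every $d_j < \delta$. Picking any $j$ with $M^* \cap C_j \neq M^{**}\cap C_j$ and any $e\in M^*\cap C_j$, we have $A_e = \bar w(M^*)$ (because $M^*$ is a globally optimum matching containing $e$) and $B_e \geq \bar w(M'_j)$ (because $M'_j$ is a matching not containing $e$). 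Hence $0\leq A_e - B_e \leq d_j <\delta$, as desired.

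The probabilistic step is then short. Conditioning on $\{\bar w_{e'}\}_{e'\neq e}$ freezes $\tau_e := \widetilde B_e - \widetilde A_e$, while $\bar w_e$ remains uniform on $[w_e, w_e+\eta]$. Since $|A_e-B_e| = |\bar w_e - \tau_e|$,
\begin{align*}
\Pr\bigl[|A_e - B_e| < \delta \,\big|\, \{\bar w_{e'}\}_{e'\neq e}\bigr] \;\leq\; \frac{2\delta}{\eta}.
\end{align*}
Removing the conditioning and union-bounding over the $|E|$ edges yields $\Pr[\bar w(M^*) - \bar w(M^{**}) < \delta]\leq 2|E|\delta/\eta$; choosing $\delta = \eta\xi/(2|E|)$ gives the stated bound.

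The main technical obstacle is the combinatorial ``single-swap'' step. It rests on two classical facts: that the symmetric difference of any two matchings decomposes into vertex-disjoint alternating paths and even cycles (so that each partial swap is a bona fide matching), and that the resulting weight deficits $d_j$ are individually nonnegative because $M^*$ is globally optimum. Once this reduction is in hand, the remaining smoothness/union-bound estimate is routine. Bipartiteness plays no essential role in the argument; it is presumably inherited only from the ambient setting of Theorem \ref{thm:bipartite_pairing}.
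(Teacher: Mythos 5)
Your proof is correct and follows essentially the same route as the paper: for each edge $e$ one freezes the other weights, notes that the difference between the best matching containing $e$ and the best matching avoiding $e$ is an affine function of the single uniform variable $\bar w_e$, bounds the probability of a near-tie by $2\delta/\eta$, and union-bounds over edges. In fact your write-up is more complete than the paper's, which simply asserts the covering identity $\{\bar w(M^*)-\bar w(M^{**})<\delta\}=\bigcup_e D(e)$ without justification, whereas you prove the needed inclusion via the symmetric-difference decomposition; the only small point to patch is that your chosen component $C_j$ might satisfy $M^*\cap C_j=\emptyset$, in which case you should either pick a component containing an $M^*$-edge (one exists since $M^*\not\subseteq M^{**}$ for positive weights) or run the symmetric argument with an edge $e\in M^{**}\cap C_j$, using $B_e=\bar w(M^*)$ and $A_e\ge\bar w(M'_j)$.
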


\begin{proof}[Proof of Theorem \ref{thm:bipartite_pairing}]
Using Lemma \ref{lemma:isolation}, we know that the gap of the
perturbed problem satisfies $\bar{g} \geq \eta \xi /(2|E|)$ with
probability at least $1-\xi$. Now, the weights in the perturbed
instance are bounded by $\bar{W}=2$. Rescale by dividing all
weights and messages by $2$, and use Corollary
\ref{coro:fast_matching}. The theorem follows from the
following two elementary observations. First, an $(\eps/2)$-NB
solution for the rescaled problem corresponds to an $\eps$-NB
solution for the original problem. Second, induced matchings
are unaffected by scaling.
\end{proof}

We remark that Theorem \ref{thm:bipartite_pairing} does not generalize to any
(non-bipartite) graph with edge weights such that the LP
(\ref{prob:mwm_relaxation}) has an integral optimum, for the
following reason. We can easily generalize the Isolation Lemma to
show that the gap $g$ of the perturbed problem is likely to be
large also in this case. However, there is a probability
arbitrarily close to 1 (depending on the instance) that a
random perturbation will result in an instance for which LP
(\ref{prob:mwm_relaxation}) does \textit{not} have an integral
optimum, i.e. the perturbed instance does not have any Nash
bargaining solutions!



\vspace{0.20cm}

\noindent {\bf Acknowledgements.} We thank Eva Tardos for
introducing us to network exchange theory and Daron Acemoglu
for insightful discussions. We also thank the anonymous referees for their comments.

A large part of this work
was done while Y. Kanoria, M. Bayati and A. Montanari were at Microsoft Research New
England. This research was partially supported by NSF, grants
CCF-0743978 and CCF-0915145, and by a Terman fellowship. Y. Kanoria is
supported by a 3Com Corporation Stanford Graduate Fellowship.

\appendix

\section{Variations of the natural dynamics}
\label{app:variations_of_dynamics}

Now that we have a
reasonable dynamics that converges fast to balanced outcomes,
it is natural ask whether variations of the natural dynamics also
yield balanced outcomes. What happens in the case of asynchronous
updates, different nodes updating at different rates, damping
factors that vary across nodes and in time, and so on? We discuss some of these questions in this section,
focussing on some situations in which we \textit{can} prove
convergence with minimal additional work. Note that we are only concerned with
extending our convergence results since the fixed point properties remain unchanged.

\subsection{Node dependent damping}
 Consider that the damping factor may be different for
different nodes, but unchanging over time. Denote by
$\damp(v)$, the damping factor for node $v$. Assume that
$\damp(v) \in [1-\damp^*,\damp^*] \ \forall v \in V$ for some
$\damp^* \in [0.5, 1)$, i.e. damping factors are uniformly
bounded away from $0$ and $1$. Define operator $\sT : [0,1]^{|E|}
\rightarrow [0,1]^{|E|}$ by
\begin{align}
(\sT \baralf)_{i \backslash j} =
\left (\frac{\damp(i)}{\damp^*} \right)
\max_{k \in \partial i \backslash j} \off{k}{i}
+ \left( \frac{\damp^*- \damp(i)}{ \damp^*} \right)\alf{i}{j}
\end{align}
$\sT$ is nonexpansive.
Now, the dynamics can be written as $\baralf^{t+1} = \damp^* \sT
\baralf^t + (1-\damp^* ) \baralf^t$. Clearly, convergence to
fixed points (Theorem \ref{thm:convergence_to_FP}) holds in
this situation. Note that fixed points of $\sT$ are the same as
fixed points of the natural dynamics. Moreover, we can use
\cite{rate} to assert that $||\baralf^t - \sT\baralf^t||_\infty =
O(1/\sqrt{t})$ and hence $\baralf^t$ is an $O(1/\sqrt{t})$-FP.
In short, we don't lose anything with this generalization!

\subsection{Time varying damping}
Now consider instead that the damping may change over time, but
is the same for all nodes. Denote by $\damp_t$ the damping factor
at time $t$, i.e.
\begin{align}
\baralf^{t+1} = \damp_t \max_{k \in \partial i \backslash j} \off{k}{i}^t
+ (1-\damp_t) \baralf^t
\end{align}
The result of \cite{Ishikawa} implies that, as long as
$\sum_{t=0}^\infty \damp_t = \infty$ and $\lim_{t\rightarrow
\infty } \sup \damp_t < 1$, the dynamics is guaranteed to
converge to a fixed point. Note that again the fixed points are
unchanged. \cite{time_varying_damping} provides a quantitative
estimate of the rate of convergence in this case, guaranteeing
in particular that an $\eps$-FP is reached in time $\exp(
O(1/\eps))$ if $\damp_t$ is uniformly bounded away from $0$ and
$1$. Note that this estimate is much weaker than the one
provided by \cite{rate}, leading to Theorem
\ref{thm:rate_of_convergence}. It seems intuitive that the
stronger $O(1/\sqrt{k})$ bound holds also for the time varying
damping case in the general nonexpansive operator setting, but
a proof has remained elusive thus far.

\subsection{Asynchronous updates}
Finally, we look at the case of asynchronous updates, i.e., one
message $\alf{i}{j}$ is updated in any given step while the
others remain unchanged. Define $\sT_{i \backslash j}:
[0,1]^{|E|} \rightarrow [0,1]^{|E|}$ by
\begin{align}
(\sT_{i \backslash j} \baralf)_{i' \backslash j'} &= \left \{
\begin{array}{ll}
\max_{k \in \partial i' \backslash j'} \off{k}{i'}
& \mbox{if } (i,j)=(i',j')\\
\alf{i'}{j'} & \mbox{otherwise}
\end{array}
\right .
\end{align}
Let $m \equiv |E|$. There are $2m$ such operators, two for each
edge. Clearly, each $\sT_{i \backslash j}$ is nonexpansive in
sup-norm. Now, consider an arbitrary permutation of the $2m$
directed edges $((i_1,j_1), (i_2,j_2), \ldots )$. Consider the
updates induced by $\sT_{i_1\backslash j_1}, \sT_{i_2\backslash
j_2}, \ldots \ $ in order, each with a damping factor of
$1/(2m)$. Consider the resulting product
\begin{align}
\label{eq:asynchronous_T_defined}
&\Big((1/2m)\sT_{i_1\backslash j_1} + \left(1-(1/2m)\right)I\Big ) \,\cdot \\
\cdot \, \Big((1/2m)&\sT_{i_2\backslash j_2} + \left(1-(1/2m)\right)I\Big )\,\cdot \;
\ldots \nonumber\\
&= \Big ( 1- \left(1\!-\!(1/2m)\right)^{2m} \Big ) \sT + \left(1\!-\!(1/2m)\right)^{2m} I \nonumber
\end{align}
Here \eqref{eq:asynchronous_T_defined} defines $\sT$, and $I$ is the identity operator. It is easy to deduce that
$\sT$ above is nonexpansive from the following elementary facts
-- the product of nonexpansive operators in nonexpansive, and
the convex combination of nonexpansive operators is
nonexpansive. Also, $\left(1-(1/2m)\right)^{2m} \in [1/4, 1/e]
\ \forall m$. Thus, if we repeat these asynchronous updates
periodically in a series of `update cycles', we are guaranteed
to quickly converge to an $\eps$-FP of $\sT$ (\cite{rate}).
\begin{proposition}
An $\eps$-FP of $\sT$ is an $O(m\eps)$-FP of the natural
dynamics. \label{propo:async_approx_FP}
\end{proposition}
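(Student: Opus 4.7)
The first step is to translate the hypothesis. Taking an $\eps$-FP of $\sT$ to mean $\|\sT\baralf - \baralf\|_\infty \le \eps$, the defining identity \eqref{eq:asynchronous_T_defined} rearranges (with $\la := 1-1/(2m)$) into $\mathcal{C}(\baralf) - \baralf = (1-\la^{2m})(\sT\baralf - \baralf)$, where $\mathcal{C}$ denotes the product operator on the left-hand side of \eqref{eq:asynchronous_T_defined}, i.e.\ one full asynchronous cycle. Since $\la^{2m} \in [1/4,\,1/e]$, this gives $\|\mathcal{C}(\baralf) - \baralf\|_\infty \le (1-\la^{2m})\eps \le \eps$, so no single coordinate shifts by more than $\eps$ during one cycle.

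The second step uses the crucial structural fact that each of the $2m$ directed edges is updated exactly once per cycle. Writing $\baralf^{(k)}$ for the state just before step $k$, the coordinate $\alf{i_k}{j_k}$ is untouched by earlier steps, so $\alf{i_k}{j_k}^{(k)} = \alf{i_k}{j_k}$, and the total change of this coordinate over the whole cycle equals the single-step change $\frac{1}{2m}\bigl(\max_{l \in \partial i_k \backslash j_k} \off{l}{i_k}|_{\baralf^{(k)}} - \alf{i_k}{j_k}\bigr)$. Combined with the sup-norm bound from the previous paragraph, this yields
\begin{align*}
\left|\,\max_{l \in \partial i_k \backslash j_k}\off{l}{i_k}|_{\baralf^{(k)}} \;-\; \alf{i_k}{j_k}\,\right| \;\le\; 2m\eps
\end{align*}
for every $k$.

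The final step converts this bound, phrased at the sliding intermediate state $\baralf^{(k)}$, into one phrased at the fixed initial state $\baralf$ used in Definition~\ref{def:approx_FP}. By the same one-update-per-coordinate structure, each coordinate of $\baralf^{(k)}$ differs from the corresponding coordinate of $\baralf$ by at most $\eps$. Since the offer function $f$ in \eqref{eq:regionwise_offers} and the $\max$ operator are both $1$-Lipschitz in sup norm (as invoked in the proof of Theorem~\ref{thm:convergence_to_FP}), the quantity $\max_{l \in \partial i_k \backslash j_k}\off{l}{i_k}$ differs at $\baralf$ versus at $\baralf^{(k)}$ by at most $\eps$, and a triangle inequality then gives
\begin{align*}
\left|\,\max_{l \in \partial i_k \backslash j_k}\off{l}{i_k}|_{\baralf} \;-\; \alf{i_k}{j_k}\,\right| \;\le\; (2m+1)\eps \;=\; O(m\eps)\,,
\end{align*}
which is exactly the $O(m\eps)$-FP condition. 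The only real subtlety is this last bookkeeping step, reconciling the sliding reference state $\baralf^{(k)}$ with the fixed reference state $\baralf$; it is handled cleanly by the one-edge-per-step cycle structure together with the sup-norm nonexpansivity ingredients already established in Section~\ref{sec:convergence}.
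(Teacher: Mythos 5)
Your proof is correct and follows essentially the same route as the paper's: bound the per-cycle coordinate change by $\eps$ via the convex-combination identity, use the one-update-per-coordinate structure to convert this into a $2m\eps$ bound on the undamped update discrepancy at the intermediate state, and transfer back to the initial state $\baralf$ using $\eps$-closeness and sup-norm nonexpansivity. The only (immaterial) difference is that by noting $\alf{i_k}{j_k}^{(k)}=\alf{i_k}{j_k}$ you obtain the slightly sharper constant $(2m+1)\eps$ where the paper gets $(2m+2)\eps$.
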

\begin{proof}
Suppose we start an update cycle at $\baralf$, an $\eps$-FP of
$\sT$. Then we know that at the end of the update cycle, no
coordinate changes by more than $(1-1/4) \eps \leq \eps$. Note
that among the $2m$ steps in a cycle, any particular $i
\backslash j$ `coordinate' only changes in one step. Thus, each
such coordinate change is bounded by $\eps$. Consider the
$s$-th step in the update cycle. The state before the $s$-th
step, call it $\baralf(s-1)$, is $\eps$-close to $\baralf$.
Also, we know that the $(i_s\backslash j_s)$ coordinate changes
by at most $\eps$ in this step. Hence,
\begin{align*}
||\sT_{i_s\backslash j_s} \baralf(s-1) - \baralf(s-1)||_\infty &\leq (2m)\eps \\
\Rightarrow \phantom{||\sT_{i_s\backslash j_s} \baralf(s-1)} ||\sT_{i_s\backslash j_s} \baralf- \baralf||_\infty &\leq (2m + 2)\eps
\end{align*}
This holds for $s=1,2, \ldots, 2m$. Hence the result.
\end{proof}

Note that with $\eps=0$, Proposition
\ref{propo:async_approx_FP} tells us that fixed points of $\sT$
are fixed points of the natural dynamics. Thus, we are
immediately guaranteed convergence to fixed points of the
natural dynamics. Moreover, the quantitative estimate  in
Proposition \ref{propo:async_approx_FP} guarantees that in a
small number of update cycles we reach approximate fixed points
of the natural dynamics.

Finally, we comment that instead of ordering updates by a
permutation of directed edges, we could have an arbitrary
periodic sequence of updates satisfying non-starvation and
obtain similar results. For example, this would include cases
where some nodes update more frequently than others. Also, note
that the damping factors of $(1/2m)$ were chosen for simplicity
and to ensure \textit{fast} convergence. Any non-trivial
damping would suffice to guarantee convergence.

It remains an open question to show convergence for
non-periodic asynchronous updates.

\section{Proof of Isolation lemma}
\label{app:isolation}

Our proof of the isolation lemma is adapted from
\cite{Isolation}.

\begin{proof}[Proof of Lemma \ref{lemma:isolation}]
Fix $e \in E$ and fix $\bar{w}_{e'}$ for all $e' \in E \backslash e$.
Let $M_e$ be a maximum weight matching among matchings that strictly include edge $e$, and let $M_{\sim e}$
be a maximum weight matching among matchings that exclude edge $e$. Clearly, $M_e$ and $M_{\sim e}$
are independent of $\bar{w}_e$. Define
\begin{align*}
f_e(\bar{w}_e) &\equiv \bar{w} (M_e) \ \; = f_e(0)+\bar{w}_e \\
f_{\sim e} &\equiv \bar{w} (M_{\sim e}) = \mbox{const} < \infty
\end{align*}
Clearly, $f_e(0) \leq f_{\sim e}$, since we cannot do worse by forcing exclusion of a zero weight edge. Thus, there is some unique $\theta \geq 0$ such that $f_e(\theta) = f_{\sim e}$.
Define $\delta = \eta \xi /2|E|$.
Let $D(e)$ be the event that $|\bar{w}(M_e) - \bar{w}(M_{\sim e})| < \delta$. It is
easy to see that $D(e)$ occurs if and only if $\bar{w}_e \in (\theta-\delta, \theta+ \delta)$. Thus,
$\Pr [ D(e) ] \leq 2 \delta/\eta = \xi / |E|$. Now,
\begin{align}
\bigg \{\bar{w}(M^{*})-\bar{w}(M^{**}) < \delta \bigg\} \; = \; \bigcup_{e \in E} D(e)
\end{align}
and the lemma follows by union bound.

\end{proof}

\section{Proofs of fixed point properties}\label{sec:fixed_point_prop_proofs}

In this section we state and prove the fixed point properties that were used
for the proof of Theorem \ref{thm:fp_dualopt} in Section \ref{sec:FixedPoint}.
Before that, however, we remark that
the condition: ``LP \eqref{prob:mwm_relaxation} has a unique optimum''
in Theorem \ref{thm:fp_dualopt}(a) is almost always valid.

\begin{Remark}
\label{rem:unique_opt_generic}
We argue that the condition ``LP \eqref{prob:mwm_relaxation} has a unique optimum" is generic in instances with integral optimum:\\
Let ${\sf G}_{\textup{I}} \subset [0,W]^{|E|}$ be the set of instances having an integral optimum.
Let ${\sf G}_{\textup{UI}} \subset {\sf G}_{\textup{I}}$ be the set of instances having a unique integral optimum.
It turns out that ${\sf G}_{\textup{I}}$ has dimension $|E|$ (i.e.
the class of instances having an integral optimum is large) and
that ${\sf G}_{\textup{UI}}$ is both
\emph{open and dense in ${\sf G}_{\textup{I}}$}.
\end{Remark}

\noindent\textbf{Notation.} In proofs of this section and Section \ref{sec:eps_fixed_point_prop_proofs} we denote surplus $w_{ij}-\alf{i}{j}-\alf{j}{i}$ of edge $(ij)$ by $\surp_{ij}$.

\begin{lemma}
$\bargamma$ satisfies the constraints of the dual problem (\ref{prob:mwm_dual}).
\label{lemma:gamma_sat_dual_const}
\end{lemma}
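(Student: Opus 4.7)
The dual problem \eqref{prob:mwm_dual} has two families of constraints: non-negativity $\gamma_i \ge 0$ for all $i \in V$, and the edge constraint $\gamma_i + \gamma_j \ge w_{ij}$ for each $(i,j) \in E$. The non-negativity is immediate: by inspection of \eqref{eq:off_def}, offers $\off{k}{i}$ are non-negative in both regimes (if $\alf{i}{j} + \alf{j}{i} \le w_{ij}$, the offer equals $\tfrac12(w_{ij} - \alf{i}{j} + \alf{j}{i}) \ge 0$; otherwise it equals $(w_{ij} - \alf{i}{j})_+ \ge 0$), so $\gamma_i = \max_{k \in \partial i} \off{k}{i} \ge 0$.

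For the edge constraint, I would fix an edge $(i,j) \in E$ and split on the sign of the surplus $\surp_{ij} = w_{ij} - \alf{i}{j} - \alf{j}{i}$. If $\surp_{ij} \ge 0$, then from \eqref{eq:off_def} one computes directly that $\off{i}{j} + \off{j}{i} = w_{ij}$ (the two halves of the surplus plus the two $\alf{}{}$ terms reassemble to $w_{ij}$). Since $\gamma_j \ge \off{i}{j}$ and $\gamma_i \ge \off{j}{i}$ by the definition \eqref{eq:bargamma}, we obtain $\gamma_i + \gamma_j \ge w_{ij}$.

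If $\surp_{ij} < 0$, I would instead use the fixed point relation $\alf{i}{j} = \max_{k \in \partial i \backslash j} \off{k}{i}$, which gives $\gamma_i = \max(\alf{i}{j}, \off{j}{i}) \ge \alf{i}{j}$, and similarly $\gamma_j \ge \alf{j}{i}$. Then $\gamma_i + \gamma_j \ge \alf{i}{j} + \alf{j}{i} > w_{ij}$, as required.

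There is no substantial obstacle; the only care needed is in combining the piecewise definition of $\off{i}{j}$ with the fixed point identity for $\alf{i}{j}$. This simple case analysis is the whole proof.
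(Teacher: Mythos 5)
Your proposal is correct and follows essentially the same argument as the paper: non-negativity from non-negativity of offers, and the edge constraint via the same case split on the sign of $\surp_{ij}$, using $\gamma_i\ge\off{j}{i}$ and $\off{i}{j}+\off{j}{i}=w_{ij}$ in the non-negative-surplus case and $\gamma_i\ge\alf{i}{j}$ (from the fixed point relation) in the other. No gaps.
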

\begin{proof}
Since offers $\off{i}{j}$ are by definition non-negative therefore for all $v\in V$ we have $\gamma_v\geq 0$. So we only need to show $\gamma_i+\gamma_j\geq w_{ij}$ for any edge $(ij)\in E$.  It is easy to see that $\gamma_i\geq \alf{i}{j}$ and $\gamma_j\geq \alf{i}{j}$. Therefore, if $\alf{i}{j}+\alf{i}{j}\geq w_{ij}$ then $\gamma_i+\gamma_j\geq w_{ij}$ holds and we are done.  Otherwise, for $\alf{i}{j}+\alf{i}{j}<w_{ij}$ we have
$\off{i}{j} = \frac{w_{ij}-\alf{i}{j}+\alf{j}{i}}{2}$ and $\off{j}{i} = \frac{w_{ij}-\alf{j}{i}+\alf{i}{j}}{2}$ which gives
$\gamma_i+\gamma_j\geq \off{i}{j}+\off{j}{i}=w_{ij}$.\enp
\end{proof}

Recall that for any $(ij) \in E$, we say that $i$ and $j$ are `partners' if $\gamma_i + \gamma_j = w_{ij}$ and
$P(i)$ denotes the partners of node $i$. In other words $P(i) = \{ j: j \in \partial i, \gamma_i + \gamma_j = w_{ij}\}$.

\begin{lemma}\label{lemma:i-jpartners_equivalence}
The following are equivalent:\\
(a) $i$ and $j$ are partners,\\
(b) $\surp_{ij}\geq 0$.\\
(c) $\gamma_i=\off{j}{i}$ and $\gamma_j=\off{i}{j}$.\\
Moreover, if $\gamma_i=\off{j}{i}$ and $\gamma_j>\off{i}{j}$ then $\gamma_i=0$.
\end{lemma}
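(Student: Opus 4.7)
The plan is to first perform a clean case analysis of the offer formula \eqref{eq:off_def} based on the sign of the edge surplus $\surp_{ij}=w_{ij}-\alf{i}{j}-\alf{j}{i}$, and then string the three equivalences together cyclically as (a)$\Rightarrow$(b)$\Rightarrow$(c)$\Rightarrow$(a). The key pair of facts I would establish up front (following the same algebra already used in the proof of Lemma \ref{lemma:gamma_sat_dual_const}) are: if $\surp_{ij}\ge 0$ then
\[
\off{i}{j}=\tfrac{w_{ij}-\alf{i}{j}+\alf{j}{i}}{2},\qquad \off{j}{i}=\tfrac{w_{ij}-\alf{j}{i}+\alf{i}{j}}{2},\qquad \off{i}{j}+\off{j}{i}=w_{ij},
\]
while if $\surp_{ij}<0$ then $\off{i}{j}=(w_{ij}-\alf{i}{j})_+$, $\off{j}{i}=(w_{ij}-\alf{j}{i})_+$, and a short check shows $\off{i}{j}+\off{j}{i}<w_{ij}$ in this regime. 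Thus one always has $\off{i}{j}+\off{j}{i}\le w_{ij}$, with equality iff $\surp_{ij}\ge 0$.

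For (a)$\Rightarrow$(b), I would use the fixed-point identity $\gamma_i=\max(\alf{i}{j},\off{j}{i})\ge\alf{i}{j}$ (and similarly for $j$) to get $\alf{i}{j}+\alf{j}{i}\le\gamma_i+\gamma_j=w_{ij}$. For (b)$\Rightarrow$(c), assuming $\surp_{ij}\ge 0$ I would plug into the explicit formulas above to show $\off{j}{i}\ge\alf{i}{j}$ (this reduces exactly to $\surp_{ij}\ge 0$), hence $\gamma_i=\max(\alf{i}{j},\off{j}{i})=\off{j}{i}$, and symmetrically for $j$. For (c)$\Rightarrow$(a), I would combine $\gamma_i+\gamma_j=\off{j}{i}+\off{i}{j}\le w_{ij}$ with the dual feasibility $\gamma_i+\gamma_j\ge w_{ij}$ from Lemma \ref{lemma:gamma_sat_dual_const} to force equality.

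For the ``moreover'' clause, assume $\gamma_i=\off{j}{i}$ and $\gamma_j>\off{i}{j}$. Since $\gamma_j=\max(\alf{j}{i},\off{i}{j})$, the strict inequality forces $\gamma_j=\alf{j}{i}>\off{i}{j}$. I would then split on the sign of $\surp_{ij}$. In the nonnegative-surplus case, the explicit formula for $\off{i}{j}$ gives $\alf{j}{i}>\off{i}{j}\iff \alf{i}{j}+\alf{j}{i}>w_{ij}$, contradicting $\surp_{ij}\ge 0$. In the negative-surplus case, $\off{j}{i}=(w_{ij}-\alf{j}{i})_+$; if $\alf{j}{i}\ge w_{ij}$ then $\gamma_i=\off{j}{i}=0$ as desired, while if $\alf{j}{i}<w_{ij}$ then $\gamma_i=w_{ij}-\alf{j}{i}$, and using $\gamma_i\ge\alf{i}{j}$ one obtains $\alf{i}{j}+\alf{j}{i}\le w_{ij}$, contradicting $\surp_{ij}<0$.

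The only mildly delicate step is the last one: keeping track of the two cases of $\surp_{ij}$ and the side condition $\alf{j}{i}<w_{ij}$ versus $\alf{j}{i}\ge w_{ij}$. Everything else is a routine combination of the two-regime offer formulas with the already-proved dual feasibility of $\bargamma$.
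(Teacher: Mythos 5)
Your proof is correct and follows essentially the same route as the paper's: the same two-regime case analysis of the offer formula \eqref{eq:off_def}, the same cyclic chain (a)$\Rightarrow$(b)$\Rightarrow$(c)$\Rightarrow$(a) leaning on the dual feasibility of $\bargamma$ from Lemma \ref{lemma:gamma_sat_dual_const}, and the same case split for the final clause. One small inaccuracy: when $\surp_{ij}<0$ the sum $\off{i}{j}+\off{j}{i}$ need not be \emph{strictly} below $w_{ij}$ (take $\alf{i}{j}=0$ and $\alf{j}{i}>w_{ij}$, where the sum equals $w_{ij}$), so the claimed ``equality iff $\surp_{ij}\ge 0$'' fails in one direction --- but this is harmless, since your argument only ever uses the non-strict bound $\off{i}{j}+\off{j}{i}\le w_{ij}$.
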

\begin{proof} We will prove $(a) \Rightarrow (b) \Rightarrow (c) \Rightarrow (a)$.

$(a)\Rightarrow (b)$: Since $\gamma_i\geq \alf{i}{j}$ and $\gamma_j\geq \alf{j}{i}$ always holds then $w_{ij}=\gamma_i + \gamma_j\geq \alf{i}{j} + \alf{j}{i}$.

$(b)\Rightarrow (c)$: If $\surp_{ij}\geq 0$ then $(w_{ij}-\alf{i}{j}+\alf{j}{i})/2\geq \alf{j}{i}$. But $\off{i}{j}=(w_{ij}-\alf{i}{j}+\alf{j}{i})/2$ therefore $\gamma_j=\off{i}{j}$. The argument for $\gamma_i=\off{j}{i}$ is similar.

$(c)\Rightarrow (a)$: If $\surp_{ij} \geq 0$ then $\off{i}{j}=(w_{ij}-\alf{i}{j}+\alf{j}{i})/2$ and $\off{j}{i}=(w_{ij}-\alf{j}{i}+\alf{i}{j})/2$ which gives $\gamma_i+\gamma_j=\off{i}{j}+\off{j}{i}=w_{ij}$ and we are done.  Otherwise, we have
$\gamma_i+\gamma_j=\off{i}{j}+\off{j}{i}\leq (w_{ij}-\alf{i}{j})_+ + (w_{ij}-\alf{j}{i})_+ < \max \bigg[(w_{ij}-\alf{i}{j})_+, (w_{ij}-\alf{j}{i})_+, 2w_{ij}-\alf{i}{j}-\alf{j}{i}\bigg]\leq w_{ij}$ which contradicts Lemma \ref{lemma:gamma_sat_dual_const}
that $\bargamma$ satisfies the constraints of the dual problem (\ref{prob:mwm_dual}).

Finally, we need to show that $\gamma_i=\off{j}{i}$ and $\gamma_j>\off{i}{j}$ give $\gamma_i=0$. First note that by equivalence of $(b)$ and $(c)$ we should have $w_{ij} < \alf{i}{j} + \alf{j}{i}$. On the other hand $\alf{i}{j}\leq \gamma_i  =\off{j}{i}\leq (w_{ij}-\alf{j}{i})_+$.  Now if  $w_{ij}-\alf{j}{i}>0$ we get $\alf{i}{j}\leq w_{ij}-\alf{j}{i}$ which is a contradiction. Therefore $\gamma_i=(w_{ij}-\alf{j}{i})_+=0$.
\enp
\end{proof}

\begin{lemma}
The following are equivalent:\\
(a) $P(i) = \{ j \}$ and $\gamma_i>0$,\\
(b) $P(j) = \{ i \}$ and $\gamma_j>0$,\\
(c) $w_{ij}-\alf{i}{j} - \alf{j}{i} >0$.\\
(d) $i$ and $j$ receive unique best positive offers from each other.
\label{lemma:strong_dotted}
\end{lemma}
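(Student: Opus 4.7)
The plan is to establish the equivalences via the chain $(c)\Rightarrow (a)$, $(c)\Rightarrow (b)$, $(c)\Rightarrow (d)$, together with $(a)\Rightarrow (c)$ and $(d)\Rightarrow (c)$ (the symmetric $(b)\Rightarrow (c)$ being identical). The workhorse is Lemma \ref{lemma:i-jpartners_equivalence}, which already identifies partnership with $\surp_{ij}\ge 0$, so the task here is essentially to separate the strict case $\surp_{ij}>0$ from the degenerate case $\surp_{ij}=0$.

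For $(c)\Rightarrow (a),(b),(d)$, I would start from $\surp_{ij}>0$ and compute directly from \eqref{eq:off_def}: since $w_{ij}-\alf{i}{j}>\alf{j}{i}\ge 0$, one has $\off{j}{i}=\tfrac{1}{2}(w_{ij}+\alf{i}{j}-\alf{j}{i}) = \alf{i}{j}+\tfrac{1}{2}\surp_{ij}>\alf{i}{j}$, and symmetrically for $\off{i}{j}$. The fixed point condition $\alf{i}{j}=\max_{k\in\di\setminus j}\off{k}{i}$ then forces $\off{j}{i}>\off{k}{i}$ for every $k\in\di\setminus j$, so $j$ delivers the unique best offer to $i$, which is moreover positive (since $\off{j}{i}>\alf{i}{j}\ge 0$). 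Consequently $\gamma_i=\off{j}{i}>0$. To rule out a second partner $k\in P(i)\setminus\{j\}$, apply Lemma \ref{lemma:i-jpartners_equivalence}$(c)$ to the pair $(i,k)$: it would give $\gamma_i=\off{k}{i}$, contradicting the strict inequality $\off{j}{i}>\alf{i}{j}\ge \off{k}{i}$. Symmetry gives $(b)$, and $(d)$ is exactly what we have just proved.

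For $(a)\Rightarrow (c)$, suppose $P(i)=\{j\}$ and $\gamma_i>0$. By Lemma \ref{lemma:i-jpartners_equivalence} we already have $\surp_{ij}\ge 0$, and I need only exclude $\surp_{ij}=0$. If $\surp_{ij}=0$, then $\off{j}{i}=\alf{i}{j}$, and since $\gamma_i=\off{j}{i}>0$, the maximum in $\alf{i}{j}=\max_{k\in\di\setminus j}\off{k}{i}$ is attained at some $k^*\in\di\setminus j$ with $\off{k^*}{i}=\gamma_i>0$. Now I invoke the contrapositive of the final assertion of Lemma \ref{lemma:i-jpartners_equivalence}: since $\gamma_i=\off{k^*}{i}$ and $\gamma_i>0$, we cannot have $\gamma_{k^*}>\off{i}{k^*}$, so $\gamma_{k^*}=\off{i}{k^*}$, and then by Lemma \ref{lemma:i-jpartners_equivalence}$(c)\Rightarrow(a)$ the pair $(i,k^*)$ are partners, contradicting $P(i)=\{j\}$. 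The implication $(b)\Rightarrow(c)$ is verbatim with the roles of $i,j$ swapped.

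For $(d)\Rightarrow (c)$, assume $j$ gives $i$ a unique best positive offer, i.e. $\off{j}{i}>\alf{i}{j}\ge 0$ and $\off{j}{i}>0$. I split on the sign of $\surp_{ij}$: if $\surp_{ij}\le 0$ and $w_{ij}\ge \alf{j}{i}$, then \eqref{eq:off_def} gives $\off{j}{i}=w_{ij}-\alf{j}{i}\le \alf{i}{j}$, contradicting $\off{j}{i}>\alf{i}{j}$; if $w_{ij}<\alf{j}{i}$, then $\off{j}{i}=0$, contradicting positivity. Hence $\surp_{ij}>0$. The only step requiring real care is ruling out $\surp_{ij}=0$ in $(a)\Rightarrow(c)$, where the last clause of Lemma \ref{lemma:i-jpartners_equivalence} is essential; everything else is a direct unpacking of \eqref{eq:off_def} and the fixed point condition.
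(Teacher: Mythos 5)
Your proof is correct, and all four implications are handled soundly; the overall strategy (a cycle of implications resting on Lemma \ref{lemma:i-jpartners_equivalence} and a direct unpacking of Eq.~\eqref{eq:off_def} at a fixed point) matches the paper's. The one place where you genuinely diverge is $(a)\Rightarrow(c)$. The paper argues forward: since every $k\in\di\setminus j$ is a non-partner, Lemma \ref{lemma:i-jpartners_equivalence} gives $\surp_{ik}<0$, hence $\off{k}{i}=(w_{ik}-\alf{k}{i})_+<\alf{i}{k}$, and combining this with $\gamma_i>0$ forces $\off{j}{i}>\alf{i}{j}$ and $\off{j}{i}=w_{ij}-\alf{j}{i}$, i.e.\ $\surp_{ij}>0$ directly. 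You instead argue by contradiction from $\surp_{ij}=0$: then $\gamma_i=\alf{i}{j}>0$ is attained by some $k^*\in\di\setminus j$, and the ``moreover'' clause of Lemma \ref{lemma:i-jpartners_equivalence} (which the paper does not invoke in this proof) upgrades $\gamma_i=\off{k^*}{i}$ to mutual best offers on $(i,k^*)$, manufacturing a second partner. Your route is arguably cleaner, since the paper's chain $\off{k}{i}<\alf{i}{k}=\off{j}{i}$ requires a slightly delicate case analysis to justify; the paper's route has the advantage of establishing along the way that $j$ is the source of $i$'s best offer, which it reuses. Your $(d)\Rightarrow(c)$ is also mildly stronger than the paper's, using only the hypothesis on one endpoint, whereas the paper routes through partnership of the pair; both are valid.
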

\begin{proof}
$(a)\Rightarrow (c) \Rightarrow (b)$: $(a)$ means that for all $k\in\p i\backslash j$, $\surp_{ik}<0$. This means $\off{k}{i}=(w_{ik}-\alf{k}{i})_+<\alf{i}{k}=\off{j}{i}$ (using $\gamma_i>0$). Hence,
$\alf{i}{j}<\off{j}{i}$. From $(a)$, it also follows that $\off{j}{i}>0$ or
$(w_{ij}-\alf{j}{i})_+=w_{ij}-\alf{j}{i}$. Therefore, $\off{j}{i} \leq (w_{ij}-\alf{j}{i})_+ = w_{ij}-\alf{j}{i}$ which gives $w_{ij}-\alf{i}{j} - \alf{j}{i} >0$ or $(c)$.  From this we can explicitly write $\off{i}{j}=(w_{ij}-\alf{i}{j}+\alf{j}{i})/2$ which is strictly bigger than $\alf{j}{i}$. Hence we obtain $(b)$.

By symmetry $(b)\Rightarrow (c) \Rightarrow (a)$.  Thus, we have shown that $(a)$, $(b)$ and $(c)$ are equivalent.

$(c) \Rightarrow (d)$: $(c)$ implies that $\off{i}{j}= (w_{ij}-\alf{i}{j} + \alf{j}{i})/2 > \alf{j}{i} = \max_{k \in \partial j \backslash i} \off{k}{j}$. Thus, $j$ receives its unique best positive offer from $i$. Using symmetry, it follows that $(d)$ holds.

$(d) \Rightarrow (c)$: $(d)$ implies $\gamma_i = \off{j}{i}$ and $\gamma_j = \off{i}{j}$. By Lemma \ref{lemma:i-jpartners_equivalence}, $i$ and $j$ are partners, i.e. $\gamma_i + \gamma_j = w_{ij}$. Hence, $\off{i}{j} + \off{j}{i} = w_{ij}$. But since $(d)$ holds, $\alf{i}{j} < \off{j}{i}$ and $\alf{j}{i} < \off{i}{j}$. This leads to $(c)$.

This finishes the proof.
\enp
\end{proof}

Recall that $(ij)$ is a weak-dotted edge if $w_{ij}-\alf{i}{j} - \alf{j}{i}=0$, a strong-dotted edge if
$w_{ij}-\alf{i}{j} - \alf{j}{i} > 0$, and a non-dotted edge otherwise.
Basically, for any dotted edge $(ij)$ we have $j\in P(i)$ and $i\in P(j)$.

\begin{corollary} A corollary of Lemmas \ref{lemma:i-jpartners_equivalence}-\ref{lemma:strong_dotted} is that strong-dotted edges are only adjacent to non-dotted edges.  Also each weak-dotted edge is adjacent to at least one weak-dotted edge at each end (assume that the earnings of the two endpoints are non-zero).
\end{corollary}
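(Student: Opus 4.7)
The approach is to combine the characterizations of partnership from Lemma \ref{lemma:i-jpartners_equivalence} with the unique-partner characterization of strong-dotted edges from Lemma \ref{lemma:strong_dotted}. For the first statement, let $(ij)$ be a strong-dotted edge, so $\surp_{ij} > 0$. I would invoke Lemma \ref{lemma:strong_dotted}(c)$\Rightarrow$(a),(b) to conclude that $P(i) = \{j\}$ and $P(j) = \{i\}$. Then for any neighbor $k \in \partial i \setminus j$ we have $k \notin P(i)$, so by Lemma \ref{lemma:i-jpartners_equivalence}(a)$\Leftrightarrow$(b) we must have $\surp_{ik} < 0$, i.e. the adjacent edge $(ik)$ is non-dotted. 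A symmetric argument handles edges adjacent at $j$.

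For the second statement, let $(ij)$ be a weak-dotted edge, so $\surp_{ij} = 0$, and assume $\gamma_i, \gamma_j > 0$. From $\surp_{ij} \ge 0$ and Lemma \ref{lemma:i-jpartners_equivalence}, $j \in P(i)$. However, the condition $\surp_{ij} = 0$ fails condition (c) of Lemma \ref{lemma:strong_dotted}, so condition (a) also fails; combined with $\gamma_i > 0$, this forces $P(i) \neq \{j\}$. Hence there exists some $k \in P(i) \setminus \{j\}$, meaning $\surp_{ik} \ge 0$, so the edge $(ik)$ is dotted (weak or strong). But by the first part of the corollary just established, $(ik)$ cannot be strong-dotted, since then its neighbor $(ij)$ would have to be non-dotted. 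Therefore $(ik)$ is weak-dotted. The same reasoning at the endpoint $j$ produces a weak-dotted edge incident to $j$.

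\textbf{Main obstacle.} There is essentially no obstacle — both claims are bookkeeping consequences of the partnership dichotomy already proved. The only subtle point is the role of the $\gamma_i > 0$ hypothesis in the second statement: without it, Lemma \ref{lemma:strong_dotted} does not preclude $P(i) = \{j\}$ via condition (a), and in fact if $\gamma_i = 0$ the node $i$ may have no other partner at all, so the conclusion genuinely needs the stated assumption. I would therefore make sure to highlight exactly where the non-zero earnings hypothesis is invoked.
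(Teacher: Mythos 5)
Your proof is correct and is exactly the derivation the paper intends: the paper states this corollary without proof as an immediate consequence of Lemmas \ref{lemma:i-jpartners_equivalence} and \ref{lemma:strong_dotted}, and your argument (strong-dotted $\Rightarrow$ $P(i)=\{j\}$, hence all other adjacent edges fail the partnership criterion and are non-dotted; weak-dotted with $\gamma_i>0$ $\Rightarrow$ $P(i)\supsetneq\{j\}$, and the extra partner edge must be weak-dotted by the first part) is the intended one. Your observation about precisely where the non-zero earnings hypothesis is needed is also accurate.
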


\begin{lemma}
If $i$ has no adjacent dotted edges, then $\gamma_i=0$
\label{lemma:no_dotted_means_gamma0}
\end{lemma}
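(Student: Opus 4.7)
The plan is to argue by contradiction: assume $\gamma_i>0$ and exhibit an adjacent edge of $i$ that is dotted, contradicting the hypothesis. The key tool will be the fixed point identity $\alf{i}{k}=\max_{l\in\partial i\setminus k}\off{l}{i}$, which ties the outgoing message on $(i,k)$ to the best offer $i$ receives from its \emph{other} neighbors.

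First, I would pick $k^{*}\in\partial i$ attaining the maximum in $\gamma_i=\max_{k\in\partial i}\off{k}{i}$, so that $\off{k^{*}}{i}=\gamma_i>0$. Since by hypothesis the edge $(i,k^{*})$ is non-dotted, $\alf{k^{*}}{i}+\alf{i}{k^{*}}>w_{ik^{*}}$, which forces the offer formula \eqref{eq:off_def} to reduce to $\off{k^{*}}{i}=(w_{ik^{*}}-\alf{k^{*}}{i})_{+}$. Combined with $\off{k^{*}}{i}>0$, this yields the strict inequality $\alf{k^{*}}{i}<w_{ik^{*}}$ and the identity $\off{k^{*}}{i}=w_{ik^{*}}-\alf{k^{*}}{i}$.

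Next, I would invoke the fixed point equation for $\alf{i}{k^{*}}$: every $\off{l}{i}$ with $l\in\partial i\setminus k^{*}$ is bounded by the maximum $\off{k^{*}}{i}$, so
\begin{equation*}
\alf{i}{k^{*}}\;=\;\max_{l\in\partial i\setminus k^{*}}\off{l}{i}\;\le\;\off{k^{*}}{i}\;=\;w_{ik^{*}}-\alf{k^{*}}{i}.
\end{equation*}
Rearranging gives $\alf{i}{k^{*}}+\alf{k^{*}}{i}\le w_{ik^{*}}$, i.e., edge $(i,k^{*})$ is dotted. This contradicts the assumption that $i$ has no adjacent dotted edges, so we must have $\gamma_i=0$.

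I do not expect any real obstacle here; the whole content is just using the fixed point identity to push the best incoming offer into the surplus expression on the same edge, which is a short two-line manipulation.
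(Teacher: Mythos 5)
Your proof is correct and follows essentially the same route as the paper's: take the neighbor providing the best offer, use the fixed point identity $\alf{i}{k^*}=\max_{l\in\partial i\setminus k^*}\off{l}{i}\le\off{k^*}{i}\le(w_{ik^*}-\alf{k^*}{i})_+$, and conclude that if $\gamma_i>0$ the edge $(i,k^*)$ would be dotted. The only cosmetic difference is that you frame the whole argument as a contradiction with $\gamma_i>0$, whereas the paper case-splits on the sign of $w_{ij}-\alf{j}{i}$ and concludes $\gamma_i=0$ directly; the content is identical.
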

\begin{proof}
Assume that the largest offer to $i$ comes from $j$. Therefore, $\alf{i}{j}\leq \off{j}{i}\leq (w_{ij}-\alf{j}{i})_+$.
Now if $w_{ij}-\alf{j}{i}>0$ then $\alf{i}{j}\leq w_{ij}-\alf{j}{i}$ or $(ij)$ is dotted edge which is impossible. Thus, $w_{ij}-\alf{j}{i}=0$ and $\gamma_i=0$.
\enp
\end{proof}

\begin{lemma}
The following are equivalent:\\
(a) $\alf{i}{j}=\gamma_i$,\\
(b) $\surp_{ij}\le 0$,\\
(c) $\off{i}{j}=(w_{ij}-\alf{i}{j})_+$.
\label{lemma:when_alf_equals_gamma}
\end{lemma}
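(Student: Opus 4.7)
The plan is to observe first that the equivalence (b) $\Leftrightarrow$ (c) is immediate from the definition of offers and then reduce (a) $\Leftrightarrow$ (b) to a one-line comparison using the fixed-point identity $\alf{i}{j} = \max_{k\in\partial i\setminus j}\off{k}{i}$, together with $\gamma_i=\max_{k\in\partial i}\off{k}{i} = \max(\alf{i}{j},\off{j}{i})$.

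For (b) $\Leftrightarrow$ (c), I would simply read off the offer formula \eqref{eq:off_def}: $\off{i}{j} = (w_{ij}-\alf{i}{j})_+ - \tfrac{1}{2}(\surp_{ij})_+$, so the extra term vanishes precisely when $\surp_{ij}\le 0$. No fixed point property is used here.

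For (b) $\Rightarrow$ (a), I would use the symmetric version of the already-proved (b) $\Leftrightarrow$ (c), applied to the ordered pair $(j,i)$: $\off{j}{i}=(w_{ij}-\alf{j}{i})_+$. The inequality $\surp_{ij}\le 0$ gives $w_{ij}-\alf{j}{i}\le \alf{i}{j}$, and combined with $\alf{i}{j}\ge 0$ this yields $\off{j}{i}\le \alf{i}{j}$, so $\gamma_i = \max(\alf{i}{j},\off{j}{i}) = \alf{i}{j}$.

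For the converse (a) $\Rightarrow$ (b), I would argue by contradiction: assume $\surp_{ij}>0$, so $\alf{i}{j}+\alf{j}{i}<w_{ij}$, which in particular forces $w_{ij}-\alf{j}{i}>\alf{i}{j}\ge 0$. Then the offer formula gives $\off{j}{i} = \tfrac{1}{2}(w_{ij}+\alf{i}{j}-\alf{j}{i}) > \alf{i}{j}$, contradicting $\alf{i}{j}=\gamma_i\ge \off{j}{i}$. No step is truly delicate here — the only thing to watch is being careful with the two branches of the $(\cdot)_+$ in $\off{j}{i}$ when deriving the strict inequality, but the assumption $\surp_{ij}>0$ automatically puts us in the $\alf{i}{j}+\alf{j}{i}<w_{ij}$ branch, so the calculation is straightforward.
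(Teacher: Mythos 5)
Your proof is correct and follows essentially the same route as the paper's: the paper proves the cycle (a)$\Rightarrow$(b)$\Rightarrow$(c)$\Rightarrow$(a), where its (a)$\Rightarrow$(b) step is exactly your contrapositive computation of $\off{j}{i}=(w_{ij}-\alf{j}{i}+\alf{i}{j})/2>\alf{i}{j}$, and its (c)$\Rightarrow$(a) step first extracts $\surp_{ij}\le 0$ and then bounds $\off{j}{i}=(w_{ij}-\alf{j}{i})_+\le\alf{i}{j}$, just as in your (b)$\Rightarrow$(a). The reorganization into (b)$\Leftrightarrow$(c) plus (a)$\Leftrightarrow$(b) is cosmetic; all steps match.
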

\begin{proof}
$(a)\Rightarrow (b)$: ``not (b)" $\Rightarrow \off{j}{i} = (w_{ij} - \alf{j}{i} + \alf{i}{j})/2 > \alf{i}{j} \Rightarrow $ ``not (a)".

$(b)\Rightarrow (c)$: Follows from the definition of $\off{i}{j}$.

$(c)\Rightarrow (a)$: From $\off{i}{j}=(w_{ij}-\alf{i}{j})_+$ we have $\surp_{ij} \le 0$. Therefore, $\off{j}{i}=(w_{ij}-\alf{j}{i})_+\leq \max\big[w_{ij}-\alf{j}{i},0\big]\leq \alf{i}{j}$.
\enp
\end{proof}
Note that (b) is symmetric in $i$ and $j$, so (a) and (c) can be transformed by interchanging $i$ and $j$.
\begin{corollary}
$\alf{i}{j}=\gamma_i$ if and only if $\alf{j}{i}=\gamma_j$
\label{corr:alf_gamma_eq}
\end{corollary}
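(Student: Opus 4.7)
The plan is to deduce the corollary directly from Lemma \ref{lemma:when_alf_equals_gamma} by exploiting the symmetry of condition (b) in the indices $i$ and $j$. First I would apply the lemma as stated: the equivalence $(a)\Leftrightarrow (b)$ gives $\alf{i}{j}=\gamma_i$ if and only if $\surp_{ij}\le 0$, where $\surp_{ij}=w_{ij}-\alf{i}{j}-\alf{j}{i}$. Next I would apply the same lemma with the roles of $i$ and $j$ interchanged, which yields $\alf{j}{i}=\gamma_j$ if and only if $\surp_{ji}\le 0$.

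The key observation is then that surplus is a symmetric function of its two endpoints: $\surp_{ji}=w_{ji}-\alf{j}{i}-\alf{i}{j}=w_{ij}-\alf{i}{j}-\alf{j}{i}=\surp_{ij}$, using that the edge weight $w_{ij}=w_{ji}$ is defined on the undirected edge. Combining the two equivalences with this identity gives
\[
\alf{i}{j}=\gamma_i \;\Longleftrightarrow\; \surp_{ij}\le 0 \;\Longleftrightarrow\; \surp_{ji}\le 0 \;\Longleftrightarrow\; \alf{j}{i}=\gamma_j,
\]
which is the desired biconditional.

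There is no real obstacle here: the corollary is essentially a bookkeeping consequence of Lemma \ref{lemma:when_alf_equals_gamma} together with the remark immediately preceding the corollary, which already flags that condition (b) is symmetric in $i$ and $j$. The only point that needs a moment's attention is to make explicit the symmetry of the surplus, but this is immediate from its definition.
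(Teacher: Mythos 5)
Your proposal is correct and matches the paper's reasoning exactly: the paper offers no separate proof of this corollary, deriving it from the remark immediately preceding it that condition (b) of Lemma \ref{lemma:when_alf_equals_gamma} is symmetric in $i$ and $j$, which is precisely the chain of equivalences you spell out. Your write-up merely makes explicit the symmetry of the surplus $\surp_{ij}=\surp_{ji}$, which is a fine (and harmless) bit of bookkeeping.
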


\begin{lemma}
$\off{i}{j}=(w_{ij}-\gamma_i)_+$ holds $\forall \; (ij) \in E$
\label{lemma:off_from_gamma}
\end{lemma}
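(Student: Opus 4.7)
The plan is to reduce the identity $\off{i}{j} = (w_{ij} - \gamma_i)_+$ to the earlier fixed point properties by case analysis on whether $j$ achieves the maximum in the definition of $\gamma_i$. Since $\baralf$ is a fixed point, the update rule \eqref{eq:update} gives $\alf{i}{j} = \max_{k \in \partial i\backslash j} \off{k}{i}$, and therefore $\alf{i}{j} \le \max_{k \in \partial i} \off{k}{i} = \gamma_i$. Hence exactly one of the two cases $\alf{i}{j} = \gamma_i$ or $\alf{i}{j} < \gamma_i$ must occur.

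In the first case, Lemma \ref{lemma:when_alf_equals_gamma} ((a)$\Rightarrow$(c)) gives immediately $\off{i}{j} = (w_{ij}-\alf{i}{j})_+ = (w_{ij}-\gamma_i)_+$, which is the desired identity.

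In the second case, $\alf{i}{j} < \gamma_i$ forces the maximum defining $\gamma_i$ to be attained strictly at $j$, so that $\gamma_i = \off{j}{i}$. I would then appeal to the last part of Lemma \ref{lemma:i-jpartners_equivalence}: if $\gamma_j > \off{i}{j}$ held, the lemma would imply $\gamma_i = 0$, contradicting $\gamma_i > \alf{i}{j} \ge 0$. Therefore $\gamma_j = \off{i}{j}$, and by Lemma \ref{lemma:i-jpartners_equivalence} ((c)$\Rightarrow$(a)) the nodes $i$ and $j$ are partners, i.e.\ $\gamma_i + \gamma_j = w_{ij}$. Consequently $\off{i}{j} = \gamma_j = w_{ij} - \gamma_i$, and since $\gamma_j \ge 0$ we have $w_{ij} - \gamma_i \ge 0$, so the positive part is redundant and $\off{i}{j} = (w_{ij} - \gamma_i)_+$.

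The only step requiring mild care is the second case, where one has to rule out $\gamma_j > \off{i}{j}$ via the auxiliary ``moreover'' clause of Lemma \ref{lemma:i-jpartners_equivalence}; everything else is bookkeeping on the previously established fixed point properties. No new estimates are needed.
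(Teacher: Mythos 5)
Your proof is correct and follows essentially the same route as the paper: both split into the same two cases (your dichotomy $\alf{i}{j}=\gamma_i$ versus $\alf{i}{j}<\gamma_i$ is equivalent to the paper's $\surp_{ij}\le 0$ versus $\surp_{ij}>0$ by Lemma \ref{lemma:when_alf_equals_gamma}), and handle the first case identically. In the second case the paper just computes $\off{i}{j}$ and $\off{j}{i}$ explicitly from the strongly-dotted formula, whereas you deduce $\gamma_i+\gamma_j=w_{ij}$ via the ``moreover'' clause and the equivalences of Lemma \ref{lemma:i-jpartners_equivalence}; both are valid and of comparable length.
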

\begin{proof}
If $w_{ij}-\alf{i}{j} - \alf{j}{i} \le 0$ then the result follows from Lemma \ref{lemma:when_alf_equals_gamma}.  Otherwise, $(ij)$ is strongly dotted and $\gamma_i=\off{j}{i}=(w_{ij}-\alf{j}{i} + \alf{i}{j})/2$,  $\gamma_j=\off{i}{j}=(w_{ij}-\alf{i}{j} + \alf{j}{i})/2$. From here we can explicitly calculate $w_{ij}-\gamma_i=(w_{ij}-\alf{i}{j} + \alf{j}{i})/2=\off{i}{j}$.
\enp
\end{proof}

\begin{lemma}
The unmatched balance property, equation \eqref{eq:balance}, holds at every edge $(ij) \in E$, and both sides of the equation
are non-negative.
\label{lemma:balance_at_FP}
\end{lemma}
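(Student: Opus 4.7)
The plan is to reduce the balance equation \eqref{eq:balance} to a simple statement about the differences $\gamma_i - \alf{i}{j}$ and $\gamma_j - \alf{j}{i}$, and then verify it by splitting into cases according to the sign of the surplus $\surp_{ij} = w_{ij} - \alf{i}{j} - \alf{j}{i}$.

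First, I would rewrite the right hand side of \eqref{eq:balance} using the tools already established. By the fixed point condition \eqref{eq:update} (with $\damp \in (0,1]$), we have $\alf{i}{j} = \max_{k \in \partial i \setminus j} \off{k}{i}$. Combining this with Lemma \ref{lemma:off_from_gamma}, which gives $\off{k}{i} = (w_{ki} - \gamma_k)_+$ for every edge, I obtain
\begin{equation*}
\alf{i}{j} \;=\; \max_{k \in \partial i \setminus j} (w_{ik} - \gamma_k)_+\, ,
\end{equation*}
and symmetrically for $\alf{j}{i}$. So the balance property at $(ij)$ is equivalent to showing $\gamma_i - \alf{i}{j} = \gamma_j - \alf{j}{i}$, with both sides non-negative.

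Next, I would case-split on the type of the edge $(ij)$. If $(ij)$ is strongly dotted ($\surp_{ij} > 0$), then Lemma \ref{lemma:i-jpartners_equivalence}(c) together with the explicit formula $\off{i}{j} = (w_{ij}-\alf{i}{j}+\alf{j}{i})/2$ (valid in this regime) gives $\gamma_j = (w_{ij}-\alf{i}{j}+\alf{j}{i})/2$ and $\gamma_i = (w_{ij}-\alf{j}{i}+\alf{i}{j})/2$; subtracting shows that both $\gamma_i-\alf{i}{j}$ and $\gamma_j-\alf{j}{i}$ equal $\surp_{ij}/2 > 0$. If instead $\surp_{ij} \le 0$ (weakly dotted or non-dotted), Lemma \ref{lemma:when_alf_equals_gamma} immediately yields $\alf{i}{j} = \gamma_i$, and by Corollary \ref{corr:alf_gamma_eq} also $\alf{j}{i} = \gamma_j$; both sides of the balance equation then equal zero.

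The proof is essentially a bookkeeping exercise once the right substitution is made, so I do not anticipate a genuine obstacle; the only thing to watch is the sign of $\surp_{ij}$, since the formula for the offers (and hence for $\gamma_i$ in terms of $\alf{i}{j},\alf{j}{i}$) changes across the threshold $\surp_{ij}=0$. The case split above handles this cleanly and the non-negativity on both sides falls out of the same analysis, establishing the lemma.
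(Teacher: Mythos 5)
Your proof is correct and follows essentially the same route as the paper's: use Lemma \ref{lemma:off_from_gamma} together with the fixed-point condition to rewrite \eqref{eq:balance} as $\gamma_i - \alf{i}{j} = \gamma_j - \alf{j}{i}$, then case-split on the sign of $\surp_{ij}$, with Corollary \ref{corr:alf_gamma_eq} handling $\surp_{ij}\le 0$ and the explicit offer formula giving both sides equal to $\surp_{ij}/2$ when $\surp_{ij}>0$. No gaps.
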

\begin{proof}
In light of lemma \ref{lemma:off_from_gamma}, \eqref{eq:balance} can be rewritten at a fixed point as
\begin{align}
\gamma_i - \alf{i}{j} = \gamma_j - \alf{j}{i}
\label{eq:FP_balance}
\end{align}
which is easy to verify. The case $\surp_{ij} \le 0$ leads to both
sides of Eq.~(\ref{eq:FP_balance}) being $0$ by Corollary \ref{corr:alf_gamma_eq}. The
other case $\surp_{ij} >0$ leads to
\begin{align}
\off{i}{j}-\alf{j}{i}=\off{j}{i}-\alf{i}{j}=\frac{\surp_{ij}}{2}
\end{align}
Clearly, we have $\gamma_i=\off{j}{i}$ and $\gamma_j=\off{i}{j}$. So Eq.~(\ref{eq:FP_balance}) holds.
\enp
\end{proof}
Next lemmas show that dotted edges are in correspondence with the solid edges that were defined in Section \ref{sec:FixedPoint}.

\begin{lemma}
A non-solid edge cannot be a dotted edge, weak or strong.
\label{lemma:no_extra_dotted}
\end{lemma}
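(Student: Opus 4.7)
I will prove the contrapositive: every dotted edge $(ij)$ is solid, i.e., some LP optimum $x^*$ of \eqref{prob:mwm_relaxation} satisfies $x^*_{ij} > 0$. The plan is to exhibit an LP-feasible point $\bar x$ whose support is exactly the set of dotted edges and whose objective matches the dual-feasible value $\sum_i \gamma_i$ provided by $\bargamma$ (dual feasibility is Lemma \ref{lemma:gamma_sat_dual_const}). Weak LP duality will then force $\bar x$ to be LP-optimal, and since $\bar x_e > 0$ on every dotted edge, every dotted edge is solid.

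The construction proceeds in pieces. On strong-dotted edges, set $\bar x_e = 1$; by Lemma \ref{lemma:strong_dotted} and its corollary, these edges are isolated in the dotted subgraph, so their endpoints are saturated without violating any degree constraint. On non-dotted edges, set $\bar x_e = 0$. The critical step is choosing strictly positive values on the weak-dotted subgraph $H_w$ so that $\sum_{e \ni i} \bar x_e = 1$ for every node $i$ with $\gamma_i > 0$. Once this is achieved, using Lemma \ref{lemma:i-jpartners_equivalence} (every dotted edge satisfies $w_{ij} = \gamma_i + \gamma_j$) yields
\begin{align*}
\sum_e w_e \bar x_e \;=\; \sum_{e \text{ dotted}} (\gamma_i + \gamma_j)\,\bar x_e \;=\; \sum_i \gamma_i \sum_{e \ni i}\bar x_e \;=\; \sum_{i:\gamma_i > 0} \gamma_i \;=\; \sum_i \gamma_i,
\end{align*}
closing the duality gap.

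Two structural facts, derived directly from the fixed-point equations, will support the required assignment on $H_w$. First, adapting the reasoning of Lemma \ref{lemma:strong_dotted}: if $(ij)$ is weak-dotted with $\gamma_i > 0$, then Lemma \ref{lemma:when_alf_equals_gamma} gives $\alf{i}{j} = \gamma_i$, so the fixed-point update $\alf{i}{j} = \max_{k \in \partial i \setminus j} \off{k}{i}$ exhibits some $k \neq j$ with $\off{k}{i} = \gamma_i$; inverting the offer formula \eqref{eq:off_def} then forces $\alf{k}{i} + \alf{i}{k} = w_{ik}$, so $(ik)$ is also weak-dotted. Consequently, every node with $\gamma_i > 0$ that is incident to a weak-dotted edge has weak-dotted degree at least two. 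Second, the balance property (Lemma \ref{lemma:balance_at_FP}) tightly links the earnings along each connected component of $H_w$, which I expect to translate into a Hall-type non-deficiency condition and hence the existence of a positive fractional matching saturating every $\gamma_i > 0$ node.

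The main obstacle is establishing this saturating fractional matching rigorously. I anticipate constructing it via a decomposition of $H_w$ into even cycles and paths, odd cycles with $\bar x_e = 1/2$ uniformly, and attachments to nodes with $\gamma_i = 0$ (whose saturation is not demanded) --- essentially a local analog of the Kleinberg--Tardos decomposition, driven by the balance and update equations at each weak-dotted node --- followed by a small perturbation to ensure strict positivity on every edge.
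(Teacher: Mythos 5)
Your overall strategy---certify that every dotted edge carries positive mass in some LP optimum by exhibiting a primal point supported on the dotted edges whose value matches the dual-feasible value $\sum_i\gamma_i$---has a genuine gap exactly at the step you flag as the ``main obstacle,'' and the gap is not repairable within this framework. The saturating fractional matching on the weak-dotted subgraph need not exist, and the structural facts you derive (weak-dotted degree at least two at every node with $\gamma_i>0$, plus balance) do not imply the Hall-type condition you hope for. Concretely, take $G=K_{2,3}$ with parts $\{u_1,u_2\}$, $\{v_1,v_2,v_3\}$ and all weights equal to $1$, and set every message $\alf{i}{j}=1/2$. One checks directly that this is a fixed point: every offer equals $1/2$, every $\gamma_i=1/2>0$, and every edge is weak-dotted with $w_{ij}=\gamma_i+\gamma_j$. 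All your structural facts hold (every node has weak-dotted degree $\ge 2$, balance holds on every edge), yet $K_{2,3}$ admits no fractional matching saturating all five vertices (deleting $\{u_1,u_2\}$ isolates three vertices, violating the fractional Tutte condition; equivalently the degree constraints at $u_1,u_2$ force $\sum_e \bar x_e\le 2<5/2$). Worse, $\sum_i\gamma_i=5/2$ strictly exceeds the LP optimum $2$, so \emph{no} feasible primal point can match the dual value of $\bargamma$: the fixed-point earnings vector is dual feasible but need not be dual optimal, and the weak-duality certificate you are aiming for simply does not exist in general. (The lemma itself survives this example only because $K_{2,3}$ happens to have no non-solid edges.)

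The paper avoids this by arguing locally rather than globally. Assuming a non-solid edge $(i_1,i_2)$ is dotted, it grows an alternating walk whose odd edges are dotted and whose even edges lie in the support of a fixed optimum $\underline x^*$, terminating in one of four configurations (path, even cycle, blossom, bicycle), and then flips $\underline x^*$ along this structure to obtain a feasible $\underline x'$ with $x'_{i_1i_2}>0$ and $\sum_e w_ex_e'\ge\sum_e w_ex_e^*$, contradicting non-solidity. The weight comparison there uses only the local relations $\gamma_i+\gamma_j\ge w_{ij}$ (dual feasibility, your Lemma \ref{lemma:gamma_sat_dual_const}) together with equality on dotted edges, summed along the walk; it never needs $\sum_i\gamma_i$ to equal the LP value or any global saturation property. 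If you want to keep a primal--dual flavor, you would need to localize your construction in a similar way---e.g.\ prove the claim one edge at a time by modifying an existing optimum---rather than building a single complementary-slackness witness for $\bargamma$.
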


Before proving the lemma let us define alternating paths. A path $P=( {i_1}, {i_2},\ldots, {i_k})$ in $G$ is called \emph{alternating path} if:
(a) There exist a partition {of} edges of $P$ into two sets $A,B$ such that either $A\subset M^*$ or $B\subset M^*$.
Moreover $A$ ($B$) consists of all \emph{odd (even)} edges; i.e. $A=\{( {i_1}, {i_2}), ( {i_3}, {i_4}),\ldots\}$
($B=\{( {i_2}, {i_3}), ( {i_4}, {i_5}),\ldots\}$). (b) The path $P$ might intersect itself or even
repeat its own edges but no edge is repeated immediately.
That is, for any $1\leq r\leq k-2:~~~~ i_r\neq  i_{r+1}$ and $ i_r\neq  i_{r+2}$.  $P$ is called an \emph{alternating cycle} if $ {i_1}= {i_k}$.

Also, consider $\underline{x}^*$ and $\underline{y}^*$ that are optimum solutions for the LP and its dual, \eqref{prob:mwm_relaxation} and \eqref{prob:mwm_dual}. The complementary slackness conditions (see \cite{Schrijver}) for more details)
state that for all $v\in V$, $y_v^*(\sum_{e\in \p v}x_e^*-1)=0$ and for all $e=(ij)\in E$, $x_e^*(y_i^*+y_j^*-w_{ij})=0$.
Therefore, for all solid edges the equality $y_i^*+y_j^*=w_{ij}$ holds. Moreover, any node $v\in V$ is adjacent to a solid edge if and only if $y_v^*>0$.

\begin{proof}[Proof of Lemma \ref{lemma:no_extra_dotted}]
First, we refine the notion of solid edges by calling an edge $e$, 1-$\underline{x}^*$-solid ($\frac{1}{2}$-$\underline{x}^*$-solid) whenever $x_e^*=1$ ($x_e^*=\frac{1}{2}$).

We need to consider two cases:

\textbf{Case (I).} Assume that LP has an optimum solution $\underline{x}^*$ that is integral as well (having a tight LP).

The idea of the proof is that if there exists a non-solid edge $e$ which is dotted, we use a similar analysis to \cite{BayatiB} to construct an alternating path consisting of dotted and $\underline{x}^*$-solid edges that leads to creation of at an optimal solution to LP \eqref{prob:mwm_relaxation} that assigns a positive value to $e$.  This contradicts the non-solid assumption on $e$.

Now assume the contrary: take $(i_1,i_2)$ that is a non-solid edge but it is dotted. Consider an endpoint of $(i_1,i_2)$. For example take $i_2$. Either there is a $\underline{x}^*$-solid edge attached to $i_2$ or not. If there is not, we stop. Otherwise, assume $(i_2,i_3)$ is a $\underline{x}^*$-solid edge. Using Lemma \ref{lemma:no_dotted_means_gamma0}, either $\gamma_{i_3}=0$ or there is a dotted edge connected to $i_3$. But if this dotted edge is $(i_2,i_3)$ then $P(i_2)\supseteq\{i_1,i_3\}$. Therefore, by Lemma \ref{lemma:strong_dotted} there has to be another dotted edge $(i_3,i_4)$ connected to $i_3$. Now, depending on whether $i_4$ has (has not) an adjacent $\underline{x}^*$-solid edge we continue (stop) the construction. A similar procedure could be done by starting at $i_1$ instead of $i_2$. Therefore, we obtain an alternating path $P=({i_{-k}},\ldots, i_{-1}, {i_0}, {i_1}, {i_2},\ldots, {i_\ell})$ with all odd edges being dotted and all even edges being $\underline{x}^*$-solid. Using the same argument as in \cite{BayatiB} one can show that one of the following four scenarios occur.

\noindent \textbf{Path:} Before $P$ intersects itself, both end-points of the path stop. Either the last edge is $\underline{x}^*$-solid (then $\gamma_v =0$ for the last node) or the last edge is a dotted edge.  Now consider a new solution $\underline{x}'$ to LP \eqref{prob:mwm_relaxation} by $x_e'=x_e^*$ if $e\notin P$ and $x_e'=1-x_e^*$ if $e\in P$.  It is easy to see that $\underline{x}'$ is a feasible LP solution at all points $v\notin P$ and also for internal vertices of $P$. The only nontrivial case is when $v=i_{-k}$ (or $v=i_{\ell}$) and the edge $(i_{-k},i_{-k+1})$ (or $(i_{\ell-1},i_{\ell})$ ) is dotted. In both of these cases, by construction $v$ is not connected to an $\underline{x}^*$-solid edge outside of $P$.  Hence, making any change inside of $P$ is safe. Now denote the weight of all solid (dotted) edges of $P$ by $w(P_{\textrm{solid}})$ ($w(P_{\textrm{dotted}})$). Here, we only include edges outside
$P_{\textrm{solid}}$ in $P_{\textrm{dotted}}$.
Clearly,
\begin{align}
\sum_{e \in E} w_e x_e^*- \sum_{e \in E} w_e x_e'=w(P_{\textrm{solid}})-w(P_{\textrm{dotted}}).
\label{eq:w(solid)-w(dotted)}
\end{align}
But $w(P_{\textrm{dotted}})=\sum_{v\in P}\gamma_v$ .  Moreover, from Lemma \ref{lemma:gamma_sat_dual_const},
$\underline{\gamma}$ is dual feasible which gives $w(P_{\textrm{solid}})\leq\sum_{v\in P}\gamma_v$.
We are using the fact that if there is a $\underline{x}^*$-solid edge at an endpoint of $P$ the $\gamma$
of the endpoint should be $0$. Now Eq.~\eqref{eq:w(solid)-w(dotted)} reduces to
$\sum_{e\in E}w_e x_e^*- \sum_{e \in E} w_e x_e'\leq 0.$  This contradicts that $e=(i_1,i_2)$ is non-solid since $x'_e>0$.

\noindent \textbf{Cycle:} $P$ intersects itself and will contain an even cycle $C_{2s}$. This case can be handled very similar to the path by defining $x_e'=x_e^*$ if $e\notin C_{2s}$ and $x_e'=1-x_e^*$ if $e\in C_{2s}$. The proof is even simpler since the extra check for the boundary condition is not necessary.

\noindent \textbf{Blossom:} $P$ intersects itself and will contain an odd cycle $C_{2s+1}$ with a path (stem) $P'$ attached to the cycle at point $u$. In this case let $x_e'=x_e^*$ if $e\notin P'\cup C_{2s+1}$, and $x_e'=1-x_e^*$ if $e\in P'$, and $x_e'=\frac{1}{2}$ if $e\in C_{2s+1}$. From here, we drop the subindex $2s+1$ to simplify the notation. Since the cycle has odd length, both neighbors of $u$ in $C$ have to be dotted.
    Therefore,
\begin{align*}
&\sum_{e \in E} w_e x_e^*- \sum_{e \in E} w_e x_e'\\
= &\,w(P'_{\textrm{solid}})+w(C_{\textrm{solid}})-w(P'_{\textrm{dotted}})\\
&-\frac{w(C_{\textrm{dotted}})+w(C_{\textrm{solid}})}{2}\\
= &\,w(P'_{\textrm{solid}})+\frac{w(C_{\textrm{solid}})}{2}-w(P'_{\textrm{dotted}})
-\frac{w(C_{\textrm{dotted}})}{2}\, .
\end{align*}
Plugging $w(P'_{\textrm{solid}}) \leq \sum_{v\in P'}\gamma_v$, $w(C_{\textrm{solid}}) \leq \sum_{v\in C}\gamma_v-\gamma_u$,
$w(P'_{\textrm{dotted}})= \sum_{v\in P'}\gamma_v -\gamma_u$ and $w(C_{\textrm{dotted}})= \sum_{v\in C}\gamma_v+\gamma_u$, we obtain
\begin{align*}
\sum_{e \in E} w_e x_e^*- \sum_{e \in E} w_e x_e' \leq \, 0\, ,
\end{align*}
    which is again a contradiction.

\noindent \textbf{Bicycle:} $P$ intersects itself at least twice and will contain two odd cycles $C_{2s+1}$ and $C'_{2s'+1}$ with a path (stem) $P'$ that is connecting them. Very similar to Blossom, let $x_e'=x_e^*$ if $e\notin P'\cup C\cup C'$, $x_e'=1-x_e^*$ if $e\in P'$, and $x_e'=\frac{1}{2}$ if $e\in C\cup C'$. The proof follows similar to the case of blossom.

\textbf{Case (II).} Assume that there is an optimum solution $\underline{x}^*$ of LP that is not necessarily integral.

Everything is similar to Case (I) but the algebraic treatments are slightly different. Some edges $e$ in $P$ can be $\f{1}{2}$-$\underline{x}^*$-solid ($x_e^*=\frac{1}{2}$). In particular some of the odd edges (dotted edges) of $P$ can now be $\f{1}{2}$-$\underline{x}^*$-solid. But the subset of $\f{1}{2}$-$\underline{x}^*$-solid edges of $P$ can be only sub-paths of odd length in $P$. On each such sub-path defining $\underline{x}'=1-\underline{x}^*$ means we are not affecting $\underline{x}^*$. Therefore, all of the algebraic calculations should be considered on those sub-paths of $P$ that have no $\f{1}{2}$-$\underline{x}^*$-solid edge which means both of their boundary edges are dotted.

\noindent \textbf{Path:} Define $\underline{x}'$ as in Case (I). Using the discussion above, let $P^{(1)},\ldots,P_{(r)}$ be disjoint sub-paths of $P$ that have no $\f{1}{2}$-$\underline{x}^*$-solid edge. Thus,
    $\sum_{e \in E} w_e x_e^*- \sum_{e \in E} w_e x_e'=\sum_{i=1}^r \big[w(P_{\textrm{solid}}^{(i)})-w(P_{\textrm{dotted}}^{(i)})\big]$.
    Since in each $P^{(i)}$ the two boundary edges are dotted,  $w(P_{\textrm{solid}}^{(i)})\leq \sum_{v\in P^{(i)}}\gamma_v$ and $\sum_{v\in P^{(i)}}\gamma_v=w(P_{\textrm{dotted}}^{(i)})$. The rest can be done as in Case (I).

\noindent \textbf{Cycle, Blossom, Bicycle:} These cases can be done using the same method of breaking the path and cycles into sub-paths $P^{(i)}$ and following the case of path.

\end{proof}

\begin{lemma}
Every strong-solid edge is a strong-dotted edge. Also, every
weak-solid edge is a weak-dotted edge.
\label{lemma:no_extra_solid}
\end{lemma}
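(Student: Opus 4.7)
Proof plan:

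The strategy is to establish the two stated implications via short case analyses, once two key ``converses'' of results from earlier lemmas are in place:
\begin{enumerate}
\item[(i)] Strong-dotted $\Rightarrow$ strong-solid.
\item[(ii)] Non-dotted $\Rightarrow$ non-solid.
\end{enumerate}

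Implication (ii) is the main technical step and essentially reverses Lemma \ref{lemma:no_extra_dotted}. I establish it by a contradiction argument mirroring the alternating-path construction used in the proof of Lemma \ref{lemma:no_extra_dotted}. Suppose $e=(i_1,i_2)$ is non-dotted but $x_e^*>0$ in some LP optimum $x^*$. I build an alternating path starting at $e$ whose odd edges are non-dotted (beginning with $e$) and whose even edges are dotted. Given an interior vertex $v$ with $\gamma_v>0$ adjacent to a non-dotted edge of the current path, the identity $\alf{v}{\cdot}=\gamma_v$ on non-dotted edges (Lemma \ref{lemma:when_alf_equals_gamma}) together with Lemma \ref{lemma:off_from_gamma} guarantees a dotted edge at $v$ realizing the maximum in $\alf{v}{\cdot}$, allowing the path to be extended; vertices with $\gamma_v=0$ serve as natural endpoints. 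The resulting structure falls into one of the cases (path, cycle, blossom, bicycle) considered in the proof of Lemma \ref{lemma:no_extra_dotted}, and the analogous ``flip'' of $x^*$ along the structure yields a feasible $x'$ whose primal objective \emph{strictly exceeds} that of $x^*$: each non-dotted edge of the structure contributes a strictly positive dual gap $\gamma_u+\gamma_v-w_{uv}$ to the improvement, while dotted edges toggled are break-even. This contradicts the optimality of $x^*$.

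Implication (i) now follows easily. Let $e=(ij)$ be strong-dotted. By the corollary after Lemma \ref{lemma:strong_dotted}, all other edges at $i$ and $j$ are non-dotted; by (ii) they are then non-solid. Consequently in any LP optimum $x^*$ we have $x_{ik}^*=0$ for all $k\neq j$ and $x_{jl}^*=0$ for all $l\neq i$, so the slack at $i$ equals $1-x_e^*$; if $x_e^*<1$, raising $x_e^*$ to $1$ is feasible and strictly improves the objective by $w_e(1-x_e^*)>0$, contradicting optimality. Hence $x_e^*=1$ in every optimum, i.e.\ $e$ is strong-solid.

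Given (i) and (ii) together with Lemma \ref{lemma:no_extra_dotted}, the two statements of the lemma follow. For strong-solid $\Rightarrow$ strong-dotted: any strong-solid $e$ is solid, hence dotted by the contrapositive of (ii); it cannot be weak-dotted because strong-solidity forces all other edges at $i,j$ to be non-solid (by LP feasibility at $i,j$), hence non-dotted (Lemma \ref{lemma:no_extra_dotted}), contradicting the corollary after Lemma \ref{lemma:strong_dotted} which requires weak-dotted neighbors at each endpoint of a weak-dotted edge (with the boundary sub-case $\gamma_i=0$ handled by the explicit computation, via Lemmas \ref{lemma:i-jpartners_equivalence} and \ref{lemma:off_from_gamma}, showing that weak-dottedness forces $\alf{j}{i}=w_e$ while the non-dotted edges at $j$ give $\off{l}{j}<w_e$ and hence $\alf{j}{i}<w_e$, a contradiction). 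For weak-solid $\Rightarrow$ weak-dotted: any weak-solid $e$ is solid, hence dotted by the contrapositive of (ii); it cannot be strong-dotted, since by (i) this would force $e$ strong-solid, contradicting weak-solidity. The main obstacle throughout is the alternating-path/flip argument for implication (ii), which requires careful handling of the four structural cases (path, cycle, blossom, bicycle) and a quantitative comparison of the primal objective before and after the flip to obtain a \emph{strict} improvement rather than mere preservation as in the proof of Lemma \ref{lemma:no_extra_dotted}.
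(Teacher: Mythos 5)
Your overall plan is sound, and it reorganizes the paper's argument in a genuinely different (and arguably cleaner) way. The paper rules out four cases separately: strong-solid edges cannot be weak-dotted (an adjacency argument via the corollary to Lemma \ref{lemma:strong_dotted}), strong-solid edges cannot be non-dotted (a short $\gamma_i=\gamma_j=0$ argument), and weak-solid edges cannot be strong-dotted or non-dotted (two separate alternating-path/flip arguments). You instead isolate a single statement, your (ii) ``non-dotted $\Rightarrow$ non-solid,'' prove it by one alternating-path argument (essentially the paper's case (iv) extended to arbitrary positive LP mass), and then derive everything else by soft reasoning: your (i) replaces the paper's case (iii) alternating path with a two-line LP-feasibility argument, and your treatment of ``strong-solid cannot be weak-dotted'' coincides with the paper's case (i), including a correct handling of the $\gamma_i=0$ boundary case that the corollary excludes. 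This buys you one fewer hard alternating-path argument at the price of making (ii) carry all the combinatorial weight.

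That said, the sketch of (ii) — which you correctly identify as the main obstacle — has two concrete gaps. First, your extension rule only says how to continue the path \emph{past a non-dotted edge} (find a dotted edge at a vertex with $\gamma_v>0$); it does not say how to continue past a dotted edge that is being \emph{increased}. At such a vertex $w$, feasibility of the increase requires either slack in $w$'s degree constraint (a legitimate stopping condition) or another edge of positive mass at $w$ to decrease — and that edge need not be non-dotted. So the correct invariant is ``decreased edges carry positive mass, increased edges are dotted,'' with only the \emph{initial} decreased edge guaranteed non-dotted; decreased dotted edges are break-even and must be allowed. Second, the accounting ``each non-dotted edge contributes a strictly positive dual gap to the improvement, while dotted edges are break-even'' omits the endpoint terms. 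Writing $s_e=\gamma_u+\gamma_v-w_{uv}\ge 0$, the objective change under the flip is $\delta\bigl(\sum_{\text{decreased}}s_e\bigr)$ \emph{plus} boundary terms of the form $-\delta\gamma_v$ at a vertex where the path terminates after a decreased edge and $+\delta\gamma_v$ where it terminates after an increased edge. Since the starting edge is decreased and its endpoints need not have zero earnings, the path must be grown in \emph{both} directions from $e$ (as the paper does), and the stopping conditions must guarantee $\gamma_v=0$ at any endpoint reached via a decreased edge. Without these two fixes the claimed strict improvement does not follow. Both are repairable using exactly the machinery in the proofs of Lemmas \ref{lemma:no_dotted_means_gamma0} and \ref{lemma:no_extra_dotted}, but as written the central step of your proof is incomplete.
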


\begin{proof}
We rule out all alternative cases one by one. In particular we prove:

(i) \emph{A strong-solid edge cannot be weak-dotted.}  If an edge $(i,j)$ is strong-solid then it cannot be adjacent to another solid edge (weak or strong).  Therefore, using Lemma \ref{lemma:no_extra_dotted} none of adjacent edges to $(i,j)$ are dotted. However, if $(i,j)$ is weak-dotted by Lemma \ref{lemma:strong_dotted} it is adjacent to at least one other weak-dotted edge (since at least one of $\gamma_i$ and $\gamma_j$ is positive) which is a contradiction. Thus $(i,j)$ cannot be weak-dotted.

(ii) \emph{A strong-solid edge cannot be non-dotted.}  Similar to (i), if an edge $(i,j)$ is strong-solid it cannot be adjacent to dotted edges. Now, if $(i,j)$ is non-dotted then $\gamma_i=\gamma_j=0$ using Lemma \ref{lemma:no_dotted_means_gamma0}. Hence $w_{ij}<\gamma_i+\gamma_j=0$ which is contradiction since we assumed all weights are positive.

(iii) \emph{A weak-solid edge cannot be strong-dotted.}  Assume, $(i_1,i_2)$ is weak-solid and strong-dotted.  Then we can show an optimum to LP \eqref{prob:mwm_relaxation} can be improved which is a contradiction. The proof is very similar to proof of Lemma \ref{lemma:no_extra_dotted}.
    Since $(i_1,i_2)$ is weak-solid, there is a half-integral matching $\underline{x}^*$ that is optimum to LP and puts a mass $1/2$ or $0$ on $(i_1,i_2)$. Then either there is an adjacent $\underline{x}^*$-solid edge $(i_2,i_3)$ or an adjacent $\underline{x}^*$-solid edge $(i_0,i_1)$ with mass at least $1/2$ or we stop. In the latter case, increasing the value of $x_{i_1i_2}^*$ increases $\sum_{e \in E} w_e x_e^*$ while keeping it LP feasible which is a contradiction. Otherwise, by strong-dotted assumption on $(i_1,i_2)$ ($(i_0,i_1)$), the new edge $(i_2,i_3)$ is not dotted. Now we select a dotted edge $(i_3,i_4)$ if it exists (otherwise we stop and in that case $\gamma_{i_3}=0$).  This process is repeated as in proof of Lemma \ref{lemma:no_extra_dotted} in both directions to obtain an alternating path $P=({i_{-k}},\ldots, i_{-1}, {i_0}, {i_1}, {i_2},\ldots, {i_\ell})$ with all odd edges being dotted with $\underline{x}^*$ value at most $1/2$ and all even edges being $\underline{x}^*$-solid with mass at least $1/2$.  We discuss the case of $P$ being a simple path (not intersecting itself) here, and other cases: cycle, bicycle and blossom can be treated similar to path as in proof of Lemma \ref{lemma:no_extra_dotted}.

    Construct LP solution $\underline{x}'$ that is equal to $\underline{x}^*$ outside of $P$ and inside it satisfies $x_e'=x_e^*+1/2$ if $e$ is an odd edge that is $e=(i_{2k-1,i_{2k}})$, and $x_e'=x_e^*-1/2$ when
    $e$ is an even edge that is $e=(i_{2k,i_{2k+1}})$.  It is easy to see that $\underline{x}'$ is a feasible LP solution.  And since for all edges $(i_j,i_{j+1})$ we have $\gamma_{i_j}+\gamma_{i_{j+1}}\ge w_{i_ji_{j+1}}$ and on dotted edges we have equality $\gamma_{i_j}+\gamma_{i_{j+1}}= w_{i_ji_{j+1}}$ then
    $
    \sum_{e \in E} w_e x_e^*- \sum_{e \in E} w_e x_e'=\frac{w(P_{\textrm{dotted}})-w(P_{\textrm{solid}})}{2}
    \ge \frac{\gamma_{i_2}+\gamma_{i_3}-w_{i_2i_3}}{2}
    >0
    $
    where the last inequality follows from the fact that $(i_2,i_3)$ is not-dotted.  Hence we reach a contradiction.

(iv) \emph{A weak-solid edge cannot be non-dotted.}  Assume, $(i_1,i_2)$ is weak-solid and non-dotted.  Similar to (iii) we can show the best solution to LP \eqref{prob:mwm_relaxation} can be improved which is a contradiction. Since $(i_1,i_2)$ is weak-solid we can choose a half-integral $\underline{x}^*$ that puts a mass at least $1/2$ on $(i_1,i_2)$.  Also, this time the alternation in $P$ is the opposite of (iii). That is we choose $(i_2,i_3)$ to be dotted (if it does not exist $\gamma_{i_2}=0$ and we stop.)  The solution $\underline{x}'$ is constructed as before:  equal to $\underline{x}^*$ outside of $P$, $x_e'=x_e^*+1/2$ if $e$ is odd and $x_e'=x_e^*-1/2$ if it is even. Hence,
$\sum_{e \in E} w_e x_e^*- \sum_{e \in E} w_e x_e'\ge \frac{\gamma_{i_1}+\gamma_{i_2}-w_{i_1i_2}}{2}>0$,
    using the non-dotted assumption on $(i_1,i_2)$. Hence, we obtain another contradiction.
\end{proof}

\begin{lemma}
$\bargamma$ is an optimum for the dual problem (\ref{prob:mwm_dual})
\label{lemma:gamma_opt}
\end{lemma}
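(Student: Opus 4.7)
The strategy is LP duality. By Lemma \ref{lemma:gamma_sat_dual_const}, $\bargamma$ is feasible for the dual LP \eqref{prob:mwm_dual}, so it suffices to show $\sum_i\gamma_i$ equals the optimum of the primal LP \eqref{prob:mwm_relaxation}. The inputs I would use are: solid $=$ dotted (Lemmas \ref{lemma:no_extra_dotted} and \ref{lemma:no_extra_solid}); on any dotted edge $(ij)$, $\gamma_i+\gamma_j=w_{ij}$ (Lemma \ref{lemma:i-jpartners_equivalence}); and $\gamma_i>0$ forces $i$ to have a dotted neighbor (Lemma \ref{lemma:no_dotted_means_gamma0}). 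For any primal optimum $\underline{x}^*$ of \eqref{prob:mwm_relaxation}, its support lies on solid (= dotted) edges, so
\[
\sum_e w_e\, x_e^* \;=\; \sum_{(ij)\text{ solid}}(\gamma_i+\gamma_j)\, x_{ij}^* \;=\; \sum_i \gamma_i \sum_{j\in\partial i} x_{ij}^*.
\]
It therefore suffices to produce an $\underline{x}^*$ for which $\sum_{j\in\partial i}x_{ij}^*=1$ whenever $\gamma_i>0$, since this gives $\sum_e w_e\, x_e^* = \sum_i\gamma_i$ and finishes the proof by weak duality.

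To build such an $\underline{x}^*$, I would start with any half-integral LP optimum, whose support is well known to consist of a matching $M_1$ together with a vertex-disjoint union of odd cycles carrying $x^*=1/2$. If some node $i$ with $\gamma_i>0$ is uncovered in this optimum, then $i$ cannot be adjacent to a strong-dotted edge, because strong-dotted $=$ strong-solid forces $x_{ij}^*=1$ at such an edge in every LP optimum. Hence $i$ has only weak-dotted neighbors and, by the contrapositive of Lemma \ref{lemma:strong_dotted}, $|P(i)|\geq 2$, so $i$ has at least two weak-dotted neighbors. An alternating path/cycle swap in the weak-dotted subgraph, patterned after the augmenting construction inside the proof of Lemma \ref{lemma:no_extra_dotted}, then redirects mass to saturate $i$; the swap preserves both primal feasibility and LP optimality because every edge involved is dotted and therefore satisfies $\gamma_i+\gamma_j=w_{ij}$, so the primal objective is unchanged.

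The main technical obstacle is verifying that such swaps can be carried out simultaneously for all uncovered nodes $i$ with $\gamma_i>0$, yielding a single primal optimum that saturates every such node. This hinges on the degree-$\geq 2$ condition in the weak-dotted subgraph at those nodes and on the matching-plus-odd-cycles description of half-integral max-matching LP optima used to orient each alternating swap.
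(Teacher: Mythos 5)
Your skeleton is the same as the paper's: feasibility comes from Lemma \ref{lemma:gamma_sat_dual_const}, and optimality is to follow from writing $\sum_e w_e x_e^* = \sum_{(ij)}(\gamma_i+\gamma_j)x^*_{ij} = \sum_i\gamma_i\sum_{j\in\partial i}x^*_{ij}$ using the fact that the support of a primal optimum consists of solid $=$ dotted edges, on which $\gamma_i+\gamma_j=w_{ij}$. You have also correctly isolated the one step that the paper's own one-line proof leaves implicit: one needs $\sum_{j}x^*_{ij}=1$ whenever $\gamma_i>0$.

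The gap is in how you propose to discharge that step. First, the swap you describe cannot do what you claim: if every flipped edge is dotted, the change in the primal objective equals $\sum_v\gamma_v\,\Delta d_v$ with $d_v=\sum_{j}x_{vj}$, so a move that raises $d_i$ from $<1$ to $1$ at a node with $\gamma_i>0$ without lowering some other positive-$\gamma$ degree would \emph{strictly increase} the objective, contradicting optimality of $\underline{x}^*$; any legal swap therefore only relocates the deficiency to another node with the same positive $\gamma$, which is precisely the ``simultaneity'' obstacle you flag and do not resolve. Second, that obstacle is not merely technical: in the generality in which you are arguing, the saturating optimum need not exist. On $K_{2,3}$ with all weights $1$, the assignment $\alf{i}{j}\equiv 1/2$ is a fixed point (all offers equal $1/2$ and every update equation is satisfied); every edge is weak-dotted and $\gamma_v=1/2$ for all five nodes, so $\sum_v\gamma_v=5/2$ while the LP optimum is $2$ --- no primal optimum can saturate all five positive-$\gamma$ nodes. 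The argument therefore really requires the hypothesis under which the lemma is deployed in Theorem \ref{thm:fp_dualopt}(a), namely a \emph{unique integral} LP optimum. Under that hypothesis the finish is much simpler than your construction: there are no weak-solid edges, so by Lemmas \ref{lemma:no_extra_dotted} and \ref{lemma:no_extra_solid} the dotted edges are exactly the edges of $M^*$; Lemma \ref{lemma:no_dotted_means_gamma0} then forces $\gamma_i=0$ at every node not covered by $M^*$, and summing $w_{ij}=\gamma_i+\gamma_j$ over $(ij)\in M^*$ gives $\sum_v\gamma_v=w(M^*)$ directly, with no alternating-path surgery at all.
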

\begin{proof}
Lemma \ref{lemma:gamma_sat_dual_const} guarantees feasibility.
Optimality follows from lemmas \ref{lemma:no_dotted_means_gamma0}, \ref{lemma:no_extra_dotted}
and \ref{lemma:no_extra_solid} as follows. Take any optimum half integral matching $\underline{x}^*$ to LP.
Now using Lemma \ref{lemma:no_extra_solid}:  $\sum_{v}\gamma_v=\sum_{e\in E}w_ex_e^*$ which finishes the proof.
\end{proof}

\begin{theorem} \label{thm:fp_dualopt_unmatched}
Let $\BALOPT$ be the set of optima of the dual problem
(\ref{prob:mwm_dual}) satisfying the unmatched balance
property, Eq. \eqref{eq:balance}, at every edge. If
$(\baralf, \baroff, \bargamma)$ is a fixed point of the natural
dynamics then $\bargamma \in \BALOPT$. Conversely, for every
$\bargamma_{\rm{BO}} \in \BALOPT$, there is a unique fixed
point of the natural dynamics with $\bargamma =
\bargamma_{\rm{BO}}$.
\end{theorem}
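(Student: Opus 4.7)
The plan is to reduce both directions to material already developed in Section \ref{sec:FixedPoint}. For the forward direction, I would invoke Properties 5 and 6 from the proof of Theorem \ref{thm:fp_dualopt}. By Remark \ref{rem:FP_properties_hold_always}, these properties hold at any fixed point of the natural dynamics irrespective of LP structure, so at any fixed point $(\baralf,\baroff,\bargamma)$ the vector $\bargamma$ is dual-optimal (Property 6) and satisfies the balance identity \eqref{eq:balance} at every edge (Property 5). Hence $\bargamma \in \BALOPT$.

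For the converse, given $\bargamma_{\rm{BO}} \in \BALOPT$, my plan is to construct the candidate fixed point explicitly by setting $\off{i}{j} \equiv (w_{ij} - \gamma_{\rm{BO},i})_+$ for every directed edge, and then $\alf{i}{j} \equiv \max_{k \in \partial i \setminus j} \off{k}{i} = \max_{k \in \partial i \setminus j}(w_{ik}-\gamma_{\rm{BO},k})_+$. The task is to show that recomputing offers from this $\baralf$ via \eqref{eq:off_def} returns the same $\baroff$. Two preliminary observations will drive the case analysis: (i) since $\bargamma_{\rm{BO}}$ is dual-feasible, $\gamma_{\rm{BO},i} \geq (w_{ij}-\gamma_{\rm{BO},j})_+$ for every $j \in \partial i$, and combined with minimality of dual optima we get $\gamma_{\rm{BO},i} = \max_{k \in \partial i}(w_{ik}-\gamma_{\rm{BO},k})_+ = \max(\alf{i}{j},(w_{ij}-\gamma_{\rm{BO},j})_+)$; in particular the ``unmatched surplus'' $s_i := \gamma_{\rm{BO},i}-\alf{i}{j}$ is non-negative, and by the balance identity defining $\BALOPT$ equals $s_j$.

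With this setup, I would split into two cases. If $s_i = s_j = 0$, then $\alf{i}{j}+\alf{j}{i} = \gamma_{\rm{BO},i}+\gamma_{\rm{BO},j} \geq w_{ij}$, so $(w_{ij}-\alf{i}{j}-\alf{j}{i})_+ = 0$ and the offer formula collapses to $(w_{ij}-\alf{i}{j})_+ = (w_{ij}-\gamma_{\rm{BO},i})_+$, as needed. If $s_i = s_j > 0$, then $\gamma_{\rm{BO},i} > \alf{i}{j} \geq 0$ forces $\gamma_{\rm{BO},i} = (w_{ij}-\gamma_{\rm{BO},j})_+$, hence $\gamma_{\rm{BO},i}+\gamma_{\rm{BO},j} = w_{ij}$ and $\alf{i}{j}+\alf{j}{i} = w_{ij}-2s_i < w_{ij}$; a direct computation then gives $(w_{ij}-\alf{i}{j}+\alf{j}{i})/2 = w_{ij}-\gamma_{\rm{BO},i} = (w_{ij}-\gamma_{\rm{BO},i})_+$. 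Once offers match, the update $\alf{i}{j}^{\rm new} = \max_{k \in \partial i \setminus j}\off{k}{i}$ reproduces $\alf{i}{j}$ by definition, and the induced earnings $\gamma_i = \max_{k \in \partial i}\off{k}{i} = \gamma_{\rm{BO},i}$ match.

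Finally, uniqueness of the fixed point given $\bargamma_{\rm{BO}}$ follows cleanly from Property 6 of Section \ref{sec:FixedPoint}: at any fixed point, $\off{i}{j} = (w_{ij}-\gamma_i)_+$ is determined by $\bargamma$ alone, and then $\alf{i}{j} = \max_{k \in \partial i \setminus j}\off{k}{i}$ is determined by $\baroff$, so the whole state is pinned down. I anticipate the only technically delicate point to be the two-case verification in the converse construction; every step is elementary, but one must carefully use both dual feasibility and the balance identity to ensure the two regimes of \eqref{eq:off_def} glue together correctly, which is exactly what the definition of $\BALOPT$ was tailored to provide.
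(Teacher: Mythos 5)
Your proposal is correct and follows essentially the same route as the paper: the forward direction cites Properties 5 and 6 (Lemmas \ref{lemma:balance_at_FP} and \ref{lemma:gamma_opt}), and the converse uses the same construction $\off{i}{j}=(w_{ij}-\gamma_{{\rm BO},i})_+$ as in the converse of Theorem \ref{thm:fp_dualopt}, with your case split on the common surplus $s_i=s_j$ being zero or positive playing exactly the role of the paper's split on $\gamma_i+\gamma_j=w_{ij}$ versus $\gamma_i+\gamma_j>w_{ij}$. Your explicit uniqueness argument via Property 6 is also the intended one.
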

\begin{proof}
The direct implication is immediate from Lemmas
\ref{lemma:balance_at_FP} and \ref{lemma:gamma_opt}. The
converse proof here follows the same steps as for Theorem 1,
proved in Section \ref{sec:FixedPoint}. Instead of separately analyzing the cases
$(ij) \in M$ and $(ij) \notin M$, we study the cases $\gamma_i + \gamma_j = w_{ij}$
and $\gamma_i + \gamma_j > w_{ij}$.
\end{proof}

%
%

\section{$\epsilon$-fixed point properties: Proof of Theorem \ref{thm:approx_fp}}
\label{sec:eps_fixed_point_prop_proofs}

In this section we
prove Theorem \ref{thm:approx_fp}, stated in Section
\ref{sec:main_results}.  In this section we assume that
$\baralf$ is an $\epsilon$-fixed point with corresponding
offers $\baroff$ and earnings $\bargamma$. That is, for all $i,j$
\begin{align*}
\eps&\ge|\alf{i}{j} - \max_{k \in \partial i \backslash j} \off{k}{i} \big|\,,\\
\off{i}{j}& =  (w_{ij}-\alf{i}{j})_+ - \frac{(w_{ij}-\alf{i}{j}-\alf{j}{i})_+}{2}\,,\\
\gamma_i &= \max_{k \in \di} \, \off{k}{i}\,.
\end{align*}

\begin{Definition}
An edge $(ij)$ is called $\delta$-dotted ($\delta\ge0$) if $\gamma_i+\gamma_j\le w_{ij}+\delta$.
\end{Definition}

\begin{lemma}\label{lemma:delta_dotted_properties}
For all edge $(ij)\in E$ and all $\delta,\delta_1,\delta_2\in\reals$ the following hold:

(a) If $(ij)$ is $\delta$-dotted then $\surp_{ij}\ge -(2\eps+\delta)$.

(b) If $\surp_{ij}\ge -\delta$ then $\off{i}{j}\ge\gamma_j-(\eps+\delta)$ and $\off{j}{i}\ge\gamma_i-(\eps+\delta)$.

(c) If $\off{i}{j}\ge\gamma_j-\delta_1$ and $\off{j}{i}\ge\gamma_i-\delta_2$ then $(ij)$ is $(\delta_1+\delta_2)$-dotted.

(d) If $\gamma_i-\delta\le\off{j}{i}$ and $\gamma_j>\off{i}{j}+2\eps+\delta$ then $\gamma_i=0$.

(e) If $\gamma_i>0$ and $\off{j}{i}\ge\gamma_i-\delta$ then $(ij)$ is $(2\delta+2\eps)$-dotted.

(f) For $\gamma_i,\gamma_j>0$, $\off{j}{i}\le\alf{i}{j}+\delta$ if and only if $\off{i}{j}\le \alf{j}{i}+\delta$.

(h) For all $(ij)$, $|\off{i}{j}-(w_{ij}-\gamma_i)_+|\le\eps$.

(i) For all $(ij)$, $\gamma_i-(w_{ij}-\gamma_j)_+\ge-\eps$ and $\gamma_i+\gamma_j\ge w_{ij}-\eps$.

(j) For all $i$, if $\gamma_i>0$ then there is at least a $2\eps$-dotted edge attached to $i$.
\end{lemma}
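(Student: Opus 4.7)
The plan is to prove each of (a)--(j) by unwinding the two defining properties of an $\eps$-fixed point: the approximate maximality $\bigl|\alf{i}{j}-\max_{k\in\di\setminus j}\off{k}{i}\bigr|\le\eps$, which yields the ubiquitously useful bound $\alf{i}{j}\le\gamma_i+\eps$; and the piecewise offer formula, which on the region $\surp_{ij}\ge 0$ reduces to $\off{i}{j}=(w_{ij}-\alf{i}{j}+\alf{j}{i})/2$ with the companion identity $\off{j}{i}-\alf{i}{j}=\surp_{ij}/2$, and on the region $\surp_{ij}<0$ reduces to $\off{i}{j}=(w_{ij}-\alf{i}{j})_+$. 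Each part is the $\eps$-perturbed analogue of one of the six fixed point properties from Section~\ref{sec:FixedPoint}, and the proofs will mirror those case splits with every equality replaced by an inequality carrying an explicit $\eps$ or $\delta$ term.

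I would handle the parts in the order (a), (b), (c), (h), (i), (f), (j), (d), (e). Part (a) follows from $\alf{i}{j}+\alf{j}{i}\le\gamma_i+\gamma_j+2\eps$ combined with the $\delta$-dotted hypothesis. Part (b) splits on the sign of $\surp_{ij}$ and uses the offer formulas above, with the sub-case $\alf{i}{j}>w_{ij}$ handled by the observation that $\gamma_j$ is then already at most $\eps+\delta$. Part (c) is a one-line consequence of the universal bound $\off{i}{j}+\off{j}{i}\le w_{ij}$, which is verified directly from the offer formulas. For (h) I split on $\surp_{ij}$ and on whether $j$ is a best neighbor of $i$, using $\off{j}{i}=\alf{i}{j}+\surp_{ij}/2$ in the $\surp_{ij}\ge 0$ regime to match $\off{i}{j}$ with $w_{ij}-\gamma_i$ up to $\eps$, and noting that otherwise $\alf{i}{j}$ itself is $\eps$-close to $\gamma_i$. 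Then (i) is an immediate corollary of (h) via $\off{j}{i}\le\gamma_i$, (j) follows by picking any neighbor $k$ with $\off{k}{i}=\gamma_i$ and applying (i) to $(ik)$, and (f) reduces in each $\surp_{ij}$ regime to a single symmetric condition ($\surp_{ij}\le 2\delta$ when $\surp_{ij}\ge 0$, trivially satisfied otherwise).

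The harder pair is (d) and (e), and I would reduce (e) to (d). Assuming (e) fails, so $\gamma_i+\gamma_j>w_{ij}+2\delta+2\eps$, the inequality $\off{j}{i}\ge\gamma_i-\delta$ together with $\off{i}{j}+\off{j}{i}\le w_{ij}$ yields $\off{i}{j}\le w_{ij}-\gamma_i+\delta$, whence $\gamma_j-\off{i}{j}>\delta+2\eps$; this triggers the hypothesis of (d), forcing $\gamma_i=0$ and contradicting $\gamma_i>0$. For (d) itself the case $\surp_{ij}\ge 0$ is excluded by (b), since it would give $\off{i}{j}\ge\gamma_j-\eps$ in contradiction with the hypothesis $\gamma_j>\off{i}{j}+2\eps+\delta$. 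In the remaining case $\surp_{ij}<0$ the same hypothesis forces $j$'s best incoming offer to come from a non-$i$ neighbor, so $\alf{j}{i}\ge\gamma_j-\eps$, and a short chase combining this with $\off{j}{i}=(w_{ij}-\alf{j}{i})_+\ge\gamma_i-\delta$ and $\alf{i}{j}\le\gamma_i+\eps$ rules out $\gamma_i>0$.

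The main obstacle is the tight bookkeeping of additive errors in (d) and (e): a naive case split easily loses a constant factor in the $\eps$ or $\delta$ coefficient, but the downstream use in Theorem~\ref{thm:approx_fp} needs exactly the stated constants. The discipline that makes the argument work is to carry $\surp_{ij}$ itself as the central quantity in every sub-case, to always use the tight bound $\off{i}{j}+\off{j}{i}\le w_{ij}$ rather than the cruder $\off{i}{j}\le w_{ij}$, and to exploit the $i\leftrightarrow j$ symmetry so that an estimate obtained on one side immediately transfers to the other via (f).
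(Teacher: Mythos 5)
Your decomposition and nearly all of the individual arguments coincide with the paper's proof: (a) from $\alf{i}{j}+\alf{j}{i}\le\gamma_i+\gamma_j+2\eps$; (b) by the case split on the sign of $\surp_{ij}$ and the two offer formulas; (c) from the universal bound $\off{i}{j}+\off{j}{i}\le w_{ij}$; (h), (i) and (f) exactly as you describe; and the (d)--(e) pair handled by first excluding the nonnegative-surplus regime via (b) and then chasing through $\off{j}{i}=(w_{ij}-\alf{j}{i})_+$ together with $\alf{j}{i}\ge\gamma_j-\eps$. (The paper proves (e) directly from the contrapositive of (d) followed by (c), where you argue by contradiction, but the content is identical.)

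The one step that fails is (j). You propose to take a neighbor $k$ with $\off{k}{i}=\gamma_i$ and apply (i) to $(ik)$, but (i) only yields the lower bound $\gamma_i+\gamma_k\ge w_{ik}-\eps$ (approximate stability), whereas $2\eps$-dottedness of $(ik)$ is the opposite inequality $\gamma_i+\gamma_k\le w_{ik}+2\eps$. Nothing in (h) or (i) provides an upper bound on $\gamma_k$ in terms of $\off{i}{k}$; that upper bound is precisely what (d) supplies. The correct derivation (and the paper's) is: since $\gamma_i>0$ and $\gamma_i\le\off{k}{i}$, the contrapositive of (d) with $\delta=0$ gives $\gamma_k\le\off{i}{k}+2\eps$, and then (c) with $\delta_1=2\eps$, $\delta_2=0$ shows $(ik)$ is $2\eps$-dotted --- equivalently, invoke your own part (e) with $\delta=0$. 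A minor further remark: your proof of (d), like the paper's, really only yields $\gamma_i\le\delta$ rather than $\gamma_i=0$ when $\delta>0$ (your ``Case A'' ends with $\gamma_i-\delta\le\off{j}{i}=0$); this is harmless in the places the lemma is used with $\delta=0$, but it means the statement as written is only literally established for $\delta\le 0$.
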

\begin{proof}
(a) Since $\baralf$ is $\eps$-fixed point, $\gamma_i\ge\off{i}{j}-\eps$ and $\gamma_j\ge\off{j}{i}-\eps$. Therefore,
$
\surp_{ij}=w_{ij}-\off{i}{j}-\off{j}{i} \ge w_{ij}-\gamma_i-\gamma_j-(2\eps)\geq-(2\eps+\delta)$.

(b) First consider the case $\surp_{ij}\le 0$. Then,
$
\off{i}{j}=(w_{ij}-\alf{i}{j})_+\ge w_{ij}-\alf{i}{j}\ge \alf{j}{i}-\delta\ge \max_{\ell\in\p j\backslash i}(\off{\ell}{j})-\delta-\eps
$,
which yields $\off{i}{j}\ge \gamma_j -(\eps+\delta)$. The proof of $\off{j}{i}\ge \gamma_i -(\eps+\delta)$ is similar.

For the case $\surp_{ij}> 0$,
$
\off{i}{j}=\f{w_{ij}-\alf{i}{j}+\alf{j}{i}}{2}
=\f{\surp_{ij}}{2}+\alf{i}{j}\ge \max(\f{-\delta}{2},0)+\max_{\ell\in\p j\backslash i}(\off{\ell}{j})-\eps
$,
and the rest follows as above.

(c) Note that $\gamma_i+\gamma_j\le \off{i}{j}+\off{j}{i}+\delta_1+\delta_2$. If $\surp_{ij}\ge 0$ then the result follows from $\off{i}{j}+\off{j}{i}=w_{ij}$.
For $\surp_{ij}<0$ the result follows from
$
\off{i}{j}+\off{j}{i}\le \max[(w_{ij}-\alf{i}{j})_+,(w_{ij}-\alf{j}{i})_+,2w_{ij}-\alf{i}{j}-\alf{j}{i}]\le w_{ij}.
$

(d) We need to show that when $\gamma_i\le \off{j}{i}+\delta$ and $\gamma_j>\off{i}{j}+2\eps+\delta$ then $\gamma_i=0$.  From part $(b)$ that was just shown, the surplus should satisfy $\surp_{ij}<-(\eps+\delta)$. On the other hand
$
\alf{i}{j}-\eps\le\max_{k\in\p i\backslash j}(\off{k}{i}) \le\gamma_i \le \off{j}{i}+\delta\le(w_{ij}-\alf{i}{j})_++\delta
$.
Now, if $\gamma_i>0$ then $w_{ij}-\alf{i}{j}>0$ which gives $\alf{i}{j}-\eps\le w_{ij}-\alf{i}{j}+\delta.$ This is equivalent to $\surp_{ij}\ge -(\eps+\delta)$ which is a contradiction.  Hence $\gamma_i=0$.

(e) Using part $(d)$ we should have $\off{i}{j}\ge \gamma_j-(2\eps+\delta)$. Now applying part $(c)$ the result follows.

(f) If $\surp_{ij}\ge 0$ then
$
\f{w_{ij}-\alf{j}{i}+\alf{i}{j}}{2}=\off{j}{i}\le\alf{i}{j}+\delta
$.
This inequality is equivalent to
$
\off{i}{j}=\f{w_{ij}-\alf{i}{j}+\alf{j}{i}}{2}\le\alf{j}{i}+\delta
$,
which proves the result.  If $\surp_{ij}<0$ then $w_{ij}-\alf{j}{i}\le(w_{ij}-\alf{j}{i})_+\le\alf{i}{j}+\delta$. This is equivalent to
$w_{ij}-\alf{i}{j}\le\alf{j}{i}+\delta$ which yields the result.

(h) If $\surp_{ij}\ge 0$ then by part (b), $\off{i}{j}+\eps\ge \gamma_j$ and $\off{j}{i}+\eps\ge \gamma_i$. Therefore, using $\gamma_j\ge\off{i}{j}$, $\gamma_i\ge\off{j}{i}$ and $\off{j}{i}+\off{i}{j}=w_{ij}$ we have,
$
\off{i}{j}\ge w_{ij}-\off{j}{i}\ge w_{ij}-\gamma_i\ge w_{ij}-\off{j}{i}-\eps\ge \off{i}{j}-\eps$,
which gives the result.

If $\surp_{ij}<0$ then $\off{i}{j}=(w_{ij}-\alf{i}{j})_+< \alf{j}{i}$ this gives $\gamma_j-\eps<\alf{j}{i}$. On the other hand $\alf{j}{i}\le\gamma_j+\eps$ holds. Similarly,
$\gamma_i+\eps\ge\alf{i}{j}\ge\gamma_i-\eps$ that leads to $|(w_{ij}-\alf{i}{j})_+-(w_{ij}-\gamma_i)_+|\le \eps$. Hence, the result follows from $\off{i}{j}=(w_{ij}-\alf{i}{j})_+$.

(i) Using part $(h)$, $\off{j}{i}+\eps\ge (w_{ij}-\gamma_j)_+$. Now result follows using $\gamma_i\ge\off{j}{i}$.

(j) There is at least one neighbor $j\in\p i$ that sends the maximum offer $\off{j}{i}=\gamma_i$. Using part $(d)$ we should have $\off{i}{j}\ge \gamma_j-2\eps$ and now the result follows from part $(c)$.
\enp
\end{proof}

\begin{lemma}\label{lemma:eps-fix-gives-eps-balance}
For any edge $(ij)\in E$ the earnings estimate $\bargamma$ satisfies $6\eps$-balanced property (i.e., Eq. \eqref{eq:eps_balance} holds for $6\eps$ instead of $\eps$).
\end{lemma}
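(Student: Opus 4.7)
The plan is to reduce the claimed $6\eps$-balance Eq.~\eqref{eq:eps_balance} to an approximate version of the fixed-point identity $\gamma_i - \alf{i}{j} = \gamma_j - \alf{j}{i}$ (compare Eq.~\eqref{eq:FP_balance}), and then verify that identity up to an error of $2\eps$ via a case split on the sign of $\surp_{ij}$.

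For the reduction, part (h) of Lemma~\ref{lemma:delta_dotted_properties} gives $|\off{k}{i} - (w_{ik}-\gamma_k)_+| \le \eps$ for every $k \in \di \backslash j$, hence $|\max_{k \in \di \backslash j}(w_{ik}-\gamma_k)_+ - \max_{k \in \di \backslash j}\off{k}{i}| \le \eps$; together with the defining $\eps$-FP inequality $|\alf{i}{j} - \max_{k \in \di \backslash j}\off{k}{i}| \le \eps$ this yields $|\max_{k \in \di \backslash j}(w_{ik}-\gamma_k)_+ - \alf{i}{j}| \le 2\eps$, and symmetrically for $j$. Substituting these two approximations into the left-hand side of \eqref{eq:eps_balance} costs an additive $4\eps$ in total, so it suffices to prove the cleaner bound $|(\gamma_i - \alf{i}{j}) - (\gamma_j - \alf{j}{i})| \le 2\eps$.

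I then split on $\surp_{ij}$. When $\surp_{ij} \ge 0$, the explicit offer formulas give $\off{j}{i} - \alf{i}{j} = \surp_{ij}/2 = \off{i}{j} - \alf{j}{i}$, so $(\gamma_i - \alf{i}{j}) - (\gamma_j - \alf{j}{i}) = (\gamma_i - \off{j}{i}) - (\gamma_j - \off{i}{j})$; part (b) of Lemma~\ref{lemma:delta_dotted_properties} combined with the trivial inequalities $\gamma_i \ge \off{j}{i}$ and $\gamma_j \ge \off{i}{j}$ places each term on the right in $[0,\eps]$, giving a bound of $\eps$. When $\surp_{ij} < 0$, the formula $\off{i}{j} = (w_{ij}-\alf{i}{j})_+$ together with $\surp_{ij} < 0$ forces $\off{i}{j} \le \alf{j}{i}$ (either $\off{i}{j} = 0$, or $\off{i}{j} = w_{ij} - \alf{i}{j} < \alf{j}{i}$); combined with the $\eps$-FP bound $|\max_{l \in \dj \backslash i}\off{l}{j} - \alf{j}{i}| \le \eps$, this forces $\gamma_j = \max(\off{i}{j}, \max_{l \in \dj \backslash i}\off{l}{j}) \in [\alf{j}{i} - \eps, \alf{j}{i} + \eps]$, and symmetrically $|\gamma_i - \alf{i}{j}| \le \eps$, so the triangle inequality delivers the desired $2\eps$ bound.

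The main obstacle is the case $\surp_{ij} < 0$: a priori, $\off{i}{j}$ could dominate in the maximum defining $\gamma_j$ and push $\gamma_j$ above $\alf{j}{i} + \eps$, breaking the desired inequality. The key observation that rules this out is precisely that $\surp_{ij} < 0$ forces $\off{i}{j} \le \alf{j}{i}$, so $\off{i}{j}$ can never disturb the window $[\alf{j}{i}-\eps, \alf{j}{i}+\eps]$ in which the remaining incoming offers to $j$ concentrate. Combining the $4\eps$ accumulated by the reduction with the $2\eps$ from the case analysis yields the claimed $6\eps$-balance.
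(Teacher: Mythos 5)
Your proof is correct and follows essentially the same route as the paper's: the same $2\eps$ approximation of the best alternative via part (h) of Lemma~\ref{lemma:delta_dotted_properties} and the $\eps$-FP condition, followed by the same case split on the sign of $\surp_{ij}$ using parts (b) and the explicit offer formulas. The only difference is minor bookkeeping (you bound the difference $(\gamma_i-\alf{i}{j})-(\gamma_j-\alf{j}{i})$ directly by $2\eps$, whereas the paper bounds each side of the balance equation separately by $3\eps$ in the nonpositive-surplus case), and both yield the stated $6\eps$.
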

\begin{proof}
Using Lemma \ref{lemma:delta_dotted_properties}(h),
$
\alf{i}{j}-2\eps\le\max_{k\in\p i\backslash j}(\off{k}{i})-\eps\le\max_{k\in\p i\backslash j}[(w_{ik}-\gamma_k)_+]\le\max_{k\in\p i\backslash j}(\off{k}{i})+\eps\le\alf{i}{j}+2\eps
$, or
\begin{align}
\left|\max_{k\in\p i\backslash j}[(w_{ik}-\gamma_k)_+]-\alf{i}{j}\right|\le 2\eps
\label{eq:eps-fix-eq-1}
\end{align}
Now, if $\surp_{ij}\le 0$ then $\off{j}{i}=(w_{ij}-\alf{j}{i})_+\le \alf{i}{j}$ which gives $|\gamma_i-\alf{i}{j}|\le \eps$ or,
$\left|\gamma_i-\max_{k\in\p i\backslash j}[(w_{ik}-\gamma_k)_+]\right|\le 3\eps$.
Therefore, $6\eps$-balance property holds.

And if $\surp_{ij}> 0$, by Lemma \ref{lemma:delta_dotted_properties}(b) we have $\off{j}{i}+\eps\ge\gamma_i$. Hence,
$
\f{\surp_{ij}}{2}+\eps=\off{j}{i}-\alf{i}{j}+\eps\ge\gamma_i-\alf{i}{j}\ge\off{j}{i}-\alf{i}{j}=\f{\surp_{ij}}{2}
$.
Same bound holds for $\gamma_j-\alf{j}{i}$ by symmetry. Therefore, using Eq.~\eqref{eq:eps-fix-eq-1}, $|\gamma_i-\max_{k\in\p i\backslash j}[(w_{ik}-\gamma_k)_+]|$ and $|\gamma_j-\max_{\ell\in\p j\backslash i}[(w_{j\ell}-\gamma_\ell)_+]|$ are within $3\eps\le6\eps$ of each other.
\enp
\end{proof}

\begin{lemma}
\label{lemma:delta-dotted-give-another-dotted}
If $(ij)$ is $\delta$-dotted for $k\in\p i\backslash j$ and if $\gamma_k>\max(\delta,\eps)+6\eps$, then there exists $r\in\p k\backslash i$ such that $(rk)$ is $(\max(\delta,\eps)+6\eps)$-dotted.
\end{lemma}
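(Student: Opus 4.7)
The plan is to find the desired edge $(rk)$ by identifying a neighbor $r$ of $k$ (other than $i$) from which $k$ receives a sufficiently large offer, and then invoking Lemma \ref{lemma:delta_dotted_properties}(h) to conclude that $(rk)$ is appropriately dotted. I split into two cases based on whether some $r \in \partial k \setminus i$ already sends an offer $\off{r}{k}$ within $\eps$ of $\gamma_k$.

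If there exists $r \in \partial k \setminus i$ with $\off{r}{k} \ge \gamma_k - \eps$, then (h) gives $\gamma_k - \eps \le \off{r}{k} \le (w_{rk} - \gamma_r)_+ + \eps$. The subcase $w_{rk} \le \gamma_r$ would force $\gamma_k \le 2\eps$, contradicting the lower bound on $\gamma_k$; hence $\gamma_r + \gamma_k \le w_{rk} + 2\eps$, so $(rk)$ is $2\eps$-dotted and in particular $(\max(\delta,\eps) + 6\eps)$-dotted.

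Otherwise every $r \in \partial k \setminus i$ satisfies $\off{r}{k} < \gamma_k - \eps$, forcing $\off{i}{k} = \gamma_k$ (since $\gamma_k = \max_s \off{s}{k}$) and $\alf{k}{i} < \gamma_k$ (by the $\eps$-FP inequality $\alf{k}{i} \le \max_{s \in \partial k \setminus i} \off{s}{k} + \eps$). The subcase $\surp_{ik} < 0$ is then ruled out: the offer formula gives $\alf{i}{k} = w_{ik} - \gamma_k$ (using $\gamma_k > 0$), and negative surplus yields $\alf{k}{i} > w_{ik} - \alf{i}{k} = \gamma_k$, contradicting $\alf{k}{i} < \gamma_k$.

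The substantive subcase is $\surp_{ik} \ge 0$, where $\off{i}{k} = \gamma_k$ rearranges to $\alf{k}{i} = 2\gamma_k + \alf{i}{k} - w_{ik}$. Here the premise that $(ij)$ is $\delta$-dotted with $j \in \partial i \setminus k$ is essential: part (h) applied to $(ij)$ gives $\off{j}{i} \ge w_{ij} - \gamma_j - \eps \ge \gamma_i - \delta - \eps$, and since $j$ is a candidate for the maximum defining $\alf{i}{k}$, the $\eps$-FP bound yields $\alf{i}{k} \ge \gamma_i - \delta - 2\eps$. Combined with $w_{ik} \le \gamma_i + \gamma_k + \eps$ from (i), this produces $\alf{k}{i} \ge \gamma_k - \delta - 3\eps$, so some $r \in \partial k \setminus i$ attains $\off{r}{k} \ge \alf{k}{i} - \eps \ge \gamma_k - \delta - 4\eps$. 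A final application of (h) finishes: the alternative $w_{rk} \le \gamma_r$ would force $\gamma_k \le \delta + 5\eps$, contradicting $\gamma_k > \max(\delta, \eps) + 6\eps$; hence $\gamma_r + \gamma_k \le w_{rk} + \delta + 5\eps \le w_{rk} + \max(\delta, \eps) + 6\eps$, as required.

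The main obstacle is obtaining a tight enough bound on $\alf{k}{i}$ in the $\surp_{ik} \ge 0$ branch. Routing this through property (e) would cost a factor of two, yielding $(rk)$ only $(2\delta + 10\eps)$-dotted, which is insufficient; the argument must instead apply (h) and (i) directly, so that the slack in the final dottedness bound stays linear in $\delta$ and $\eps$.
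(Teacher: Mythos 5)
Your proof is correct, but it takes a more self-contained route than the paper's. The paper's argument is essentially two lines: it bounds $|\gamma_i-\max_{s\in\p i\backslash k}(w_{is}-\gamma_s)_+|\le\max(\delta,\eps)$ using the $\delta$-dottedness of $(ij)$ together with Lemma \ref{lemma:delta_dotted_properties}(i), and then invokes the already-established $6\eps$-balance property (Lemma \ref{lemma:eps-fix-gives-eps-balance}) on the edge $(ik)$ to transfer this bound to $k$'s side, from which the existence of the required $r$ is immediate. You instead bypass the balance lemma entirely and re-derive the transfer from $i$'s side to $k$'s side by hand: a case split on whether some $r\in\p k\backslash i$ already offers within $\eps$ of $\gamma_k$, and within the substantive case a further split on the sign of $\surp_{ik}$, using parts (h) and (i) and the explicit offer formula. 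This is effectively an inlined, specialized re-proof of the balance lemma for the edge $(ik)$; it is longer but yields slightly sharper constants ($2\eps$ or $\delta+5\eps$ versus $\max(\delta,\eps)+6\eps$), and your closing observation that routing through part (e) would double $\delta$ and overshoot the target bound is accurate — it explains why the direct use of (h) (or, in the paper's version, the balance lemma) is necessary. Both arguments are valid; the paper's buys brevity by reusing Lemma \ref{lemma:eps-fix-gives-eps-balance}, yours buys independence from it.
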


\begin{proof}
Using, $\gamma_i+\gamma_j\leq w_{ij}+\delta$ and Lemma \ref{lemma:delta_dotted_properties}(i),
\[
-\eps\le\gamma_i-\max_{s\in\p i\backslash k}[(w_{is}-\gamma_s)_+]\le\gamma_i-(w_{ij}-\gamma_j)_+\le \delta.
\]
Therefore, $|\gamma_i-\max_{s\in\p i\backslash k}[(w_{is}-\gamma_s)_+]|\le \max(\delta,\eps)$ which combined with Lemma \ref{lemma:eps-fix-gives-eps-balance} gives
\[
|\gamma_k-\max_{r\in\p k\backslash i}[(w_{rk}-\gamma_r)_+]|\le \max(\delta,\eps)+6\eps.
\]
This fact and $\gamma_k>\max(\delta,\eps)+6\eps$, show that there exists an edge $r\in \p k\backslash i$ with $|\gamma_k-(w_{rk}-\gamma_r)_+|\le\max(\delta,\eps)+6\eps$ and the result follows.
\enp
\end{proof}

\begin{lemma}
A non-solid edge cannot be a $\delta$-dotted edge for $\delta \le4\eps$.
\label{lemma:no_extra_delta_dotted}
\end{lemma}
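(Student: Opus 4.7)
The plan is to prove the statement by contradiction, following the alternating-path template of Lemma~\ref{lemma:no_extra_dotted}, but propagating the $\eps$-errors throughout. Suppose a non-solid edge $e_0 = (i_1, i_2)$ is $\delta$-dotted with $\delta \le 4\eps$. Starting from $e_0$, I would grow an alternating walk $P = (i_{-k}, \ldots, i_{-1}, i_0, i_1, i_2, \ldots, i_\ell)$ in both directions, where odd edges (containing $e_0$) are approximately dotted and even edges are $\underline{x}^*$-solid. At a vertex reached via a dotted edge, I append any incident $\underline{x}^*$-solid edge if one exists; at a vertex $v$ reached via a solid edge, if $\gamma_v$ exceeds the threshold needed by Lemma~\ref{lemma:delta-dotted-give-another-dotted}, I apply that lemma to append a new approximately-dotted edge, and otherwise I stop with $\gamma_v = O(\eps)$. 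Iterating Lemma~\ref{lemma:delta-dotted-give-another-dotted} shows that the $j$-th dotted edge produced has dottedness $\delta_j \le 4\eps + 6 j \eps = O(j\eps)$, so every dotted edge of $P$ is $O(n\eps)$-dotted. The walk terminates in one of the four combinatorial forms of Lemma~\ref{lemma:no_extra_dotted}: simple path, even cycle, blossom, or bicycle, possibly with $\frac{1}{2}$-$\underline{x}^*$-solid sub-paths handled as in Case (II) of that lemma.

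Next I would construct the flipped half-integral LP-feasible $\underline{x}'$ by the same flipping/halving rule as in each case of Lemma~\ref{lemma:no_extra_dotted}, ensuring $x'_{e_0} > 0$ so that $\underline{x}' \ne \underline{x}^*$. To compare LP values I would use: (a) approximate dual feasibility from Lemma~\ref{lemma:delta_dotted_properties}(i), giving $w_{uv} \le \gamma_u + \gamma_v + \eps$ for each $\underline{x}^*$-solid edge $(uv)$; (b) the $\delta_j$-dotted inequality $w_{uv} \ge \gamma_u + \gamma_v - \delta_j$ for each dotted edge; and (c) the stopping condition $\gamma_v = O(\eps)$ at endpoints where the walk terminates via a solid edge. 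Summing along $P$ with the vertex-counting bookkeeping of the exact proof yields
\begin{equation*}
\sum_{e \in E} w_e (x_e^* - x_e') \;=\; w(P_{\textrm{solid}}) - w(P_{\textrm{dotted}}) \;\le\; O(n^2 \eps).
\end{equation*}

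Under the standing hypothesis $\eps < g/(6n^2)$ of Theorem~\ref{thm:approx_fp}, the right-hand side is strictly less than $g$. However, since $\underline{x}'$ is a half-integral feasible point distinct from $\underline{x}^*$ (because $x'_{e_0} > 0$ while $x_{e_0}^* = 0$), the definition of the LP gap forces $\sum_e w_e (x_e^* - x_e') \ge g$, contradicting the previous display. The main obstacle will be the careful bookkeeping at the boundary of $P$ across the four topologies (path, cycle, blossom, bicycle) and the excision of $\frac{1}{2}$-$\underline{x}^*$-solid sub-paths, together with tracking the linear-in-path-length growth of $\delta_j$ to keep every contribution inside the $O(n^2 \eps)$ envelope; the rest mirrors the exact proof verbatim.
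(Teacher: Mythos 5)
Your proposal follows the paper's own proof essentially step for step: the same alternating-walk construction seeded at the offending non-solid $\delta$-dotted edge, the same use of Lemma~\ref{lemma:delta-dotted-give-another-dotted} to propagate approximate dottedness with the linearly growing error $(6k+4)\eps$, the same four terminal topologies with the flipped half-integral $\underline{x}'$, and the same $O(n^2\eps)$ accounting contradicting the gap under $\eps < g/(6n^2)$. This is the paper's argument in outline, and the outline is correct.
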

Note that this Lemma holds even for the more general case of $M^*$ being non-integral.

The proof is a more complex version of proof of Lemma \ref{lemma:no_extra_dotted}. Recall the notion of alternating path from that proof.

Also, consider $\underline{x}^*$ and $\underline{y}^*$ that are optimum solutions for the LP and its dual, \eqref{prob:mwm_relaxation} and \eqref{prob:mwm_dual}. Also recall that by complementary slackness conditions , for all solid edges the equality $y_i^*+y_j^*=w_{ij}$ holds. Moreover, any node $v\in V$ is adjacent to a solid edge if and only if $y_v^*>0$.

\begin{proof}[Proof of Lemma \ref{lemma:no_extra_delta_dotted}]
We need to consider two cases:

\textbf{Case (I).} Assume that the optimum LP solution $\underline{x}^*$ is integral (having a tight LP). Now assume the contrary: take $(i_1,i_2)$ that is a non-solid edge but it is $\delta$-dotted. Consider an endpoint of $(i_1,i_2)$. For example take $i_2$. Either there is a solid edge attached to $i_2$ or not. If there is not, we stop. Otherwise, assume $(i_2,i_3)$ is a solid edge. Using Lemma \ref{lemma:delta-dotted-give-another-dotted}, either $\gamma_{i_3}>10\eps$ or there is a $10\eps$-dotted edge $(i_3,i_4)$ connected to $i_3$. Now, depending on whether $i_4$ has (has not) an adjacent solid edge we continue (stop) the construction. Similar procedure could be done by starting at $i_1$ instead of $i_2$. Therefore, we obtain an alternating path $P=({i_{-k}},\ldots, i_{-1}, {i_0}, {i_1}, {i_2},\ldots, {i_\ell})$ with each $(i_{2k},i_{2k+1})$ being $(6k+4)\eps$-dotted and all $(i_{2k-1},i_{2k}))$ being solid. Using the same argument as in \cite{BayatiB} one can show that one of the following four scenarios occur.

\noindent \textbf{Path:} Before $P$ intersects itself, both end-points of the path stop. At each end of the path, either the last edge is solid (then $\gamma_v < (3n+4)\eps$ for the last node $v$) or the last edge is a $(3n+4)$-dotted edge with no solid edge attached to $v$.  Now consider a new solution $\underline{x}'$ to LP \eqref{prob:mwm_relaxation} by $x_e'=x_e^*$ if $e\notin P$ and $x_e'=1-x_e^*$ if $e\in P$.  It is easy to see that $\underline{x}'$ is a feasible LP solution at all points $v\notin P$ and also for internal vertices of $P$. The only nontrivial case is when $v=i_{-k}$ (or $v=i_{\ell}$) and the edge $(i_{-k},i_{-k+1})$ (or $(i_{\ell-1},i_{\ell})$ ) is $(3n+4)\eps$-dotted. In both of these cases, by construction no solid edge is attached to $v$ outside of $P$ so making any change inside of $P$ is safe. Now denote the weight of all solid (remaining) edges of $P$ by $w(P_{\textrm{solid}})$ ($w(P_{\textrm{dotted}})$). Hence, $\sum_{e \in E} w_e x_e^*- \sum_{e \in E} w_e x_e'=w(P_{\textrm{solid}})-w(P_{\textrm{dotted}})$.

    But $w(P_{\textrm{dotted}})+(3n^2+16n)\eps/4\ge \sum_{v\in P}\gamma_v$ .  Moreover, from Lemma \ref{lemma:delta_dotted_properties}(i),  $\gamma_i+\gamma_j\ge w_{ij}-\eps$ for all $(ij)\in P$ which gives $w(P_{\textrm{solid}})\leq\sum_{v\in P}\gamma_v+n\eps/2$. Now $\sum_{e \in E} w_e x_e^*- \sum_{e \in E} w_e x_e'=w(P_{\textrm{solid}})-w(P_{\textrm{dotted}})$ yields $w_e x_e^*- \sum_{e \in E} w_e x_e'\leq (3n^2+18n)\eps/4 \le n(n+5)\eps.$  For $\eps<g/(6n^2)$ This contradicts the tightness of LP relaxation \eqref{prob:mwm_relaxation} since $x_e'\neq x_e^*$ holds at least for $e=(i_1,i_2)$.

\noindent \textbf{Cycle:} $P$ intersects itself and will contain an even cycle $C_{2s}$. This case can be handled very similar to the path by defining $x_e'=x_e^*$ if $e\notin C_{2s}$ and $x_e'=1-x_e^*$ if $e\in C_{2s}$. The proof is even simpler since the extra check for the boundary condition is not necessary.

\noindent \textbf{Blossom:} $P$ intersects itself and will contain an odd cycle $C_{2s+1}$ with a path (stem) $P'$ attached to the cycle at point $u$. In this case let $x_e'=x_e^*$ if $e\notin P'\cup C_{2s+1}$, and $x_e'=1-x_e^*$ if $e\in P'$, and $x_e'=\frac{1}{2}$ if $e\in C_{2s+1}$. From here, we drop the subindex $2s+1$ to simplify the notation. Since the cycle has odd length, both neighbors of $u$ in $C$ have to be dotted.
    Therefore,
\begin{align*}
\sum_{e \in E} w_e x_e^*- \sum_{e \in E} w_e x_e'&=w(P'_{\textrm{solid}})+w(C_{\textrm{solid}})
-w(P'_{\textrm{dotted}}) -\frac{w(C_{\textrm{dotted}})+w(C_{\textrm{solid}})}{2}\,,
\\
&= w(P'_{\textrm{solid}})+\frac{w(C_{\textrm{solid}})}{2}-w(P'_{\textrm{dotted}})-\frac{w(C_{\textrm{dotted}})}{2}\\ &< \sum_{v\in P'}\gamma_v + \left\lceil\f{|P|}{2}\right \rceil\eps+ \frac{\sum_{v\in C}\gamma_v-\gamma_u}{2} +s\eps - \sum_{v\in P'}\gamma_v +\gamma_u\\
&\phantom{ < \ \ }  +\left (\f{3|P|^2+16|P|}{4}\right)\eps - \frac{\sum_{v\in C}\gamma_v+\gamma_u}{2}+\left(\f{3s^2+16s}{4}\right)\eps\,.
\end{align*}
But the last term is at most $n(n+5)\eps$ which is again a contradiction.

\noindent \textbf{Bicycle:} $P$ intersects itself at least twice and will contain two odd cycles $C_{2s+1}$ and $C'_{2s'+1}$ with a path (stem) $P'$ that is connecting them. Very similar to Blossom, let $x_e'=x_e^*$ if $e\notin P'\cup C\cup C'$, $x_e'=1-x_e^*$ if $e\in P'$, and $x_e'=\frac{1}{2}$ if $e\in C\cup C'$. The proof follows similar to the case of blossom.

\textbf{Case (II).} Assume that the optimum LP solution $\underline{x}^*$ is not necessarily integral.

Everything is similar to Case (I) but the algebraic treatments are slightly different. Some edges $e$ in $P$ can be $\f{1}{2}$-solid ($x_e^*=\frac{1}{2}$). In particular some of the odd edges (dotted edges) of $P$ can now be $\f{1}{2}$-solid. But the subset of $\f{1}{2}$-solid edges of $P$ can be only sub-paths of odd length in $P$. On each such sub-path defining $\underline{x}'=1-\underline{x}^*$ means we are not affecting $\underline{x}^*$. Therefore, all of the algebraic calculations should be considered on those sub-paths of $P$ that have no $\f{1}{2}$-solid edge which means both of their boundary edges are dotted.

\noindent \textbf{Path:} Define $\underline{x}'$ as in Case (I). Using the discussion above,
let $P^{(1)},\ldots,P_{(r)}$ be disjoint sub-paths of $P$ that have no $\f{1}{2}$-solid edge. Thus,
$\sum_{e \in E} w_e x_e^*- \sum_{e \in E} w_e x_e'=\sum_{i=1}^r \big[w(P_{\textrm{solid}}^{(i)})-w(P_{\textrm{dotted}}^{(i)})\big]$.
Since in each $P^{(i)}$ the two boundary edges are dotted,
$w(P_{\textrm{solid}}^{(i)})\leq \sum_{v\in P^{(i)}}\gamma_v+|P^{(i)}|\eps/2$ and $\sum_{v\in P^{(i)}}\gamma_v\le w(P_{\textrm{dotted}}^{(i)})+(3|P^{(i)}|^2+16|P^{(i)}|)\eps/4$. The rest can be done as in Case (I).

\noindent \textbf{Cycle, Blossom, Bicycle:} These cases can be done using the same method of breaking the path and cycles into sub-paths $P^{(i)}$ and following the case of path.

\end{proof}

The direct part of Theorem \ref{thm:approx_fp} follows from the next lemma.
\begin{lemma}
$\baralf$ induces the matching $M^*$.
\label{lemma:no_extra_solid-delta}
\end{lemma}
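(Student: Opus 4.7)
The plan is to verify the two clauses of Definition \ref{def:induces_matching} with $M=M^*$, exploiting the fact that the unique-integer-optimum hypothesis classifies every edge as either strong-solid (in $M^*$) or non-solid. The strict separation $\gamma_i+\gamma_j > w_{ij}+4\eps$ at every non-solid edge, guaranteed by Lemma \ref{lemma:no_extra_delta_dotted}, together with the $\eps$-fixed-point toolkit of Lemma \ref{lemma:delta_dotted_properties}, will carry out most of the work.

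I would first handle unmatched vertices. Fix $i\notin V(M^*)$; then every edge $(ij)\in E$ is non-solid, and Lemma \ref{lemma:delta_dotted_properties}(h) combined with $w_{ij}-\gamma_j<\gamma_i-4\eps$ yields $\off{j}{i}\le\max(\eps,\gamma_i-3\eps)$. Since $\gamma_i=\max_j\off{j}{i}$, the option $\gamma_i>4\eps$ is ruled out; substituting back gives $\gamma_i\le\eps$, hence $\gamma_j>w_{ij}+3\eps$, and the $\eps$-fixed-point property then forces $\alf{j}{i}\ge\gamma_j-\eps>w_{ij}$ as well as $\alf{i}{j}+\alf{j}{i}>w_{ij}$; the offer formula \eqref{eq:off_def} collapses to $\off{j}{i}=(w_{ij}-\alf{j}{i})_+=0$. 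So unmatched vertices receive no positive offers, verifying the first clause of the definition. For matched $i$ with partner $j\in M^*$ and alternative neighbor $k\in\partial i\setminus j$, the same computation on the non-solid edge $(ik)$ gives $\off{k}{i}<\gamma_i-3\eps$ whenever $\gamma_i>4\eps$. Then $\gamma_i=\max_\ell\off{\ell}{i}$ cannot be attained at any $k\ne j$, so $\off{j}{i}=\gamma_i>0$ and $j$ gives the unique best offer with margin $3\eps$, which simultaneously establishes the second clause.

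The main obstacle is then to rule out $\gamma_i\le 4\eps$ at a matched node $i$. I would do so by establishing the companion claim that every matching edge $(ij)\in M^*$ is $O(n^2\eps)$-dotted, i.e., $\gamma_i+\gamma_j\le w_{ij}+O(n^2\eps)$. This is the mirror image of Lemma \ref{lemma:no_extra_delta_dotted}: starting from a hypothetical offending matching edge, grow an alternating path whose odd edges are in $M^*$ and whose even edges are progressively more dotted (via Lemma \ref{lemma:delta-dotted-give-another-dotted}), terminating in one of the four configurations path/cycle/blossom/bicycle, and flip the half-integral indicator of $x^*$ along it to produce a feasible LP solution whose weight differs from $w(M^*)$ by strictly less than $g$, contradicting the uniqueness of the LP optimum. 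The four-case bookkeeping is essentially identical to that of Lemma \ref{lemma:no_extra_delta_dotted} with signs reversed, and the hypothesis $\eps<g/(6n^2)$ is calibrated to absorb the $O(n^2\eps)$ error terms. Once matching edges are known to be $O(n^2\eps)$-dotted, Lemma \ref{lemma:delta_dotted_properties}(b) yields $\off{j}{i}\ge\gamma_i-O(n^2\eps)$, and a short contradiction with $6\eps$-balance (Lemma \ref{lemma:eps-fix-gives-eps-balance}) applied to $(ij)$ rules out $\gamma_i\le 4\eps$, completing the proof.
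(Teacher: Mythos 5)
Your treatment of unmatched nodes and of the unique-best-offer clause at matched nodes with $\gamma_i>4\eps$ is essentially sound (both rest on Lemma \ref{lemma:no_extra_delta_dotted} plus Lemma \ref{lemma:delta_dotted_properties}(h), and the small omission that $\alf{j}{i}\ge\gamma_j-\eps$ requires first checking that $\gamma_j$ is attained at some $k\ne i$ is easily repaired). But the decisive step --- ruling out $\gamma_i\le 4\eps$ at a matched node --- is where your argument has a genuine gap, and it is exactly the step the paper handles by a completely different, global mechanism. Knowing that $(ij)\in M^*$ is $O(n^2\eps)$-dotted and invoking Lemma \ref{lemma:delta_dotted_properties}(b) gives $\off{j}{i}\ge\gamma_i-O(n^2\eps)$, which is \emph{vacuous} precisely in the regime $\gamma_i\le 4\eps$ you are trying to exclude (offers are non-negative, and $O(n^2\eps)$ may be of order $g$, i.e.\ order one). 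Nor does $6\eps$-balance at $(ij)$ alone give a contradiction: with $\gamma_i\le4\eps$ the left bracket of \eqref{eq:eps_balance} lies in $[-\eps,4\eps]$, so balance only forces $\gamma_j-\max_{l\in\dj\backslash i}(w_{jl}-\gamma_l)_+\le 10\eps$, which is perfectly consistent with $\gamma_j$ large. What it \emph{does} force is the existence of a non-solid edge $(jl)$ that is $10\eps$-dotted, and to contradict that you would need Lemma \ref{lemma:no_extra_delta_dotted} for $\delta\le 10\eps$, whereas it is stated (and its error bookkeeping calibrated) only for $\delta\le4\eps$; you would have to re-prove it with larger constants. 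Moreover, the auxiliary ``mirror'' lemma you propose (every $M^*$-edge is $O(n^2\eps)$-dotted via a reversed alternating-path argument) is a substantial extra construction that, as just noted, is not actually what closes the argument. You must also separately handle the sub-case $\max_{l\in\dj\backslash i}(w_{jl}-\gamma_l)_+=0$, where one needs $w_{ij}\ge g$ for matched edges (true, by comparing $M^*$ with $M^*\setminus\{(ij)\}$, but it requires saying).

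For contrast, the paper avoids all of this with one global counting argument: let $\underline{x}'$ be the indicator of the $2\eps$-dotted edges with $\gamma_i+\gamma_j>0$ (a feasible integral point, since $2\eps$-dotted edges are solid by Lemma \ref{lemma:no_extra_delta_dotted}); Lemma \ref{lemma:delta_dotted_properties}(i) and (j) give $\sum_e w_e x_e'\ge\sum_v\gamma_v-n\eps\ge\sum_e w_ex_e^*-3n\eps/2$, and since $3n\eps/2<g$ uniqueness forces $\underline{x}'=\underline{x}^*$. This simultaneously shows every $M^*$-edge is $2\eps$-dotted \emph{and} has $\gamma_i+\gamma_j>0$, after which positivity of both endpoints follows from a short local contradiction (Lemma \ref{lemma:delta_dotted_properties}(e) producing a forbidden $4\eps$-dotted non-solid edge). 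If you want to keep your local route, replace your final step by: balance at $(ij)$ with $\gamma_i\le4\eps$ yields a $10\eps$-dotted non-solid edge at $j$ (or $\gamma_j\le10\eps$, excluded since $w_{ij}\ge g>6n^2\eps$), and strengthen Lemma \ref{lemma:no_extra_delta_dotted} to cover $\delta\le10\eps$.
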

\begin{proof}
From Lemma \ref{lemma:no_extra_delta_dotted} it follows that the set of $2\eps$-dotted edges is a subset of the solid edges. In particular, when the optimum matching $M^*$ is integral, no node can be adjacent to more than one $2\eps$-dotted edges.
If we define a $\underline{x}'$ to be zero on all edges and $x_e'=1$ for all $2\eps$-dotted edges $(ij)$ with $\gamma_i+\gamma_j>0$. Then clearly $\underline{x}'$ is feasible to \eqref{prob:mwm_relaxation}.  On the other hand, using the definition of $2\eps$-dotted for all $e'$ with $x_{e'}=1$, and
Lemma \ref{lemma:delta_dotted_properties}(j) that each node with $\gamma_i>0$ is adjacent to at least one $2\eps$-dotted edge we can write
$
\sum_{e \in E} w_e x_e'\ge\sum_{v\in V}\gamma_v-n\eps
$.
Separately, from Lemma \ref{lemma:delta_dotted_properties}(i) we have
$
\sum_{v\in V}\gamma_v\ge \sum_{e \in E} w_e x_e^* - \f{n\eps}{2}
$,
which shows that $\underline{x}'$ is also an optimum solution to \eqref{prob:mwm_relaxation} (when $\eps<g/(6n^2)$).
From the uniqueness assumption on $\underline{x}^*$ we obtain that $M^*$ is equal to the set of all $2\eps$-dotted edges with at least one endpoint having a positive earning estimate. We would like to show that for any such edge $(ij)$, both earning estimates $\gamma_i$ and $\gamma_j$ are positive.

Assume the contrary, i.e., without loss of generality $\gamma_i=0$.
Then, $\surp_{ij}\le 0$ and $0=\off{j}{i}=(w_{ij}-\alf{j}{i})_+$ that gives $\alf{j}{i}\ge w_{ij}$ or
\[
\off{\ell}{j}\ge\alf{j}{i}-\eps\ge w_{ij}-\eps\ge (w_{ij}-\alf{i}{j})_+ -\eps= \gamma_j-\eps.
\]
for some $\ell\in\p j\backslash i$.  Now using Lemma \ref{lemma:delta_dotted_properties}(e) the edge $(j\ell)$ is $4\eps$-dotted which contradicts Lemma \ref{lemma:no_extra_delta_dotted}.

Finally, the endpoints of the matched edges provide each other their unique best offers. This latter follows from the fact that each node with $\gamma_i>0$ receives an offer equal to $\gamma_i$ and the edge corresponding to that offer has to be $2\eps$-dotted using Lemma \ref{lemma:delta_dotted_properties}(d).
The nodes with no positive offer $\gamma_i=0$ are unmatched in $M^*$ as well.
\end{proof}

\paragraph{Proof of Theorem \ref{thm:approx_fp}.}
\begin{proof}
For any $\eps<g/(6n^2)$,  an $\eps$-fixed point induces the matching $M^*$ using Lemma \ref{lemma:no_extra_solid-delta}.
Additionally, the earning vector $\bargamma$ is $(6\eps)$-balanced using Lemma \ref{lemma:eps-fix-gives-eps-balance}.
Next we show that $(\bargamma,M^*)$ is a stable trade outcome.
\begin{lemma}
The earnings estimates $\bargamma$ is an optimum solution to the dual \eqref{prob:mwm_dual}. In particular the pair $(\bargamma,M^*)$ is a stable trade outcome.
\label{lemma:gamma_is_dual_opt_delta}
\end{lemma}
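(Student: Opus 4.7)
The plan is to establish three things about the earnings vector $\bargamma$: dual feasibility ($\gamma_i+\gamma_j \geq w_{ij}$ for every edge), the trade-outcome equalities ($\gamma_i+\gamma_j = w_{ij}$ on edges of $M^*$ and $\gamma_i=0$ on nodes unmatched by $M^*$), and finally that $\sum_i \gamma_i$ equals the primal LP value $w(M^*)$, which together certify that $\bargamma$ is optimal for \eqref{prob:mwm_dual} and that $(\bargamma,M^*)$ is a stable trade outcome.

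First I would exploit the induced matching established in Lemma \ref{lemma:no_extra_solid-delta}. This immediately gives (i) $\gamma_i=0$ for every node $i$ unmatched under $M^*$ (such a node receives no positive offer), and (ii) for each $(ij)\in M^*$, $\gamma_i=\off{j}{i}$ and $\gamma_j=\off{i}{j}$ since each endpoint receives its \emph{unique} best positive offer from the other. Consequently $\gamma_i+\gamma_j=\off{i}{j}+\off{j}{i}$ on matched edges.

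Next, for each edge $(ij)\notin M^*$, the uniqueness of the integral LP optimum under $\eps<g/(6n^2)$ makes $(ij)$ non-solid, so by Lemma \ref{lemma:no_extra_delta_dotted} it cannot be $4\eps$-dotted; hence $\gamma_i+\gamma_j>w_{ij}+4\eps\ge w_{ij}$. This gives dual feasibility \emph{with strict slack} on non-matching edges, and is precisely the stability condition on $E\setminus M^*$. For $(ij)\in M^*$, the goal is to upgrade the approximate inequality $\gamma_i+\gamma_j\ge w_{ij}-\eps$ of Lemma \ref{lemma:delta_dotted_properties}(i) to exact equality $\gamma_i+\gamma_j=w_{ij}$. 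In view of the offer formula \eqref{eq:off_def}, this reduces to showing that the surplus $\surp_{ij}=w_{ij}-\alf{i}{j}-\alf{j}{i}$ is nonnegative on matched edges, so that $\off{i}{j}+\off{j}{i}=w_{ij}$ by direct computation. To prove nonnegativity of $\surp_{ij}$, I would use the $\eps$-fixed-point bounds $\alf{i}{j}\le \max_{k\in\partial i\setminus j}\off{k}{i}+\eps$ and the strict domination $\off{j}{i}>\max_{k\in\partial i\setminus j}\off{k}{i}$ from the induced matching (and symmetrically at $j$) to bound $\alf{i}{j}+\alf{j}{i}$; where this direct estimate falls short, I would fall back on the alternating-path / half-integral-perturbation technology of Lemma \ref{lemma:no_extra_delta_dotted}, building a path from $(ij)$ through alternative partners $k_i\in\partial i\setminus j$ and $k_j\in\partial j\setminus i$ witnessing $(w_{ik_i}-\gamma_{k_i})+(w_{jk_j}-\gamma_{k_j})\gtrsim w_{ij}$, and using $\eps<g/(6n^2)$ to derive the required contradiction.

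With dual feasibility and the matched-edge equality in hand, summing $\gamma_i+\gamma_j=w_{ij}$ over $(ij)\in M^*$ and using $\gamma_i=0$ on the remaining nodes yields $\sum_{i\in V}\gamma_i=\sum_{(ij)\in M^*}w_{ij}=w(M^*)$, which by weak LP duality is the dual optimum; hence $\bargamma$ is dual optimal. Together with Step 2, this also gives the stable trade outcome. The main obstacle is pinning down exact equality on matched edges, since the elementary lemmas of Section \ref{sec:eps_fixed_point_prop_proofs} only yield $|\gamma_i+\gamma_j-w_{ij}|\le \eps$ there; closing this $\eps$-gap is the crux of the argument and is where the LP gap condition $\eps<g/(6n^2)$ has to be used in an essential way.
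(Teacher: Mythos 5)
Your plan is correct and follows essentially the same route as the paper's proof: use the induced matching from Lemma \ref{lemma:no_extra_solid-delta} to get $\gamma_i=\off{j}{i}$, $\gamma_j=\off{i}{j}$ on matched edges and $\gamma_i=0$ off the matching, invoke Lemma \ref{lemma:no_extra_delta_dotted} for stability on non-solid edges, prove exact equality $\gamma_i+\gamma_j=w_{ij}$ on matched edges by showing $\surp_{ij}\ge 0$, and conclude optimality by (weak) duality. The one step you flag as the crux and leave open is closed in the paper by a short local argument rather than by building a new alternating path: if $\surp_{ij}<0$ on a matched edge, then $\gamma_i=\off{j}{i}=(w_{ij}-\alf{j}{i})_+\le\alf{i}{j}$, so the $\eps$-fixed-point condition yields some $k\in\partial i\setminus j$ with $\off{k}{i}\ge\alf{i}{j}-\eps\ge\gamma_i-\eps$; Lemma \ref{lemma:delta_dotted_properties}(e) then makes the non-solid edge $(ik)$ $4\eps$-dotted, directly contradicting Lemma \ref{lemma:no_extra_delta_dotted} --- no fresh half-integral perturbation is needed beyond the one already packaged in that lemma.
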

\begin{proof}
Using Lemma \ref{lemma:no_extra_delta_dotted}, we can show that for any non-solid edge $(ij)$, stability holds, i.e. $\gamma_i+\gamma_j\ge w_{ij}$.

Now let $(i,j)$ be a solid edge. Then $i$ and $j$ are sending each other their best offers. If $\surp_{ij}\ge0$ we are done using
$
\,\gamma_i+\gamma_j=\off{j}{i}+\off{i}{j}=\,\f{w_{ij}-\alf{i}{j}+\alf{j}{i}}{2}+\f{w_{ij}-\alf{j}{i}+\alf{i}{j}}{2}=w_{ij}$.
And if $\surp_{ij}<0$ then $\gamma_i=\off{j}{i}=(w_{ij}-\alf{j}{i})_+ \le \alf{i}{j}$. Similarly, $\gamma_j\le \alf{j}{i}$. This means there exist $k\in\p i\backslash j$ with $\off{k}{i}\ge \alf{i}{j} -\eps\ge \gamma_i-\eps$. But, from Lemma \ref{lemma:delta_dotted_properties}(e) the edge $(ik)$ would become $4\eps$-dotted which is a contradiction.
\end{proof}

The converse of Theorem \ref{thm:approx_fp} is trivial since any $\eps$-NB solution $(M,\bargamma_{\NB})$ is stable and produces a trade outcome by definition, hence it is a dual optimal solution which means $M=M^*$.
\end{proof}

\section{Proof of Theorem \ref{thm:b_FP_NB}}
\label{app:b_FP_NB_proof}

\begin{theorem}
Let $G=(V,E)$ with edge weights $(w_{ij})_{(ij)\in E}$ and capacity constraints $\mathbf b=(b_i)$ be an instance such that the primal LP (\ref{prob:b_LP_primal_dual}) has a unique optimum that is integral, corresponding to matching $M^*$.
Let $(\baralf,\baroff,\Gamma)$ be a fixed point of
the natural dynamics. Then $\baralf$
induces matching $M^*$ and $(M^*, \Gamma)$ is a Nash bargaining solution.
Conversely, every Nash bargaining solution $(M,\Gamma_{\NB})$ has $M=M^*$ and
corresponds to a unique
fixed point of the natural dynamics with $\Gamma=\Gamma_{\NB}$.\\
\end{theorem}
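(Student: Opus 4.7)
The plan is to extend the strategy used to prove Theorem \ref{thm:fp_dualopt}(a) in Section \ref{sec:FixedPoint}, carrying over each of the six fixed-point properties there to the $\mathbf b$-matching setting. Most definitions generalize transparently: call $(ij)$ \emph{strong-dotted} if the surplus $\surp_{ij}:=w_{ij}-\alf{i}{j}-\alf{j}{i}$ is positive, \emph{weak-dotted} if $\surp_{ij}=0$, and \emph{non-dotted} otherwise. Because the local offer rule (\ref{eq:off_def}) is the same as in the $1$-exchange model, the pairwise identity $\surp_{ij}\ge 0 \Leftrightarrow \off{i}{j}+\off{j}{i}=w_{ij}$ and its consequences (Lemmas \ref{lemma:i-jpartners_equivalence}--\ref{lemma:when_alf_equals_gamma}) go through unchanged; the only new ingredient is that the fixed-point equation now reads $\alf{i}{j}=\bmax{i}_{k\in\p i\setminus j}\off{k}{i}$ and $\gamma_i=\bmax{i}_{k\in\p i}\off{k}{i}$.

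For the direct implication I would proceed in three steps. First, I would check dual feasibility: setting $y_i=\gamma_i$ and $y_{ij}=(w_{ij}-\gamma_i-\gamma_j)_+$ satisfies the dual constraints of (\ref{prob:b_LP_primal_dual}), and $\off{i}{j}=(w_{ij}-\gamma_i)_+$ holds on every non-dotted edge by the generalization of Lemma \ref{lemma:when_alf_equals_gamma}. Second, and this is the analytical heart of the argument, I would show the generalization of Lemmas \ref{lemma:no_extra_dotted}--\ref{lemma:no_extra_solid}: the strong/weak/non-dotted classification coincides with the strong/weak/non-solid LP classification. This is proved by assuming a mismatched edge and constructing an alternating walk of dotted and solid edges which, when broken into the canonical path/even-cycle/blossom/bicycle pieces, yields a strictly better feasible half-integral primal solution, contradicting LP optimality. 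This analysis is essentially the one already developed for $\mathbf b$-matchings in \cite{BayatiB} combined with the bookkeeping of Section \ref{sec:fixed_point_prop_proofs}. Third, since the primal LP has a unique integer optimum $M^*$, the set of strong-dotted edges equals $M^*$ exactly and every other edge is non-dotted; a saturated node $i$ thus has exactly $b_i$ strong-dotted incident edges, and on each such edge $(ij)$ the offer $\off{j}{i}$ strictly dominates any non-dotted incoming offer (which takes the form $(w_{ik}-\gamma_k)_+\le \alf{i}{j}$). Hence $\baralf$ induces $M^*$ in the sense of Definition \ref{def:b_induces_matching}, stability follows from dual feasibility, and balance (\ref{eq:b_balance}) is obtained by substituting $\alf{i}{j}=\bmax{i}_{k\in\p i\setminus j}(w_{ik}-\gamma_k)_+$ into the pairwise balance identity on each matched edge.

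For the converse, given an NB solution $(M^*,\Gamma_{\NB})$ I would define $\off{i}{j}:=\gamma_{\NB,j\to i}$ for edges in $M^*$ (with partner relabelling) and $\off{i}{j}:=(w_{ij}-\gamma_{\NB,i})_+$ otherwise, and then set $\alf{i}{j}:=\bmax{i}_{k\in\p i\setminus j}\off{k}{i}$. Exactly as in the $1$-exchange converse in Section \ref{sec:FixedPoint}, a case split on $(ij)\in M^*$ versus $(ij)\notin M^*$, combined with balance and stability of $\Gamma_{\NB}$, shows that one step of (\ref{eq:off_def})--(\ref{eq:b_update}) reproduces these vectors, so this is a fixed point; the associated $\Gamma$ coincides with $\Gamma_{\NB}$ by construction, and uniqueness of the fixed point with prescribed $\Gamma$ follows because $\Gamma$ pins down $\baroff$ and then $\baralf$ via the $\bmax{}$ relation. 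The main obstacle I expect is Step~2 of the direct implication: the alternating-walk argument must now respect the $b_i$ constraints at each vertex, so the accounting for how capacity is consumed along paths, odd cycles, and blossoms is more delicate than in the $1$-matching case and requires careful use of the half-integrality of LP (\ref{prob:b_LP_primal_dual}); everything else is routine adaptation.
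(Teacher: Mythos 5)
Your proposal follows essentially the same route as the paper's proof in Appendix \ref{app:b_FP_NB_proof}: the same dotted/solid dictionary, the same list of fixed-point properties culminating in the dual-optimum construction $y_i=\gamma_i$, $y_{ij}=(w_{ij}-\gamma_i-\gamma_j)_+$ on $M$, the same alternating-walk argument for Lemmas \ref{lemma:b_no_extra_dotted}--\ref{lemma:b_no_extra_solid} (which the paper itself only sketches by reference to the one-exchange case), and an identical converse construction. The one adaptation you should make explicit when writing it up is that the partner equivalences of Lemma \ref{lemma:i-jpartners_equivalence} weaken from equalities to inequalities ($\gamma_i\le\off{j}{i}$ rather than $\gamma_i=\off{j}{i}$) because $\gamma_i$ is now the $b_i$-th largest incoming offer, exactly as in the paper's Lemma \ref{lemma:b_i-jpartners_equivalence}.
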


Let $\mc{S}$ be the set of optimum solutions of LP (\ref{prob:b_LP_primal_dual}).
 As in the one-matching case, we call  $e\in E$ a \emph{strong-solid edge} if $x_e^*=1$ for all $x^* \in \mc{S}$ and a
\emph{non-solid edge} if $x_e^*=0$ for all $x^* \in \mc{S}$.  We call $e\in E$ a \emph{weak-solid edge} if it is neither strong-solid nor non-solid.
%
\vspace{2mm}

\noindent\textbf{Proof of Theorem \ref{thm:b_FP_NB}: From fixed points to NB solutions.}
The direct part follows from the following set of fixed point
properties, similar to those for the one-matching case.
Throughout
$(\baralf,\baroff,\Gamma)$ is a fixed point of the dynamics
\eqref{eq:b_update} (with $\Gamma$ given
by (\ref{eq:b_Gamma}), and $\baroff$ given by \eqref{eq:off_def}).
The properties are proved for the case when the primal LP in \eqref{prob:b_LP_primal_dual} has a unique integral optimum (which implies that there are no weak-solid edges).

(1) Two players $(i,j) \in E$ are called \emph{partners}
    if $\gamma_i + \gamma_j \leq w_{ij}$. Then the following
    are equivalent: (a) $i$ and $j$ are partners, (b)
    $w_{ij}-\alf{i}{j} - \alf{j}{i}\geq 0$, (c)
    $\gamma_i\leq \off{j}{i}$ and $\gamma_j\leq \off{i}{j}$.

(2) The following are equivalent: (a) $w_{ij}-\alf{i}{j} - \alf{j}{i}
    >0$, (b) $\gamma_{j \to i} > \alf{i}{j} = \bmax{i}_{k \in \di \backslash j}\, \off{k}{i}$.
    Denote this set of edges by $M$.

(3) We say that $(i,j)$ is a \emph{weak-dotted edge} if
    $w_{ij}-\alf{i}{j} - \alf{j}{i}=0$, \emph{a
    strong-dotted edge} if $w_{ij}-\alf{i}{j} - \alf{j}{i}
    > 0$, and a \emph{non-dotted edge} otherwise. If $i$
    has less than $b_i$ adjacent dotted edges, then $\gamma_i=0$.

(4) Each strong solid edge is strong dotted, and each non-solid edge is non-dotted.

(5) The balance property \eqref{eq:b_balance}, holds at
    every edge $(i,j) \in M$.

(6) We have
    \begin{align*}
    \off{i}{j} =
    \left \{
    \begin{array}{ll}
    \gamma_{i\to j} & \mbox{for } (ij)\in M \, ,\\
    (w_{ij} - \gamma_i)_+ & \mbox{for } (ij)\notin M\, .
    \end{array} \right .
    \end{align*}

(7) An optimum solution for the dual LP
    in \eqref{prob:b_LP_primal_dual} can be constructed as $y_i = \gamma_i$ for all $i \in V$ and:
    \begin{align*}
    y_{ij} =
    \left \{
    \begin{array}{ll}
    w_{ij} - \gamma_i - \gamma_j & \mbox{for } (ij)\in M\, ,\\
    0 & \mbox{for } (ij)\notin M\, .
    \end{array} \right .
    \end{align*}

\begin{proof}[Proof of Theorem \ref{thm:b_FP_NB}, direct implication]
Assume that the primal LP in (\ref{prob:b_LP_primal_dual}) has a unique optimum that is integral. Then,
by property 4, the set of strong-dotted edges $M$ is the unique
maximum weight matching $M^*$, i.e. $M=M^*$, and all other edges are non-dotted.
By property $3$, for $i$ that has less than $b_i$ partners under $M^*$, we have $\gamma_i=0$.
Hence by property $2$, we know that for $i$ not saturated under $M^*$, for every $(ij)\notin M^*$ since $(ij)$ is a non-dotted edge $\gamma_{j \to i} = \alf{i}{j} = 0$, and for every $(ij) \in M^*$ node $i$ gets a positive incoming offer $\gamma_{j \to i} = \off{j}{i}$.
For $i$ saturated under $M^*$, property 2 yields that the $\bmax{i}$ highest incoming offers to $i$ come from neighbors in $M^*$ (without ties).
It follows that $\baralf$ induces the matching $M^*$. Also, we deduce that $\gamma_{i \to j} = \gamma_{j \to i} = 0$ for $(ij)\notin M^*$.

From property 6 we deduce that $\gamma_{i \to j} + \gamma_{j \to i} = w_{ij}$ for $(ij) \in M$ and from property 1, we deduce that $ \off{j}{i}< \gamma_i$ for $(ij)\notin M$. It follows that $(M, \Gamma)$ is a trade outcome.
Finally, by properties 7 and 5, the pair
$(M^*,\bargamma)$ is stable and balanced respectively, and thus forms a NB
solution.
\end{proof}

%
%
\vspace{2mm}

\noindent\textbf{Proof of Theorem \ref{thm:b_FP_NB}: From NB solutions to fixed points.}~
\begin{proof}
Consider any NB solution $(M,\Gamma_{\NB} )$. Using Proposition \ref{prop:sotomayor},
 $M=M_*$,  the unique maximum weight matching.
Construct a corresponding FP as follows. Set
    \begin{align*}
    \off{i}{j} =
    \left \{
    \begin{array}{ll}
    \gamma_{\NB, i\to j} & \mbox{for } (ij)\in M \, ,\\
    (w_{ij} - \gamma_{\NB,i})_+ & \mbox{for } (ij)\notin M\, .
    \end{array} \right .
    \end{align*}
Compute $\baralf$ using $\alf{i}{j}=\bmax{i}_{k \in \di \backslash
j} \off{k}{i}$. We claim that this is a FP and that the
corresponding $\Gamma$ is $\Gamma_{\NB}$.

To prove that
we are at a fixed point, we imagine updated offers
$\baroff^{\textup{upd}}$ based on $\baralf$, and show
$\baroff^{\textup{upd}}=\baroff$.

Consider a matching edge $(i,j) \in M$. We know that
$\gamma_{\NB,i\to j} + \gamma_{\NB,j\to i}=w_{ij}$. Also, stability and
balance tell us
$$
\gamma_{\NB,j \to i} - \bmax{i}_{k \in \partial i \backslash j} (w_{ik} -\gamma_{\NB,k})_+
=
\gamma_{\NB,i \to j} - \bmax{j}_{l \in \partial j \backslash i} (w_{jl} -\gamma_{\NB,l})_+
$$
and both sides are non-negative. For $(i,k) \in M$, we know that $(w_{ik} -\gamma_{\NB,k})_+ \geq (w_{ik} -\gamma_{\NB,i \to k})_+ = \gamma_{\NB,k \to i} = \off{k}{i}$. It follows that
$$
\bmax{i}_{k \in \partial i \backslash j} (w_{ik} -\gamma_{\NB,k})_+ = \bmax{i}_{k \in \di \backslash j}
\off{k}{i} = \alf{i}{j}
$$
Hence, $\gamma_{\NB,j\to i} - \alf{i}{j} =
\gamma_{\NB,i\to j} - \alf{j}{i} \geq 0$. Therefore
$\alf{i}{j}+\alf{j}{i} \leq w_{ij}$,
\begin{align*}
\off{i}{j}^\textup{upd}&= \frac{w_{ij}-\alf{i}{j}+\alf{j}{i}}{2} = \frac{w_{ij}-\gamma_{\NB,j\to i}+\gamma_{\NB,i \to j}}{2} \\
&= \gamma_{\NB,i\to j} =  \off{i}{j}\, .
\end{align*}
By symmetry, we also have $\off{j}{i}^\textup{upd}=
\gamma_{\NB,j\to i} = \off{j}{i}$. Hence, the offers remain
unchanged.
Now
consider $(i,j) \notin M$. We have
$\gamma_{\NB,i}+\gamma_{\NB,j} \geq w_{ij}$ and,
$\gamma_{\NB,i} = \bmax{i}_{k \in \di \backslash j} \gamma_{\NB, k \to i} = \alf{i}{j}$. A similar equation holds for
$\gamma_{\NB,j}$. The validity of this identity can be checked
individually in the cases when $i$  is saturated under $M$ and $i$ is not saturated  under $M$.
Hence, $\alf{i}{j}+\alf{j}{i} \geq w_{ij}$. This leads to
$\off{i}{j}^{\textup{upd}}=(w_{ij}-\alf{i}{j})_+=(w_{ij}-\gamma_{\NB,i})_+=\off{i}{j}$.
By symmetry, we know also that $\off{j}{i}^{\textup{upd}}=\off{j}{i}$.

Finally, we show $\Gamma = \Gamma_{\NB}$. Note that since we have already established $\baralf$ is a fixed point, we know from the direct part that $\baralf$ induces the matching $M$, so there is no tie breaking required to determine the $b_i$ highest incoming offers to node $i \in V$.
For all $(i,j)
\in M$, we already found that $\off{i}{j}=\gamma_{\NB, i \to j}$ and vice
versa. For any edge $(ij) \notin M$, we know $\off{i}{j} =
(w_{ij}-\gamma_{\NB,i})_+ \leq \gamma_{\NB,j}$. This
immediately leads to $\Gamma = \Gamma_{\NB}$.
\end{proof}

\subsection{Proof of properties used in direct part}

Now we prove the fixed point properties that were used
in the direct part of the proof of Theorem \ref{thm:b_FP_NB}.
Before that, however, we remark that
the condition: ``the primal LP in \eqref{prob:b_LP_primal_dual} has a unique optimum''
in Theorem \ref{thm:b_FP_NB} is almost always valid.

\begin{Remark}
\label{rem:b_unique_opt_generic}
We argue that the condition ``the primal LP in \eqref{prob:b_LP_primal_dual} has a unique optimum'' is generic in instances with integral optimum:\\
Let ${\sf G}_{\textup{I}} \subset [0,1]^{|E|}$ be the set of instances having an integral optimum, given a graph $G$ with capacity constraints $\mathbf b$.
Let ${\sf G}_{\textup{UI}} \subset {\sf G}_{\textup{I}}$ be the set of instances having a unique integral optimum.
It turns out that ${\sf G}_{\textup{I}}$ has dimension $|E|$ (i.e.
the class of instances having an integral optimum is large) and
that ${\sf G}_{\textup{UI}}$ is both
\emph{open and dense in ${\sf G}_{\textup{I}}$}.
\end{Remark}

Again, we denote surplus $w_{ij}-\alf{i}{j}-\alf{j}{i}$ of edge $(ij)$ by $\surp_{ij}$.

\begin{lemma}\label{lemma:b_i-jpartners_equivalence}
The following are equivalent:\\
(a) $\gamma_i + \gamma_j \leq w_{ij}$,\\
(b) $\surp_{ij}\geq 0$.\\
(c) $\gamma_i\leq \off{j}{i}$ and $\gamma_j\leq \off{i}{j}$.\\
Moreover, if $\gamma_i\leq \off{j}{i}$ and $\gamma_j>\off{i}{j}$ then $\gamma_i=0$.
\end{lemma}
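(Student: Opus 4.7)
The plan is to mirror Lemma~\ref{lemma:i-jpartners_equivalence} from the one-matching case, with equalities relaxed to the inequalities forced by the $\bmax{i}$ operator. The key fact I will use throughout is the fixed-point inequality $\alf{i}{j}\le \gamma_i$: since $\gamma_i = \bmax{i}_{k \in \di}\off{k}{i}$ and, at a fixed point, $\alf{i}{j} = \bmax{i}_{k \in \di\backslash j}\off{k}{i}$, removing one nonnegative entry from a list can only decrease (or leave fixed) its $b_i$-th largest element.

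I would close the loop (a)$\Rightarrow$(b)$\Rightarrow$(c)$\Rightarrow$(a). Direction (a)$\Rightarrow$(b) is immediate from $\alf{i}{j}+\alf{j}{i}\le \gamma_i+\gamma_j\le w_{ij}$, which gives $\surp_{ij}\ge 0$. For (b)$\Rightarrow$(c), the hypothesis $\surp_{ij}\ge 0$ yields the explicit formulas $\off{i}{j}=\alf{j}{i}+\surp_{ij}/2$ and $\off{j}{i}=\alf{i}{j}+\surp_{ij}/2$, together with $\off{i}{j}+\off{j}{i}=w_{ij}$. To derive $\gamma_j\le \off{i}{j}$ I would case-split on whether $\off{i}{j}$ ranks among the top $b_j$ incoming offers at $j$: if it does, then $\off{i}{j}\ge \gamma_j$ by definition of $\bmax{j}$; if it does not, then removing $\off{i}{j}$ leaves the top $b_j$ unchanged and $\alf{j}{i}=\gamma_j$, so $\off{i}{j}=\alf{j}{i}+\surp_{ij}/2\ge \gamma_j$. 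The symmetric bound $\gamma_i\le \off{j}{i}$ follows by swapping $i$ and $j$. For (c)$\Rightarrow$(a) I split on the sign of $\surp_{ij}$: if $\surp_{ij}\ge 0$ then $\off{i}{j}+\off{j}{i}=w_{ij}$, so $\gamma_i+\gamma_j\le w_{ij}$; if $\surp_{ij}<0$, a short bound on $(w_{ij}-\alf{i}{j})_+ + (w_{ij}-\alf{j}{i})_+$ (summand-by-summand, depending on which of them is truncated to zero) again yields $\off{i}{j}+\off{j}{i}\le w_{ij}$.

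For the ``moreover'' clause I would argue by contradiction. The contrapositive of the step (b)$\Rightarrow$(c) just established shows that $\gamma_j>\off{i}{j}$ forces $\surp_{ij}<0$, hence $\off{j}{i}=(w_{ij}-\alf{j}{i})_+$. Combining the hypothesis $\gamma_i\le \off{j}{i}$ with the fixed-point inequality $\alf{i}{j}\le \gamma_i$ gives $\alf{i}{j}\le (w_{ij}-\alf{j}{i})_+$. If $\gamma_i>0$ then the right-hand side is strictly positive, so $\alf{i}{j}\le w_{ij}-\alf{j}{i}$, i.e.\ $\surp_{ij}\ge 0$, a contradiction. Hence $\gamma_i=0$.

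The main subtle point is the case analysis in (b)$\Rightarrow$(c): unlike the one-matching case, the ranking of $\off{i}{j}$ among the top-$b_j$ incoming offers at $j$ interacts with potential ties at the $b_j$-th position, and the tie-breaking rule in Eq.~\eqref{eq:b_Gamma} enters in principle. Fortunately only the numerical value of the $b_j$-th-max (not the tie-broken assignment defining $\gamma_{j\to i}$) appears in the desired inequality $\gamma_j\le \off{i}{j}$, so tie-breaking turns out to be inconsequential here, and the rest of the proof is essentially verbatim from the one-matching case once $\alf{\cdot}{\cdot}\le \gamma_{\cdot}$ is in hand.
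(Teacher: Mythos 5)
Your proof is correct and follows essentially the same route as the paper's: $\alf{i}{j}\le\gamma_i$ for (a)$\Rightarrow$(b), the explicit formula $\off{i}{j}=\alf{j}{i}+\surp_{ij}/2$ plus the ranking of $\off{i}{j}$ among the incoming offers at $j$ for (b)$\Rightarrow$(c), a sign split on $\surp_{ij}$ for (c)$\Rightarrow$(a), and the chain $\alf{i}{j}\le\gamma_i\le\off{j}{i}\le(w_{ij}-\alf{j}{i})_+$ for the ``moreover'' clause. Your explicit two-case treatment of whether $\off{i}{j}$ ranks in the top $b_j$ is a slightly more careful rendering of the paper's one-line assertion, but it is the same argument.
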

\begin{proof} We will prove $(a) \Rightarrow (b) \Rightarrow (c) \Rightarrow (a)$.

$(a)\Rightarrow (b)$: Since $\gamma_i\geq \alf{i}{j}$ and $\gamma_j\geq \alf{j}{i}$ always holds then $w_{ij}\geq \gamma_i + \gamma_j\geq \alf{i}{j} + \alf{j}{i}$.

$(b)\Rightarrow (c)$: If $\surp_{ij}\geq 0$ then $\off{i}{j}=(w_{ij}-\alf{i}{j}+\alf{j}{i})/2\geq \alf{j}{i}$.
So $\off{i}{j}$ is among the $b_j$ best offers received by node $j$, implying $\gamma_j \leq \off{i}{j}$.  The argument for $\gamma_i\leq \off{j}{i}$ is similar.

$(c)\Rightarrow (a)$: If $\surp_{ij} \geq 0$ then $\off{i}{j}=(w_{ij}-\alf{i}{j}+\alf{j}{i})/2$ and $\off{j}{i}=(w_{ij}-\alf{j}{i}+\alf{i}{j})/2$ which gives $\gamma_i+\gamma_j\leq \off{i}{j}+\off{j}{i}=w_{ij}$ and we are done.  Otherwise, we have
$\gamma_i+\gamma_j\leq \off{i}{j}+\off{j}{i}= (w_{ij}-\alf{i}{j})_+ + (w_{ij}-\alf{j}{i})_+ < \max \bigg[(w_{ij}-\alf{i}{j})_+, (w_{ij}-\alf{j}{i})_+, 2w_{ij}-\alf{i}{j}-\alf{j}{i}\bigg]\leq w_{ij}$.

Finally, we suppose $\gamma_i=\off{j}{i}$ and $\gamma_j>\off{i}{j}$. First note that by equivalence of $(b)$ and $(c)$ we should have $w_{ij} < \alf{i}{j} + \alf{j}{i}$. On the other hand $\alf{i}{j}\leq \gamma_i  \leq \off{j}{i}\leq (w_{ij}-\alf{j}{i})_+$.  Now if  $w_{ij}-\alf{j}{i}>0$ we get $\alf{i}{j}\leq w_{ij}-\alf{j}{i}$ which is a contradiction. Therefore $\gamma_i\leq (w_{ij}-\alf{j}{i})_+=0$, implying $\gamma_i = 0$.
\enp
\end{proof}

\begin{lemma}
The following are equivalent:\\
(a) $\off{j}{i} > \alf{i}{j}$,\\
(b) $\off{i}{j} > \alf{j}{i}$,\\
(c) $w_{ij}-\alf{i}{j} - \alf{j}{i} >0$.\\
(d) $i$ and $j$ receive positive offers from each other, with $\off{j}{i} > ((b_{i}+1)^{\rm th}\!\operatorname{-max})_{k \in \di} \off{k}{i}$ and similarly for $i$.

These conditions imply that $\off{j}{i} = \gamma_{j \to i}$ and $\off{i}{j} = \gamma_{i \to j}$.
\label{lemma:b_strong_dotted}
\end{lemma}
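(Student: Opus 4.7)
My plan is to prove the four statements equivalent by establishing the cycle $(c) \Rightarrow (a) \Rightarrow (d) \Rightarrow (a)$ together with the symmetric counterpart for $(b)$, closely mirroring the proof of Lemma~\ref{lemma:strong_dotted} in the one-exchange case but adapted to the $b$-th-max semantics. The final conclusion $\off{j}{i}=\gamma_{j\to i}$, $\off{i}{j}=\gamma_{i\to j}$ will fall out directly from $(d)$ and the definition \eqref{eq:b_Gamma}.

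First I would establish $(c)\Leftrightarrow(a)$ (and by symmetry $(c)\Leftrightarrow(b)$) by a direct case analysis on the sign of $\surp_{ij}=w_{ij}-\alf{i}{j}-\alf{j}{i}$ in the offer formula \eqref{eq:off_def}. If $\surp_{ij}>0$ then $\off{j}{i}=(w_{ij}-\alf{j}{i}+\alf{i}{j})/2$, and the inequality $\off{j}{i}>\alf{i}{j}$ is trivially equivalent to $\surp_{ij}>0$. Conversely, if $\surp_{ij}\le 0$, then $\off{j}{i}=(w_{ij}-\alf{j}{i})_+\le\alf{i}{j}$ since $w_{ij}-\alf{j}{i}\le\alf{i}{j}$ in this regime, so $(a)$ fails.

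Next I would establish $(c)\Leftrightarrow(d)$ by comparing $\alf{i}{j}=\bmax{i}_{k\in\di\backslash j}\off{k}{i}$ to the $(b_i+1)$-th-max over all of $\di$. For $(c)\Rightarrow(d)$: assuming $\off{j}{i}>\alf{i}{j}$ (from the already-proved $(c)\Leftrightarrow(a)$), the offer $\off{j}{i}$ lies strictly above the $b_i$-th largest offer received from $\di\backslash j$; hence $\off{j}{i}$ is among the top $b_i$ incoming offers to $i$ overall, and the $(b_i+1)$-th largest offer over $\di$ is bounded above by $\alf{i}{j}<\off{j}{i}$. Positivity of $\off{j}{i}$ is also immediate since $\alf{i}{j}\ge 0$. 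The symmetric statement for $i$ follows from $(b)$. For $(d)\Rightarrow(a)$: if $\off{j}{i}$ strictly exceeds the $(b_i+1)$-th-max over $\di$, then removing $j$'s (top-$b_i$) offer shifts every strictly smaller offer up by exactly one rank, so the $b_i$-th-max over $\di\backslash j$ equals the $(b_i+1)$-th-max over $\di$, which is $<\off{j}{i}$; hence $\alf{i}{j}<\off{j}{i}$.

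Finally, for the concluding identity $\off{j}{i}=\gamma_{j\to i}$: under $(d)$ the offer $\off{j}{i}$ is among the top $b_i$ incoming offers to $i$ and is strictly separated from the $(b_i+1)$-th-max, so the tie-breaking clause in \eqref{eq:b_Gamma} plays no role and $\gamma_{j\to i}=\off{j}{i}$; the corresponding statement for $\gamma_{i\to j}$ is symmetric. The main technical care needed is in the $(c)\Leftrightarrow(d)$ step, where one must argue cleanly about how ranks in a sorted list shift when a single element is removed; this is a routine combinatorial observation but is the only place in the proof where the generalization from $1$-matching to $\mathbf{b}$-matching genuinely changes the argument relative to Lemma~\ref{lemma:strong_dotted}.
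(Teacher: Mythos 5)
Your proposal is correct and follows essentially the same route as the paper: the equivalence of (a), (b), (c) via the explicit form of the offer in the two surplus regimes, then $(c)\Rightarrow(d)$ and $(d)\Rightarrow(a)$ by comparing $\bmax{i}_{k\in\di\backslash j}\off{k}{i}$ with the $(b_i+1)$-th max over all of $\di$, and the final identity $\off{j}{i}=\gamma_{j\to i}$ read off from the definition of $\Gamma$. The only difference is that you spell out the rank-shifting argument for $(d)\Rightarrow(a)$, which the paper dismisses as ``easy to check''; your version is a strict refinement, not a different proof.
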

\begin{proof}
$(a)\Rightarrow (c) \Rightarrow (b)$: $(a)$ implies that $(w_{ij} - \alf{j}{i})_+ \geq \off{j}{i} > \alf{i}{j}$, which yields (c).  From this we can explicitly write $\off{i}{j}=(w_{ij}-\alf{i}{j}+\alf{j}{i})/2$ which is strictly bigger than $\alf{j}{i}$. Hence we obtain $(b)$.

By symmetry $(b)\Rightarrow (c) \Rightarrow (a)$.  Thus, we have shown that $(a)$, $(b)$ and $(c)$ are equivalent.

$(c) \Rightarrow (d)$: $(c)$ implies that $\off{i}{j}= (w_{ij}-\alf{i}{j} + \alf{j}{i})/2 > \alf{j}{i} = \bmax{j}_{k \in \partial j \backslash i} \off{k}{j}$. Using symmetry, it follows that $(d)$ holds.

$(d) \Rightarrow (a)$ is easy to check.

This finishes the proof of equivalence. The implication follows from the definition of $\gamma_{i \to j}$.
\enp
\end{proof}

Recall that $(ij)$ is a weak-dotted edge if $w_{ij}-\alf{i}{j} - \alf{j}{i}=0$, a strong-dotted edge if
$w_{ij}-\alf{i}{j} - \alf{j}{i} > 0$, and a non-dotted edge otherwise.

\begin{lemma}
If $\gamma_i> 0$, then $i$ has $b_i$ adjacent strong dotted edges, or at least $b_i+1$ adjacent dotted edges.
\label{lemma:b_no_dotted_means_gamma0}
\end{lemma}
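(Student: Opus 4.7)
The plan is to exploit the fixed point identity $\alf{i}{j} = \bmax{i}_{k \in \di \setminus j} \off{k}{i}$ to count dotted edges directly in terms of the order statistics of the incoming offers at $i$. Order $i$'s neighbors as $j_1, j_2, \ldots$ so that $\off{j_1}{i} \geq \off{j_2}{i} \geq \cdots$; then $\gamma_i = \off{j_{b_i}}{i}$. Since $\gamma_i > 0$, in particular $|\di| \geq b_i$ and the top $b_i$ incoming offers are all strictly positive. The first step I would carry out is to sharpen Lemma \ref{lemma:b_strong_dotted} into a clean sign characterization: $(ij)$ is strong dotted iff $\off{j}{i} > \alf{i}{j}$, weak dotted iff $\off{j}{i} = \alf{i}{j}$, and non-dotted iff $\off{j}{i} < \alf{i}{j}$. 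The first equivalence is already Lemma \ref{lemma:b_strong_dotted}; the others follow by a short case split on the sign of $\surp_{ij}$ in the definition~\eqref{eq:off_def} of $\off{j}{i}$.

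Next I would compute $\alf{i}{j_k}$ from the order statistics. For $k \in \{1, \ldots, b_i\}$, removing $j_k$ promotes the $(b_i{+}1)$-st ranked offer into the $b_i$-th slot, so $\alf{i}{j_k} = \off{j_{b_i+1}}{i}$, with the convention that this equals $0$ when $|\di| = b_i$. For $k = b_i{+}1$, removing $j_{b_i+1}$ leaves the top $b_i$ untouched, so $\alf{i}{j_{b_i+1}} = \off{j_{b_i}}{i} = \gamma_i$.

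Now split on whether $\off{j_{b_i}}{i} > \off{j_{b_i+1}}{i}$ strictly (under the same convention). In the strict case, every $k \leq b_i$ satisfies $\off{j_k}{i} \geq \off{j_{b_i}}{i} > \off{j_{b_i+1}}{i} = \alf{i}{j_k}$, so by the sign characterization the $b_i$ edges $(i, j_1), \ldots, (i, j_{b_i})$ are all strong dotted, and we are done with the first alternative. In the tie case, $\off{j_{b_i}}{i} = \off{j_{b_i+1}}{i} = \gamma_i$, and then the computation above gives $\alf{i}{j_k} = \gamma_i$ for all $k \in \{1, \ldots, b_i+1\}$, while $\off{j_k}{i} \geq \gamma_i$ on the same range; hence each of the $b_i{+}1$ edges $(i, j_1), \ldots, (i, j_{b_i+1})$ is dotted (with $(i, j_{b_i})$ and $(i, j_{b_i+1})$ necessarily weak dotted), giving the second alternative.

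I do not anticipate a substantive obstacle: once the sign characterization is in place the rest is bookkeeping on order statistics of the offers $\off{j}{i}$. The only care needed is in handling ties at the $b_i$-th order statistic (absorbed into the tie case) and the boundary $|\di| = b_i$ where $\off{j_{b_i+1}}{i}$ is interpreted as $0$ (automatically in the strict case). The argument also makes explicit why the dichotomy is between $b_i$ strong dotted edges and $b_i{+}1$ dotted edges: the only way to avoid having the top $b_i$ edges be strong dotted is to have a tie at rank $b_i$, which immediately forces at least one extra weak dotted edge.
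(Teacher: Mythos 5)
Your proof is correct and follows essentially the same route as the paper's: each of the top $b_i$ incoming offers satisfies $\off{j}{i}\ge\alf{i}{j}$ and is strictly positive, hence arrives along a dotted edge, with strictness (no tie at rank $b_i$) yielding $b_i$ strong dotted edges and a tie at rank $b_i$ forcing an extra dotted edge. One small caveat: your blanket equivalence ``non-dotted iff $\off{j}{i}<\alf{i}{j}$'' fails in the corner case $\off{j}{i}=\alf{i}{j}=0$ on a non-dotted edge, but this does not affect the argument since every edge to which you apply it carries an offer at least $\gamma_i>0$.
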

\begin{proof}
Suppose $\gamma_i > 0$. Then the $b_i$ largest incoming offers to $i$ are all strictly positive. Suppose one of these offers comes from $j$.
Then, $\alf{i}{j}\leq \off{j}{i}\leq (w_{ij}-\alf{j}{i})_+$. Now $\off{i}{j} > 0$, implying that  $\alf{i}{j}\leq w_{ij}-\alf{j}{i}$ or $(ij)$ is dotted edge. If there is strict inequality for all $j$, this means that we have at least $b_i$ strong dotted edges adjacent to $i$. If we have equality for some $j$, that means there is a tie for the $b_i$ highest offer incoming to $i$. We deduce that at least $b_i+1$ dotted edges adjacent to $i$.
\enp
\end{proof}

\begin{lemma}
The following are equivalent:\\
(a) $\surp_{ij}\le 0$,\\
(b) $\off{i}{j}=(w_{ij}-\alf{i}{j})_+$.\\
Moreover, these conditions  imply $\alf{i}{j}=\gamma_i$ and $\alf{j}{i}=\gamma_j$.
\label{lemma:b_when_alf_equals_gamma}
\end{lemma}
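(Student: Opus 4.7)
The first step is to note that the equivalence (a)$\Leftrightarrow$(b) is a direct consequence of the offer formula $\off{i}{j} = (w_{ij}-\alf{i}{j})_+ - \tfrac{1}{2}(\surp_{ij})_+$. The second summand vanishes if and only if $(\surp_{ij})_+ = 0$, i.e.\ $\surp_{ij} \le 0$, giving the equivalence in a single line in each direction.

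For the moreover clause, I would proceed in two stages. First, I derive the pointwise inequality $\off{j}{i} \le \alf{i}{j}$. Applying (b) with the roles of $i$ and $j$ swapped gives $\off{j}{i} = (w_{ij} - \alf{j}{i})_+$. The hypothesis $\surp_{ij}\le 0$ rearranges to $w_{ij} - \alf{j}{i} \le \alf{i}{j}$, and combined with $\alf{i}{j} \ge 0$ this yields $\off{j}{i} \le \alf{i}{j}$.

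Second, I invoke the fixed-point identities $\alf{i}{j} = \bmax{i}_{k\in\di\backslash j}\off{k}{i}$ and $\gamma_i = \bmax{i}_{k\in\di}\off{k}{i}$, together with the following elementary monotonicity fact for order statistics: if $x$ is a real number and $S$ is a finite multiset whose $b$-th largest element is $s_b$, then the $b$-th largest of $S \cup \{x\}$ is again $s_b$ whenever $x \le s_b$. Applied with $S = \{\off{k}{i} : k \in \di\setminus j\}$ and $x = \off{j}{i}$, this yields $\gamma_i = \alf{i}{j}$. Since hypotheses (a) and (b) and the bound $\off{j}{i}\le \alf{i}{j}$ are all symmetric in $i,j$, swapping the roles delivers $\alf{j}{i} = \gamma_j$.

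I do not anticipate any real obstacle. The only subtlety absent from the single-matching version (where $b_i = 1$ and $\bmax{i}$ reduces to $\max$, making the last step a one-liner) is the order-statistic monotonicity fact; this itself is immediate by inserting $x$ into the sorted list of $S$: since $x$ does not exceed the current position-$b$ entry, it slots in at position $b$ or later, so the value sitting at position $b$ of the extended list is still $s_b$ (the tie case $x = s_b$ is handled consistently under either tie-breaking convention).
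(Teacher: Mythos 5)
Your proof is correct and follows essentially the same route as the paper's: the equivalence of (a) and (b) is read off from the offer formula, and (a) forces the cross-offer below the recipient's current $b$-th-best threshold, so the $\bmax{}$ is unchanged when that offer is included. The only difference is that you spell out the order-statistic monotonicity step (including the degenerate case where fewer than $b_i$ offers are present), which the paper compresses into a single ``which yields''.
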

\begin{proof}
Equivalence of (a) and (b) follows from the definitions. (a) implies $\off{i}{j} \leq \alf{j}{i}$ which yields
$\alf{j}{i} = \gamma_j$. By symmetry, we can also deduce $\alf{i}{j}=\gamma_i$.
\enp
\end{proof}
Note that (a) is symmetric in $i$ and $j$, so (b) can be transformed by interchanging $i$ and $j$.

\begin{lemma}
A non-solid edge cannot be a dotted edge, weak or strong.
\label{lemma:b_no_extra_dotted}
\end{lemma}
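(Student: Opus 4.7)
The plan is to adapt the proof of Lemma \ref{lemma:no_extra_dotted} to the $\mathbf b$-matching setting, proceeding by contradiction. Suppose $(i_1, i_2)$ is a non-solid edge that is dotted; I will build an alternating walk rooted at $(i_1, i_2)$ and use it to construct a feasible LP solution $x'$ with $x'_{(i_1, i_2)} > 0$ and objective at least that of the integral optimum $x^*$, contradicting $x^*$ being an optimum (or, when uniqueness is assumed as in Theorem \ref{thm:b_FP_NB}, the unique optimum) of LP \eqref{prob:b_LP_primal_dual}.

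First I would record two preliminary facts paralleling the one-exchange case. (i) For every edge $(ij) \in E$, $\gamma_i + \gamma_j \geq w_{ij}$: when $\surp_{ij} \leq 0$, use $\gamma_i \geq \alf{i}{j}$ and $\gamma_j \geq \alf{j}{i}$; when $\surp_{ij} > 0$, note that $\off{j}{i} = (w_{ij} - \alf{j}{i} + \alf{i}{j})/2 > \alf{i}{j}$, so $\off{j}{i}$ is among the top-$b_i$ incoming offers at $i$, giving $\gamma_i \geq \off{j}{i}$, and symmetrically $\gamma_j \geq \off{i}{j}$, together with $\off{i}{j} + \off{j}{i} = w_{ij}$. (ii) Combining (i) with Lemma \ref{lemma:b_i-jpartners_equivalence}(a)$\Rightarrow$(c), every dotted edge satisfies $\gamma_i + \gamma_j = w_{ij}$. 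These two facts, together with Lemma \ref{lemma:b_no_dotted_means_gamma0} (every node with $\gamma_i > 0$ has at least $b_i$ incident dotted edges), are the engines that drive the rest of the argument.

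Next I build an alternating walk starting at $(i_1, i_2)$ with odd-indexed edges dotted and even-indexed edges $x^*$-solid, extending in both directions. At a vertex $v$ reached via a dotted edge, continue through an unused incident $x^*$-solid edge if one exists, otherwise stop. At a vertex $v$ reached via a solid edge, if $\gamma_v = 0$, stop; if $\gamma_v > 0$, Lemma \ref{lemma:b_no_dotted_means_gamma0} supplies at least $b_v$ incident dotted edges, and (tracking the walk's prior visits to $v$ against $v$'s capacity) at least one is still unused, which we take. As in Lemma \ref{lemma:no_extra_dotted}, the walk terminates in one of four canonical shapes: simple path, even cycle, blossom, or bicycle. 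Define $x'$ by flipping $x^*$ along the walk: increase by $1$ on the dotted-used edges and decrease by $1$ on the solid-used edges (by $\pm 1/2$ on the odd cycles of the blossom/bicycle cases, as in the original). Feasibility at interior vertices follows from the alternation preserving $\sum_j x_{ij} \leq b_i$, and at the termination vertices from the stopping rule. The change in objective is
\[
\sum_e w_e(x'_e - x^*_e) \; = \; w(P_{\text{dotted}}) - w(P_{\text{solid}}),
\]
which, using $w_{ij} = \gamma_i + \gamma_j$ on dotted edges, $w_{ij} \leq \gamma_i + \gamma_j$ on solid edges, and $\gamma_v = 0$ at the relevant boundary vertices, telescopes vertex-by-vertex into a nonnegative quantity (with the same case-splits---path, cycle, blossom, bicycle---as in Lemma \ref{lemma:no_extra_dotted}), yielding the desired contradiction.

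The main obstacle will be the bookkeeping for revisits: in the one-exchange case each capacity is $1$, so the walk visits any vertex at most once on each side of the alternation, whereas with general $b_i$ a vertex may be revisited several times and we must each time certify that a previously unused incident edge of the required type is available. The inductive invariant to maintain is that at each intermediate vertex of the walk, the numbers of dotted and solid edges already committed are equal and bounded by $b_i$; Lemma \ref{lemma:b_no_dotted_means_gamma0}'s guarantee of $\geq b_i$ (indeed $\geq b_i + 1$ when ties occur) dotted neighbors then ensures extendability. A parallel treatment of half-integral $x^*$ (Case II of Lemma \ref{lemma:no_extra_dotted}), using disjoint sub-paths whose boundary edges are dotted, handles the remaining non-tight case without new ideas.
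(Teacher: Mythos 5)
Your overall strategy matches the paper's (the paper only sketches this proof in one sentence: build an alternating walk of non-solid dotted edges and solid edges, flip, and contradict optimality/uniqueness of the primal LP \eqref{prob:b_LP_primal_dual}), and you correctly flag the revisit bookkeeping as a new difficulty. But your ``preliminary fact (i)'' is false in the $\mathbf b$-matching setting, and the error is not cosmetic. When $\surp_{ij}>0$, the fact that $\off{j}{i}$ is among the top $b_i$ incoming offers at $i$ gives $\off{j}{i}\geq\gamma_i$, not $\gamma_i\geq\off{j}{i}$: $\gamma_i$ is the $b_i$\emph{-th} largest offer, so your inequality is backwards. Indeed, Lemma \ref{lemma:b_i-jpartners_equivalence} states exactly the opposite of what you assert: $\surp_{ij}\geq 0$ is equivalent to $\gamma_i+\gamma_j\leq w_{ij}$, and on a strong-dotted edge one can have $\gamma_i+\gamma_j<w_{ij}$ strictly (the dual LP in \eqref{prob:b_LP_primal_dual} has a per-edge variable $y_{ij}$ precisely to absorb this deficit). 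Consequently your fact (ii) also fails on strong-dotted edges, and the half of the telescoping you actually need to worry about --- $w_{uv}\leq\gamma_u+\gamma_v$ on the \emph{solid} edges of the walk --- is only available on non-strong-dotted edges (via Lemma \ref{lemma:b_when_alf_equals_gamma}). The inequality on the dotted side survives in the favorable direction ($w_{uv}\geq\gamma_u+\gamma_v$), but that is by accident relative to your stated justification.

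The concrete gap this opens: your walk picks ``an unused incident $x^*$-solid edge'' at each step, with no guarantee that it is non-strong-dotted. If a strong-dotted solid edge enters the walk, the bound $w(P_{\textrm{solid}})\leq\sum_v\gamma_v(\cdots)$ breaks and the objective comparison fails. This is exactly why the paper's sketch specifies that the even edges must be ``strong solid, \emph{non-strong dotted} edges''; arranging this requires an extra counting argument at each vertex (e.g., a saturated vertex $v$ has at most $b_v$ strong-dotted edges by Lemma \ref{lemma:b_strong_dotted}, and one of them may be the incoming non-solid dotted edge, freeing up a non-strong-dotted solid edge to continue through --- but the case where the incoming edge is weak-dotted needs separate care), together with a matching adjustment of the stopping rules and of the endpoint feasibility argument (which now depends on $d_v(x^*)$ versus $b_v$ rather than matched/unmatched). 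Until you repair the dual-feasibility statement and build the non-strong-dotted restriction into the walk, the proof is not complete.
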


The proof of this lemma is very similar to that of Lemma \ref{lemma:no_extra_dotted}: we consider optimal solutions of the primal and dual LPs \eqref{prob:b_LP_primal_dual}, and construct an alternating path consisting of alternate (i) non-solid dotted edges, and (ii) strong solid, non-strong dotted edges. We omit the proof.

\begin{lemma}
Every strong-solid edge is a strong-dotted edge.
\label{lemma:b_no_extra_solid}
\end{lemma}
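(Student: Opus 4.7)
The plan is to proceed by contradiction, mirroring the structure of the proof of Lemma \ref{lemma:no_extra_solid} in the one-matching case while adapting to general capacity constraints. Under the standing hypothesis that the primal LP in \eqref{prob:b_LP_primal_dual} has a unique integral optimum $M^*$, there are no weak-solid edges, so every strong-solid edge is exactly an edge of $M^*$, and by Lemma \ref{lemma:b_no_extra_dotted} every dotted edge is in $M^*$. Suppose for contradiction that $e_0 = (i_1,i_2) \in M^*$ satisfies $\surp_{e_0} \leq 0$ (i.e., $e_0$ is weak-dotted or non-dotted). I will construct a different LP-feasible point $\underline{x}'$ of weight at least $\sum_{e\in M^*}w_e$, contradicting uniqueness of the optimum.

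I build an alternating walk $P = (i_{-k}, \ldots, i_0, i_1, i_2, \ldots, i_\ell)$ whose odd-position edges $(i_{2s-1}, i_{2s})$ are ``bad'' $M^*$-edges with $\surp \leq 0$ (starting with $e_0$) and whose even-position edges $(i_{2s}, i_{2s+1})$ are strong-dotted (hence, by Lemma \ref{lemma:b_no_extra_dotted} and uniqueness, also in $M^*$). Extension of the walk at an internal vertex $v$ relies on property 3, the b-matching analog of Lemma \ref{lemma:b_no_dotted_means_gamma0}: whenever $\gamma_v > 0$, vertex $v$ has either $b_v$ strong-dotted adjacent edges or at least $b_v+1$ dotted adjacent edges. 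Combined with the $b_v$ incident edges of $M^*$ (since $v$ is $M^*$-saturated as it lies on an $M^*$-edge of $P$), this provides enough strong-dotted edges to continue, or else furnishes a stopping condition ($\gamma_v = 0$, or all remaining adjacent dotted edges already lie on $P$). As in the one-matching proof, the walk terminates either as a simple path whose endpoints satisfy an appropriate boundary condition, or as an even cycle, blossom (odd cycle with stem), or bicycle.

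From $P$ I define $\underline{x}'$ by complementing $\underline{x}^*$ along $P$ (on blossoms/bicycles setting $\frac12$ on the odd cycle edges), exactly as in Case (I)/Case (II) of the proof of Lemma \ref{lemma:no_extra_dotted}. Feasibility is preserved because each internal vertex of $P$ enters and leaves with one $M^*$-edge of each parity, and the boundary cases are handled via the stopping condition. For the weight comparison, I use two local bounds on edges of $P$: (a) each ``bad'' $M^*$-edge removed has $\gamma_i + \gamma_j \geq w_{ij}$, since $\surp_{ij} \leq 0$ together with Lemma \ref{lemma:b_when_alf_equals_gamma} gives $\alf{i}{j} = \gamma_i$, $\alf{j}{i} = \gamma_j$, whence $w_{ij} \leq \alf{i}{j} + \alf{j}{i} = \gamma_i + \gamma_j$; and (b) each strong-dotted edge added has $\gamma_i + \gamma_j \leq \off{j}{i} + \off{i}{j} = w_{ij}$, because $\surp_{ij} > 0$ forces $\off{j}{i} > \alf{i}{j}$, so $\off{j}{i}$ is strictly among the top $b_i$ incoming offers to $i$ and hence $\gamma_i \leq \off{j}{i}$ (symmetrically for $j$). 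Summing over $P$ and telescoping through $\sum_{v \in P}\gamma_v$ gives $\sum_e w_e x_e' \geq \sum_e w_e x_e^*$, while $\underline{x}'$ differs from $\underline{x}^*$ at $e_0$; this contradicts uniqueness.

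The main obstacle is that, unlike the one-matching case, each vertex of $G$ may carry many incident $M^*$-edges (up to $b_v$), so one must carefully select which $M^*$-edge to use when extending the walk at each step, and the degree bookkeeping that guarantees feasibility of $\underline{x}'$ after the flip becomes more delicate. Property 3's two-way dichotomy ($b_v$ strong-dotted versus $b_v+1$ dotted) is precisely what controls this: it dictates whether the walk must exit $v$ through another $M^*$-edge or can terminate via a weak-dotted edge, and it also governs the modifications required for odd cycles in the blossom and bicycle cases so that the accounting above goes through.
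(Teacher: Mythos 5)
There is a genuine structural problem with your alternating-walk construction. Under the standing hypothesis (unique integral optimum), Lemma \ref{lemma:b_no_extra_dotted} forces every dotted edge to be solid, hence an edge of $M^*$. You acknowledge this, but it destroys the exchange argument: both the ``odd'' edges (bad $M^*$-edges with $\surp\le 0$) and the ``even'' edges (strong-dotted, hence in $M^*$) of your walk $P$ carry $x^*_e=1$. Complementing $\underline{x}^*$ along $P$ therefore only \emph{deletes} $M^*$-edges; nothing is added, so $\sum_e w_e x'_e = \sum_e w_e x^*_e - w(P) < \sum_e w_e x^*_e$ (weights are positive), and your claimed inequality $\sum_e w_e x'_e \ge \sum_e w_e x^*_e$ cannot hold. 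The telescoping bounds (a) and (b) you invoke compare removed weight against added weight, but there is no added weight. The alternating-path machinery in Lemmas \ref{lemma:no_extra_dotted} and \ref{lemma:no_extra_solid}(iii)--(iv) works precisely because it exchanges matching edges for \emph{non}-matching (dotted non-solid, or half-integral) edges; that situation does not arise here.

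The proof the paper has in mind is the analogue of parts (i)--(ii) of Lemma \ref{lemma:no_extra_solid}, which are purely local and need no path at all. Suppose $(i,j)\in M^*$ has $\surp_{ij}\le 0$. If $\gamma_i>0$, Lemma \ref{lemma:b_no_dotted_means_gamma0} gives $i$ either $b_i$ adjacent strong-dotted edges or at least $b_i+1$ adjacent dotted edges; by Lemma \ref{lemma:b_no_extra_dotted} and uniqueness all of these lie in $M^*$, and since $(i,j)\in M^*$ is itself not strong-dotted, either alternative puts at least $b_i+1$ edges of $M^*$ at $i$, violating the capacity constraint. Symmetrically $\gamma_j>0$ is impossible. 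But $\gamma_i=\gamma_j=0$ forces $\alf{i}{j}=\alf{j}{i}=0$ (Lemma \ref{lemma:b_when_alf_equals_gamma}), whence $\surp_{ij}=w_{ij}>0$, a contradiction. This is where the dichotomy of Lemma \ref{lemma:b_no_dotted_means_gamma0} actually does its work --- as a capacity count at a single vertex, not as a rule for extending a walk.
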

Again, the proof is very similar to the proof of Lemma \ref{lemma:no_extra_solid}, and we omit it.

\begin{lemma}
Consider the set of edges $M\equiv \{(ij): \surp_{ij} > 0\}$. The balance property \eqref{eq:b_balance} is satisfied for all $(ij) \in M$.
\end{lemma}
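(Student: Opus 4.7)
My plan is to reduce the balance identity $\gamma_{j\to i}-A_i=\gamma_{i\to j}-A_j$ (where $A_i\equiv\bmax{i}_{k\in\partial i\setminus j}(w_{ik}-\gamma_k)_+$, and similarly for $A_j$) to two sub-claims: a symmetric surplus-split identity and the auxiliary $b$-max identity $A_i=\alf{i}{j}$ (and $A_j=\alf{j}{i}$).

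For the surplus-split identity, since $(ij)\in M$ means $\surp_{ij}>0$, Lemma \ref{lemma:b_strong_dotted} yields $\off{j}{i}=\gamma_{j\to i}$ and $\off{i}{j}=\gamma_{i\to j}$, and the offer formula \eqref{eq:off_def} gives $\gamma_{j\to i}=(w_{ij}-\alf{j}{i}+\alf{i}{j})/2$ together with the symmetric identity for $\gamma_{i\to j}$. Subtracting and invoking the fixed-point equation $\alf{i}{j}=\bmax{i}_{k\neq j}\off{k}{i}$, I will obtain $\gamma_{j\to i}-\alf{i}{j}=\surp_{ij}/2=\gamma_{i\to j}-\alf{j}{i}$, so it suffices to establish $A_i=\alf{i}{j}$ and, by symmetry, $A_j=\alf{j}{i}$.

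For the auxiliary identity, I note that for $k\in\partial i\setminus\{j\}$ with $(ik)\notin M$, Property~6 gives $(w_{ik}-\gamma_k)_+=\off{k}{i}$, so only partner contributions differ between the two $b_i$-max computations. For $k\in S_i\setminus\{j\}$ (where $S_i$ denotes the set of $M$-partners of $i$), I will compute $(w_{ik}-\gamma_k)_+=w_{ik}-\gamma_k=\gamma_{k\to i}+(\gamma_{i\to k}-\gamma_k)\geq\off{k}{i}$; the excess is nonnegative because $\off{i}{k}=\gamma_{i\to k}$ sits among the top $b_k$ incoming offers at $k$, so $\gamma_k\leq\gamma_{i\to k}$. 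The crucial structural input is Lemma \ref{lemma:b_strong_dotted}(d) applied at each partner edge $(ik)\in M$: every partner offer strictly exceeds the $(b_i+1)$-th largest offer received by $i$. Since $j$'s own offer lies in the top $b_i$, excluding $j$ shifts ranks upward by one, so the $b_i$-th largest of $\{\off{k}{i}:k\neq j\}$, namely $\alf{i}{j}$, equals the $(b_i+1)$-th largest over all neighbors; hence every one of the $m-1=|S_i|-1$ surviving partner offers strictly dominates $\alf{i}{j}$.

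With this strict separation in hand, the top $b_i$ positions of $\{\off{k}{i}:k\neq j\}$ split as $m-1$ partner offers plus the $b_i-m+1$ largest non-partner offers, and $\alf{i}{j}$ is forced to equal the $(b_i-m+1)$-th largest non-partner offer. Replacing each partner contribution by $(w_{ik}-\gamma_k)_+$ only pushes partner values further above this threshold while leaving all non-partner values untouched, so the $b_i$-th largest value of the modified set is still the same non-partner offer, giving $A_i=\alf{i}{j}$. Combined with its symmetric counterpart and the surplus-split identity, this yields balance. The main obstacle is the strict-separation step; I expect to discharge it cleanly by combining Lemma \ref{lemma:b_strong_dotted}(d) with the bound $|S_i|\leq b_i$ already used in the direct part of the proof of Theorem \ref{thm:b_FP_NB}, which guarantees $m-1\leq b_i-1$ so that all surviving partner offers fit strictly inside the top $b_i-1$ positions.
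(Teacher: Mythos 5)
Your proposal is correct and follows essentially the same route as the paper: reduce balance to the identity $\bmax{i}_{k\in\di\setminus j}(w_{ik}-\gamma_k)_+=\alf{i}{j}$ via the surplus-split formula from Lemma \ref{lemma:b_strong_dotted}, then establish that identity using the agreement $\off{k}{i}=(w_{ik}-\gamma_k)_+$ on non-strong-dotted edges together with the fact that the at most $b_i$ strong-dotted offers occupy strictly top positions. The only difference is that you spell out the rank-counting argument (partner offers strictly dominate the $(b_i+1)$-th max, so replacing them by larger values leaves the $b_i$-th max fixed) that the paper's three-sentence proof leaves implicit.
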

\begin{proof}
Consider any $(ij) \in M$. From Lemma \ref{lemma:b_strong_dotted}, we know that
$$
\gamma_{j \to i} = \off{j}{i} = \alf{i}{j} + \surp_{ij}/2
$$
To prove balance, it then suffices to establish
\begin{align}
\bmax{i}_{k \in \di \backslash j} \; \off{k}{i} = \alf{i}{j}\, .
\label{eq:b_best_alt_balance}
\end{align}
 Now node $i$ can have at most $b_i$ adjacent strong dotted edges, from Lemma \ref{lemma:strong_dotted} (d). One of these is $(ij)$. \eqref{eq:b_best_alt_balance} follows from the   property $\off{i}{j}=(w_{ij}-\gamma_{i})_+$ on non-strong dotted edges (from Lemma \ref{lemma:b_when_alf_equals_gamma}).
\end{proof}

\begin{lemma}
An optimum solution for the dual LP
    in \eqref{prob:b_LP_primal_dual} can be constructed as $y_i = \gamma_i$ for all $i \in V$ and:
    \begin{align*}
    y_{ij} =
    \left \{
    \begin{array}{ll}
    w_{ij} - \gamma_i - \gamma_j & \mbox{for } (ij)\in M\, ,\\
    0 & \mbox{for } (ij)\notin M\, .
    \end{array} \right .
    \end{align*}
\end{lemma}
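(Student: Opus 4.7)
The plan is to (i) verify dual feasibility of the proposed pair $(y_i, y_{ij})$ and (ii) show that its dual objective value equals the weight of the maximum weight $\mathbf{b}$-matching, which by LP duality forces optimality.

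First, I would establish feasibility. Non-negativity of $y_i = \gamma_i$ is immediate from the definitions in Section \ref{sec:bmatching}. For the edge variables, if $(ij)\in M$ then $\surp_{ij}>0$ by definition of $M$, so Lemma \ref{lemma:b_i-jpartners_equivalence} gives $\gamma_i+\gamma_j\le w_{ij}$, hence $y_{ij}=w_{ij}-\gamma_i-\gamma_j\ge 0$ and the constraint $y_{ij}+y_i+y_j\ge w_{ij}$ holds with equality. If $(ij)\notin M$ then $\surp_{ij}\le 0$, so $\alf{i}{j}+\alf{j}{i}\ge w_{ij}$; combined with $\gamma_i\ge\alf{i}{j}$ and $\gamma_j\ge\alf{j}{i}$, this yields $y_i+y_j = \gamma_i+\gamma_j\ge w_{ij} = w_{ij}-y_{ij}$, verifying the edge constraint with $y_{ij}=0$.

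Next I would evaluate the dual objective. The key preliminary observation is that under the unique integral optimum assumption there are no weak-solid edges, so by Lemmas \ref{lemma:b_no_extra_dotted} and \ref{lemma:b_no_extra_solid} the set of dotted edges coincides with the set of strong-solid edges, which is $M^*$; in particular there are no weak-dotted edges, so every dotted edge is in fact strong-dotted, which forces $M=M^*$. Now Lemma \ref{lemma:b_no_dotted_means_gamma0} says that $\gamma_i>0$ implies $i$ has either $b_i$ adjacent strong-dotted edges or $\ge b_i+1$ adjacent dotted edges; since all dotted edges are strong-dotted and no node has more than $b_i$ edges of $M^*$ incident, this implies $d_i(M^*)=b_i$, i.e.\ $i$ is saturated under $M^*$. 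Contrapositively, every non-saturated node has $\gamma_i=0$.

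Using this, write
\[
\sum_{(ij)\in M^*}(\gamma_i+\gamma_j) \;=\; \sum_{i\in V} d_i(M^*)\,\gamma_i \;=\; \sum_{i\text{ saturated}} b_i\,\gamma_i \;=\; \sum_{i\in V} b_i\,\gamma_i,
\]
where the last equality uses $\gamma_i=0$ for unsaturated $i$. Hence
\[
\sum_{i\in V} b_i\, y_i + \sum_{(ij)\in E} y_{ij} \;=\; \sum_i b_i\gamma_i + \sum_{(ij)\in M^*}(w_{ij}-\gamma_i-\gamma_j) \;=\; \sum_{(ij)\in M^*} w_{ij},
\]
which is exactly the primal optimum value (Lemma \ref{lemma:b_stable_characterization} or direct LP duality). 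Thus $(y_i, y_{ij})$ is dual optimal.

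The only subtle step is the accounting of $\gamma_i$ against $d_i(M^*)$: it relies crucially on the absence of weak-dotted/weak-solid edges in the unique integral optimum regime, without which Lemma \ref{lemma:b_no_dotted_means_gamma0} would not pin down saturation exactly. Everything else is bookkeeping from the already-established fixed point properties.
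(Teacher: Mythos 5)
Your proof is correct and follows essentially the same route as the paper's: feasibility via the same fixed-point lemmas (equivalence of $\surp_{ij}\ge 0$ with $\gamma_i+\gamma_j\le w_{ij}$ for edges of $M$, and $\gamma_i+\gamma_j\ge w_{ij}$ off $M$), and optimality by matching the dual objective to $w(M^*)$ using the identification of dotted with strong-solid edges and the vanishing of $\gamma_i$ at unsaturated nodes. You merely spell out more explicitly the degree-counting step $\sum_{(ij)\in M^*}(\gamma_i+\gamma_j)=\sum_i b_i\gamma_i$ that the paper leaves implicit in its final display.
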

\begin{proof}
We first show that this construction satisfies the dual constraints.
$y_{ij}\geq 0$ follows from Lemma \ref{lemma:b_i-jpartners_equivalence} (a) and (b). We have
\begin{align}
y_i + y_j + y_{ij} = w_{ij} \qquad \mbox{for $(ij) \in M$}
\label{eq:b_dual_Medge}
 \end{align}
 by construction. For $(ij) \notin M$, we have $y_i + y_j = \gamma_i + \gamma_j \geq w_{ij}$ from Lemma \ref{lemma:b_when_alf_equals_gamma}. This completes our proof of feasibility.

To show optimality, we establish that the weight of matching $M$ is the same as the dual objective value $\sum_{i \in V} b_i y_i + \sum_{e \in E} y_{ij}$ at the chosen $\underline{y}$. Lemma \ref{lemma:b_no_dotted_means_gamma0} guarantees $\gamma_i = 0$ for $i$ having less than $b_i$ adjacent dotted edges. Using Lemmas \ref{lemma:b_no_extra_dotted} and \ref{lemma:b_no_extra_solid}, we know that $M$ is a valid $\mathbf b$ matching consisting of strong dotted edges, and all other edges are non-dotted. We deduce
\begin{align*}
w(M) = \sum_{(ij)\in M} w_{ij} = \sum_{(ij)\in M} y_i + y_j + y_{ij} = \sum_{i \in V} b_i y_i + \sum_{e \in E} y_{ij} \, ,
\end{align*}
using \eqref{eq:b_dual_Medge}. This completes our proof of optimality.
\end{proof}

\section{The Kleinberg-Tardos construction and the KT gap}
\label{sec:KT+gap}

\begin{figure*}
\centering
\includegraphics[scale=0.3]{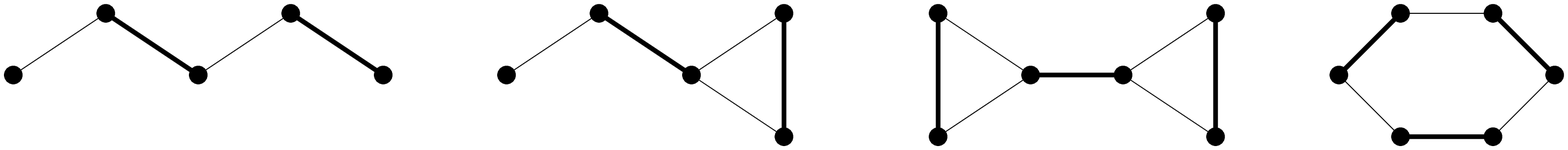}
\caption{Examples of basic structures: path, blossom, bicycle, and cycle
(matched edges in bold).}
\label{fig:structures}
\end{figure*}

Let $G$ be an instance which admits at least one stable
outcome, $M^*$ be the corresponding matching (recall that this
is a maximum weight matching), and consider the
Kleinberg-Tardos (KT) procedure for finding a NB solution
\cite{KT}. Any NB solution $\bargamma^*$ can be constructed by
this procedure with appropriate choices at successive stages.
At each stage, a linear program is solved with variables
$\gamma_i$ attached to node $i$. The linear program maximizes
the minimum `slack' of all unmatched edges and nodes, whose
values have not yet been set (the slack of edge $(i,j)\not\in
M$ is $\gamma_i+\gamma_j-w_{ij}$).

At the first stage, the set of nodes that remain unmatched
(i.e. are not part of $M^*$) is found, if such nodes exist.
Call the set of unmatched nodes $\cC_0$.
After this, at successive stages of the KT procedure, a
sequence of structures $\cC_1$, $\cC_2,\ldots, \cC_k$
characterizing the LP optimum are found. We call this the
\textit{KT sequence}. Each such structure is a pair $\cC_q =
(V(\cC_q),E(\cC_q))$ with $V(\cC_q)\subseteq V$,
$E(\cC_q)\subseteq E$. According to \cite{KT} $\cC_q$ belongs
to one of four topologies: alternating path, blossom, bicycle,
alternating cycle (Figure \ref{fig:structures}). The $q$-th
linear program determines the value of $\gamma_i^*$ for $i\in
V(\cC_q)$. Further, one has the partition
$E(\cC_q)=E_1(\cC_q)\cup E_2(\cC_q)$ with  $E_1(\cC_q)$
consisting of all matching edges along which nodes in
$V(\cC_q)$ trade, and $E_2(\cC_q)$ consists of edges $(i,j)$
such that some $i\in V(\cC_q)$ receives its second-best,
positive offer from $j$.

The $\gamma$ values for nodes on the limiting structure are
uniquely determined if the structure is an alternating path,
blossom or bicycle\footnote{In \cite{KT} it is claimed that the
$\gamma$ values `may not be fully determined' also in the case
of bicycles. However it is not hard to prove that $\gamma$
values are, in fact, uniquely determined in bicycles.}. In case of an alternating cycle there is one degree
of freedom -- setting a value $\gamma_i^*$ for one node $i \in
\cC_q$ fully determines the values at the other nodes.

We emphasize that, within the present definition, $\cC_q$ is
not necessarily a subgraph of $G$, in that it might contain an
edge $(i,j)$ but not both its endpoints. On the other hand,
$V(\cC_q)$ is always subset of the endpoints of $E(\cC_q)$. We
denote by $V_{\textup{ext}}(\cC_q)\supseteq V(\cC_q)$ the set
of nodes formed by all the endpoints of edges in $E(\cC_q)$.

For all nodes $i\in V(\cC_q)$ the second best offer is equal to
$\gamma_i^* -\sigma_q$, where $\sigma_q$ is the slack of the
$q$-th structure. Therefore
\vskip-13pt
\begin{eqnarray*}
\gamma_{i}^*+\gamma_j^*-w_{ij} = \left\{
\begin{array}{ll}
0&\mbox{if $(i,j)\in E_1(\cC_q)$,}\\
\sigma_q &\mbox{if $(i,j)\in E_2(\cC_q)$.}\\
\end{array}\right.
\label{eq:edge_slack}
\end{eqnarray*}
\vskip-3pt
%

The slacks form an increasing sequence ($\sigma_1\le \sigma_2
\le \ldots \le \sigma_k$).

\begin{Definition}
We say that a unique Nash bargaining solution $\baralf^*$
has a KT gap $\sigma$ if
\vskip-15pt
\begin{eqnarray*}
\sigma \le \min\big\{\sigma_1;\, \sigma_2-\sigma_1;\, \dots;\,\sigma_k-
\sigma_{k-1}\big\}\, ,
\label{eq:sigma_cond_1}
\end{eqnarray*}
and if for each edge $(i,j)$ such that $i,j\in
V_\textup{ext}(\cC_q)$ and $(i,j)\not\in E(\cC_q)$,
\begin{eqnarray*}
\gamma_i^*+\gamma_j^* - w_{ij}\ge \sigma_q+\sigma\, .
\label{eq:extra_sigma_condition}
\end{eqnarray*}
\end{Definition}

It is possible to prove that the positive gap condition is
generic in the following sense. The set of all instances such
that the NB solution is unique can be regarded as a subset
${\sf G}\subseteq[0,W]^{|E|}$ ($W$ being the maximum edge
weight). It turns out that ${\sf G}$ has dimension $|E|$ (i.e.
the class of instances having unique NB solution is large) and
that the subset of instances with gap $\sigma>0$ is both
\emph{open and dense in ${\sf G}$}.
\section{Appendix to Section \ref{sec:UD}}

\subsection{UD example showing exponentially slow convergence}
\label{subapp:UD_construction}

We construct a sequence of  instances $I_n$ along with particular initializations such that convergence to an approximate fixed point is exponentially slow, cf. Section \ref{subsec:UD_slow_convergence}. Let us first consider $n= 8 N$, where $N \geq 2$. The construction can be easily extended to arbitrary $n \geq 16$. The graph $G_n=(V_n, E_n)$ we will consider is a simple `ring'. More precisely,
$V_n=\{1, 2, \ldots, n\}$ and $E_n=\{(1,2), (2,3), \ldots, (n-1,n), (n,1) \}$. All edges have the same weight $W$, with the exception $w_{4N,4N+1}= W-1$. (We will choose $W$ later.) Thus, $G_n$ has a unique maximum weight matching $M_*=\{(1,2), (3,4), \ldots, (n-1,n), \}$. Moreover, $M^*$ solves LP \eqref{prob:mwm_relaxation}, implying that $G_n$ possesses a UD solution (using Lemma \ref{lemma:UD_existence}), and the LP gap $g = 1$ provided $W \geq 1$.
Given any $r \in (0, 1/2)$,
we define the split fractions as follows
\footnote{It turns out the values of the split fraction on approximately half the edges is irrelevant for our choice of $\baralf$.}
: We first define the split fractions on edges $E_{n, {\rm left}}=\{(1,2), (2,3), \ldots, (4N-1, 4N) \}$. We set
\begin{align*}
r_{3,2} &= r_{5,4} = \ldots = r_{2N-1, 2N-2} = r \\
r_{2N+1,2N+2} &= r_{2N+3,2N+4} = \ldots = r_{4N-3, 4N-2} = r
\end{align*}
and the split fractions on all other edges in $E_{n, {\rm left}}$ to $1/2$.
As before, $r_{i,i+1}= 1- r_{i+1,i}$ is implicit.

For edges on the `right' side we define split fractions in a symmetrical way. Define `reflection' $\rfl : \{ 4N+1, 4N+2, \ldots, 8N\} \rightarrow \{ 1, 2, \ldots, 4N \}$
as
\begin{align}
\rfl (l) = n - l + 1
\label{eq:reflection}
\end{align}
We set $r_{i, i+1} = r_{\rfl(i), \rfl(i+1)}$ for all $i \in \{ 4N+1, 4N+2, \ldots, 8N-1\}$.

Finally, we set $r_{n,1} = r_{4N, 4N+1} =1/2$.


Now we show how to construct $\baralf$ satisfying properties (a) and (b) above.
Let $\beta \equiv r/(1-r) < 1$. Set $\alf{n}{1}=\alf{1}{n}=W/2-1$.

For $0\leq i \leq N-1$, define\footnote{The summations can be written in closed form if desired, but it turns out to be easier to work with the summations directly.}
\begin{align*}
\alf{2i+1}{2i+2} = W/2 - \sum_{j=0}^{i-1}\beta^j\, ,\qquad
\alf{2i+2}{2i+1} = W/2 + \sum_{j=0}^{i}\beta^j\, .
\end{align*}
For $1\leq i \leq N-1$, define
\begin{align*}
\alf{2i}{2i+1} = W/2 + \sum_{j=0}^{i-2}\beta^j\, ,\qquad
\alf{2i+1}{2i} = W/2 - \sum_{j=0}^{i}\beta^j\, .
\end{align*}

Let $\alf{2N}{2N+1} = W/2 + \sum_{j=0}^{N-1}\beta^j$, $\alf{2N+1}{2N} = W/2 - \sum_{j=0}^{N-1}\beta^j$. For $0\leq i \leq N-2$, define
\begin{align*}
\alf{2N+2i+1}{2N+2i+2} = W/2 - \sum_{j=0}^{N-i-1}\beta^j\, ,\qquad
\alf{2N+2i+2}{2N+2i+1} = W/2 + \sum_{j=0}^{N-i-3}\beta^j\, .
\end{align*}
For $1\leq i \leq N-1$, define
\begin{align*}
\alf{2N+2i}{2N+2i+1} = W/2 + \sum_{j=0}^{N-i-1}\beta^j\, ,\qquad
\alf{2N+2i+1}{2N+2i} = W/2 - \sum_{j=0}^{N-i-2}\beta^j\, .
\end{align*}
Next, let $\alf{4N-1}{4N}=\alf{4N}{4N-1}=W/2-1$.
$\alf{4N}{4N+1}=\alf{4N+1}{4N}=W/2$.
Again, the messages on the `right' side are defined in terms of the reflection Eq.~\eqref{eq:reflection}. We define
$\alf{i}{i+1} = \alf{\rfl(i)}{\rfl(i+1)}$, and $\alf{i+1}{i} = \alf{\rfl(i+1)}{\rfl(i)}$, for all $i \in \{ 4N+1, 4N+2, \ldots, 8N-1\}$.

The corresponding offers $\baroff(\baralf)$ can be easily computed. We have $\off{1}{n}=\off{n}{1}=W/2$.
For $0\leq i \leq N-1$, we have
\begin{align*}
\off{2i+1}{2i+2} = W/2 + \sum_{j=0}^{i-1}\beta^j\, ,\qquad
\off{2i+2}{2i+1} = W/2 + \sum_{j=0}^{i}\beta^j\, .
\end{align*}
For $1\leq i \leq N-1$, we have
\begin{align*}
\off{2i}{2i+1} = W/2 - \sum_{j=0}^{i-1}\beta^j\, ,\qquad
\off{2i+1}{2i} = W/2 + \sum_{j=0}^{i-1}\beta^j\, .
\end{align*}

We have $\off{2N}{2N+1} = W/2 - \sum_{j=0}^{N-1}\beta^j$, $\off{2N+1}{2N} = W/2 + \sum_{j=0}^{N-1}\beta^j$. For $0\leq i \leq N-2$, we have
\begin{align*}
\off{2N+2i+1}{2N+2i+2} = W/2 + \sum_{j=0}^{N-i-2}\beta^j\, ,\qquad
\off{2N+2i+2}{2N+2i+1} = W/2 - \sum_{j=0}^{N-i-2}\beta^j\, .
\end{align*}
For $1\leq i \leq N-1$, we have
\begin{align*}
\off{2N+2i}{2N+2i+1} = W/2 - \sum_{j=0}^{N-i-1}\beta^j\, ,\qquad
\off{2N+2i+1}{2N+2i} = W/2 + \sum_{j=0}^{N-i-2}\beta^j\, .
\end{align*}
Finally, we have $\off{4N-1}{4N}=\off{4N}{4N-1}=W/2$,
$\off{4N}{4N+1}=\off{4N+1}{4N}=W/2-1$. Offers on the `right' side are clearly the same as the corresponding offer on the left.

Importantly, note that it suffices to have $W \geq 2 + 2/(1-\beta)$ to ensure that all messages and offers are bounded below by $1$ (in particular, none of them is negative). Choose $W=2 + 2/(1-\beta)$ (for example).

Next, note that all the fixed point conditions are satisfied, except the update rules for $\alf{2N}{2N+1}$, $\alf{2N+1}{2N}$ and the corresponding messages in the reflection. However, it is easy to check that these update rules are each violated by only $\beta^{N-1}$. Thus, $\baralf$ is an $\eps_n$-FP for $\eps_n = \beta^{N-1} \leq 2^{-cn}$ for appropriate $c=c(r)>0$.

\subsection{Proof of Theorem \ref{thm:FPTAS}}
\label{subapp:UD_FPTAS}

We show how to obtain an FPTAS using the two steps stated in Section \ref{subsec:UD_FPTAS}.

Step 1 can be carried out by finding a maximum weight matching $M^*$ (see, e.g., \cite{PolyMWM}) and also solving the
the dual linear program \eqref{prob:mwm_dual}. For the dual LP, let $v$ be the optimum value and let
$\bargamma$ be an optimum solution. We now use Proposition \ref{prop:sotomayor}.
If the weight of $M^*$ is smaller than $v$, we return {\sc unstable}, since we know
that no stable outcome exists, hence no UD solution (or $\eps$-UD solution) exists. Else,
$(\bargamma,M^*)$ is a stable outcome. This completes step 1! The computational effort involved is
polynomial in the input size.
All unmatched nodes have earnings of $0$. 

In step 2, we fix the matching $M^*$, and rebalance the matched edges iteratively. It turns out to be crucial that our iterative
updates preserve stability.  An example similar to the one in Section \ref{subapp:UD_construction} demonstrates that the rebalancing procedure can take an
exponentially large time to reach an approximate UD solution if stability is not preserved \cite{myWINE}. 

We now motivate the rebalancing procedure
briefly, before we give a detailed description and state results. Imagine an edge $(i,j) \in M^*$. Since we start with a
stable outcome, the edge weight $w_{ij}$ is at least the sum of the best alternatives, i.e. $\surp_{ij}\geq0$. Suppose we change the
division of $w_{ij}$ into $\gamma_i'$, $\gamma_j'$ so that the $\surp_{ij}$ is divided as per the prescribed split
fraction $r_{ij}$. Earnings of all other nodes are left unchanged.
Since $r_{ij} \in (0,1)$, $\gamma_i'$ is at least as large as the best alternative of $i$, as was the case for $\gamma_i$.
This leads to $\gamma_i'+\gamma_k \geq w_{ik}$ for all $k \in \partial i \backslash j$. A similar argument holds for node $j$.
In short, \emph{stability is preserved}!

It turns out that the analysis of convergence is simpler if we analyze synchronous updates, as opposed to
asynchronous updates as described above. Moreover, we find that simply choosing an appropriate `damping factor' allows us to ensure that stability
is preserved even with synchronous updates. We use a powerful technique introduced in our recent work \cite{KT} to prove
convergence.


Table \ref{alg:edge_reb} shows the algorithm {\sc Edge Rebalancing} we use to complete step 2.
Note that each iteration
of the loop can requires $O(|E|)$ simple operations.\\

\begin{table}[t]
\caption{Local algorithm that converts stable outcome to $\eps$-UD solution}
\begin{center}
{\fontfamily{cmtt}\selectfont
\begin{tabular}{ll}
\hline
\multicolumn{2}{l}{  {\sc Edge Rebalancing}$\,\big(\;$Instance $I$, Stable outcome $(\bargamma,M)$,}\\
\multicolumn{2}{l}{ \phantom{{\sc Edge Rebalancing}( I} Damping factor $\damp$, Error target $\eps\;\big)$}\\[2pt]
\hline
1: & Check $\damp \in (0, 1/2]$, $\eps > 0$, $(\bargamma,M)$ is stable outcome \sT \\[2pt]
2: & If (Check fails)\hspace{0.1cm} Return {\sc error}\\[2pt]
3: & $\bargamma^0 \leftarrow \bargamma$\\[2pt]
4: & $t \leftarrow 0$\\[2pt]
5: & Do\\[2pt]
6: & \hspace{0.4cm} ForEach $(i,j) \in M$\\[2pt]
7: & \hspace{0.9cm} $\gamma^{\reb}_i \leftarrow \max_{k \in \partial i \backslash j} (w_{ik} - \gamma_k^t)_+ + r_{ij} \surp_{ij}(\bargamma^t)$\\[2pt]
8: & \hspace{0.9cm} $\gamma^{\reb}_j \leftarrow \max_{l \in \partial j \backslash i}\, (w_{jl} \, - \gamma_l^t)_+ + r_{ji} \surp_{ij}(\bargamma^t)$\\[2pt]
9: & \hspace{0.4cm} End ForEach\\[2pt]
10: & \hspace{0.4cm} ForEach $i\in V$ that is unmatched under $M$\\[2pt]
11: & \hspace{0.9cm} $\gamma^{\reb}_i \leftarrow 0$\\[2pt]
12: & \hspace{0.4cm} End ForEach\\[2pt]
13: & \hspace{0.4cm} If $\left(\lVert\bargamma^{\reb} -\bargamma^t\rVert_\infty \leq \eps \right)$\hspace{0.1cm} Break Do \\[2pt]
14: & \hspace{0.4cm} $\bargamma^{t+1} =  \damp \bargamma^{\reb} + (1-\damp) \bargamma^t$\\[2pt]
15: & \hspace{0.4cm} $t \leftarrow t+1$ \\[2pt]
16: & End Do \\[2pt]
17: & Return $(\bargamma^t, M)$\\[2pt]
\hline
\end{tabular}}
\end{center}
\label{alg:edge_reb}
\end{table}

\noindent{\bf Correctness  of {\sc Edge Rebalancing}:}\\
It is easy to check that $(\bargamma^{\reb}, M)$ and $(\bargamma^t, M)$ are valid outcomes for all $t$.
We show that $\bargamma^t$ computed by {\sc Edge Rebalancing} is, in fact, a stable allocation:
\begin{lemma}
\label{lemma:ER_preserves_stability}
If {\sc Edge Rebalancing} is given a valid input satisfying the `Check' on line 1, then
$(\gamma^t, M)$ is a stable outcome for all $t \geq 0$.
\end{lemma}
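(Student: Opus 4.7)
The plan is to prove the claim by induction on $t$, the base case $t=0$ being immediate from the `Check' on line 1. For the inductive step, assume $(\bargamma^t, M)$ is a stable outcome and hence, in particular, $\surp_{ij}(\bargamma^t)\geq 0$ for every $(i,j)\in M$, $\gamma_i^t=0$ for every unmatched $i$, and $\gamma_i^t+\gamma_k^t\geq w_{ik}$ for every $(i,k)\notin M$.

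First I would analyze the intermediate vector $\bargamma^{\reb}$. For a matched edge $(i,j)\in M$, the defining formula on lines 7--8 immediately yields two useful identities: $\gamma^{\reb}_i+\gamma^{\reb}_j=w_{ij}$ (because the two max-terms together with $(r_{ij}+r_{ji})\surp_{ij}(\bargamma^t)=\surp_{ij}(\bargamma^t)$ add up to $w_{ij}$), and $\gamma^{\reb}_i\geq 0$, $\gamma^{\reb}_j\geq 0$ (since $\surp_{ij}(\bargamma^t)\geq 0$ and $r_{ij},r_{ji}\in(0,1)$). For an unmatched $i$, $\gamma^{\reb}_i=0=\gamma^t_i$ by line 11 and the inductive hypothesis. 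Moreover, for any edge $(i,k)$ with $i$ matched to some $j$, the `max' in line 7 gives the key lower bound
\begin{align*}
\gamma^{\reb}_i \;\geq\; \max_{m\in\partial i\setminus j}(w_{im}-\gamma^t_m)_+\;\geq\;(w_{ik}-\gamma^t_k)_+\;\geq\;w_{ik}-\gamma^t_k\,.
\end{align*}

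Next I would apply the damped update $\bargamma^{t+1}=\damp\bargamma^{\reb}+(1-\damp)\bargamma^t$ and verify stability of $(\bargamma^{t+1},M)$. For $(i,j)\in M$, convex combinations of the identity $\gamma^{\reb}_i+\gamma^{\reb}_j=w_{ij}$ and $\gamma^t_i+\gamma^t_j=w_{ij}$ give $\gamma^{t+1}_i+\gamma^{t+1}_j=w_{ij}$, and non-negativity is preserved by convexity. For an unmatched $i$, $\gamma^{t+1}_i=0$, and for any edge $(i,k)$ incident to such $i$ we have $\gamma^{t+1}_k\geq (1-\damp)\gamma^t_k+\damp w_{ik}\geq w_{ik}$ using the lower bound above and $\gamma^t_k\geq w_{ik}$. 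The main step is checking stability on a non-matching edge $(i,k)\notin M$ with both endpoints matched. Here the two lower bounds give $\gamma^{\reb}_i+\gamma^{\reb}_k\geq 2w_{ik}-\gamma^t_i-\gamma^t_k$, and therefore
\begin{align*}
\gamma^{t+1}_i+\gamma^{t+1}_k \;\geq\; 2\damp\,w_{ik}+(1-2\damp)(\gamma^t_i+\gamma^t_k)\;\geq\; 2\damp\,w_{ik}+(1-2\damp)\,w_{ik}\;=\;w_{ik}\,,
\end{align*}
where the crucial use of $\damp\leq 1/2$ (from the `Check' on line 1) ensures that $(1-2\damp)\geq 0$ so that we can invoke the inductive stability.

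The main obstacle is precisely this last step: the synchronous nature of the update means that both endpoints of a non-matching edge may simultaneously move away from their former values toward $\bargamma^{\reb}$, which individually satisfies only a `shifted' version of the stability constraint (involving $\bargamma^t$ on the other endpoint). The role of the constraint $\damp\leq 1/2$ is to guarantee that the convex combination with the old, stable vector dominates this shift, keeping the trajectory inside the stable polytope. The remaining checks (matched-edge feasibility, non-negativity, unmatched-node earnings) are straightforward consequences of the convexity of the update.
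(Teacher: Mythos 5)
Your proof is correct and follows essentially the same route as the paper's: induction on $t$, the identity $\gamma_i^{\reb}+\gamma_j^{\reb}=w_{ij}$ on matched edges, the key lower bound $\gamma_i^{\reb}\geq (w_{ik}-\gamma_k^t)_+\geq w_{ik}-\gamma_k^t$ extracted from the max in line 7 (using $\surp_{ij}(\bargamma^t)\geq 0$), and the convex-combination argument where $\damp\leq 1/2$ makes the coefficient $1-2\damp$ nonnegative. The paper packages the same estimate as a claim in terms of the slack $\sigma_{ik}^t=\gamma_i^t+\gamma_k^t-w_{ik}$, but the algebra is identical.
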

\begin{proof}
We prove this lemma by induction on time $t$. Clearly $(\gamma^0, M)$ is a stable outcome, since the input is valid. Suppose $(\gamma^t, M)$ is a stable outcome.

Consider any $(i,j) \in M$. It is easy to verify that $\gamma_i^\reb + \gamma_j^\reb = w_{ij}$, for
$\bargamma^{\reb}$ computed from $\bargamma^t$ in
Lines 6-9 of {\sc Edge Rebalancing}. Also, we know that $\gamma_i^t + \gamma_j^t = w_{ij}$.
It follows that $\gamma_i^{t+1} + \gamma_j^{t+1} = w_{ij}$ as needed. For $i \in V$ unmatched
under $M$, $\gamma_i^t=0$ by hypothesis and as per Lines 10-12, $\gamma_i^\reb=0 \, \Rightarrow \, \gamma_i^{t+1}=0$ as needed.

Consider any $(i,k) \in E \backslash M$. We know that $\gamma_i^t + \gamma_k^t \geq w_{ik}$. We want to show the corresponding inequality
at time $t+1$. Define $\sigma_{ik}^t \equiv \gamma_i^t + \gamma_k^t - w_{ik} \geq 0$.

\begin{claim}
$\gamma_i^{\reb} \geq \gamma_i^t - \sigma_{ik}^t$.
\end{claim}

If we prove the claim, it follows that a similar inequality holds for $\gamma_k^{\reb}$, and hence
$\gamma_i^{\reb}+\gamma_k^{\reb} \geq\gamma_i^t + \gamma_k^t - 2\sigma_{ik}= w_{ik}- \sigma_{ik}^t$.
It then follows from the definition in Line 14
that $\gamma_i^{t+1} + \gamma_k^{t+1} \geq w_{ik}$,
for any $\damp \in (0,1/2]$. This will complete our proof  that $(\bargamma^{t+1}, M)$ is a stable outcome.

Let us now prove the claim. Suppose $i$ is matched under $M$.
Using the definition in Line 7 (Line 8 contains a symmetrical definition), 
$\gamma_i^\reb \geq \max_{k' \in \partial i \backslash j} (w_{ik'} - \gamma_{k'}^t)_+$ since $\surp_{ik}(\bargamma^t) \geq 0$. Hence,
\begin{align*}
\gamma_i^\reb \geq (w_{ik} - \gamma_k^t)_+ \geq (w_{ik} - \gamma_k^t) = \gamma_i^t - \sigma_{ik}^t \, ,
\end{align*}
as needed.  If $i$ is not matched under $M$, then $\gamma_i^t=\gamma_i^\reb=0$, so the claim follows from $\sigma_{ik}^t \geq 0$.
\end{proof}

\noindent{\bf Convergence  of {\sc Edge Rebalancing}:}\\
Note that the termination condition $\lVert\bargamma^{\reb} -\bargamma^t\rVert_\infty \leq \eps$ on Line 13 is
equivalent to $\eps$-correct division. We show that the rebalancing
algorithm terminates fast:
%
\begin{lemma}
\label{lemma:ER_converges_fast}
For any instance with weights bounded by $1$, i.e. $(w_e, e \in E) \in (0,1]^{|E|}$, if {\sc Edge Rebalancing}
is given a valid input, it terminates
in $T$ iterations, where
\begin{align}
T \leq \left \lceil \frac{1}{\pi \damp (1-\damp) \eps^2} \right \rceil \, .
\end{align}
and returns an outcome satisfying $\eps$-correct division (cf. Definition \ref{def:eps_correct_div}). Here $\pi = 3.14159\ldots$
\end{lemma}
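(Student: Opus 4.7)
The plan is to mirror the strategy already used for Theorem \ref{thm:rate_of_convergence}: cast the rebalancing algorithm as a Mann iteration for a suitable operator $\sT$, establish nonexpansivity of $\sT$ in the sup norm on the domain of stable outcomes, and then invoke the Baillon--Bruck quantitative asymptotic regularity result \cite{rate} to get the $\Theta(1/\eps^{2})$ bound. Concretely, let $D$ denote the set of stable allocations $\bargamma$ consistent with matching $M$ (that is, valid trade outcomes with $\gamma_i+\gamma_j=w_{ij}$ for $(i,j)\in M$, $\gamma_i=0$ for $i$ unmatched, and $\gamma_i+\gamma_j\ge w_{ij}$ for $(i,j)\notin M$). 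Define $\sT:D\to\reals^{|V|}$ by $(\sT\bargamma)_i=\gamma_i^{\reb}$ as computed in lines 6--12 of {\sc Edge Rebalancing}. By Lemma \ref{lemma:ER_preserves_stability}, the iterates $\bargamma^t$ remain in $D$, and the update of line 14 is precisely the Mann iteration $\bargamma^{t+1}=(1-\damp)\bargamma^t+\damp\,\sT\bargamma^t$.

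The main step is to show that $\sT$ is nonexpansive in $\|\cdot\|_\infty$ on $D$. Fix $(i,j)\in M$ and write $A_{i\backslash j}(\bargamma)\equiv\max_{k\in\partial i\backslash j}(w_{ik}-\gamma_k)_+$ and likewise $A_{j\backslash i}$. Since $\bargamma\in D$ is stable, $\gamma_i\ge A_{i\backslash j}(\bargamma)$ and $\gamma_j\ge A_{j\backslash i}(\bargamma)$, hence $\surp_{ij}(\bargamma)=w_{ij}-A_{i\backslash j}-A_{j\backslash i}\ge0$. Therefore line 7 simplifies to the affine form
\begin{align*}
(\sT\bargamma)_i \;=\; (1-r_{ij})\,A_{i\backslash j}(\bargamma)\,-\,r_{ij}\,A_{j\backslash i}(\bargamma)\,+\,r_{ij}\,w_{ij}.
\end{align*}
The maps $\bargamma\mapsto A_{i\backslash j}(\bargamma)$ and $\bargamma\mapsto A_{j\backslash i}(\bargamma)$ are nonexpansive in sup norm (they are maxima of $1$-Lipschitz coordinate functions), so for any $\bargamma,\bargamma'\in D$,
\begin{align*}
|(\sT\bargamma)_i-(\sT\bargamma')_i|\;\le\;(1-r_{ij})\|\bargamma-\bargamma'\|_\infty+r_{ij}\|\bargamma-\bargamma'\|_\infty=\|\bargamma-\bargamma'\|_\infty.
\end{align*}
For unmatched $i$, $(\sT\bargamma)_i=0$ trivially, so nonexpansivity holds coordinatewise.

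Now, because weights (and hence allocations in $D$) lie in $[0,1]$, we have $\|\bargamma^0-\sT^s\bargamma^0\|_\infty\le 1$ for all $s\ge 0$. Applying \cite[Theorem 1]{rate} to the Mann iterate with damping $\damp\in(0,1/2]\subset(0,1)$ yields
\begin{align*}
\|\sT\bargamma^t-\bargamma^t\|_\infty \;<\; \frac{1}{\sqrt{\pi\damp(1-\damp)\,t}}.
\end{align*}
Consequently, for every $t\ge\lceil 1/(\pi\damp(1-\damp)\eps^2)\rceil$ the termination test on line~13 is satisfied, proving the iteration count bound. Finally, when the algorithm exits at time $t$, the condition $\|\bargamma^{\reb}-\bargamma^t\|_\infty\le\eps$ applied coordinatewise with the explicit form of $\gamma_i^{\reb}$ is \emph{literally} the $\eps$-correct division condition of Definition \ref{def:eps_correct_div}; combined with Lemma \ref{lemma:ER_preserves_stability} this gives an $\eps$-UD solution.

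The only subtle point is verifying nonexpansivity: one must note that stability keeps $\surp_{ij}(\bargamma)\ge0$ throughout, which turns the (generally piecewise) definition in line 7 into a convex combination of nonexpansive maps. Everything else is a direct transplant of the Banach-space machinery already deployed in Section~\ref{sec:convergence}.
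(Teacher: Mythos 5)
Your proposal is correct and follows essentially the same route as the paper's (sketched) proof: cast the update as a Mann iteration, verify sup-norm nonexpansivity of the rebalancing operator, and invoke the Baillon--Bruck bound. You in fact supply the detail the paper defers to \cite{myWINE} --- the explicit affine form $(\sT\bargamma)_i=(1-r_{ij})A_{i\backslash j}-r_{ij}A_{j\backslash i}+r_{ij}w_{ij}$ whose coefficient absolute values sum to one --- and correctly observe that the termination test coincides coordinatewise with $\eps$-correct division.
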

\begin{proof}[sketch only]
This result is proved using the same technique as in the proof of Theorem \ref{thm:rate_of_convergence}.
The iterative updates of {\sc Edge Rebalancing} can be written as
\begin{align}
\bargamma^{t+1} = \damp \sT\bargamma^t + (1-\damp)\bargamma^t \, ,
\label{eq:mann_iterations}
\end{align}
where $\sT$ is a self mapping of the (convex) set of allocations corresponding to matching $M$, $\cA_M \subseteq [0,1]^{|E|}$.
The `edge balancing' operator $\sT$ essentially corresponds to Lines 6-9 in Table \ref{alg:edge_reb}.
It is fairly straightforward to show that $\sT$ is nonexpansive with respect to
sup norm.
The main theorem in \cite{rate} then tells us that
\begin{align}
\lVert \sT \bargamma^t - \bargamma^t \rVert _\infty \leq \frac{1}{\sqrt{\pi \damp (1-\damp) t}}\, .
\label{eq:BB_rate}
\end{align}
The result follows.

For the full proof, see \cite{myWINE}.
\end{proof}

Using Lemmas \ref{lemma:ER_preserves_stability} and Lemmas \ref{lemma:ER_converges_fast}, we immediately obtain
our main result, Theorem \ref{thm:FPTAS}.

\begin{proof}[of Theorem \ref{thm:FPTAS}]
We showed that step 1 can be completed in polynomial time.
If the instance has no UD solutions then the algorithm returns {\sc unstable}. Else
we obtain a stable outcome and proceed to step 2.

Step 2 is performed using {\sc Edge Rebalancing}. The input is the instance, the stable outcome obtained from step 1,
$\damp = 1/2$ (for example) and the target error value $\eps > 0$. Lemmas \ref{lemma:ER_preserves_stability} and \ref{lemma:ER_converges_fast}
show that {\sc Edge Rebalancing} terminates after at most $\lceil 1/(\pi \damp (1-\damp) \eps^2) \rceil$ iterations, returning a outcome that is
stable and satisfies $\eps$-correct division, i.e. an $\eps$-UD solution.
Moreover, each iteration requires $O(|E|)$ simple operations. Hence, step 2 is completed in  $O(|E|/\eps^2)$ simple operations.

The total number of operations required by the entire algorithm is thus polynomial in the input and in $(1/\eps)$.
\end{proof}


\begin{thebibliography}{99}



\bibitem{Aspvall} B.~Aspvall and Y.~Shiloach, {\em A
    polynomial time algorithm for solving systems of linear
    inequalities with two variables pre inequality}, Proc. 20th IEEE
    Symp. Found. of Comp. Sc., 1979.


\bibitem{Azar} Y.~Azar, B.~Birnbaum, L.~Elisa Celis,
    N.~R.~Devanur and Y.~Peres, {\em Convergence of Local Dynamics
    to Balanced Outcomes in Exchange Networks}, Proc. 50th IEEE
    Symp. Found. of Comp. Sc., 2009.

\bibitem{rate} J.~Baillon and R.~E.~Bruck, {\em The rate of
    asymptotic regularity is $O(1/\sqrt{n})$}, in: A.G.
    Kartsatos (ed.),
    Theory and applications of nonlinear operators of accretive and
monotone type, Lecture Notes in Pure and Appl. Math. 178,
Marcel Dekker, Inc., New York, 1996, pp.~51--81.

\bibitem{Bateni} M.~Bateni, M.~ Hajiaghayi, N.~Immorlica and
H.~Mahini, {\em The cooperative game theory foundations
of network bargaining games}, Proc. Intl. Colloq. Automata, Languages
and Programming, 2010.

\bibitem{Bayati} M.~Bayati, D.~Shah and M.~Sharma,
    {\em Max-Product for Maximum Weight Matching: Convergence,
    Correctness, and LP Duality}, IEEE Trans. Inform. Theory,
54 (2008), pp.~1241--1251.

\bibitem{BayatiB} M.~Bayati, C. Borgs, J. Chayes, R.
    Zecchina, {\em On the exactness of the cavity method for
    Weighted b-Matchings on Arbitrary Graphs and its Relation
    to Linear Programs}, arXiv:0807.3159, (2007).

\bibitem{auction} D.~P.~Bertsekas, {\em The Auction Algorithm: A Distributed Relaxation
Method for the Assignment Problem}, Annals of Operations Research,
14 (1988), pp.~105--123.

\bibitem{BinmoreRubinstein} K. Binmore, A. Rubinstein and A. Wolinsky, {\em The Nash Bargaining Solution in Economic Modeling}, Rand Journal of Economics, 17(2) (1986), pp.~176--188.

\bibitem{BraunGautschi06} N.~Braun, T.~Gautschi, {\em A Nash bargaining model for simple exchange networks}, Social Networks, 28(1) (2006), pp.~1--23.

\bibitem{KMR}   M. Kandori, G. J. Mailath, and R. Rob,
{\em Learning, Mutation, and Long Run Equilibria in Games}, Econometrica 61(1) (1993).

\bibitem{Kearns_exp} T.~Chakraborty, S.~Judd, M.~Kearns,
    J.~Tan, {\em A Behavioral Study of Bargaining in Social
    Networks}, Proc. 10th ACM Conf. Electronic Commerce, 2010.


\bibitem{Kearns1} T.~Chakraborty, M.~Kearns and
    S.~Khanna, {\em Network Bargaining: Algorithms and Structural
    Results}, Proc. 10th ACM Conf. Electronic Commerce,
    July 2009.




\bibitem{CookY} K.~S.~Cook and T. Yamagishi, {\em Power
    exchange in networks: A power-dependence formulation},
    Social Networks, 14 (1992), pp.~245--265.


\bibitem{approx_NE} C.~Daskalakis, C.~Papadimitriou, {\em On
    oblivious PTAS's for nash equilibrium}, Proc. ACM Symp. Theory of Computing, 2009.

\bibitem{CLS} C.~Daskalakis, C.~Papadimitriou, {\em Geometrically Embedded Local Search},
Proc. ACM-SIAM Symp. Discrete Algorithms, 2011.

\bibitem{Edelstein} M.~Edelstein and R.C.~O'Brien,
    {\em Nonexpansive mappings, asymptotic regularity, and
    successive approximations}, J.~London Math. Soc. 1 (1978),
    pp.~547--554.

\bibitem{Faigle01} U.~Faigle, W.~Kern, and J.~Kuipers, {\em On the computation of the nucleolus of a cooperative game},
International Journal of Game Theory, 30 (2001), pp.~79--98.


\bibitem{FLbook} D.~Fudenberg and D.~K.~Levine, {\em The Theory of Learning in Games},
The MIT Press, 1998.

\bibitem {PolyMWM}
Gabow, H. N., Tarjan, R. E.:
Faster scaling algorithms for general graph-matching problems.
J. ACM, {\bfseries 38(4)}:815-–853, 1991.

\bibitem{Ishikawa} S.~Ishikawa, {\em Fixed points and
    iteration of a nonexpansive mapping in a Banach space},
    Proc. American Mathematical Society, 59(1) (1976).


\bibitem{Isolation} D.~Gamarnik, D.~Shah and Y.~Wei,
    {\em Belief Propagation for Min-cost Network Flow: Convergence
    and Correctness}, Proc. ACM-SIAM Symp. Disc. Alg., 2010.

\bibitem{JHu07}
B. Huang, T. Jebara, {\em Loopy belief propagation for bipartite maximum weight b-matching}, Artificial
Intelligence and Statistics (AISTATS), March, 2007.

\bibitem{myWINE} Y.~Kanoria, {\em An FPTAS for Bargaining Networks with Unequal Bargaining Powers}, Proc.
Workshop on Internet and Network Economics, 2010, (full version arXiv:1008.0212).

\bibitem{OurPrevious} Y.~Kanoria, M.~Bayati, C.~Borgs,
    J.~Chayes, and A.~Montanari, {\em A Natural Dynamics for
    Bargaining on Exchange Networks}, {\sf arXiv:0912.5176}
    (2009).

\bibitem{OurSODA} Y.~Kanoria, M.~Bayati, C.~Borgs,
    J.~Chayes, and A.~Montanari,
{\em Fast Convergence of Natural Bargaining Dynamics in Exchange
Networks}, Proc. ACM-SIAM Symp. on Discrete Algorithms (2011), pp.~1518--1537.

\bibitem{KS} E. Kalai, M. Smorodinsky, {\em Other Solutions to Nash's Bargaining Problem}, Econometrica, 43(3) (1975).

\bibitem{KT} J.~Kleinberg and E.~Tardos, {\em Balanced
    outcomes in social exchange networks}, in Proc. ACM Symp. Theory of Computing,
    2008.


\bibitem{time_varying_damping} U.~Kohlenback, {\em A
    Quantitative Version Of A Theorem Due To
    Borwein-Reich-Shafrir},
    Numerical Functional Analysis and Optimization, 22(5-6),
    August 2001.

\bibitem{Lucas} J.W. Lucas, C.W.~Younts, M.J.~Lovaglia,
    and B.~Markovsky, {\em Lines of power in exchange networks},
    Social Forces, 80 (2001), pp.~185--214.

\bibitem{ManeaA_markovEq} M.~Mihai, D.~Abreu, {\em Markov Equilibria in a Model of Bargaining in Networks}, to appear in Games and Economic Behavior \url{http://economics.mit.edu/files/6985}.

\bibitem{Mann} W.~R.~Mann, {\em Mean value methods in
    iteration}, Proc. Amer. Math Soc., 4 (1953), pp.~506--510.


\bibitem{Nash} J.~Nash, {\em The bargaining problem},
    Econometrica, 18 (1950), pp.~155--162.

\bibitem{Rochford} S.C.~Rochford, {\em Symmetric
    pairwise-bargained allocations in an assignment market},
    in J. Economic Theory, 34 (1984), pp.~262--281.

\bibitem{RochfordCrawford86} V.~P.~Crawford, S.~C.~Rochford, {\em Bargaining and Competition in Matching Markets}, Intl. Economic Review 27(2) (1986), pp.~329--348.

\bibitem{RubinBrown} J.~Z.~Rubin, B.~ R.~Brown, {\em The Social Psychology of Bargaining and Negotiation},
Academic Press 1975.

\bibitem{sequential_bargaining} A.~Rubinstein: {\em Perfect
    equilibrium in a bargaining model}, Econometrica,  50 (1982),
    pp.~97--109.


\bibitem{SMW07}
S. Sanghavi, D. Malioutov, A. Willsky, {\em Linear Programming Analysis of
Loopy Belief Propagation for Weighted Matching},  Proc. Neural Inform. Processing Systems, 2007.




\bibitem{Schrijver} A. Schrijver, {\em Combinatorial
    Optimization}, Springer-Verlag,  Vol. A (2003).

\bibitem{ShapleyShubik72} L.~Shapley and M.~Shubik, ``The Assignment Game I: The Core," Intl.
J. Game Theory, 1 (1972), pp.~111-130.

\bibitem{Sotomayor} M.~Sotomayor, {\em On The Core Of The One-Sided Assignment Game}, 2005,
\url{http://www.usp.br/feaecon/media/fck/File/one_sided_assignment_game.pdf}.

\bibitem{Skvoretz} J.~Skvoretz and D.~Willer,
 {\em Exclusion and power: A test of four theories of power
in exchange networks}, American Sociological Review, 58 (1993),
pp.~801--818.

\bibitem{Spitzer} F.~Spitzer, {\em Principles of Random
    Walk}, Springer, 2005.

\bibitem{Stearns} R.~E.~Stearns, {\em Convergent transfer schemes for n-person games}, Transactions of American Mathematical
Society, 134 (1968), pp.~449-–459.

\bibitem{Tree_reweighted} M.~J.~Wainwright, T.~S.~
    Jaakkola, and A.~S.~Willsky, {\em Exact MAP estimates via agreement on (hyper)trees: Linear
    programming and message-passing}, IEEE Trans.
    Inform. Theory, 51 (2005), pp.~3697--3717.



\bibitem{NET} D.~Willer (ed.) \emph{Network Exchange
    Theory}, Praeger, 1999.

\end{thebibliography}
\end{document}